%% file: main.tex
\documentclass[11pt,letterpaper]{article}
\usepackage[margin=1in]{geometry}
\usepackage{setspace}
\usepackage{dsfont}
\usepackage{amsmath}
\usepackage{amsthm}
\usepackage{amssymb}
\usepackage{amsfonts}
\usepackage{url}
\usepackage{enumerate}
\usepackage{subcaption}
\usepackage{bm}
\usepackage{graphicx}
\usepackage[utf8]{inputenc} 
\usepackage[T1]{fontenc}    
\usepackage{hyperref}       
\usepackage{booktabs}       
\usepackage{nicefrac}       
\usepackage{microtype}      
\usepackage{xcolor}         
\usepackage{algorithm}
\usepackage[noend]{algorithmic}
\usepackage{cleveref}
\usepackage{natbib}
\usepackage{tikz}

\input{notations}

\allowdisplaybreaks

\newif\ifanonymous
\anonymoustrue
\anonymousfalse

\title{Privacy Filters are Captured by Residues: A Characterization of Free Natural Filters and the Cost of Adaptivity\ifanonymous\else\thanks{Authors BH, EL, PM, and PT in alphabetical order.}\fi}

\ifanonymous
    \author{Anonymous Authors}
\else
    \author{
        Matthew Regehr\thanks{University of Waterloo. \texttt{matt.regehr@uwaterloo.ca}. Supported by an NSERC CGS-D scholarship.} \and
        Bingshan Hu\thanks{University of British Columbia. \texttt{bingsha1@cs.ubc.ca}} \and
        Ethan Leeman\thanks{Google Research. \texttt{ethanleeman@google.com}} \and
        Pasin Manurangsi\thanks{Google Research. \texttt{pasin@google.com}} \and
        Pierre Tholoniat\thanks{Google. \texttt{pierre@cs.columbia.edu}. Work done while at Columbia University.} \and
        Mathias L\'ecuyer\thanks{University of British Columbia. \texttt{mathias.lecuyer@ubc.ca}}
    }
\fi

\begin{document}
    \maketitle

    \begin{abstract}
        We study privacy filters, which enable privacy accounting for differentially private (DP) mechanisms with adaptively chosen privacy characteristics. We develop a general theory that characterizes the worst-case privacy loss of an interaction involving an analyst that respects some restrictions on what queries they may issue. We apply this theory to develop residue filters, which unifies existing privacy filters. We develop the Gaussian DP (GDP) residue filter, which strictly improves upon the na\"ive GDP filter. We also show that residue filters capture the natural filter, which promises greater utility by leveraging exact privacy accounting techniques. Earlier privacy filters consider only simple privacy parameters such as R\'enyi-DP or GDP parameters. Natural filters account for the entire privacy profile of every query, promising more efficient use of a given privacy budget. We show that, contrary to other forms of DP, natural privacy filters are not free in general. We present a characterization of when a family of private queries admits free natural filters for a given budget. In particular, only families of privacy mechanisms that are totally-ordered when composed admit free natural privacy filters with respect to an arbitrary privacy budget. Finally, we show that, while the natural approximate-DP filter can fail in the presence of adaptive adversary, it cannot fail too badly: the output remains approximate-DP with parameters at most poly-logarithmically worse than the intended privacy parameters.
    \end{abstract}

    \input{sections/intro}
    \input{sections/prelims}
    \input{sections/adversary_theory}
    \input{sections/residue_filter}
    \input{sections/natural_filter}
    \input{sections/implications}
    \input{sections/upper_bound}
    \input{sections/total_ordering_proof}

    \bibliographystyle{alpha}
    \bibliography{refs/zotero_export,refs/manual_refs,refs/biblio}

    \appendix
    \input{sections/app_prelims}
    \input{sections/app_implications}
    \input{sections/app_upper_bound}
\end{document}

%% file: notations.tex
\newcommand{\conv}{\operatorname{conv}}

\newcommand{\pld}{\operatorname{PLD}}
\newcommand{\PLD}{\operatorname{PLD}}
\newcommand{\Adv}{\operatorname{Adv}}

\newcommand{\Nat}{\operatorname{Nat}}
\newcommand{\Update}{\operatorname{Update}}
\renewcommand{\epsilon}{\varepsilon}

\newcommand{\algcomment}[1]{\hfill{$\triangleright$~{\em #1}}}
\newcommand{\YIELD}{\STATE \textbf{yield} }
\newcommand{\BREAK}{\STATE \textbf{break} }

\newtheorem{lemma}{Lemma}
\newtheorem{theorem}{Theorem}
\newtheorem{corollary}{Corollary}
\newtheorem{definition}{Definition}
\newtheorem{proposition}{Proposition}
\newtheorem{remark}[theorem]{Remark}
\crefname{remark}{Remark}{Remarks}
\newtheorem{observation}[theorem]{Observation}
\crefname{observation}{Observation}{Observations}

\newcommand{\E}{\mathop{\mathbb{E}}}

\newcommand{\bN}{\mathbb{N}}
\newcommand{\bP}{\mathbb{P}}
\newcommand{\bR}{\mathbb{R}}
\newcommand{\bZ}{\mathbb{Z}}

\newcommand{\cA}{\mathcal{A}}
\newcommand{\cB}{\mathcal{B}}

\newcommand{\cF}{\mathcal{F}}
\newcommand{\cL}{\mathcal{L}}
\newcommand{\cLall}{\cL_{\mathrm{all}}}
\newcommand{\cLsym}{\cL_{\mathrm{sym}}}
\newcommand{\cLpure}{\cL_{\mathrm{pure}}}
\newcommand{\cLapprox}{\cL_{\mathrm{approx}}}
\newcommand{\cLgdp}{\cL_{\mathrm{gdp}}}

\newcommand{\cY}{\mathcal{Y}}

\newtheoremstyle{TheoremNum}
{\topsep}{\topsep}              
{\itshape}                      
{}                              
{\bfseries}                     
{.}                             
{ }                             
{\thmname{#1}\thmnote{ \bfseries #3}}
\theoremstyle{TheoremNum}

\DeclareMathOperator{\supp}{supp}

\newcommand{\eps}{\epsilon}
\newcommand{\bp}{\mathbf{p}}
\newcommand{\beps}{\boldsymbol{\eps}}
\newcommand{\bell}{\boldsymbol{\ell}}
\newcommand{\R}{\mathbb{R}}
\newcommand{\N}{\mathbb{N}}
\newcommand{\cD}{\mathcal{D}}
\newcommand{\cE}{\mathcal{E}}
\newcommand{\RRd}{\RR^{\downarrow}}

\newcommand{\mech}{\mathcal{M}}

\newcommand{\Iddist}{\text{Id}}

\newcommand{\bj}{\mathbf{j}}
\newcommand{\bzero}{\mathbf{0}}
\newcommand{\bone}{\mathbf{1}}

\newcommand{\Ber}{\mathrm{Ber}}
\newcommand{\Var}{\mathrm{Var}}
\def \Paren#1{{\left({#1}\right)}}

\newcommand{\bb}{\mathbf{b}}
\newcommand{\bu}{\mathbf{u}}
\newcommand{\bv}{\mathbf{v}}
\newcommand{\RR}{\mathrm{RR}}

\newcommand{\teps}{\tilde{\eps}}
\newcommand{\disc}{\mathrm{disc}}
\newcommand{\AddFilter}{\operatorname{Discrete}}
\newcommand{\ViewAddFilter}{M^{\mathrm{discrete}}_{\cA}}

%% file: sections/intro.tex
\section{Introduction}

Differential Privacy (DP) \cite{dwork2006calibrating} is a rigorous notion of data privacy that protects against a host of attacks including membership inference \cite{homer2008resolving,dwork2015robust,carlini2022membership} and data reconstruction \cite{dinur2003revealing,dwork2007price,carlini2021extracting,balle2022reconstructing,dick2023confidence,nasr2023scalable}.
As such, there is a sustained push to integrate and deploy DP in a large number of applications \cite{opendp-registry}, from machine learning (ML) and artificial intelligence (AI) models \cite{kairouz2021practical,sinha2025vaultgemma}, to analytics collection \cite{rogers2020linkedin,apple-emoji,ding2017collecting}, to data releases \cite{adeleye2023publishing,abowd20222020}, as well as in general libraries and systems supporting those applications \cite{gaboardi2020programming,wilson2020differentially,jax-privacy2022github,tholoniat2024cookie,ghazi2025differential}.
This practical DP push also motivates important theoretical progress.

A prominent direction of theoretical improvement comes from alternative quantifications of privacy, especially those that offer tighter or even lossless accounting of composition. The classical $(\epsilon, \delta)$-DP \cite{dwork2006our} (known as approximate DP) was first refined by forms of privacy derived from R\'enyi divergence including zCDP \cite{DworkR16,bun2016concentrated} and RDP \cite{mironov2017renyi}.
Later, exact forms of privacy accounting were developed including privacy profiles, privacy loss distributions (PLDs) \cite{sommer2019privacy}, and $f$-DP \cite{DongRS19}.
These exact forms of accounting are practically and theoretically appealing as they tightly capture composition as well as resilience to membership inference attacks framed as hypothesis tests \cite{WassermanZ10,DongRS19}.
Exact composition, while more complex, can be computed efficiently to arbitrary precision by numerical methods \cite{KoskelaJH20,gopi2021numerical}.

Theoretical improvements enabling new forms of composition are also necessary for practical DP usage.
Concurrent composition \cite{vadhan2021concurrent} supports composition of DP mechanisms running over multiple rounds of parallel interactions with the user---such as the sparse vector technique \cite{dwork2009complexity}.
This is required in systems that handle queries from multiple parties simultaneously where it is not possible to enforce a sequential ordering of the queries. Individual users may also run DP computations concurrently with a long-running, multi-step mechanism (see e.g. \cite{kostopoulou2023turbo}).

A natural question is whether developments in privacy accounting are compatible with new forms of composition.
Ideally, more flexible forms of composition come for free, i.e. existing results for standard composition can be applied to flexibly composed mechanisms without additional privacy costs.
Free flexible composition also enables reuse of existing results including tight analyses for specific mechanisms and empirical computational results.
As it turns out, concurrent composition is indeed free for all common forms of privacy \cite{vadhan2023concurrent,lyu2022composition}.
Specifically, Vadhan and Zhang \cite{vadhan2023concurrent} show that for both $f$-DP (including PLD and approximate DP) and RDP (including zCDP) notions of privacy, concurrent composition can be seen as a post-processing of standard composition and thus has the same cost.

Another form of flexible composition and the subject of this work is fully adaptive composition, which concerns the composition of DP mechanisms with interactively chosen privacy costs. Standard composition on the other hand requires the privacy costs to be fixed in advance.
This is insufficient for long-running applications in which analysts interactively re-assess the privacy budget allocated to queries in order to meet operational needs \cite{lecuyer2019privacy,kostopoulou2023turbo,kuchler2024cohere}.
Fully adaptive analysis can also lead to substantial privacy loss savings through advanced versions of parallel composition \cite{lecuyer2019privacy} and individual (personalized) DP \cite{ebadi2015differential}, which enables data dependent privacy accounting in ML model training \cite{FeldmanZ21,yu2022individual} and measurement systems \cite{tholoniat2024cookie,ghazi2025differential}.
Fully adaptive composition is formalized in terms of \emph{privacy filters} \cite{RogersRUV16}.
A privacy filter is an algorithm that monitors an interaction between an analyst and a data curator and permits queries as long as a privacy budget condition is maintained.
The stopping condition should be as permissive as possible such that resulting interaction provably satisfies a meaningful privacy guarantee.

As with concurrent composition, a key research question is whether advanced privacy accounting techniques can be used to construct free privacy filters for fully adaptive composition.
Haney et al. \cite{haney2023concurrent} show that concurrent, adaptive composition reduces to adaptive composition.
Therefore, the full expressivity of both concurrency and adaptivity hinges on the existence of free privacy filters for different notions of DP.
However, privacy filters are not as well understood as concurrent composition.
We know that adaptive composition under privacy filters is free for $\epsilon$-DP \cite{RogersRUV16} and R\'enyi-based DP \cite{FeldmanZ21,lecuyer2021practical} (zCDP, RDP). The GDP filter, which applies exact ($f$-DP, PLD) accounting to the Gaussian mechanism is also free \cite{ST22,KTH22}. This filter was also recently extended to support approximate GDP budgets \cite{tran2026f}. However, for $(\epsilon, \delta)$-DP filters, the best known filter by Whitehouse et al. \cite{WRRW23} extends the RDP filter from \cite{FeldmanZ21} to approximate RDP and zCDP, enabling a filter based on lossy translation from $(\epsilon, \delta)$-DP to approximate zCDP \cite{bun2016concentrated}.
This yields an $(\epsilon, \delta)$-DP privacy filter that is asymptotically free but with worse constants than non-adaptive composition due to translation loss.
Even less is known about filters under exact forms of privacy.
We study the \emph{natural filter} in which the realized privacy costs of each query are composed by exact accounting and the realized total cost is compared to a fixed budget.

We make four important contributions towards fully understanding privacy filters:
\begin{itemize}
    \item First, in \S\ref{sec:adversary_theory} we present a general theory of adaptive mechanisms that issue adversarial queries with a constrained per-query privacy cost.
    \item In \S\ref{sec:residue_filters}, we show that existing privacy filters are subsumed by \emph{residue filters}, which perform exact accounting under a residual update condition. We develop a new GDP residue filter that offers significant privacy savings compared to the standard GDP filter for some settings.
    \item In \S\ref{sec:natural_filter} we apply our theory of constrained adversaries to calculate the exact privacy cost of an interaction with the natural filter. We show that the natural filter is universally free for any budget only for queries that are closed under composition and totally-ordered, in which case the natural filter is captured by a residue filter.
    \item In \S\ref{sec:natural-apx}, we consider the widely used natural $(\eps, \delta)$-DP filter, i.e. the filter based on exact composition and $(\eps, \delta)$-DP budget. We show that, while this filter can fail, it does not fail too badly: the interaction still satisfies $(\chi \cdot \eps, \chi \cdot \delta)$-DP where $\chi$ is only poly-logarithmic in $1/\eps,1/\delta$ and the number of rounds of adaptivity.
\end{itemize}

Our work also includes some minor contributions including that PLDs can be assigned a natural ordering satisfying a completeness property enabling one to take a supremum of PLDs (see \Cref{prop:sup-conv}). As far as we are aware this property has not been documented in the privacy accounting literature. We also record in \S\ref{sec:implications-special-case} a number of families of queries and budgets that do and do not admit free natural filters. Our key result showing the necessity of total-ordering for universally free natural filters is given in \S\ref{sec:total_ordering_proof} and involves a privacy gap amplification technique that may be of general interest.

We note that a recent preprint independent and simultaneous to our work also studies natural filters \cite{tran2026f}. That work states that a total-ordering condition is necessary and sufficient for free natural filters when the queries have unbounded support. Our work contains an analogous result for general queries. In particular, our result also applies to discretized finite-support queries that arise in practical privacy accounting. Our work also offers a fine-grained analysis of the cost of interacting with the natural filter for particular budgets.

%% file: sections/prelims.tex
\section{Preliminaries}
\label{sec:preliminaries}

Differential privacy is the study of randomized algorithms (called mechanisms) $M : \cD \to \cY$ where $\cD$ denotes some space of datasets containing sensitive information. Differential privacy requires that an adversary should not gain much statistical advantage for inferring whether the mechanism was run on dataset $D_1 \in \cD$ or a second dataset $D_2 \in \cD$ when the datasets are close to each other in some sense. That is, the distributions of $M(D_1)$ and $M(D_2)$ should be difficult to distinguish from one another in the hypothesis testing sense and this should hold uniformly for any close datasets. The measure of closeness for datasets is determined by a neighbouring relation but this turns out not to be relevant to our work. In fact, our conclusions hold for an arbitrary pair of datasets. For this reason, we fix throughout a single arbitrary pair of datasets $(D_1, D_2) \in \cD \times \cD$ and view the privacy characteristics of a mechanism $M$ in terms of the hypothesis testing problem $M(D_1)$ vs. $M(D_2)$.

In general, the privacy characteristics of a mechanism may not be determined by a single pair. For the purposes of constructing filters, this can be handled by considering multiple dataset pairs that characterize some regime of the worst-case privacy characteristics of a mechanism. One can also smooth over the choices of datasets by considering dominating pairs, though this is lossy in general. See \cite{ZhuDW22} for more discussion.

In our work we reason about the composition of adaptive mechanisms, which are permitted to inspect outcomes of any previously run mechanism. It is enough to define adaptivity for just two rounds of interaction as we can define more complex adaptively composed mechanisms inductively.

\begin{definition}
    Let $M_1 : \cD \to \cY_1$ be a mechanism and let $M_2 : \cD \times \cY_1 \to \cY_2$ be an adaptive mechanism. The adaptive composition of $M_1$ and $M_2$ is the mechanism
    \begin{align*}
        (M_1 \otimes M_2)(D) := (Y_1, Y_2), Y_1 \sim M_1(D), Y_2 \sim M_2(D; Y_1).
    \end{align*}
\end{definition}

It is natural in general to view the privacy characteristics of any private mechanism as a hypothesis testing problem \cite{WassermanZ10,DongRS19}. As such it is desirable to furnish hypothesis testing problems $(P, Q)$ with some kind of ordering so that we can compare the privacy guarantees offered by various mechanisms. We also require an ordering so that we can make sense of natural privacy filters. Fortunately testing pairs can be equipped with a very natural order, namely the Blackwell order.

\begin{definition}
    \label{def:dominating}
    For a pair of distributions $(P, Q)$ defined on a common probability space $\Omega$ and a second pair $(P', Q')$ defined on $\Omega'$, we say that $(P', Q')$ dominates $(P, Q)$ in the Blackwell order, written $(P, Q) \preceq (P', Q')$, if we can find a Markov kernel $\kappa$ from $\Omega'$ to $\Omega$ such that $P = \kappa P'$ and $Q = \kappa Q'$. We write $(P, Q) \prec (P', Q')$ if $(P, Q) \preceq (P', Q')$ but not vice-versa.
    If $(P, Q) \preceq (Q, P) \preceq (P, Q)$ the pair is called symmetric.
\end{definition}

Informally, the Blackwell order says that more information is available for distinguishing $P'$ and $Q'$ compared to $P$ vs $Q$. For mechanisms, we can interpret this as $(M(D_1), M(D_2)) \preceq (M'(D_1), M'(D_2))$ when $M$ can be reconstructed from $M'$ by introducing additional randomness through the kernel $\kappa$. Note that the Blackwell order is a partial order.

Instead of a carrying around a pair of distributions to represent the privacy characteristics of a mechanism, we can consolidate them into a privacy loss distribution (PLD) \cite{DworkR16}. As we will see, the PLD possesses very desirable composition properties and is a practical basis for privacy loss computations.
To formalize the PLD, we require likelihood ratios. Recall that, for a distribution $P$ absolutely continuous with respect to a distribution $Q$ ($P \ll Q$) on a probability domain $\Omega$, their likelihood ratio is the Radon-Nikodym derivative $\frac{dP}{dQ}$. However, in some cases, we would like to consider private mechanisms that violate absolute continuity and still construct the PLD. In this case, we consider the Lebesgue decomposition of $P = P_\ll + P_\bot$ into absolutely continuous and singular components. For points in the support of the singular measure $P_\bot$, we assign likelihood $\infty$ and, for points outside, we may safely assign the likelihood $dP_\ll/dQ$. For simplicity, we denote this extended likelihood ratio as $dP/dQ$.

\begin{definition}
    Let $P$ and $Q$ be distributions on $\Omega$. The PLD of $(P, Q)$, denoted $\pld(P \parallel Q)$, is the distribution over $\overline{\bR} := \bR \cup \{\pm\infty\}$ of $\log(\frac{dP}{dQ}(\omega)), \omega \sim P$.
\end{definition}

For shorthand, we denote by $\pld(M) := \pld(M(D_1) \parallel M(D_2))$ the PLD of a mechanism $M$. Note that a distribution can be easily recognized as a PLD as follows.

\begin{proposition}
    A distribution $L$ on $\overline{\bR}$ is the PLD of some $(P, Q$) if and only if it satisfies $\E_{Z \sim L}[e^{-Z}] \leq 1$. In this case, one such pair is $(L, L')$ where $L'$ is the Esscher tilt of $L$, namely $dL'(z) := e^{-z} dL(z)$ for $z \in \bR$ and $L'(\{-\infty\}) := 1 - \E_{Z \sim L}[e^{-Z}]$.
\end{proposition}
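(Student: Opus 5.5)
Both directions are direct computations from the definition of the PLD via the Lebesgue decomposition, so the plan is to verify each one by hand.

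\emph{Necessity.} Let $L = \pld(P \parallel Q)$. Write $P = P_\ll + P_\bot$ and set $r = dP_\ll/dQ$. On the support of $P_\bot$ the extended likelihood is $\infty$, so $e^{-Z}=0$ there. Hence
\begin{align*}
\E_{Z\sim L}[e^{-Z}] = \int_{\{r>0\}} \frac{1}{r}\, dP_\ll = \int_{\{r>0\}} \frac{1}{r}\, r\, dQ = Q(\{r>0\}) \le 1 .
\end{align*}
The points with $r=0$ carry no $P$-mass, so they do not contribute. The convention $e^{-(-\infty)}$ is harmless, because $Z=-\infty$ only occurs on a $P$-null set.

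\emph{Sufficiency.} Assume $\E_L[e^{-Z}]\le 1$ and define $L'$ as in the statement. Then $L'$ is a nonnegative measure on $\overline{\bR}$ with total mass
\begin{align*}
\int_{\bR} e^{-z}\,dL + 1-\E_L[e^{-Z}] = 1 .
\end{align*}
This uses that $L$ gives no mass to $\{-\infty\}$: the hypothesis forces $L(\{-\infty\})=0$, otherwise $\E_L[e^{-Z}]=\infty$. Note that $L'$ is supported on $\bR\cup\{-\infty\}$ and puts no mass on $+\infty$.

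Next, take $P=L$ and $Q=L'$ on $\Omega=\overline{\bR}$ and compute the Lebesgue decomposition of $L$ with respect to $L'$. On $\bR$ we have $dL/dL' = e^{z}$, since $e^{-z}>0$ there, so $L|_{\bR}\ll L'|_{\bR}$. The atom of $L$ at $+\infty$ is singular with respect to $L'$, because $L'(\{+\infty\})=0$. It therefore receives extended likelihood $\infty$, i.e.\ log-likelihood $+\infty$. On $\{-\infty\}$, $L$ has no mass.

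It follows that $\log\frac{dL}{dL'}(\omega)=\omega$ for $L$-almost every $\omega$. The PLD, which is the law of this log-likelihood under $\omega\sim L$, is then $L$ itself.

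The main point needing care is the bookkeeping at the points $\pm\infty$: checking $L(\{-\infty\})=0$, and checking that the atom at $+\infty$ is correctly classified as singular, so that it maps to $+\infty$ as the paper's extended likelihood convention requires.
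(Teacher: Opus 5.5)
Your proof is correct and complete. The paper states this proposition without proof, so there is no argument of the paper's to compare against. Your direct verification of both directions via the Lebesgue decomposition, including the bookkeeping at $\pm\infty$, is the natural way to establish it.

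One remark ties your two halves together. The necessity computation actually gives the identity $\E_{Z\sim L}[e^{-Z}] = Q(\{r>0\})$. So the deficit $1-\E_{Z\sim L}[e^{-Z}]$ is exactly the mass of the part of $Q$ that is singular to $P$. That is precisely what the Esscher-tilt construction places on the $L$-null point $-\infty$.

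You also read the paper's ``support of $P_\bot$'' as a $Q$-null set carrying $P_\bot$, rather than its topological support. That reading is correct, and it is the one under which the extended likelihood is well defined and your argument goes through.
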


PLDs also inherit the Blackwell order: For PLDs $L$ and $L'$, we say that $L \preceq L'$ if there exist pairs $(P, Q)$ and $(P', Q')$ such that $L := \pld(P \parallel Q)$, $L' = \pld(P' \parallel Q')$, and $(P, Q) \preceq (P', Q')$. It is straightforward to show that this order is well-defined regardless of the underlying representations $(P, Q)$ and $(P', Q')$. When equipped with the Blackwell order, PLDs also form a partially-ordered commutative monoid under convolution.

\begin{proposition}
    \label{prop:pld_convolution_properties}
    For any PLDs $L_1, L_2, L_3, L'_1$,
    \begin{enumerate}
        \item The convolution $L_1 \oplus L_2$ is also a PLD;
        \item The identically zero distribution $\Iddist$ and the identically $\infty$ distribution, also denoted $\infty$, are PLDs such that $\Iddist \preceq L_1 \preceq \infty$, $L_1 \oplus \Iddist = L_1$, and $L_1 \oplus \infty = \infty$;
        \item $(L_1 \oplus L_2) \oplus L_3 = L_1 \oplus (L_2 \oplus L_3)$;
    \item $L_1 \oplus L_2 = L_2 \oplus L_1$; and
        \item If $L_2 \preceq L'_2$, then $L_1 \oplus L_2 \preceq L_1 \oplus L'_2$.
    \end{enumerate}
\end{proposition}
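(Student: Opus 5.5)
The plan is to realize every PLD by a concrete testing pair and to read all five claims off facts about pairs, using the preceding proposition (a distribution $L$ on $\overline{\bR}$ is a PLD if and only if $\E_{Z\sim L}[e^{-Z}]\le 1$) and the well-definedness of the Blackwell order on PLDs. The basic lemma I will prove first is that product pairs have convolved PLDs: if $L_i=\pld(P_i\parallel Q_i)$, then $\pld(P_1\times P_2\parallel Q_1\times Q_2)=L_1\oplus L_2$. Off the singular parts, $\frac{d(P_1\times P_2)}{d(Q_1\times Q_2)}(\omega_1,\omega_2)=\frac{dP_1}{dQ_1}(\omega_1)\frac{dP_2}{dQ_2}(\omega_2)$. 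The singular part of $P_1\times P_2$ is exactly where at least one coordinate lies in its singular set, and there the extended ratio is $\infty$. The log of the joint ratio is then a sum of independent log-ratios under $P_1\times P_2$, with the convention $\infty+z=\infty$. The value $-\infty$ has $P_i$-probability zero, so no $\infty-\infty$ arises.

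Items (1), (3) and (4) follow directly. For (1), either the product lemma applies, or one computes $\E[e^{-(Z_1+Z_2)}]=\E[e^{-Z_1}]\,\E[e^{-Z_2}]\le 1$ by independence and applies the characterization. Associativity and commutativity are inherited from addition of independent random variables on $\bR\cup\{\infty\}$, which is commutative and associative with the above convention.

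For (2), $\Iddist$ is the PLD of $(P,P)$ and satisfies $\E[e^{0}]=1$. The identically $\infty$ distribution is the PLD of a mutually singular pair, e.g.\ $(\delta_0,\delta_1)$, and satisfies $\E[e^{-\infty}]=0\le 1$. Convolving with $\delta_0$ is the identity, and adding $\infty$ gives $\infty$ almost surely. For the order: given $L=\pld(P\parallel Q)$, the pair $(\delta_0,\delta_0)$ is obtained from $(P,Q)$ by the constant kernel, so $\Iddist\preceq L$. In the other direction, take the singular pair $(\delta_0,\delta_1)$ and the kernel sending $0\mapsto P$ and $1\mapsto Q$; this produces $(P,Q)$, so $L\preceq\infty$.

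For (5), pick representations $(P_2,Q_2)\preceq(P_2',Q_2')$ with kernel $\kappa$, together with any $(P_1,Q_1)$ representing $L_1$. The product kernel $\mathrm{id}\otimes\kappa$ maps $(P_1\times P_2',Q_1\times Q_2')$ to $(P_1\times P_2,Q_1\times Q_2)$. The product lemma and well-definedness then give the claim.

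The main obstacle is the product lemma in the presence of singular parts, specifically identifying the Lebesgue decomposition of the product measure. I will handle it by writing $P_i=P_{i,\ll}+P_{i,\bot}$. The product $P_{1,\ll}\times P_{2,\ll}$ is absolutely continuous with respect to $Q_1\times Q_2$, with the product density. The three cross terms are singular, since each is concentrated on a set of the form $N_1\times\Omega_2$ or $\Omega_1\times N_2$ with $Q_i(N_i)=0$. Measurability and the Radon–Nikodym manipulations are routine.
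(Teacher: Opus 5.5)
Your proof is correct. The paper gives no proof of this proposition; it is stated in the preliminaries as background.

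\textbf{Relation to the paper's framework.} Your product lemma is exactly the non-adaptive case of the ``moreover'' clause of \Cref{prop:composition_convolution}. Your product-kernel argument for (5) is the non-adaptive case of that proposition's first clause. So you are supplying arguments the paper leaves implicit. They match the Bayes-rule factorization $\ell(y_1,y_2)=\ell^1(y_1)\,\ell^2_{y_1}(y_2)$ used in the appendix proof of \Cref{prop:hockey_stick_composition}.

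\textbf{Comparison with the hockey-stick route.} The paper's own toolkit suggests a different argument for (5). Use the identity $h_{L_1\oplus L_2}(x)=\E_{Z\sim L_1}[h_{L_2}(xe^{-Z})]$, as in the proof of \Cref{lemma:total_ordering_implies_commutativity}. It is pointwise monotone in $h_{L_2}$, so \Cref{prop:f-hs-equivalence} finishes the proof. That route works purely at the level of PLDs, with no choice of representing pairs, but it rests on the equivalence of the Blackwell and hockey-stick orders, i.e.\ on Blackwell's theorem. Your construction is more elementary: it needs only the existential definition of $\preceq$ on PLDs. For the same reason you do not actually need well-definedness in (5), since exhibiting one pair of representatives related by a kernel already proves the conclusion.

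\textbf{Singular parts.} You make the extended likelihood ratio precise via a $Q$-null carrier $N=(N_1\times\Omega_2)\cup(\Omega_1\times N_2)$, rather than the ``support of $P_\bot$'' phrasing in the preliminaries. That is the right formulation. You should add one line: any two $Q$-null carriers of $P_\bot$ differ by a $P$-null set, since $P_\bot$ vanishes off each and $P_\ll$ vanishes on $Q$-null sets. Hence the PLD of the product pair does not depend on which carrier you use.
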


A highly prized property of PLDs is that the privacy characteristics of an adaptive mechanism corresponds to convolution of PLDs, provided that each component fixes a privacy budget in advance. This is important because the convolution of a long sequence of distributions can be computed very efficiently by applying the fast Fourier transform \cite{KoskelaJH20}. Note that we will require more than this to understand fully adaptive composition.

\begin{proposition}
    \label{prop:composition_convolution}
    Let $M_1 : \cD \to \cY_1$, $M'_1 : \cD \to \cY'_1$, $M_2 : \cD \times \cY_1 \to \cY_2$, and $M'_2 : \cD \times \cY'_1 \to \cY'_2$ be adaptive mechanisms such that $\pld(M_1) \preceq \pld(M'_1)$ and $\pld(M_2(\cdot; y_1)) \preceq \pld(M'_2(\cdot; y_1))$ for every $y_1 \in \cY_1$. Then $\pld(M_1 \otimes M_2) \preceq \pld(M'_1 \otimes M'_2)$. Moreover, if $\pld(M_1) = L_1$ and $\pld(M_2(\cdot; y_1)) = L_2$ for every $y_1 \in \cY_1$, then $\pld(M_1 \otimes M_2) = L_1 \oplus L_2$.
\end{proposition}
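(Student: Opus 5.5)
The plan is to reduce both claims to statements about pairs of distributions and then pass to PLDs. Since PLDs inherit the Blackwell order from representing pairs, and the order does not depend on the choice of representation, I can pick convenient representatives throughout.

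\emph{Monotonicity.} Write $P_1 = M_1(D_1)$, $Q_1 = M_1(D_2)$, and similarly $P_1', Q_1'$. Use the same notation for the conditional laws $P_2^{y}, Q_2^{y}$ of $M_2(\cdot; y)$ and $P_2'^{y}, Q_2'^{y}$ of $M'_2(\cdot;y)$. The first step is to build one kernel $\kappa$ from $\cY'_1 \times \cY'_2$ to $\cY_1 \times \cY_2$ that maps both laws of $M'_1 \otimes M'_2$ onto the corresponding laws of $M_1 \otimes M_2$.

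The stage-one domination gives a kernel $\kappa_1$ from $\cY'_1$ to $\cY_1$ with $\kappa_1 P_1' = P_1$ and $\kappa_1 Q_1' = Q_1$. The hypothesis on the second stage only compares PLDs of $M_2(\cdot;y_1)$ and $M'_2(\cdot;y_1)$ for the same $y_1$. This does not match the indexing of $\cY'_1$, and I read the statement as indexing both by $y_1 \in \cY_1$. So I will not compose kernels directly on the original output spaces. Instead I will work in canonical form: represent each pair by its PLD together with the Esscher tilt, as in the earlier proposition. Then for each $y_1$ there is a kernel $\kappa_2^{y_1}$ on $\overline{\bR}$ with $\kappa_2^{y_1}(L_2'^{y_1}, \tilde L_2'^{y_1}) = (L_2^{y_1}, \tilde L_2^{y_1})$. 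Measurable selection of $\kappa_2^{y_1}$ in $y_1$ is a technical point that I will handle by a standard selection argument or by assuming standard Borel spaces.

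The cleanest route is to do the inequality in two steps and use transitivity of $\preceq$:
\[
\pld(M_1\otimes M_2) \preceq \pld(M_1 \otimes M_2') \preceq \pld(M_1'\otimes M_2').
\]
\begin{itemize}
\item For the first inequality, define the kernel $(y_1, z) \mapsto (y_1, \kappa_2^{y_1}(z))$. Here the second mechanism is replaced by its canonical PLD representation, and this replacement is harmless because the PLD of a pair is a sufficient statistic. The kernel fixes the first coordinate, so it maps the joint laws correctly, conditionally on each $y_1$.
\item For the second inequality, the second stage is common. The difficulty is that $M_2'$ must be defined on $\cY_1'$. I resolve this by reducing to the canonical setting: once the second stage is a fixed PLD per round, the PLD of the composition depends on the first stage only through the mixture structure. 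When stage-one outputs are taken as PLD values, stage-two behaviour becomes a function of the stage-one output pushed through $\kappa_1$. Applying $\kappa_1$ to the first coordinate and leaving the second untouched then gives the kernel.
\end{itemize}
I expect this bookkeeping, namely making the adaptive second stage compatible with $\kappa_1$, to be the main obstacle. In the worst case I would state the result for the natural reading where $M'_2$ is indexed by the post-processed outcome.

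\emph{Exact convolution.} Suppose $\pld(M_2(\cdot;y_1)) = L_2$ for all $y_1$. The extended likelihood ratio of the joint pair factorizes as
\[
\frac{dP}{dQ}(y_1,y_2) = \frac{dP_1}{dQ_1}(y_1)\cdot \frac{dP_2^{y_1}}{dQ_2^{y_1}}(y_2).
\]
The convention $\infty \cdot c = \infty$ for $c > 0$ handles singular parts. This holds because the joint densities are products of the marginal density and the conditional densities, and $P$-almost surely no $0 \cdot \infty$ arises. Hence the privacy loss is $Z_1 + Z_2$, where $Z_1 \sim L_1$ under $P$. Conditionally on $Y_1 = y_1$, $Z_2 \sim L_2$ regardless of $y_1$. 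Therefore $Z_2$ is independent of $Z_1$ with law $L_2$, and the law of $Z_1 + Z_2$ is $L_1 \oplus L_2$, as claimed.
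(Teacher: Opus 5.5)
Your exact-convolution argument is correct. It matches the Bayes'-rule factorization $\ell(y_1,y_2)=\ell^1(y_1)\,\ell^2_{y_1}(y_2)$ used in the paper's proof of \Cref{prop:hockey_stick_composition}. Your first monotonicity step (common first stage, pointwise dominated second stage) is also correct. You can drop the canonical representatives and the measurable selection of $y_1\mapsto\kappa_2^{y_1}$, though. By \Cref{prop:hockey_stick_composition}, $h_{M_1\otimes M_2}(x)=\E_{Y_1\sim M_1(D_1)}[h_{M_2(\cdot;Y_1)}(x/\ell^1(Y_1))]$ is pointwise monotone in each curve $h_{M_2(\cdot;y_1)}$, and \Cref{prop:f-hs-equivalence} turns the resulting inequality of hockey-stick curves into Blackwell domination. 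The paper gives no separate proof of this proposition; this common-first-stage case plus exact convolution is all that \Cref{lemma:restricted_adversary} and \Cref{thm:universal_free_natural_filter} actually use.

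The gap is your second step, $\pld(M_1\otimes M_2')\preceq\pld(M_1'\otimes M_2')$ with $M_2'$ adaptive.

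\emph{Why the kernel fails.} Reducing to a canonical first stage throws away exactly what an adaptive second stage reads. The composed PLD is determined by the joint law, under $M_1(D_1)$, of $\log\ell^1(Y_1)$ and $\pld(M_2(\cdot;Y_1))$, and $\pld(M_1)\preceq\pld(M_1')$ says nothing about that coupling. Concretely, applying $\kappa_1$ to the first coordinate and leaving the second untouched gives, conditionally on the new first coordinate $y_1$, a mixture of $M_2'(D_i;y_1')$ over the posterior of $y_1'$. It does not give $M_2'(D_i;y_1)$.

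\emph{The step is false.} Take $\cY_1=\cY_1'=\{a,b\}$. Let $M_1$ always output $a$ and $M_1'$ always output $b$, so both have PLD $\Iddist$. Let $M_2=M_2'$, where $M_2(D;a)$ outputs $1$ if $D=D_1$ and $0$ otherwise (PLD $\infty$), and $M_2(D;b)$ always outputs $0$ (PLD $\Iddist$). Every hypothesis holds with equality, yet $\pld(M_1\otimes M_2)=\infty\npreceq\Iddist=\pld(M_1'\otimes M_2')$. So for distinct first stages you are missing a hypothesis, not bookkeeping.

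\emph{What is true.} If the dominating second stage is non-adaptive with PLD $L_2'$, then $\pld(M_1\otimes M_2)\preceq\pld(M_1)\oplus L_2'\preceq\pld(M_1')\oplus L_2'=\pld(M_1'\otimes M_2')$. This follows from your first step, exact convolution, and \Cref{prop:pld_convolution_properties}. More generally, it suffices that $\pld(M_2(\cdot;y_1))\preceq\pld(M_2'(\cdot;y_1'))$ for every $y_1'$ and $\kappa_1(\cdot\mid y_1')$-almost every $y_1$. Then $(y_1',y_2')\mapsto(y_1,\kappa^{y_1,y_1'}(y_2'))$, with $y_1\sim\kappa_1(\cdot\mid y_1')$ and $\kappa^{y_1,y_1'}$ a measurably chosen garbling of $M_2'(\cdot;y_1')$ into $M_2(\cdot;y_1)$, is a valid garbling. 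Your fallback of feeding $M_2'$ the post-processed outcome changes the mechanism $M_1'\otimes M_2'$ instead of proving the stated inequality.
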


We also define a few important classes of PLDs that will feature later in our discussion.

\begin{definition}
    \label{def:pld_class}
    We denote by $\cLall$ the set of all PLDs and by $\cLsym := \{\pld(P \parallel Q) : (P, Q) \text{ symmetric}\}$ the set of all symmetric PLDs.
\end{definition}

\begin{definition}
    \label{def:approx_pure_dp}
    For $\eps \geq 0$, $\delta \in [0, 1]$, let
    \begin{align*}
        R_{\eps, \delta} := \pld(\delta 1_{\bot} + (1 - \delta)\textrm{Bern}(1/(1 + e^\eps)) \parallel \delta 1_{\top} + (1 - \delta)\textrm{Bern}(1/(1 + e^{-\eps})))
    \end{align*}
    We say that $L \in \cLall$ satisfies $(\eps, \delta)$-DP if $L \preceq R_{\eps, \delta}$ or $\eps$-DP if $L \preceq R_\eps := R_{\eps, 0}$. We denote $\cLpure := \{R_\eps : \eps \geq 0\}$ and $\cLapprox := \{R_{\eps, \delta} : \eps \geq 0, \delta \in [0, 1]\}$.
\end{definition}

\begin{definition}
    \label{def:gdp}
    For $\mu \geq 0$, we define $G_\mu := \pld(N(0, 1) \parallel N(\mu, 1))$ and say that $L \in \cLall$ satisfies $\mu$-GDP when $L \preceq G_\mu$. We denote $\cLgdp := \{G_\mu : \mu \geq 0\}$.
\end{definition}

\begin{definition}
    \label{def:rdp}
    Lastly, we say that $L \in \cLall$ satisfies $(\alpha, \rho$)-RDP if $\E_{Z \sim L}[e^{(\alpha - 1)Z}] \leq e^{(\alpha - 1)\rho}$.
\end{definition}

All three forms of privacy offer simple composition guarantees: For pure DP, $R_{\eps_1} \oplus R_{\eps_2} \preceq R_{\eps_1 + \eps_2}$ \cite{dwork2006calibrating}, for GDP $G_{\mu_1} \oplus G_{\mu_2} = G_{\sqrt{\mu_1^2 + \mu_2^2}}$ \cite{DongRS19}, and given $L_1$ satisfying $(\alpha, \rho_1)$-RDP and $L_2$ satisfying $(\alpha, \rho_2)$-RDP, $L_1 \oplus L_2$ satisfies $(\alpha, \rho_1 + \rho_2)$-RDP \cite{mironov2017renyi}. More discussion of symmetry can be found in \Cref{app:symmetry}.

Closely related to the PLD is the hockey-stick curve: a convex reparameterization of the privacy profile, which is in turn a generalization of $(\eps, \delta)$-DP.

\begin{definition}
    For a pair of distributions $(P, Q)$ over $\Omega$, their hockey-stick divergence at order $x \in \bR^\times := (0, \infty)$ is
    \begin{align*}
        H_x(P \parallel Q) := \sup_{E \subseteq \Omega} P(E) - xQ(E).
    \end{align*}
\end{definition}

For convenience, we write $H_x(M) := H_x(M(D_1) \parallel M(D_2))$ and sometimes $h_M(x) := H_x(M)$ for the hockey-stick curve of a mechanism $M$.
The hockey-stick curve is very closely related to the classical $(\eps, \delta)$ form of differential privacy: a mechanism $M$ satisfies $(\eps, \delta)$-DP exactly when $H_{e^\eps}(M) \leq \delta$. In general, hockey-stick curves are characterized by a few simple conditions.

\begin{proposition}[\cite{ZhuDW22} Lemma 9]
    \label{prop:hs_characterization}
    A curve $h : \bR^\times \to [0, 1]$ is the hockey-stick curve of some $(P, Q)$ iff $h$ is convex and decreasing with $\lim_{x \to 0}h(x) = 1$ and $h(x) \geq 1 - x$ for all $x \in \bR^\times$.
\end{proposition}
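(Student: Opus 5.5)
The plan is to prove the two directions of \Cref{prop:hs_characterization} separately, working throughout with the variational formula $H_x(P \parallel Q) = \sup_E P(E) - xQ(E)$.

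\emph{Necessity.} For each fixed event $E$, the map $x \mapsto P(E) - xQ(E)$ is affine and nonincreasing in $x$. A pointwise supremum of such maps is convex and nonincreasing, so $h$ is convex and decreasing. Taking $E = \Omega$ gives $h(x) \geq 1 - x$, and taking $E = \emptyset$ gives $h \geq 0$. Trivially $h \leq 1$, so combined with $h(x) \geq 1-x$ we get $\lim_{x \to 0} h(x) = 1$. This direction is routine.

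\emph{Sufficiency.} Given $h$ with the listed properties, I construct $(P, Q)$ on $\Omega = \overline{\bR}$, essentially as a PLD. Write the hockey-stick divergence via the likelihood ratio $\ell = dP/dQ$: $H_x = \E_Q[(\ell - x)_+] + P_\bot(\Omega)$, where $P_\bot$ is the singular part. Then $-h'(x^+) = Q(\ell > x)$, or equivalently the right derivative $h'$ determines the distribution of $\ell$ under $Q$ on $(0,\infty)$. So I define $Q$ on $(0,\infty)$ by $Q((x,\infty)) = -h'(x^+)$; this is a valid nonincreasing right-continuous tail function because $h$ is convex and decreasing. I place the remaining $Q$-mass $1 + h'(0^+)$ at $\ell = 0$ (an atom where $P$ vanishes). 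Next I set $dP = \ell\, dQ$ on $(0,\infty)$ and put an atom of $P$ at $\ell = \infty$ of mass $\lim_{x\to\infty} h(x)$.

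\emph{Verifying the construction.} I must check three things:
\begin{enumerate}
\item $\E_Q[\ell] + h(\infty) = 1$, so that $P$ is a probability measure;
\item $Q$ has total mass $1$, which needs $-h'(0^+) \leq 1$;
\item the resulting hockey-stick curve equals $h$.
\end{enumerate}
Item (2) follows because $h \leq 1$, $h \geq 1-x$, and $h(x) \to 1$ as $x \to 0$ force $h(x) \geq 1 - x$ near $0$ with $h(0^+) = 1$, so the slope is at least $-1$ by convexity. For items (1) and (3), integrate by parts: $h(x) - h(\infty) = \int_x^\infty Q(\ell > t)\,dt = \E_Q[(\ell - x)_+]$, which gives (3). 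Letting $x \to 0$ gives $1 - h(\infty) = \E_Q[\ell]$, which gives (1).

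\emph{Main obstacle.} The delicate step is the boundary behaviour: the limits at $0$ and $\infty$, possible infinite slope issues, and atoms. I would handle these by working with right derivatives and using monotone convergence in the integration by parts. Alternatively, the paper's earlier PLD characterization gives a shortcut: define the law of $\log \ell$ under $P$ and invoke the Esscher-tilt proposition, checking $\E[e^{-Z}] \leq 1$.
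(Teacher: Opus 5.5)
Your proof is correct. The paper gives no argument of its own for \Cref{prop:hs_characterization}; it imports the result from Zhu--Dong--Wang. So your write-up supplies a self-contained proof rather than retracing the paper's.

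Necessity is the standard supremum-of-affine-maps argument. For sufficiency you build the pair directly on likelihood-ratio space, which is $[0,\infty]$ rather than $\overline{\bR}$ unless you pass to logarithms:
\begin{itemize}
\item the $Q$-law of $\ell$ is read off from $Q(\ell > x) = -h'(x^+)$;
\item the residual masses are $1 + h'(0^+)$ at $\ell = 0$ under $Q$ and $h(\infty)$ at $\ell = \infty$ under $P$;
\item the layer-cake identity $\E_Q[(\ell - x)_+] = \int_x^\infty Q(\ell > t)\,dt = h(x) - h(\infty)$ closes both remaining checks.
\end{itemize}

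Two details deserve one line each. First, the tail $-h'(x^+)$ must tend to $0$ as $x \to \infty$ for $Q$ to be a measure on $[0,\infty)$. This holds because a decreasing convex function bounded below by $0$ has right slopes tending to $0$. Second, $h'(x^+) \geq -1$ holds because $h'(x^+)$ dominates the slope $(h(x) - 1)/x$ of the chord from $(0,1)$ to $(x, h(x))$, and $h(x) \geq 1 - x$ makes that slope at least $-1$.

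The route more naturally suggested by the paper's preliminaries is the dual one. By \Cref{prop:tradeoff_to_hs}, you would conjugate $h$ to get a candidate $\tau^{-1}$, check the conditions of \Cref{prop:tradeoff_characterization} (including endpoint behaviour), and invoke Fenchel--Moreau. The Dong--Roth--Su construction behind that characterization is the same derivative trick performed in the $\alpha$-coordinate: uniform $P$, and $Q$ with density $-\tau'$ plus an atom, as in the paper's proof of \Cref{prop:piecewise-linear-composition}.

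Your version avoids the conjugation step and yields more. The identity $Q(\ell > x) = -h'(x^+)$ holds for \emph{every} pair with hockey-stick curve $h$. It therefore shows that $h$ determines the PLD, which is the fact the paper imports from Zhu--Dong--Wang in the proof of \Cref{lem:inverse-cdf-same}. It also shows that the points where $h$ bends are exactly the exponentiated support of the PLD, which gives \Cref{prop:support_conversion} directly.
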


Hockey-stick curves are also closely related to PLDs and can be fully recovered from the PLD via the following formula. The proof is in \Cref{app:prelim_proofs}. Note that we denote $(t)_+ := \max\{0, t\}$.

\begin{proposition}
    \label{prop:pld_to_hs}
    For any pair $(P, Q)$ with likelihood $\ell$ and $x \in \bR^\times$,
    \begin{align*}
        H_x(P \parallel Q)
            = \E_{Y \sim P}[(1 - x/\ell(Y))_+]
            = \E_{Z \sim \pld(P \parallel Q)}[(1 - xe^{-Z})_+].
    \end{align*}
\end{proposition}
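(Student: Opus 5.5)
We prove the two equalities in turn, starting from the definition $H_x(P \parallel Q) = \sup_E P(E) - xQ(E)$ and the Lebesgue decomposition $P = P_\ll + P_\bot$ used to define the extended likelihood $\ell = dP/dQ$. Let $S$ be a $Q$-null set carrying $P_\bot$, so that $\ell = \infty$ on $S$ and $\ell = dP_\ll/dQ$ off $S$.

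\emph{Step 1: the optimal event.} For any measurable $E$,
\begin{align*}
P(E) - xQ(E) = P_\bot(E) + \int_{E \setminus S} (\ell - x)\, dQ .
\end{align*}
This is maximized by $E^* := S \cup \{\omega \notin S : \ell(\omega) > x\}$, i.e.\ $E^* = \{\ell > x\}$ with the convention $\ell = \infty$ on $S$. Hence
\begin{align*}
H_x(P \parallel Q) = P_\bot(\Omega) + \int_{\Omega\setminus S} (\ell - x)_+ \, dQ .
\end{align*}

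\emph{Step 2: the change of measure.} Off $S$ we have $dP = \ell\, dQ$. On $\{\ell > 0\}$ this gives
\begin{align*}
(\ell - x)_+ \, dQ = (1 - x/\ell)_+\, \ell\, dQ = (1 - x/\ell)_+\, dP .
\end{align*}
On $\{\ell = 0\}$ the left side is $0$. That set is $P$-null, so the right side is also $0$ once we adopt the convention $x/0 = \infty$, $(1-\infty)_+ = 0$. On $S$, $\ell = \infty$ gives $(1 - x/\ell)_+ = 1$, so $\int_S (1 - x/\ell)_+\, dP = P(S) = P_\bot(\Omega)$. This uses $P_\ll(S) = 0$, which holds because $Q(S) = 0$. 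Adding the two pieces gives $H_x = \E_{Y\sim P}[(1 - x/\ell(Y))_+]$.

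\emph{Step 3: pushing forward to the PLD.} The integrand is the measurable function $g(\log \ell(Y))$, where $g(z) := (1 - xe^{-z})_+$ on $\overline{\bR}$ with $g(\infty) = 1$ and $g(-\infty) = 0$. By definition $\pld(P \parallel Q)$ is the law of $\log \ell(Y)$ under $Y \sim P$. The change-of-variables formula for pushforward measures then gives the second equality.

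\emph{Main obstacle.} The only delicate point is the bookkeeping at $\ell \in \{0, \infty\}$, i.e.\ the singular part. The conventions have to be chosen consistently with the extended likelihood, and one must confirm that the supremum is attained by $E^*$, including all of $S$. Measurability of $\{\ell > x\}$ is routine.
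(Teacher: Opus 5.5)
Your proof is correct and follows essentially the same route as the paper's: bound the supremum of $P(E) - xQ(E)$ by the integral of the positive part of the density of $P - xQ$, show the likelihood-ratio set attains it, rewrite the result as $\E_{Y\sim P}[(1 - x/\ell(Y))_+]$, and push forward to the PLD. The one difference is the reference measure. You use $Q$ plus the singular part $P_\bot$, whereas the paper writes the density $d(P - xQ)/dP$ directly. Your version makes the bookkeeping at $\ell \in \{0, \infty\}$ explicit. The paper's identity $P(E) - xQ(E) = \int_E (1 - x/\ell)\,dP$ holds only for $E$ avoiding the part of $Q$ singular to $P$; this is harmless for the supremum, but the paper handles it only implicitly.
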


For convenience, we will sometimes write $h_L$ to denote the unique hockey-stick curve associated to a PLD $L$, namely $h_L(x) := \E_{Z \sim L}[(1 - xe^{-Z})_+]$. Moreover, like PLDs, hockey-stick curves also capture adaptive composition in a natural way. The proof is also in \Cref{app:prelim_proofs}.

\begin{proposition}
    \label{prop:hockey_stick_composition}
    Let $M_1 : \cD \to \cY_1$ and $M_2 : \cD \times \cY_1 \to \cY_2$ be adaptive mechanisms and let $\ell^1 := \frac{dM_1(D_1)}{dM_1(D_2)}$ denote the likelihood ratio for $M_1$. Then
    \begin{align*}
        h_{M_1 \otimes M_2}(x) = \E_{Y_1 \sim M_1(D_1)}[h_{M_2(\cdot; Y_1)}(x/\ell^1(Y_1))].
    \end{align*}
\end{proposition}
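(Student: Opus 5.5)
We need to prove \Cref{prop:hockey_stick_composition}, the identity
\begin{align*}
h_{M_1 \otimes M_2}(x) = \E_{Y_1 \sim M_1(D_1)}\bigl[h_{M_2(\cdot; Y_1)}(x/\ell^1(Y_1))\bigr].
\end{align*}
The plan is to write the hockey-stick divergence through the likelihood ratio, using \Cref{prop:pld_to_hs}, and to factor the joint likelihood ratio of the composition. Let $P = (M_1 \otimes M_2)(D_1)$ and $Q = (M_1 \otimes M_2)(D_2)$ on $\cY_1 \times \cY_2$. For each $y_1$, let $\ell^2_{y_1} := \frac{dM_2(D_1; y_1)}{dM_2(D_2; y_1)}$ denote the extended likelihood ratio of the second round.

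\textbf{Step 1: factor the likelihood ratio.} I first show that the extended likelihood ratio of $(P,Q)$ is
\begin{align*}
\ell(y_1,y_2) = \ell^1(y_1)\,\ell^2_{y_1}(y_2)
\end{align*}
for $P$-almost every $(y_1,y_2)$, with the conventions $\infty \cdot c = \infty$ for $c>0$. To do this, take a dominating measure, e.g. $\nu_1 = M_1(D_1)+M_1(D_2)$ and, for each $y_1$, the kernel $\nu_2(\cdot\,;y_1) = M_2(D_1;y_1)+M_2(D_2;y_1)$. Then $P$ and $Q$ have densities $p_1(y_1)p_2(y_2\mid y_1)$ and $q_1(y_1)q_2(y_2\mid y_1)$ with respect to $\nu_1 \otimes \nu_2$. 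The ratio of these densities is the product of ratios. A point carries infinite likelihood exactly when $q_1 q_2 = 0 < p_1 p_2$, and this matches the singular parts.

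\textbf{Step 2: apply \Cref{prop:pld_to_hs} and disintegrate.} By \Cref{prop:pld_to_hs},
\begin{align*}
H_x(P\parallel Q) = \E_{(Y_1,Y_2)\sim P}\bigl[(1 - x/\ell(Y_1,Y_2))_+\bigr].
\end{align*}
I disintegrate this expectation by conditioning on $Y_1 \sim M_1(D_1)$, with $Y_2 \sim M_2(D_1;Y_1)$. Substituting the factorization from Step 1 turns the inner conditional expectation into
\begin{align*}
\E_{Y_2}\bigl[(1 - (x/\ell^1(Y_1))/\ell^2_{Y_1}(Y_2))_+\bigr].
\end{align*}
Applying \Cref{prop:pld_to_hs} again, this time to the pair $(M_2(D_1;Y_1), M_2(D_2;Y_1))$ at order $x/\ell^1(Y_1)$, identifies the inner expectation as $h_{M_2(\cdot;Y_1)}(x/\ell^1(Y_1))$. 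This yields the claimed identity.

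\textbf{Step 3: edge cases.} Two degenerate values of $\ell^1(y_1)$ need checking.
\begin{itemize}
\item When $\ell^1(y_1) = \infty$, the argument $x/\ell^1(y_1)$ is $0$. I interpret $h(0) := \lim_{x\to0}h(x) = 1$, which is valid by \Cref{prop:hs_characterization}. The integrand on the left is then $1$ almost surely, so the two sides agree.
\item The case $\ell^1(y_1) = 0$ occurs only on a $P$-null set, so it does not affect either side.
\end{itemize}

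\textbf{Main obstacle.} The main difficulty is Step 1 in full measure-theoretic generality. This requires measurability of $y_1 \mapsto$ the densities, i.e. that $M_2$ is a Markov kernel so that jointly measurable conditional densities exist. I would handle it by choosing the kernel $\nu_2$ above and invoking a measurable Radon–Nikodym theorem for kernels on standard Borel spaces. I would also verify that the extended-likelihood conventions multiply consistently on the singular parts.
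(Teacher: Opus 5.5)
Your proposal is correct and follows essentially the same route as the paper: factor the joint likelihood ratio as $\ell(y_1,y_2)=\ell^1(y_1)\,\ell^2_{y_1}(y_2)$ (the paper simply cites Bayes' rule), then apply \Cref{prop:pld_to_hs} with the law of total expectation and recognize the inner expectation as $h_{M_2(\cdot;Y_1)}(x/\ell^1(Y_1))$. Your dominating-kernel construction and your handling of the cases $\ell^1=\infty$ and $\ell^1=0$ spell out measure-theoretic details that the paper leaves implicit.
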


Our last characterization of privacy is given the Type I/Type II error tradeoff curve, also known as the $f$-DP framework of privacy \cite{DongRS19}.

\begin{definition}
    Let $(P, Q)$ be distributions on $\Omega$. For a randomized hypothesis test $\phi : \Omega \to \Delta(\{P, Q\})$, we call $\alpha_\phi := \bP_{\omega \sim P, \mu \sim \phi(\omega)}[\mu = Q]$ its Type I error and $\beta_\phi := \bP_{\omega \sim Q, \mu \sim \phi(\omega)}[\mu = P]$ its Type II error. The tradeoff curve for $(P, Q)$ is defined as
    \begin{align*}
        T_\alpha(P \parallel Q) := \inf_{\phi} \{\beta_\phi : \alpha_\phi \leq \alpha\}.
    \end{align*}
\end{definition}

In general, we can characterize valid tradeoff curves as follows.

\begin{proposition}[\cite{DongRS19} Proposition 2.2]
    \label{prop:tradeoff_characterization}
    A curve $\tau : [0, 1] \to [0, 1]$ is the tradeoff curve for some $(P, Q)$ iff $\tau$ is continuous, convex, decreasing, and $\tau(\alpha) \leq 1 - \alpha$ for all $\alpha \in [0, 1]$.
\end{proposition}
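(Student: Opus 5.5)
This is a cited result (Dong, Roth, Su, Proposition 2.2), so I will re-derive it rather than invoke it.

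\textbf{Necessity.} Fix $(P,Q)$ on $\Omega$. For a randomized test $\phi$, let $\alpha_\phi$ and $\beta_\phi$ be its Type I and Type II errors.
\begin{itemize}
\item \emph{Convexity.} The set of achievable pairs $(\alpha_\phi,\beta_\phi)$ is convex, because mixing two tests gives the corresponding mixture of their error pairs. Taking lower envelopes of a convex set gives that $T(P\parallel Q)$ is convex.
\item \emph{Monotonicity.} $T$ is non-increasing, since the constraint $\alpha_\phi\le\alpha$ only relaxes as $\alpha$ grows.
\item \emph{Upper bound.} The trivial test that outputs $Q$ with probability $\alpha$ independently of $\omega$ has errors $(\alpha,1-\alpha)$, so $T(\alpha)\le 1-\alpha$.
\item \emph{Continuity on $(0,1]$.} A convex, non-increasing, bounded function is continuous on the open interval $(0,1)$. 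At $\alpha=1$, the bound $0\le T(1)\le 0$ together with monotonicity forces $\lim_{\alpha\to1} T(\alpha)=0=T(1)$.
\item \emph{Continuity at $0$.} Convexity and monotonicity give $\lim_{\alpha\downarrow0}T(\alpha)\le T(0)$, but the reverse inequality needs an argument. I would use the Neyman--Pearson lemma: optimal tests are likelihood-ratio tests with randomization at the threshold. As $\alpha\downarrow0$, the thresholds increase to $\infty$. The limiting test rejects exactly on the singular part of $Q$ with respect to $P$ (likelihood $dP/dQ=0$). That test has Type I error $0$, so it is feasible at $\alpha=0$, and its Type II error is the limit of the optimal Type II errors. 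This shows $T(0)$ equals the limit.
\end{itemize}
Alternatively, I would read continuity at $0$ off the convex-conjugate relation with the hockey-stick curve, using \Cref{prop:pld_to_hs}.

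\textbf{Sufficiency.} Given $\tau$ continuous, convex, non-increasing, with $\tau(\alpha)\le1-\alpha$, I build $(P,Q)$ explicitly.
\begin{itemize}
\item Let $P$ be uniform on $[0,1]$.
\item Let $Q$ have density $-\tau'(x)$ on $(0,1)$, where $\tau'$ is the right derivative. This density is nonnegative and non-decreasing in $x$ by convexity. Its mass on $(0,1)$ is $\tau(0)-\tau(1)$.
\item Put the remaining mass $1-\tau(0)+\tau(1)$ of $Q$ as an atom at a point outside $[0,1]$. This point carries likelihood ratio $0$ and serves as the ``free rejection'' region.
\item If $\tau(1)>0$, I must adjust the construction: the condition $\tau(1)\le 0$ forces $\tau(1)=0$, so this case is vacuous and $Q$ is a probability measure.
\end{itemize}

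Now compute the tradeoff curve of this pair.
\begin{itemize}
\item The likelihood ratio $dP/dQ=1/(-\tau'(x))$ is non-increasing in $x$. Hence the Neyman--Pearson tests reject $P$ in favour of $Q$ on the atom first, then on $[1-\alpha,1]$.
\item Rejecting on the atom costs no Type I error. Rejecting on $[1-\alpha,1]$ costs Type I error $\alpha$.
\item The resulting Type II error is $\int_0^{1-\alpha}(-\tau')\,dx=\tau(0)-\tau(1-\alpha)$. This is the wrong orientation, so I reflect coordinates: take $Q$'s density to be $-\tau'(1-x)$.
\item With the reflection, rejecting on $[0,\alpha]$ gives Type II error $Q$-mass of $(\alpha,1]$, which equals $\tau(\alpha)-\tau(1)=\tau(\alpha)$.
\end{itemize}
The condition $\tau(\alpha)\le 1-\alpha$ is what ensures $Q$ has total mass at most what is needed, namely $\tau(0)\le1$. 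Continuity of $\tau$ at $0$ ensures that the Neyman--Pearson randomization reproduces $\tau$ exactly, including at $\alpha=0$.

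\textbf{Main obstacle.} The most delicate step is continuity at $\alpha=0$ in the necessity direction. There one must show that infinitesimally small Type I budgets are handled by the singular part of $Q$ relative to $P$, via the Neyman--Pearson limit argument sketched above. The rest is bookkeeping.
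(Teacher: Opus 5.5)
Your necessity argument is fine, and the continuity-at-$0$ sketch is correct in substance. The sufficiency computation, however, contains a real error. Since $\tau$ is convex, $\tau'$ is non-decreasing. So the density $q=-\tau'$ is non-\emph{increasing}, and $dP/dQ=1/q$ is non-\emph{decreasing} in $x$ --- the opposite of what you assert. In your original construction, the Neyman--Pearson test at level $\alpha$ therefore rejects on the atom and on $[0,\alpha]$, where $Q$'s mass is concentrated, not on $[1-\alpha,1]$. The quantity $\tau(0)-\tau(1-\alpha)$ you computed is the Type II error of a generally suboptimal test, not $T_\alpha(P\parallel Q)$.

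The reflection cannot repair this. Applying $x\mapsto 1-x$ to both $P$ and $Q$ is a bijection that leaves $P$ invariant, so the tradeoff curve is unchanged. Concretely, under the reflected density
$Q((\alpha,1])=\int_\alpha^1 -\tau'(1-x)\,dx=\int_0^{1-\alpha}-\tau'(u)\,du=\tau(0)-\tau(1-\alpha)$, not $\tau(\alpha)-\tau(1)$. Moreover, $[0,\alpha]$ is now the region of \emph{largest} $dP/dQ$, i.e.\ the worst place to reject. The identity $Q((\alpha,1])=\tau(\alpha)-\tau(1)$ you end with holds only for the unreflected density, where you had wrongly ruled out $[0,\alpha]$ as the optimal rejection region.

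The fix is simply to drop the reflection. Take $P$ uniform on $[0,1]$, and let $Q$ have density $q=-\tau'$ on $(0,1)$ plus an atom of mass $1-\tau(0)$ at a $P$-null point. Then $\int_\alpha^1 q\,dx=\tau(\alpha)-\tau(1)=\tau(\alpha)$ for every $\alpha\in[0,1]$. This uses absolute continuity of $\tau$ on $[0,1]$ (which is where continuity at $0$ enters) and $\tau(1)=0$.

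Now take any test $\phi$ with $\int\phi\,dP\le\alpha$. Because $q$ is non-increasing, comparing both sides against the level $q(\alpha)$ gives the bathtub bound $\int_0^1\phi q\,dx\le\int_0^\alpha q\,dx$. Hence $\beta_\phi\ge 1-(1-\tau(0))-(\tau(0)-\tau(\alpha))=\tau(\alpha)$. Equality holds for the test rejecting on the atom and on $[0,\alpha]$, which verifies optimality directly without appealing to Neyman--Pearson.

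The paper only cites this result. Still, this is exactly the construction it uses in the proof of \Cref{prop:piecewise-linear-composition}: density equal to minus the slope, an atom of mass $1-f_i(0)$ at $1$, and rejection regions growing from the left.

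A minor remark: the role of the hypothesis $\tau(\alpha)\le 1-\alpha$ is to force $\tau(1)=0$; given convexity and the codomain $[0,1]$, the two are equivalent. It is not what ensures $\tau(0)\le 1$, which already follows from the codomain.
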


As with hockey-stick curves, the PLD determines the tradeoff curve.
The well-known Neyman--Pearson lemma states that the optimal testing rule thresholds the PLD. For convenience, given $z^* \in \overline{\bR}$, we call $\psi : \overline{\bR} \to [0, 1]$ a $z^*$-threshold when $\psi(z) = 0$ for $z < z^*$ and $\psi(z) = 1$ for $z > z^*$.

\begin{proposition}
    \label{prop:pld_to_tradeoff}
    Let $(P, Q)$ be a pair of distributions and let $F$ be the CDF of $\pld(P \parallel Q)$. Then, for every $\alpha \in [0, 1]$, setting $z_\alpha := \min\{z \in \overline{\bR} : F(z) \geq \alpha\}$, there is a $z_\alpha$-threshold $\psi_\alpha$ such that
    \begin{align*}
        \tau(\alpha) = \E_{Z \sim \pld(P \parallel Q)}[\psi_\alpha(Z) e^{-Z}]
    \end{align*}
    and such that $\E_{Z \sim \pld(P \parallel Q)}[\psi_\alpha(Z)] = 1 - \alpha$. Moreover, if $\alpha = F(z^*)$, then, for $Z \sim \pld(P \parallel Q)$, $\psi_\alpha(Z) = 1(Z > z^*)$ almost surely.
\end{proposition}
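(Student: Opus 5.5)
We prove \Cref{prop:pld_to_tradeoff} by reducing to the classical Neyman--Pearson lemma and rewriting everything in terms of the privacy loss $Z = \log \ell(\omega)$, with $\ell = dP/dQ$ the extended likelihood ratio. Any test $\phi$ on $\Omega$ is dominated by a test that depends on $\omega$ only through $Z$. We therefore restrict to likelihood-ratio tests $\omega \mapsto \psi(Z(\omega))$, which reject $P$ (i.e.\ declare $Q$) with probability $1-\psi$. Under this convention the Type I error is $\E_P[1 - \psi(Z)]$, and we want it to equal $\alpha$, i.e.\ $\E_{Z\sim L}[\psi(Z)] = 1-\alpha$ with $L = \pld(P\parallel Q)$. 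The Type II error is $\E_Q[\psi(Z)]$. The first key step is the change of measure: on the part of $\Omega$ where $\ell<\infty$ we have $dQ = e^{-Z}\,dP$. Points where $Q$ is positive but $P$ vanishes carry $Z=-\infty$; they are invisible under $P$, and the threshold assigns $\psi = 0$ there. Points with $Z=\infty$ have $Q$-mass zero. Hence $\E_Q[\psi(Z)] = \E_{Z\sim L}[\psi(Z)e^{-Z}]$, with the conventions $e^{-\infty}\cdot$ anything $=0$ and $\psi(-\infty)=0$. Some care is needed when $z_\alpha = -\infty$, i.e.\ $\alpha \le L(\{-\infty\})$; there $\psi$ may be positive at $-\infty$, but this only happens for $\alpha=0$-type edge cases, which we handle directly.

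The second step constructs $\psi_\alpha$. With $z_\alpha = \min\{z: F(z)\ge\alpha\}$, which exists by right-continuity of $F$ on $\overline{\bR}$, we have $F(z_\alpha^-) \le \alpha \le F(z_\alpha)$. Set $\psi_\alpha = 0$ below $z_\alpha$, $\psi_\alpha=1$ above, and $\psi_\alpha(z_\alpha) = \gamma$, where $\gamma$ is chosen so that $L(\{z_\alpha\})(1-\gamma) = \alpha - F(z_\alpha^-)$. Then $\E_L[\psi_\alpha] = 1-\alpha$ by direct computation. If $\alpha = F(z^*)$, then either $z^* = z_\alpha$ and $\gamma = 0$, or $L((z_\alpha, z^*]) = 0$. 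In both cases $\psi_\alpha(Z) = 1(Z>z^*)$ almost surely, which gives the last claim.

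The third step, the main obstacle, is optimality: we must show that $\tau(\alpha) = \E_L[\psi_\alpha e^{-Z}]$. Take any test $\phi$ with Type I error at most $\alpha$. Conditioning on $Z$, we may replace it by $\psi(z) = \E[\text{accept } P \mid Z=z]$; this is legitimate because $Q$ also depends on $\omega$ only through $Z$ via the density $e^{-Z}$, so both error probabilities are preserved. The constraint is then $\E_L[\psi]\ge 1-\alpha = \E_L[\psi_\alpha]$. Use the Neyman--Pearson exchange inequality
\[
(\psi(z)-\psi_\alpha(z))(e^{-z} - e^{-z_\alpha}) \ge 0 \quad\text{for all } z .
\]
This holds because for $z>z_\alpha$ we have $\psi_\alpha=1\ge\psi$ and $e^{-z}<e^{-z_\alpha}$, and symmetrically for $z<z_\alpha$. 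Integrating against $L$ gives
\[
\E_L[\psi e^{-Z}] - \E_L[\psi_\alpha e^{-Z}] \ge e^{-z_\alpha}\bigl(\E_L[\psi]-\E_L[\psi_\alpha]\bigr) \ge 0 .
\]
The infinite cases need separate treatment. If $z_\alpha = +\infty$, then $e^{-z_\alpha}=0$ and the argument degenerates correctly. If $z_\alpha=-\infty$, then $\psi_\alpha$ equals $1$ on $\bR$, and we check directly that $\E_L[\psi_\alpha e^{-Z}]$ is minimal. Finally, $\psi_\alpha$ is itself feasible, so the infimum is attained and $\tau(\alpha) = \E_L[\psi_\alpha(Z)e^{-Z}]$.
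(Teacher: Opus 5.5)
Your proof is correct and follows the paper's route. The paper gives no separate proof and simply attributes the statement to the Neyman--Pearson lemma, and your exchange inequality $(\psi-\psi_\alpha)(e^{-z}-e^{-z_\alpha})\ge 0$, integrated against $L$, is the standard proof of that lemma written in PLD coordinates. One small tidy-up: for a general reduced test you only have $\E_Q[\psi(Z)] = \E_{Z\sim L}[\psi(Z)e^{-Z}] + \psi(-\infty)\,Q(Z=-\infty) \ge \E_{Z\sim L}[\psi(Z)e^{-Z}]$, which is the direction you need; also $L(\{-\infty\})=0$ for every PLD, so $z_\alpha=-\infty$ happens only when $\alpha=0$.
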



For convenience, we denote the tradeoff curve corresponding to a PLD $L$ as $\tau_L$. It is common to represent the privacy characteristics of a mechanism by either its hockey-stick curve or its tradeoff curve. We show that there is a natural link between these representations via inversion and convex conjugacy. For a tradeoff function $\tau$ of a testing problem $(P, Q)$, we write $\tau^{-1}(\beta) := \inf\{\alpha \in [0, 1] : \tau(\alpha) \leq \beta\}$ for its inverse, namely the tradeoff curve for the testing problem $(Q, P)$. For a convex function $g$, we denote its convex conjugate by $g^*(y) := \sup_{x \in \mathbb{R}} xy - g(x)$. Note that we can extend tradeoff functions to the real-line by setting them to $\infty$ outside their support $[0, 1]$. Note that a special case of the following formula appears in \cite{DongRS19} (Prop 2.12) for symmetric tradeoff functions but this generalization is new to the best of our knowledge. The proof is in \Cref{app:prelim_proofs}

\begin{proposition}
    \label{prop:tradeoff_to_hs}
    Let $(P, Q)$ be a pair of distributions with tradeoff function $\tau(\alpha) := T_\alpha(P \parallel Q)$ and hockey-stick curve $h(x) := H_x(P \parallel Q)$. Then
    \begin{align*}
        h(x) = 1 + (\tau^{-1})^*(-x)
    \end{align*}
\end{proposition}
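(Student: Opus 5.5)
The plan is to unfold the convex conjugate and reduce both sides to one optimization over randomized tests. Since $\tau^{-1}$ is extended by $+\infty$ outside $[0,1]$,
\begin{align*}
    1 + (\tau^{-1})^*(-x) = \sup_{\beta \in [0,1]} \bigl(1 - \tau^{-1}(\beta) - x\beta\bigr).
\end{align*}
So it suffices to show that $H_x(P \parallel Q)$ equals the supremum over all pairs $(\alpha,\beta)$ achievable by randomized tests of $(1-\alpha) - x\beta$, and that this supremum can be restricted to pairs on the curve $\alpha = \tau^{-1}(\beta)$.

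\textbf{Step 1: hockey-stick divergence as an optimization over tests.} I would first show that
\begin{align*}
    H_x(P \parallel Q) = \sup_{f : \Omega \to [0,1]} \bigl(\E_P[f] - x\E_Q[f]\bigr).
\end{align*}
The inequality $\geq$ from indicators is immediate, since indicators $f = 1_E$ recover the definition. For the reverse inequality, write $\E_P[f] - x\E_Q[f] = \int f\,(dP - x\,dQ)$. This integrand is maximized pointwise by $f = 1_E$ with $E = \{\ell > x\}$, where $\ell$ is the extended likelihood ratio. The singular part of $P$, where $\ell = \infty$, lies in $E$, and the singular part of $Q$ is excluded from $E$, so the argument goes through without absolute continuity.

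A $[0,1]$-valued $f$ is exactly a randomized test $\phi$, with $f(\omega)$ the probability of deciding $P$. Then $\E_P[f] = 1 - \alpha_\phi$ and $\E_Q[f] = \beta_\phi$, so
\begin{align*}
    H_x = \sup_\phi \bigl(1 - \alpha_\phi - x\beta_\phi\bigr).
\end{align*}

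\textbf{Step 2: restrict to the inverse tradeoff curve.} Group the tests by the value of $\beta_\phi$. For fixed $\beta$, I would argue that
\begin{align*}
    \sup_{\phi : \beta_\phi \leq \beta} (1 - \alpha_\phi) = 1 - \inf\{\alpha_\phi : \beta_\phi \leq \beta\},
\end{align*}
and that the infimum equals $\tau^{-1}(\beta)$. Both directions come from the definition $\tau^{-1}(\beta) = \inf\{\alpha : \tau(\alpha) \leq \beta\}$ together with $\tau(\alpha) = \inf\{\beta_\phi : \alpha_\phi \leq \alpha\}$:
\begin{itemize}
    \item[(a)] If a test has $\beta_\phi \leq \beta$, then $\tau(\alpha_\phi) \leq \beta$, so $\alpha_\phi \geq \tau^{-1}(\beta)$.
    \item[(b)] Conversely, by continuity of $\tau$ (\Cref{prop:tradeoff_characterization}) we have $\tau(\tau^{-1}(\beta)) \leq \beta$. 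Moreover, the infimum defining $\tau$ is attained, either by compactness of $[0,1]$-valued tests in the weak-* sense or directly by the Neyman--Pearson threshold tests of \Cref{prop:pld_to_tradeoff}. This gives a test with $\alpha_\phi \leq \tau^{-1}(\beta)$ and $\beta_\phi \leq \beta$.
\end{itemize}
Since $-x\beta_\phi \geq -x\beta$ whenever $\beta_\phi \leq \beta$, combining (a) and (b) gives
\begin{align*}
    \sup_\phi \bigl(1 - \alpha_\phi - x\beta_\phi\bigr) = \sup_\beta \bigl(1 - \tau^{-1}(\beta) - x\beta\bigr).
\end{align*}

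\textbf{Main obstacle.} The delicate point is the bookkeeping in Step 2: attainment of the infimum in $\tau$, and handling the edge cases $\beta = 0$ and $\beta = 1$. One must also check that inequality constraints can replace equalities without changing the supremum, using $x > 0$ and the monotonicity of $\tau$ and $\tau^{-1}$. I would settle attainment through the Neyman--Pearson tests of \Cref{prop:pld_to_tradeoff}. These realize every point of the lower boundary of the achievable region, and applying that proposition to the swapped pair $(Q,P)$ handles $\tau^{-1}$ directly.
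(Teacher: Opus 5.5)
Your proof is correct. It shares the paper's skeleton: write $h(x)$ as a supremum of $1-\alpha-x\beta$ over error pairs of tests, then change variables onto the inverse tradeoff curve. The execution differs, though.

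The paper never considers general tests. Letting $L$ be the PLD of $(P,Q)$, it uses \Cref{prop:pld_to_hs} to write $h(x)=\E_{Z\sim L}[1(Z>\log x)(1-xe^{-Z})]$. It then observes that this equals the supremum over the Neyman--Pearson family $\psi_\alpha$ of \Cref{prop:pld_to_tradeoff}, attained at $\alpha=F(\log x)$. The identities $\E[\psi_\alpha]=1-\alpha$ and $\E[\psi_\alpha e^{-Z}]=\tau(\alpha)$ give $\sup_\alpha\bigl(1-\alpha-x\tau(\alpha)\bigr)$. Finally it switches from $\alpha$ to $\beta$ in a single tagged step. You instead optimize over all $[0,1]$-valued tests via the pointwise (Hahn) argument and group by $\beta$ directly.

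The paper's route is shorter given its earlier propositions, and it exhibits the optimal test explicitly. Yours is self-contained, needing no PLD or Neyman--Pearson lemma. Your (a) and (b), namely $\tau^{-1}(\tau(\alpha))\le\alpha$ and $\tau(\tau^{-1}(\beta))\le\beta$, also justify the $\alpha\leftrightarrow\beta$ change of variables that the paper leaves implicit. That step is not a literal bijection: $\tau$ may vanish on an interval, and $\tau^{-1}$ vanishes on $[\tau(0),1]$.

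One simplification: the attainment issue you flag as the main obstacle is unnecessary. A test with $\alpha_\phi\le\tau^{-1}(\beta)$ and $\beta_\phi\le\beta+\gamma$ already gives $1-\alpha_\phi-x\beta_\phi\ge 1-\tau^{-1}(\beta)-x\beta-x\gamma$, and you can let $\gamma\to 0$.
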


It follows as a natural consequence that the Blackwell order on pairs, the ordering induced by
the tradeoff function, as well as the hockey-stick induced ordering are all equivalent.

\begin{proposition}
    \label{prop:f-hs-equivalence}
    Let $(P, Q)$ and $(P', Q')$ be pairs of distributions with tradeoff functions $\tau$ and $\tau'$ respectively as well as hockey-stick curves $h$ and $h'$ respectively. Then
    \begin{align*}
        (P, Q) \preceq (P', Q') \iff \tau \succeq \tau' \iff h \preceq h'.
    \end{align*}
\end{proposition}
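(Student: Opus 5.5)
We prove the two equivalences separately, and the hockey-stick one in two directions.

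\emph{Blackwell order and hockey-stick curves.} For the forward direction, suppose $(P,Q) \preceq (P',Q')$ via a kernel $\kappa$ with $P = \kappa P'$ and $Q = \kappa Q'$. For any event $E \subseteq \Omega$,
\begin{align*}
P(E) - xQ(E) = \int \kappa(E \mid \omega')\, d(P' - xQ')(\omega').
\end{align*}
Since $0 \le \kappa(E\mid\cdot) \le 1$, this integral is at most $\sup_{g : \Omega' \to [0,1]} \int g \, d(P'-xQ')$. The supremum is attained by the indicator of the set where the density of $P'$ exceeds $x$ times that of $Q'$, so it equals $H_x(P'\parallel Q')$. Hence $h \preceq h'$ pointwise.

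The converse, $h \le h'$ pointwise implies Blackwell domination, is the main obstacle. It is a Blackwell--Sherman--Stein type statement for dichotomies. We reduce to a canonical representation: by Proposition~\ref{prop:pld_to_hs}, each pair is Blackwell-equivalent to its PLD pair $(L, L')$, with the likelihood ratio as a sufficient statistic. It therefore suffices to build a kernel between PLD pairs, and we would construct it through the Neyman--Pearson structure. Proposition~\ref{prop:tradeoff_to_hs} shows that $h$ determines $\tau^{-1}$ via conjugacy, and inversion then gives $\tau$. So $h \le h'$ translates into a tradeoff inequality, and the hockey-stick equivalence reduces to the tradeoff one. Specifically, $h = 1 + (\tau^{-1})^*(-\cdot)$ is monotone in the conjugate. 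Because conjugation is order-reversing on closed convex functions and $\tau^{-1}$ is closed and convex, by the Fenchel--Moreau theorem $h \le h'$ iff $\tau^{-1} \ge (\tau')^{-1}$. Since inversion of decreasing tradeoff curves preserves this comparison, this holds iff $\tau \ge \tau'$.

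\emph{Tradeoff curves and the Blackwell order.} The direction from Blackwell domination to $\tau \succeq \tau'$ is direct. Any test $\phi$ for $(P,Q)$ composed with $\kappa$ gives a test $\phi\circ\kappa$ for $(P',Q')$ with the same Type I and Type II errors. Therefore $\tau' \le \tau$.

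The remaining converse, $\tau' \le \tau$ implies $(P,Q) \preceq (P',Q')$, is where the real work lies. We would invoke Blackwell's theorem for binary experiments, citing \cite{DongRS19}'s Theorem 2.10 or Blackwell's original result, rather than rebuilding it. If a self-contained argument is needed, we would proceed in three steps:
\begin{enumerate}
\item Represent each pair canonically as $(U, \text{distribution with density } -\tau'(\cdot))$ on $[0,1]$, using Proposition~\ref{prop:tradeoff_characterization}.
\item Observe that $\tau \ge \tau'$ means $\tau$'s distribution is a mean-preserving spread in the convex order sense.
\item Build the garbling kernel from the Strassen coupling.
\end{enumerate}

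Combining the two equivalences through Proposition~\ref{prop:tradeoff_to_hs} closes the chain.
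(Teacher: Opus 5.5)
Your proposal follows the paper's route: the Blackwell/tradeoff equivalence is delegated to Blackwell's theorem, and the tradeoff/hockey-stick equivalence comes from Fenchel--Moreau duality applied to Proposition~\ref{prop:tradeoff_to_hs} together with the order reversal of conjugation. Your direct kernel and test arguments for the easy directions, and your remark that $\tau \mapsto \tau^{-1}$ preserves the pointwise order, are correct additions. In the optional self-contained sketch, two corrections are needed: it is the \emph{more} informative pair, i.e.\ the one with the smaller curve, whose likelihood-ratio law is the spread; and the density $-\tau'$ (a derivative, which clashes with your name for the second curve) must be completed by an atom of mass $1-\tau(0)$.
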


The equivalence of (i) and (ii) is given by a celebrated theorem of Blackwell \cite[Theorem 10]{Bla51}. As for the equivalence of (ii) and (iii), this now follows by applying Fenchel–Moreau duality to \Cref{prop:tradeoff_to_hs} and recalling that convex conjugation is an order-reversing operation.

As a consequence, it turns out that PLDs endowed with the Blackwell order possess the same remarkable completeness property that enables analysis in the real numbers.

\begin{proposition}\label{prop:sup-conv}
    For any non-empty family of PLDs $\mathcal{L}$, there exists a unique PLD $\sup \mathcal{L}$ that dominates $\mathcal{L}$ but is dominated by every upper bound for $\mathcal{L}$. In this case,
    \begin{enumerate}[(i)]
        \item $h_{\sup \mathcal{L}}(x) = \sup\{h_L(x) : L \in \mathcal{L}\}$
        \item $\tau_{\sup \mathcal{L}} = \conv\{\tau_L : L \in \mathcal{L}\}$
    \end{enumerate}
    where $(\conv{\cF})(\alpha) := \sup\{f(\alpha) : f \text{ is convex}, f \preceq \cF\}$ denotes the lower convex envelope.
\end{proposition}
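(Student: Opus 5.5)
We construct the supremum directly through its hockey-stick curve and then transfer the conclusions via \Cref{prop:f-hs-equivalence}. Define $h(x) := \sup\{h_L(x) : L \in \mathcal{L}\}$ for $x \in \bR^\times$. The first step is to check that $h$ satisfies the conditions of \Cref{prop:hs_characterization}.
\begin{itemize}
    \item Each $h_L$ is convex and decreasing with values in $[0,1]$, so the pointwise supremum is convex, decreasing, and $[0,1]$-valued.
    \item The bound $h(x) \geq 1 - x$ holds because it holds for any single member, and $\mathcal{L}$ is non-empty.
    \item For the limit at $0$, we have $1 \geq h(x) \geq h_{L_0}(x)$ for any fixed $L_0 \in \mathcal{L}$. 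Since $h_{L_0}(x) \to 1$ as $x \to 0$, it follows that $\lim_{x\to 0} h(x) = 1$.
\end{itemize}
Hence $h$ is the hockey-stick curve of some pair $(P, Q)$. We set $\sup\mathcal{L} := \pld(P \parallel Q)$, which gives (i) by construction.

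Next comes the order-theoretic content. By \Cref{prop:f-hs-equivalence}, $L \preceq L'$ if and only if $h_L \leq h_{L'}$ pointwise. Since $h_L \leq h$ for each $L \in \mathcal{L}$, we get $L \preceq \sup\mathcal{L}$. Now let $U$ be any upper bound. Then $h_U \geq h_L$ for all $L$, so $h_U \geq h$ and $\sup\mathcal{L} \preceq U$.

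For uniqueness, suppose two PLDs both have the least-upper-bound property. Then each dominates the other, so their hockey-stick curves coincide. We must then argue that the hockey-stick curve determines the PLD, so that $\preceq$ is antisymmetric on PLDs. I would argue this using \Cref{prop:pld_to_hs}. The map $x \mapsto h_L(x)$ determines the measure of $\{Z \le z\}$ through the right derivative of $h_L$: formally, $-h_L'(x^+) = \E_L[e^{-Z}1(Z > \log x)]$, which is a distribution function of the tilted measure. From this tilt we recover $L$ on $\bR$. The mass at $+\infty$ is $\lim_{x\to\infty} h_L(x)$. The mass at $-\infty$ is whatever remains. I expect this recovery step to be the main technical obstacle, since it requires care with the atoms at $\pm\infty$. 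An alternative is to cite the standard fact that the privacy profile characterizes the PLD.

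For (ii), we use \Cref{prop:tradeoff_to_hs}, $h(x) = 1 + (\tau^{-1})^*(-x)$, together with the fact that conjugation turns suprema into convex envelopes. Concretely, $\tau_{\sup\mathcal{L}}$ is the largest tradeoff function lying below every $\tau_L$, by \Cref{prop:f-hs-equivalence}. We need to check that $g := \conv\{\tau_L\}$ is itself a tradeoff function.
\begin{itemize}
    \item It is convex by definition.
    \item It satisfies $g \leq \tau_L \leq 1-\alpha$.
    \item It is decreasing, since the convex envelope of decreasing functions on $[0,1]$ is decreasing. One sees this by replacing any convex minorant $f$ with $\alpha \mapsto \inf_{\alpha' \le \alpha} f(\alpha')$, which is still convex, decreasing, and a minorant.
    \item It takes values in $[0,1]$ and satisfies $g(1)=0$.
    \item For continuity, convexity gives continuity on the interior. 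At $\alpha = 0$, taking the lower semicontinuous hull preserves being a minorant, so we may assume $g$ is lower semicontinuous; since $g$ is also decreasing and convex, it is continuous at $0$.
\end{itemize}
By \Cref{prop:tradeoff_characterization}, $g$ is then a tradeoff function dominating no less than each $\tau_L$ in the $\preceq$ sense. Any tradeoff function below all $\tau_L$ is convex, hence below $g$. So $g$ is the tradeoff curve of the least upper bound, and by uniqueness $g = \tau_{\sup\mathcal{L}}$.
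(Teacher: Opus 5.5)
Your route is the paper's: take the pointwise supremum of hockey-stick curves, check it against \Cref{prop:hs_characterization}, and transfer order statements through \Cref{prop:f-hs-equivalence}. Part (i) and the least-upper-bound property are fine. So is your uniqueness argument, which recovers $L$ on $\bR$ from $-h_L'(x^+)=\E_L[e^{-Z}1(Z>\log x)]$ and the atom at $+\infty$ from $\lim_{x\to\infty}h_L(x)$; the paper leaves that point implicit.

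The problems are in (ii), where both of your ``replacement'' steps go in the wrong direction. For monotonicity, $\alpha\mapsto\inf_{\alpha'\le\alpha}f(\alpha')$ lies pointwise \emph{below} $f$. It therefore says nothing about the supremum over all convex minorants. This is easy to repair in either of two ways:
\begin{itemize}
\item Use $\alpha\mapsto\sup_{\alpha'\in[\alpha,1]}f(\alpha')=\max\{f(\alpha),f(1)\}$. It is $\ge f$, convex, decreasing on $[0,1]$, and still a minorant because every $\tau_L$ is decreasing.
\item Note that $0$ is a minorant, so $g\ge 0=g(1)$, and a convex function minimized at the right endpoint is decreasing.
\end{itemize}

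The continuity step is a genuine gap. The lower semicontinuous hull $\bar g$ satisfies $\bar g\le g$, so ``we may assume $g$ is lower semicontinuous'' is not without loss of generality, and the claim can in fact fail.

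Take $\tau_{L_n}(\alpha)=(1-n\alpha)_+$ for $n\ge 1$. By \Cref{prop:tradeoff_to_hs}, $h_{L_n}(x)=\max\{1-x,\,1-1/n\}$. Hence $\sup_n h_{L_n}\equiv 1$, so $\sup\mathcal{L}=\infty$ and $\tau_{\sup\mathcal{L}}\equiv 0$. Yet the function equal to $1$ at $\alpha=0$ and $0$ on $(0,1]$ is convex on $[0,1]$ and lies below every $\tau_{L_n}$. So the envelope over convex functions on $[0,1]$ equals $1$ at $0$: it is not continuous, and (ii) fails at $\alpha=0$. The paper's one-line justification, that continuity survives the lower convex envelope, has the same blind spot.

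What is true is that $\tau_{\sup\mathcal{L}}=\bar g$, the \emph{closed} convex envelope. The argument has two halves:
\begin{itemize}
\item $\bar g$ agrees with $g$ on $(0,1]$ and is convex, continuous, decreasing and at most $1-\alpha$, so it is a tradeoff function.
\item Any tradeoff function $\tau$ below all $\tau_L$ is convex, so $\tau\le g$. It is also continuous at $0$, so $\tau(0)\le\bar g(0)$, giving $\tau\le\bar g$.
\end{itemize}
Equivalently, read $\conv$ with minorants ranging over finite convex functions on all of $\bR$, consistent with extending tradeoff curves by $+\infty$. Such minorants are continuous, so their supremum is automatically lower semicontinuous, and your argument then goes through.
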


This result follows immediately from \Cref{prop:f-hs-equivalence} and from noticing that the properties characterizing hockey-stick curves (see \Cref{prop:hs_characterization}) are all closed under pointwise suprema. Part (ii) follows from noticing that all of the properties of tradeoff curves (see \Cref{prop:tradeoff_characterization}) are also closed under the lower convex envelope operation. Notice that it may be the case that $\sup{\cL} = \infty$. Also notice that we may construct infima of privacy losses in an analogous manner, namely by taking lower convex envelopes of hockey-stick curves or by taking pointwise suprema of tradeoff curves. We will not require infima for our work but we do rely on the supremum extensively in order to reason about natural privacy filters.
As far as we are aware, the completeness property of PLDs has not actually been documented in the literature.

%% file: sections/adversary_theory.tex
\section{Query-Restricted Adversaries}
\label{sec:adversary_theory}

In this section we develop a general theory of the privacy characteristics of an adversarial adaptive mechanism when the queries the adversary may issue are restricted. This will lead directly to a theory of natural filters as we will see in \S\ref{sec:natural_filter}. For simplicity, we will assume throughout that an adversary may always ``pass'' its turn by issuing a query with no privacy loss. We will consider adversaries that issue finitely many queries as well as adversaries that issue a countable sequence of queries. We will abuse notation somewhat by writing such things as $M_1 \otimes \dots \otimes M_\infty$ to denote the countable extension of the prefixes $M_1 \otimes \dots \otimes M_t$ as $t \to \infty$.

\begin{definition}
    Let $k \in \bZ^+ \cup \{\infty\}$. We define a privacy rule of length $k$ as a map $\Gamma$ that takes a sequence of PLDs $(L_1, \dots, L_t)$ ($0 \leq t < k$) and outputs some collection of PLDs including $\Iddist$. We define a $\Gamma$-adversary as an adaptively composed mechanism $M = M_1 \otimes \dots \otimes M_k$ such that
    \begin{align*}
        L_t := \pld(M_t(\cdot; y_1, \dots, y_{t - 1})) \in \Gamma(L_1, \dots, L_{t - 1})
    \end{align*}
    for every $(y_1, \dots, y_{k - 1}) \in \cY_1 \times \dots \times \cY_{k - 1}$.
    We will write $\Adv(\Gamma)$ to denote the set of $\Gamma$-adversaries and we will denote by
    \begin{align*}
        \pld(\Gamma) := \sup\{\pld(M) : M \in \Adv(\Gamma)\}
    \end{align*}
    the worst-case privacy loss of arbitrary $\Gamma$-adversaries.
\end{definition}

Crucially, notice that that no single adversary fully realizes the privacy cost of the rule $\Gamma$. Recalling the $f$-DP interpretation of the supremum (see \Cref{prop:sup-conv}), the privacy cost of the rule $\Gamma$ is in fact realized by a family of adversaries that minimize the false negative rate for every given confidence level. As the following core lemma reveals, it turns out that the maximum privacy cost of the rule $\Gamma$ for a given confidence level occurs when the adversary first plays the best move for this confidence level, observes the outcome, then depending on the likelihood of having observed this outcome recursively chooses its remaining strategy tuned to maximize the overall privacy cost. Note that by $\lambda$ we mean the empty sequence.

\begin{lemma}
    \label{lemma:restricted_adversary}
    Let $\Gamma$ be a rule of length $k \in \bZ^+ \cup \{\infty\}$. If $k = 1$, then $\pld(\Gamma) = \sup{\Gamma(\lambda)}$ and, for $k > 1$, we have
    \begin{align*}
        \pld(\Gamma) = \sup\{L \oplus \pld(\Gamma|_L) : L \in \Gamma(\lambda)\}
    \end{align*}
    where $\Gamma|_{L_0}(L_1, \dots, L_t) := \Gamma(L_0, L_1, \dots L_t)$ denotes the rule of length of $k - 1$ that fixes the first move played against $\Gamma$ to $L_0$.
\end{lemma}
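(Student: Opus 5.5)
We prove the two inequalities between $\pld(\Gamma)$ and $\sup\{L \oplus \pld(\Gamma|_L) : L \in \Gamma(\lambda)\}$ separately, working throughout with hockey-stick curves. By \Cref{prop:f-hs-equivalence} and \Cref{prop:sup-conv}(i), a PLD dominates a family exactly when its hockey-stick curve dominates pointwise, and the curve of a supremum is the pointwise supremum. The case $k=1$ is immediate. A $\Gamma$-adversary is then a single mechanism $M_1$ with $\pld(M_1)\in\Gamma(\lambda)$, and every $L\in\Gamma(\lambda)$ is realized, for instance by the pair $(L,L')$ constructed from the Esscher tilt. Hence $\pld(\Gamma)=\sup\Gamma(\lambda)$. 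For $k>1$ we write $S$ for the right-hand side.

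\emph{Upper bound $\pld(\Gamma)\preceq S$.} Fix $M=M_1\otimes M_{2:k}\in\Adv(\Gamma)$ and put $L_1:=\pld(M_1)\in\Gamma(\lambda)$. For each outcome $y_1$, the residual mechanism $M_{2:k}(\cdot;y_1)$ is a $\Gamma|_{L_1}$-adversary, because its PLD constraints are exactly those of $\Gamma$ with the first move fixed to $L_1$. It follows that $h_{M_{2:k}(\cdot;y_1)}\le h_{\pld(\Gamma|_{L_1})}=:g$ pointwise. \Cref{prop:hockey_stick_composition} then gives
\begin{align*}
h_M(x)=\E_{Y_1\sim M_1(D_1)}\big[h_{M_{2:k}(\cdot;Y_1)}(x/\ell^1(Y_1))\big]\le \E_{Y_1}\big[g(x/\ell^1(Y_1))\big]=h_{L_1\oplus\pld(\Gamma|_{L_1})}(x).
\end{align*}
The last equality is the same proposition applied to a non-adaptive composition of a pair realizing $L_1$ with a fixed pair realizing $\pld(\Gamma|_{L_1})$; it is consistent with \Cref{prop:composition_convolution}. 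Taking suprema over $M$ yields the bound. For $k=\infty$ we run the argument on finite prefixes, or, more simply, note that the tail is itself a $\Gamma|_{L_1}$-adversary of countable length, so the same computation applies verbatim.

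\emph{Lower bound $S\preceq\pld(\Gamma)$.} This is the main obstacle, because $\pld(\Gamma|_L)$ is a supremum that is generally not attained by any single adversary. It is enough to show, for every fixed $L\in\Gamma(\lambda)$ and every $x>0$, that
\begin{align*}
h_{L\oplus\pld(\Gamma|_L)}(x)=\E_{Z\sim L}\big[g(xe^{-Z})\big]
\end{align*}
is approximated from below by $h_M(x)$ for some $M\in\Adv(\Gamma)$. The construction has the adversary play a mechanism $M_1$ with PLD $L$, observe its outcome, and then choose the continuation adaptively according to the realized privacy loss $z=\log\ell^1(y_1)$. The continuation is a $\Gamma|_L$-adversary $N_{z}$ with $h_{N_{z}}(xe^{-z})\ge g(xe^{-z})-\eta$.

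Such near-optimal continuations exist pointwise by \Cref{prop:sup-conv}(i). The technical issue is measurability of the selection $z\mapsto N_z$. To avoid it, discretize $z$ into countably many cells and use one continuation per cell. A convex decreasing $g$ is continuous, so choosing the cells fine enough, and treating $z=\pm\infty$ separately, loses at most $\eta$ uniformly on compacts. Dominated convergence handles the tails, since every quantity lies in $[0,1]$. By \Cref{prop:hockey_stick_composition} this gives $h_M(x)\ge h_{L\oplus\pld(\Gamma|_L)}(x)-2\eta$, and the resulting $M$ is a $\Gamma$-adversary because every branch respects $\Gamma|_L$.

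Letting $\eta\to0$ and taking suprema over $x$ and $L$ gives $h_S\le h_{\pld(\Gamma)}$ pointwise. Combined with the upper bound, this proves the equality.
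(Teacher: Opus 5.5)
Your proposal is correct and is essentially the paper's proof. For the upper bound you condition on the first output and bound every continuation by $\pld(\Gamma|_L)$; for the lower bound you play a mechanism with PLD $L$ and follow it with a near-optimal $\Gamma|_L$-adversary tuned to the rescaled level $x/\ell_1(y_1)$, which is exactly the paper's choice $M_2(D;y_1):=M^{x/\ell_1(y_1)}(D)$.

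Your countable discretization of $z$ is a useful addition, because the paper's pointwise choice of $M^{x'}$ glosses over whether $M_2$ is a measurable kernel. One detail to fix there: the within-cell error must be controlled for the chosen continuations' curves, not only for $g$. This is immediate because every hockey-stick curve is decreasing and $1$-Lipschitz, so sufficiently fine cells in the variable $xe^{-z}$ work uniformly, with no need for a compactness or tail argument.
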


We prove the result by passing to hockey-stick curves because the pointwise supremum is easier to analyze than the lower convex envelope. We will essentially show that a worst-case adversary first plays an optimal move with corresponding PLD $L$, observes the outcome $Y_1$, and adversarially chooses a followup in $\Adv(\Gamma|_L)$ to ensure that the overall process ``traces out'' the hockey-stick curve of $L \oplus \Adv(\Gamma|_L)$.

\begin{proof}
    The case where $k = 1$ is immediate, so we assume $k > 1$. We first show that
    \begin{align*}
        \pld(\Gamma) \preceq \sup\{L \oplus \pld(\Gamma|_L) : L \in \Gamma(\lambda)\}.
    \end{align*}
    To that end, consider any $\Gamma$-adversary $M = M_1 \otimes M_2$ and let $L := \pld(M_1) \in \Gamma(\lambda)$. Then, clearly, for every fixed $y_1$, $M_2(\cdot; y_1)$ is a $\Gamma|_L$-adversary and, in particular, $\pld(M_2(\cdot; y_1)) \preceq \pld(\Gamma|_L)$. By \Cref{prop:composition_convolution}, $\pld(M) \preceq L \oplus \pld(\Gamma|_L)$ and thus the desired inequality follows by taking suprema.

    More surprising is the reverse inequality, which we prove constructively using hockey-stick curves. Let $L \in \Gamma(\lambda)$, $x \in \bR^\times$, and $\gamma > 0$. Choose any mechanism $M_1$ with PLD $L$ and let $\ell_1 := \frac{dM_1(D_1)}{dM_1(D_2)}$ denote its likelihood function. Moreover, for any $x' \in \R^\times$ we can choose by \Cref{prop:sup-conv}, a $\Gamma|_L$-adversary $M^{x'}$ such that $h_{M^{x'}}(x') \geq h_{M^\star}(x') - \gamma$ where $\pld(M^\star) = \pld(\Gamma|_L)$.
    Now consider the $\Gamma$-adversary $M := M_1 \otimes M_2$ where $M_2(D; y_1) := M^{x/\ell_1(y_1)}(D)$. By \Cref{prop:hockey_stick_composition}, we get
    \begin{align*}
        h_M(x)
            & = \E_{Y_1 \sim M_1(D_1)}[h_{M_{x/\ell_1(Y_1)}}(x/\ell_1(Y_1))] \\
            & \geq \E_{Y_1 \sim M_1(D_1)}[h_{M^\star}(x/\ell_1(Y_1)) - \gamma] \\
            & = h_{M_1 \otimes M^\star}(x) - \gamma.
    \end{align*}
    Taking suprema, we get
    $\sup_{M \in \Adv(\Gamma)} h_M \succeq h_{M_1 \otimes M^\star}$
    and thus $\pld(\Gamma) \succeq L \oplus \pld(\Gamma|_L)$ by \Cref{prop:sup-conv,prop:composition_convolution}, which completes the proof since $L$ was arbitrary.
\end{proof}

A final tool we will require for analyzing adversaries respecting a privacy rule is that privacy rules of infinite length can be understood by passing to limits of finite privacy rules.

\begin{lemma}
    \label{lemma:infinite_rule_limit}
    Let $\Gamma$ be a privacy rule of infinite length. For $k \in \bZ^+$, let $\Gamma^k$ be the privacy rule of length $k$ denoting the restriction of $\Gamma$ to query sequences of length less than $k$. Then
    \begin{align*}
        \pld(\Gamma) = \sup\{\pld(\Gamma^k) : k \in \bZ^+\}.
    \end{align*}
\end{lemma}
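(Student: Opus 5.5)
We prove the two inequalities separately, working with hockey-stick curves throughout. By \Cref{prop:sup-conv}, both sides are characterized by pointwise suprema of hockey-stick curves, and by \Cref{prop:f-hs-equivalence} it suffices to compare these curves pointwise at every $x \in \bR^\times$.

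\textbf{Easy direction.} We first show $\pld(\Gamma^k) \preceq \pld(\Gamma)$ for every $k$. Take any $\Gamma^k$-adversary $M_1 \otimes \dots \otimes M_k$ and extend it by appending the trivial passing queries with PLD $\Iddist$ for all rounds $t > k$. Each rule output contains $\Iddist$ by definition, so this extension is a valid $\Gamma$-adversary. The appended rounds are independent of the data, so the resulting mechanism has the same pair of output distributions up to a data-independent product factor. Its PLD is therefore the same by \Cref{prop:composition_convolution}, since convolving with $\Iddist$ does nothing. Taking suprema gives $\sup_k \pld(\Gamma^k) \preceq \pld(\Gamma)$.

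\textbf{Hard direction.} Fix a $\Gamma$-adversary $M = M_1 \otimes M_2 \otimes \cdots$ and let $M^{(k)} := M_1 \otimes \dots \otimes M_k$ be its prefix. This prefix is a $\Gamma^{k}$-adversary; to match the length convention, we may take $\Gamma^{k+1}$ or simply adjust indices. It suffices to show that for each $x$,
\begin{align*}
    h_M(x) = \lim_{k \to \infty} h_{M^{(k)}}(x).
\end{align*}
Once this holds, $h_M(x) \leq \sup_k h_{\pld(\Gamma^k)}(x)$, and taking the supremum over $M$ finishes the proof via \Cref{prop:sup-conv}.

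To establish the limit, recall the paper's convention that $M_1 \otimes \dots \otimes M_\infty$ is the countable extension of the prefixes. The output laws under $D_1$ and $D_2$ are then the projective (Kolmogorov/Ionescu-Tulcea) limits on the product space, and the prefix outputs generate a filtration $\mathcal{F}_k$ whose union generates the full $\sigma$-algebra. Monotonicity is immediate: $H_x$ restricted to $\mathcal{F}_k$-measurable events is nondecreasing in $k$ and bounded by $h_M(x)$. For convergence, note that $H_x(P \parallel Q) = \sup_E P(E) - xQ(E)$. Any event $E$ in the generated $\sigma$-algebra can be approximated simultaneously in $P$- and $Q$-measure, by approximating in $(P+Q)$-measure, by events in the algebra $\bigcup_k \mathcal{F}_k$. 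This gives $\sup_k H_x(P|_{\mathcal{F}_k} \parallel Q|_{\mathcal{F}_k}) \geq h_M(x) - \gamma$ for every $\gamma > 0$.

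The main obstacle is this measure-theoretic approximation step, in particular if the paper intends $\pld$ of the infinite composition to be \emph{defined} as a limit. In that case the step reduces to checking that the definition is consistent with the pointwise supremum of the prefix hockey-stick curves, which is exactly the monotone limit above. The standard approximation lemma for $\sigma$-algebras generated by an algebra handles the general case.
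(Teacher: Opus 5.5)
Your proof is correct. The easy direction matches the paper's (pad with $\Iddist$ queries), but the hard direction takes a genuinely different route.

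\textbf{Paper's route.} The paper works at the level of the privacy loss random variable. It asserts that the prefix likelihood ratios satisfy $\ell^k(Y^k) \to \ell(Y)$ pointwise. It then applies dominated convergence to the representation $h_M(x) = \E[(1 - x/\ell(Y))_+]$ from \Cref{prop:pld_to_hs}.

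\textbf{Your route.} You work with the variational definition $H_x = \sup_E P(E) - xQ(E)$. The prefix curves $h_{M^{(k)}}(x)$ are suprema over the nested $\sigma$-algebras $\mathcal{F}_k$, so they are nondecreasing and bounded by $h_M(x)$. You then approximate an arbitrary event $E$ by a cylinder event $A \in \bigcup_k \mathcal{F}_k$ with $(P+Q)(E \triangle A) < \gamma$. This loses at most $(1+x)\gamma$ rather than $\gamma$, which is immaterial.

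\textbf{What each route buys.} Yours needs only the standard approximation lemma for a $\sigma$-algebra generated by an algebra. The paper's claim that ``it is clear that $\ell(y) = \lim_k \ell^k(y^k)$'' actually rests on a martingale convergence argument: Lévy's upward theorem applied to the densities of the restrictions of $P$ and $Q$ to $\mathcal{F}_k$ with respect to $P+Q$. It also requires care on the singular part, where $\ell = \infty$. You sidestep both. In exchange, the paper's route gives stronger structural information: almost-sure convergence of the privacy loss itself. Hence the prefix PLDs converge in distribution to $\pld(M)$, and this would transfer to other bounded continuous functionals of the privacy loss, not just hockey-stick values.

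\textbf{One small tidy-up.} Under the paper's convention, a rule of length $k$ admits exactly $k$ queries. So $M_1 \otimes \dots \otimes M_k$ is already a $\Gamma^k$-adversary, and no index adjustment is needed.
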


\begin{proof}
    It is immediate that $\pld(\Gamma^k) \preceq \pld(\Gamma)$ for all $k \in \bZ^+$ because any mechanism $M \in \Adv(\Gamma^k)$ can be extended to a mechanism in $\Adv(\Gamma)$ with the same privacy loss by simply repeatedly issuing the $\Iddist$ query after the first $k$ rounds.

    As for the reverse inequality, we pass again to the hockey-stick curve. To this end, let $M \in \Adv(\Gamma)$, let $\ell$ denote the likelihood ratio for $M$, let $M^k$ denote the restriction of $M$ to the first $k$ outputs, and let $\ell^k$ denote the likelihood ratio of $M^k$. Denoting $y^k := (y_1, \dots, y_k)$ for $y = (y_1, y_2, \dots)$, it is clear that $\ell(y) = \lim_{k \to \infty} \ell^k(y^k)$. By \Cref{prop:pld_to_hs} and the dominated convergence theorem, we have
    \begin{align*}
        h_M(x)
            & = \E_{Y \sim M(D)}[(1 - x/\ell(Y))_+] \\
            & = \E_{Y \sim M(D)}[\lim_{k \to \infty}(1 - x/\ell^k(Y^k))_+] \\
            & = \lim_{k \to \infty}\E_{Y \sim M(D)}[(1 - x/\ell^k(Y^k))_+] \\
            & = \lim_{k \to \infty}\E_{Y \sim M^k(D)}[(1 - x/\ell^k(Y))_+] \\
            & = \lim_{k \to \infty} h_{M^k}(x) \\
            & \leq \sup\{h_{M^k}(x) : k \in \bZ^+\}.
    \end{align*}
    Since clearly $M^k \in \Adv(\Gamma^k)$, the desired inequality follows from \Cref{prop:sup-conv}.
\end{proof}

%% file: sections/residue_filter.tex
\section{Residue Filters}
\label{sec:residue_filters}

In this section, we develop a new class of privacy filters capable of leveraging exact privacy accounting: the residue filters. We show that residue filters capture all of the filters available in the literature including the pure DP filter, the R\'enyi filter, the GDP filter, as well as the natural filter.

\begin{algorithm}[tb]
    \caption{UpdateFilter}
    \label{alg:update_filter}
    \begin{algorithmic}
        \REQUIRE Analyst $\cA$, update rule $U : \cB \times \cL \to \cB$, budget $B \in \cB$, dataset $D$
        \FOR{$i \gets 1, 2, \dots$}
            \STATE $\cA(Y_1, \dots, Y_{i - 1})$ gives mechanism $M_i$ that has PLD $L \in \cL$
            \IF{$L \preceq B$}
                \YIELD $Y_i \sim M_i(D)$
                \STATE $B \gets U(B, L)$
            \ELSE
                \BREAK
            \ENDIF
        \ENDFOR
    \end{algorithmic}
\end{algorithm}

A key piece of structure that appears to enable existing privacy filters is the notion of a remaining privacy budget. This same filter structure can be outfitted naturally with exact accounting. For this, we require some simple ingredients: Let $\mathcal{B}$ be a set of PLDs representing legal budgets, let $\mathcal{L}$ be a family of allowed queries, and let $U : \mathcal{B} \times \mathcal{L} \to \mathcal{B}$ be a map taking a legal budget and query and producing an updated budget. For simplicity, we assume $\Iddist \in \mathcal{B}, \mathcal{L}$. \Cref{alg:update_filter} formalizes the interaction between analyst and the update filter with initial budget $B$. Now, notice that \Cref{alg:update_filter} can be captured succinctly by the recursive privacy rule
\begin{align*}
    \Update_{U, B}(\lambda) &:= \{L \in \mathcal{L} : L \preceq B\} \\
    \Update_{U, B}|_{L_1} &:= \textrm{Update}_{U, U(B, L_1)}.
\end{align*}

Our key result is that privacy filters derived from budget updates are valid as long as the updates satisfy a natural residue condition. Informally, it is enough that every budget encountered during the interaction dominates the privacy cost of the next budget plus the cost of the next query. In this case, we say that the updated budget is a residue of the prior budget with respect to the query.

\begin{theorem}
    \label{thm:residue_filter}
    Fix an update rule $U : \mathcal{B} \times \mathcal{L} \to \mathcal{B}$ and suppose that, for every budget $B \in \mathcal{B}$ and every $L \in \mathcal{L}$ with $L \preceq B$, we have
    \begin{align*}
        \label{eq:residue_condition}
        U(B, L) \oplus L \preceq B. \tag{$\star$}
    \end{align*}
    Then, for any initial budget $B \in \mathcal{B}$, the update filter \Cref{alg:update_filter} is free, i.e. $\pld(\Update_{U, B}) \preceq B$. In this case, we call the filter a residue filter.
\end{theorem}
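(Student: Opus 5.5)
The plan is to first prove the claim for truncated rules of finite length by induction, using \Cref{lemma:restricted_adversary}, and then pass to the infinite-length rule via \Cref{lemma:infinite_rule_limit}. For $k \in \bZ^+$, let $\Update^k_{U,B}$ denote the restriction of $\Update_{U,B}$ to length $k$. This is the same truncation $\Gamma^k$ used in \Cref{lemma:infinite_rule_limit}. We check directly from the recursive definition that truncation commutes with restriction:
\begin{align*}
    \Update^k_{U,B}|_{L} = \Update^{k-1}_{U,U(B,L)}.
\end{align*}
The intermediate statement we aim for is that, for every $k \in \bZ^+$ and every $B \in \cB$, we have $\pld(\Update^k_{U,B}) \preceq B$. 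Here the quantifier over $B$ is essential so that the induction can pass to the updated budget.

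\textbf{Base case $k=1$.} By \Cref{lemma:restricted_adversary}, $\pld(\Update^1_{U,B}) = \sup\{L \in \cL : L \preceq B\}$. Since $B$ is an upper bound for this family, the least-upper-bound property in \Cref{prop:sup-conv} gives $\pld(\Update^1_{U,B}) \preceq B$.

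\textbf{Inductive step.} Take $k>1$ and assume the claim for $k-1$ and all budgets. By \Cref{lemma:restricted_adversary},
\begin{align*}
    \pld(\Update^k_{U,B}) = \sup\{L \oplus \pld(\Update^{k-1}_{U,U(B,L)}) : L \in \cL,\ L \preceq B\}.
\end{align*}
For each admissible $L$, note that $U(B,L) \in \cB$. The induction hypothesis then gives $\pld(\Update^{k-1}_{U,U(B,L)}) \preceq U(B,L)$. Monotonicity of convolution (\Cref{prop:pld_convolution_properties}(5), together with commutativity) yields
\begin{align*}
    L \oplus \pld(\Update^{k-1}_{U,U(B,L)}) \preceq L \oplus U(B,L) \preceq B,
\end{align*}
where the last step is the residue condition \eqref{eq:residue_condition}. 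Hence $B$ is an upper bound for the family, and its supremum is dominated by $B$, again by \Cref{prop:sup-conv}.

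\textbf{Passing to infinite length.} \Cref{alg:update_filter} runs indefinitely, so its rule has infinite length. \Cref{lemma:infinite_rule_limit} gives $\pld(\Update_{U,B}) = \sup_k \pld(\Update^k_{U,B})$. Each term is dominated by $B$, so the supremum is as well, and the theorem follows.

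The main obstacle is bookkeeping rather than analysis, since the analytic work is already contained in the two lemmas. One point to check is that the truncation and restriction operations commute as claimed above. The other is that transitivity of $\preceq$ holds for PLDs, which follows from the hockey-stick characterization in \Cref{prop:f-hs-equivalence}.
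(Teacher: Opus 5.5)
Your proposal is correct and follows essentially the same route as the paper's proof: reduce to finite truncations via \Cref{lemma:infinite_rule_limit}, then induct on $k$ simultaneously over all budgets. The inductive step uses \Cref{lemma:restricted_adversary}, the identity $\Update^{k}_{U,B}|_L = \Update^{k-1}_{U,U(B,L)}$, monotonicity of convolution, and the residue condition. Your extra checks (that $U(B,L) \in \cB$, transitivity of $\preceq$, and the commutation of truncation with restriction) only make explicit bookkeeping the paper leaves implicit.
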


\begin{proof}
    By \Cref{lemma:infinite_rule_limit}, it suffices to prove that the finite rule $\Update_{U, B}^k$ is free for all $B \in \mathcal{B}$ and $k \in \bZ^+$ under the residue condition. We proceed by induction. For $k = 1$, we have
    \begin{align*}
        \pld(\Update_{U, B}^k) = \sup\{L \in \mathcal{L} : L \preceq B\} \preceq B
    \end{align*}
    for any $B \in \mathcal{B}$. Moreover, if this is true for $k \in \bZ^+$, then by \Cref{lemma:restricted_adversary} and by assumption,
    \begin{align*}
        \pld(\Update_{U, B}^{k + 1})
            & = \sup\{L \oplus \pld(\Update_{U, B}^{k + 1}|_L) : L \in \mathcal{L}, L \preceq B\} \\
            & = \sup\{L \oplus \pld(\Update_{U, U(B, L)}^k) : L \in \mathcal{L}, L \preceq B\} \\
            & \preceq \sup\{L \oplus U(B, L) : L \in \mathcal{L}, L \preceq B\} \preceq B
    \end{align*}
    for any $B \in \mathcal{B}$ as well.
\end{proof}


It turns out that residue filters already capture the existing privacy filters in addition to the natural filter, as we will see.

\begin{remark}
    The GDP filter \cite{KTH22}, which permits a sequence of queries that are $\mu_1, \dots, \mu_t$-GDP as long as $\mu_1^2 + \dots + \mu_t^2 \leq \mu^2$ for a fixed budget $\mu$, is exactly the residue filter with initial budget $B = G_\mu$ and update rule $U(G_\mu, L) := G_{\sqrt{\mu^2 - \nu^2}}$ where $\nu \geq 0$ is minimal such that $L \preceq G_\nu$. Because
    \begin{align*}
        U(G_\mu, L) + L \preceq G_{\sqrt{\mu^2 - \nu^2}} + G_\nu = G_\mu,
    \end{align*}
    \Cref{thm:residue_filter} immediately yields that the GDP filter is free.
\end{remark}

\begin{remark}
    The pure DP filter \cite{RogersRUV16}, which permits a sequence of queries that are $\eps_1, \dots, \eps_t$-DP as long as $\eps_1 + \dots + \eps_t \leq \eps$ for a fixed budget $\eps$, is exactly the residue filter with initial budget $B = R_\eps$ and update rule $U(R_\eps, L) := R_{\eps - \eps_L}$ where $\eps_L \geq 0$ is minimal such that $L \preceq R_{\eps_L}$. Because
    \begin{align*}
        U(R_\eps, L) + L \preceq R_{\eps - \eps_L} + R_{\eps_L} \preceq R_\eps,
    \end{align*}
    \Cref{thm:residue_filter} immediately yields that the pure DP filter is free.
\end{remark}

Unlike the GDP and pure DP forms of differential privacy, R\'enyi DP \cite{FeldmanZ21} cannot be certified exactly by domination by a canonical PLD, so the R\'enyi filter cannot directly be expressed as a residue filter. Nonetheless:

\begin{remark}
    The R\'enyi filter \cite{FeldmanZ21}, which permits a sequence of queries that are $(\alpha, \rho_1), \dots, (\alpha, \rho_t)$-RDP as long as $\rho_1 + \dots + \rho_t \leq \rho$ for a fixed budget $\rho$ and order $\alpha$, can be expressed as a budget maintaining filter satisfying a residue condition. The maintained budget is initially $\rho$ and, upon receiving an $(\alpha, \rho_i)$-RDP query, the filter accepts the query as long as $\rho_i \leq \rho$ and updates the budget to $\rho - \rho_i$. By linearity properties of RDP, this update satisfies an analogous residue condition to \eqref{eq:residue_condition} and hence the same techniques can be used to show that this filter is free.
\end{remark}

In particular, derivative filters including the zCDP and the advanced composition filter seen in e.g. \cite{WRRW23} are also a form of residue filter.

Perhaps more notable is that the natural filter, whenever it is free, can also be understood as a residue filter. We will discuss this in detail in \S\ref{sec:natural_filter}. The key takeaway is that residue filters either directly or in essence subsume previous privacy filters as well as the natural filter itself. This is strongly suggestive evidence that filters based on update rules satisfying a residue-like condition should be viewed as a standard technique for constructing new privacy filters. We strengthen this claim by constructing a new filter for GDP budgets that strictly improves upon the GDP filter.

\begin{theorem}
    Consider the update rule
    \begin{align*}
        U(G_\mu, L) := G_{\mu'}, \quad \mu' \geq 0 \text{ maximal such that } G_{\mu'} \oplus L \preceq G_\mu
    \end{align*}
    for GDP budgets $G_\mu$ and arbitrary queries $L \preceq G_\mu$. This rule satisfies the residue condition \eqref{eq:residue_condition} and hence the update filter derived from it is a residue filter. We call this filter the GDP residue filter.
\end{theorem}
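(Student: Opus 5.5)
We need to show three things: the update rule is well-defined, it maps into $\mathcal{B} = \cLgdp$, and it satisfies \eqref{eq:residue_condition}. The residue condition is essentially built into the definition of $\mu'$, so the real content is that the maximum defining $\mu'$ exists. This means showing that the set $S := \{\nu \geq 0 : G_\nu \oplus L \preceq G_\mu\}$ is nonempty, bounded, and closed.

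\emph{Nonempty.} Since $L \preceq G_\mu$, we have $G_0 = \Iddist$ and $\Iddist \oplus L = L \preceq G_\mu$ by \Cref{prop:pld_convolution_properties}. So $0 \in S$.

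\emph{Bounded and downward closed.} The family $G_\nu$ is Blackwell-increasing in $\nu$, because $G_\nu \oplus G_{\sqrt{\nu'^2-\nu^2}} = G_{\nu'}$ and $\Iddist \preceq G_{\sqrt{\nu'^2-\nu^2}}$. By monotonicity of convolution (\Cref{prop:pld_convolution_properties}(5)), $S$ is therefore an interval containing $0$. Also $G_\nu \preceq G_\nu \oplus L$, since $\Iddist \preceq L$. So $\nu \in S$ forces $G_\nu \preceq G_\mu$, which gives $\nu \leq \mu$ because the order on $G_\nu$ is strict. Hence $S \subseteq [0,\mu]$.

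\emph{Closed.} This is the main obstacle. Let $\nu_n \uparrow \bar\nu := \sup S$ with $\nu_n \in S$. I would work with hockey-stick curves via \Cref{prop:f-hs-equivalence}, using the formula of \Cref{prop:hockey_stick_composition} with a Gaussian first mechanism (composition is commutative):
\begin{align*}
h_{G_\nu \oplus L}(x) = \E_{Z \sim L}\big[h_{G_\nu}(x e^{-Z})\big].
\end{align*}
The explicit Gaussian curve $h_{G_\nu}(y) = \Phi(-\tfrac{\log y}{\nu} + \tfrac{\nu}{2}) - y\,\Phi(-\tfrac{\log y}{\nu} - \tfrac{\nu}{2})$ is continuous in $\nu > 0$ for each fixed $y$, and it is bounded by $1$. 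Dominated convergence then gives $h_{G_{\nu_n} \oplus L}(x) \to h_{G_{\bar\nu}\oplus L}(x)$ pointwise in $x$. The atoms of $L$ at $\pm\infty$ need separate care: for $Z = \infty$ the integrand equals $h(0^+) = 1$ for every $\nu$, and for $Z = -\infty$ the integrand is $0$. Since $h_{G_{\nu_n}\oplus L} \leq h_{G_\mu}$ for all $n$, the bound passes to the limit, so $\bar\nu \in S$. If $\bar\nu = 0$ there is nothing to prove.

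With this, $\mu' = \max S$ exists and $U(G_\mu, L) = G_{\mu'} \in \cLgdp$. By the choice of $\mu'$, $U(G_\mu,L)\oplus L = G_{\mu'} \oplus L \preceq G_\mu$, which is exactly \eqref{eq:residue_condition}. \Cref{thm:residue_filter} then shows the filter is free.

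As a side remark (not needed for the proof), the filter improves on the naive GDP filter. If $L \preceq G_\nu$, then $G_{\sqrt{\mu^2-\nu^2}} \oplus L \preceq G_\mu$, so $\mu' \geq \sqrt{\mu^2-\nu^2}$. The new rule therefore always leaves at least as much budget as the GDP filter's update.
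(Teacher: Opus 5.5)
Your proof is correct and follows the paper's approach. The residue condition holds by the definition of $\mu'$, so the only content is that the maximum is attained. The paper's footnote gets this from the total ordering of $\cLgdp$ via \Cref{lemma:total_ordering_implies_commutativity}, whose proof is the same limit computation of $\E_{Z\sim L}[h_{G_{\nu_n}}(xe^{-Z})]$ that you carry out directly. Your explicit continuity check of $\nu \mapsto h_{G_\nu}(y)$ also supplies the identification $\sup_n G_{\nu_n} = G_{\bar\nu}$, which the paper leaves implicit.
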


\begin{figure}[htbp]
  \centering
  \begin{subfigure}[t]{0.32\textwidth}
    \centering
    \includegraphics[width=\linewidth]{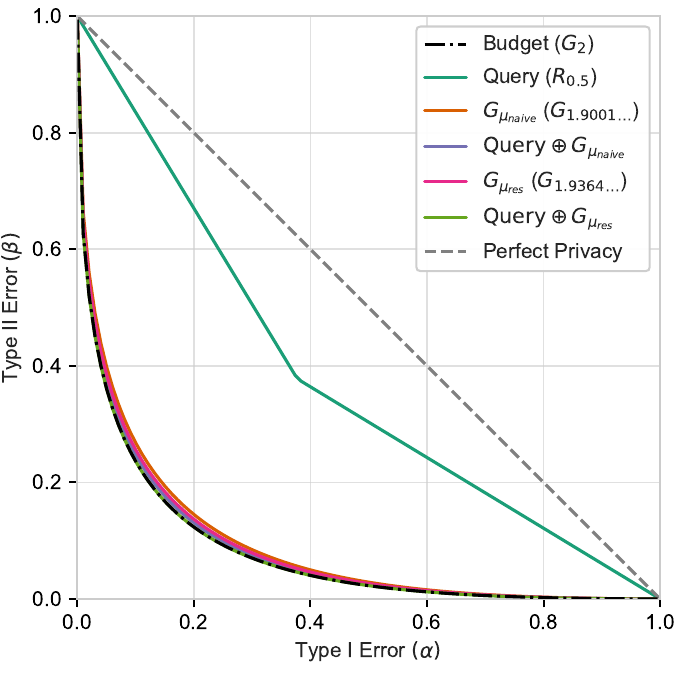}
    \caption{Nearly Gaussian query: gain of $\mu \approx 0.036$ GDP budget.}
    \label{subfig:gdp_res_good_query}
  \end{subfigure}\hfill
    \begin{subfigure}[t]{0.32\textwidth}
    \centering
    \includegraphics[width=\linewidth]{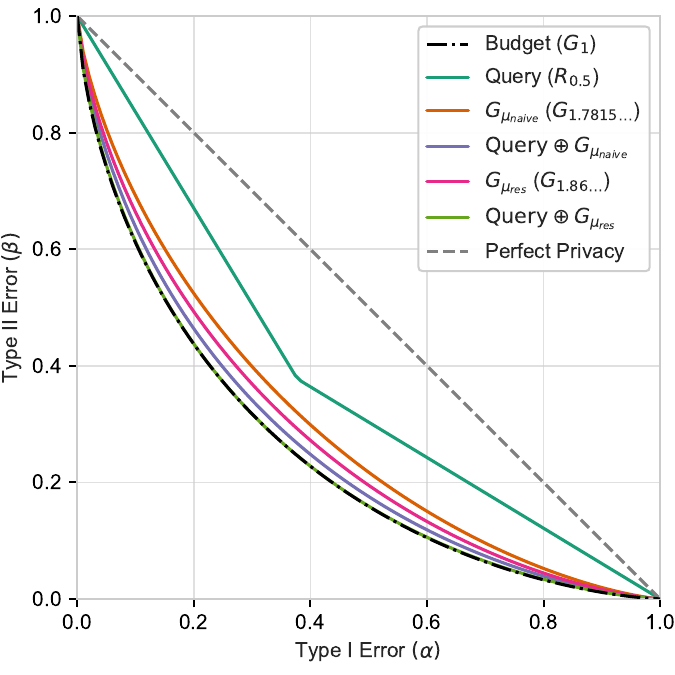}
    \caption{Nearly Gaussian query, low budget: gain of $\mu \approx 0.078$ GDP budget.}
    \label{subfig:gdp_res_good_query_low_budget}
  \end{subfigure}\hfill
  \begin{subfigure}[t]{0.32\textwidth}
    \centering
    \includegraphics[width=\linewidth]{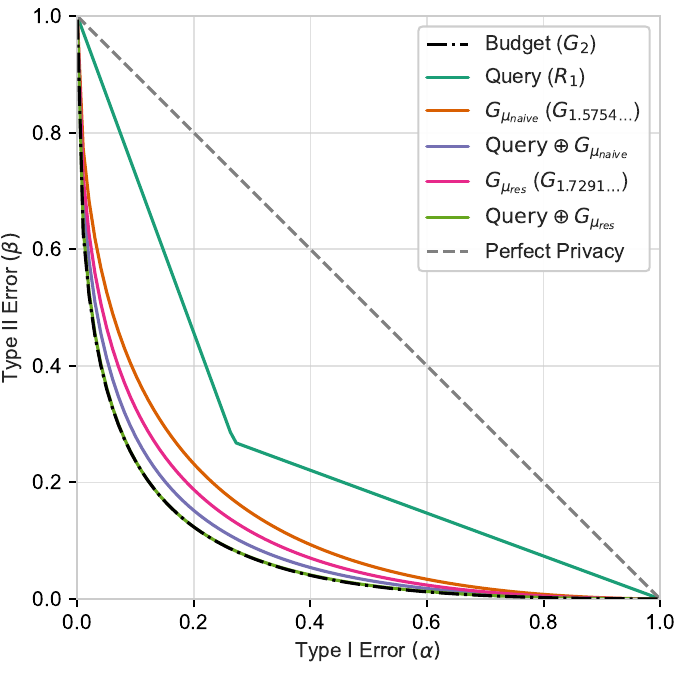}
    \caption{Non-Gaussian query: gain of $\mu \approx 0.147$ GDP budget.}
    \label{subfig:gdp_res_bad_query}
  \end{subfigure}
  \caption{A na\"ive GDP filter directly approximates the query by GDP and determines a new budget $\mu_{naive}$. When composing $G_{\mu_{naive}}$ with the query (purple), a gap with the original budget shows that budget was wasted.
  The GDP residue filter instead searches for the largest valid residual GDP budget $\mu_{res}$. Composing $G_{\mu_{res}}$ with the query almost matches the budget: very little is wasted.}
  \label{fig:residue-to-GDP-filter}
\end{figure}

It is immediate that this update rule satisfies the residue condition\footnote{One can show carefully that the maximum exists by noticing that $\cLgdp$ is totally-ordered and applying \Cref{lemma:total_ordering_implies_commutativity}, which we develop in the next section.}. The update can be computed efficiently by binary search over $\mu' \in [0, \mu]$. It is also clear that the GDP residue filter cannot produce an updated budget smaller than the standard GDP filter. What is perhaps less obvious is that the GDP residue filter is different from the GDP filter. Indeed, intuitively the GDP filter may significantly overestimate the cost of a query $L \preceq G_\nu$ if $L$ is very hard to approximate by $G_\nu$. On the other hand, even if $L$ itself is badly approximated by $G_\nu$, the residue class $\{B' : B' \oplus L \preceq G_\mu\}$ that remains when removing $L$ from $G_\mu$ may admit a much tighter approximation.

\Cref{fig:residue-to-GDP-filter} illustrates using $f$-DP curves the savings of our residual filter in three different privacy regimes. Each plot shows an initial GDP budget, a pure DP query, the updated budget under the na\"ive GDP filter update ($\mu_{naive}$), and the updated budget under our GDP residue filter $\mu_{res}$.
Generally, the residue filter always outperforms the na\"ive filter: $\mu_{res} > \mu_{naive}$.
When the query is well-approximated by GDP and the budget is much larger than the cost of the query, the gains are fairly small, though still meaningful (\cref{subfig:gdp_res_good_query}).
When the query is well-approximated by GDP but the budget is small, perhaps due to depletion from previous queries, the gains are much more significant (\cref{subfig:gdp_res_good_query_low_budget}). The gains are predictably most significant for queries that are poorly approximated by GDP (\cref{subfig:gdp_res_bad_query}). In each case, the residual GDP budget, when composed with the query, is very close to the starting budget, showing that most of the budget has in fact been put to use.

%% file: sections/natural_filter.tex
\section{Characterizing the Natural Filter}
\label{sec:natural_filter}

We now turn toward the natural filter \Cref{alg:natural_filter} with some budget $B$ and queries restricted to some set $\cL$. Our goal is to understand when this filter comes at no additional privacy cost, i.e. the mechanism induced by the interaction between the filter and an adversarial analyst itself has privacy bounded by $B$. In this case, we say that the natural filter is free. For simplicity, we will assume that $\Iddist \in \cL$.

\begin{algorithm}[tb]
    \caption{NaturalFilter}
    \label{alg:natural_filter}
    \begin{algorithmic}
        \REQUIRE Budget $B$, analyst $\cA$, legal query set $\cL$, query capacity $k \in \bZ^+ \cup \{\infty\}$, dataset $D$
        \FOR{$t \gets 1, \dots, k$}
            \STATE $\cA(Y_1, \dots, Y_{t - 1})$ gives mechanism $M_t$ that has PLD $L_t \in \cL$
            \IF{$L_1 \oplus \dots \oplus L_t \preceq B$}
                \YIELD $Y_t \sim M_t(D)$
            \ELSE
                \BREAK
            \ENDIF
        \ENDFOR
    \end{algorithmic}
\end{algorithm}

Now, notice that the natural filter \Cref{alg:natural_filter} with budget $B$, query set $\cL$, and capacity $k$ is free exactly when $\pld(\Nat_{\cL, B}^k) \preceq B$ where
\begin{align*}
    \Nat_{\cL, B}(L_1, \dots, L_t)
        := \{L_{t + 1} \in \cL : L_1 \oplus \dots \oplus L_{t + 1} \preceq B\}.
\end{align*}
In particular, by applying \Cref{lemma:restricted_adversary} to the rule $\Nat_{\cL, B}^k$, we immediately get the following characterization of the cost of interacting with the natural filter.

\begin{theorem}
    \label{thm:free_natural_filter}

    Let $\cL \subseteq \cLall$, let $B \in \cLall$, and consider, for $L' \preceq B$, the PLD operator
    \begin{align*}
        V_{k + 1}(L') := \sup\{L \oplus V_k(L' \oplus L) : L \in \cL, L' \oplus L \preceq B\}
    \end{align*}
    where $V_1(L') := \sup\{L \in \cL : L' \oplus L \preceq B\}$. Then, for any $k \in \bZ^+$ and legal $L_1, \dots, L_t \in \cL$, we have $\pld(\Nat_{\cL, B}^k\mid_{L_1, \dots, L_t}) = V_{k - t}(L_1 \oplus \dots \oplus L_t)$ and, in particular, $\pld(\Nat_{\cL, B}^k) = V_k(\Iddist)$.
\end{theorem}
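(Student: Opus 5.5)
The plan is a backward induction on the remaining number of rounds $m := k - t$, driven entirely by \Cref{lemma:restricted_adversary}. The key observation is that the restricted rule $\Nat_{\cL, B}^k|_{L_1, \dots, L_t}$ depends on the prefix $(L_1, \dots, L_t)$ only through the accumulated PLD $L' := L_1 \oplus \dots \oplus L_t$. Indeed, by definition, for any continuation $(L_{t+1}, \dots, L_s)$ with $s < k$,
\begin{align*}
    \Nat_{\cL, B}^k|_{L_1, \dots, L_t}(L_{t+1}, \dots, L_s) = \{L \in \cL : L' \oplus L_{t+1} \oplus \dots \oplus L_s \oplus L \preceq B\},
\end{align*}
and this is a rule of length $k - t$. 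Here associativity from \Cref{prop:pld_convolution_properties} lets us regroup the convolution. So I would define, for each legal $L' \preceq B$ and each $m \in \bZ^+$, the rule $\Nat^{(m)}_{L'}$ of length $m$ given by $\Nat^{(m)}_{L'}(L_{1}, \dots, L_s) := \{L \in \cL : L' \oplus L_1 \oplus \dots \oplus L_s \oplus L \preceq B\}$. It then suffices to show $\pld(\Nat^{(m)}_{L'}) = V_m(L')$ for all $m \geq 1$. The particular case follows by taking $t = 0$ and $L' = \Iddist$, using $\Iddist \oplus L = L$.

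The base case $m = 1$ comes straight from the $k = 1$ case of \Cref{lemma:restricted_adversary}. There $\pld(\Nat^{(1)}_{L'}) = \sup \Nat^{(1)}_{L'}(\lambda) = \sup\{L \in \cL : L' \oplus L \preceq B\} = V_1(L')$.

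For the inductive step $m \to m+1$, apply the $k > 1$ case of \Cref{lemma:restricted_adversary} to get
\begin{align*}
    \pld(\Nat^{(m+1)}_{L'}) = \sup\{L \oplus \pld(\Nat^{(m+1)}_{L'}|_L) : L \in \cL, L' \oplus L \preceq B\}.
\end{align*}
Unwinding the definitions gives $\Nat^{(m+1)}_{L'}|_L = \Nat^{(m)}_{L' \oplus L}$. Since $L' \oplus L \preceq B$, the inductive hypothesis applies at the new legal prefix and gives $\pld(\Nat^{(m)}_{L' \oplus L}) = V_m(L' \oplus L)$. Substituting yields exactly $V_{m+1}(L')$.

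No serious obstacle is expected, since the content lives in \Cref{lemma:restricted_adversary}. The points needing care are bookkeeping ones:
\begin{enumerate}[(i)]
    \item checking that the restricted rule has the right length $k - t$ and still contains $\Iddist$ where needed;
    \item treating the edge case $t = k$ or an empty legal set (the supremum over a set always containing $\Iddist$ when $L' \preceq B$ keeps things well defined);
    \item confirming that the identification of $\Nat^k_{\cL,B}|_{L_1,\dots,L_t}$ with $\Nat^{(k-t)}_{L'}$ is an equality of rules, so that their adversary sets and hence their $\pld$ suprema coincide.
\end{enumerate}
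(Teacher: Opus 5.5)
Your proposal is correct and matches the paper's argument: the paper obtains this theorem directly by applying \Cref{lemma:restricted_adversary} to $\Nat_{\cL, B}^k$. Your backward induction on the remaining rounds is that argument written out, using the fact that the restricted rule depends on the prefix only through $L_1 \oplus \dots \oplus L_t$.
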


Informally, $V_{k - t}(L_1 \oplus \dots \oplus L_t)$ is the maximal remaining privacy cost of an adversary that has already played moves $L_1, \dots, L_t$. Naturally, when only one move is left to select, the effect of adaptivity vanishes and the adversary simply selects the valid move of maximal cost.

It is also of interest to understand when a given family of queries $\cL$ admits free natural filters against an arbitrary budget $B$. When a filter is free against all budgets, we say that the filter is universally free. In the case of the natural filter, we show that it is universally free exactly when it can be realized as a residue filter.
For convenience, given $k \in \bZ^+ \cup \{\infty\}$ we let $\cL^{< k} := \{L_1 \oplus \dots \oplus L_t : L_1, \dots, L_t \in \cL, 0 \leq t < k\}$ denote the closure of $\cL$ under fewer than $k$ convolutions and we let $L \oplus \cL' := \{L \oplus L' : L' \in \cL'\}$ for a set of PLDs $\cL'$.

\begin{theorem}
    \label{thm:universal_free_natural_filter}
    Fix $\cL \subseteq \cLall$, $k \in \bZ^+ \cup \{\infty\}$, and consider the update rule $U : \cLall \times \cL \to \cLall$ given by $U(B, L) := \sup\{L' \in \cL^{< k} : L \oplus L' \preceq B\}$. The following are equivalent:
    \begin{enumerate}[(i)]
        \item For all $B \in \cLall$, $\pld(\Nat_{\cL, B}^k) \preceq B$;
        \item For all $L \in \cL$ and $\cL' \subseteq \cL^{< k}$, $L \oplus \sup{\cL'} = \sup(L \oplus \cL')$;
        \item For all $B \in \cLall$ and $L \in \cL$, $U(B, L) \oplus L \preceq B$; and
        \item For all $B \in \cLall$, $\pld(\Update_{U, B}^k) \preceq B$.
    \end{enumerate}
    In any of these cases, the natural filter $\Nat_{\cL, B}^k$ and the residue filter $\Update_{U, B}^k$ are identical.
\end{theorem}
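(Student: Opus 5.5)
The plan is to prove the cycle (ii)$\Rightarrow$(iii)$\Rightarrow$(iv)$\Rightarrow$(i)$\Rightarrow$(ii). The identification of the two filters falls out of two inclusions between the rules $\Nat_{\cL,B}^k$ and $\Update_{U,B}^k$.

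Throughout I read $U(B,L)$ as relevant only for $L \preceq B$, as in \eqref{eq:residue_condition}. Since $\Iddist \in \cL^{<k}$ (take $t=0$), the set defining $U(B,L)$ is then nonempty.

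\textbf{(ii)$\Rightarrow$(iii).} Let $\cL' := \{L' \in \cL^{<k} : L \oplus L' \preceq B\}$. By (ii),
\[
U(B,L) \oplus L = \sup(L \oplus \cL') \preceq B,
\]
since every element of $L \oplus \cL'$ is dominated by $B$.

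\textbf{(iii)$\Rightarrow$(iv).} This is \Cref{thm:residue_filter} applied with budget set $\cB = \cLall$.

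\textbf{(iv)$\Rightarrow$(i).} I will show $\Nat_{\cL,B}^k(L_1,\dots,L_t) \subseteq \Update_{U,B}^k(L_1,\dots,L_t)$ along every legal history. This gives $\Adv(\Nat^k_{\cL,B}) \subseteq \Adv(\Update^k_{U,B})$, hence $\pld(\Nat^k_{\cL,B}) \preceq \pld(\Update^k_{U,B}) \preceq B$.

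Let $B_0 := B$ and $B_{s+1} := U(B_s, L_{s+1})$. The claim, proved by induction on $s$, is: for every $L' \in \cL^{<k-s}$ with $L_1 \oplus \dots \oplus L_s \oplus L' \preceq B$, we have $L' \preceq B_s$.
\begin{itemize}
\item The case $s=0$ is trivial.
\item For the step, take $L' \in \cL^{<k-s-1}$ with $L_1 \oplus \dots \oplus L_{s+1} \oplus L' \preceq B$. Then $L_{s+1} \oplus L' \in \cL^{<k-s}$, so the hypothesis gives $L_{s+1} \oplus L' \preceq B_s$. Since $L' \in \cL^{<k}$, it belongs to the set whose supremum is $B_{s+1}$, so $L' \preceq B_{s+1}$.
\end{itemize}
Taking $L' = L_{t+1}$ gives the inclusion.

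For the final statement, assume (iii). I will show the reverse inclusion: if $L_{t+1} \preceq B_t$, then $L_1 \oplus \dots \oplus L_{t+1} \preceq B$. By (iii) and part 5 of \Cref{prop:pld_convolution_properties}, $B_s \oplus L_s \preceq B_{s-1}$ for each $s$. Chaining these from the inside out gives
\[
L_1 \oplus \dots \oplus L_{t+1} \preceq L_1 \oplus \dots \oplus L_t \oplus B_t \preceq \dots \preceq B_0.
\]
So the two rules agree.

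\textbf{(i)$\Rightarrow$(ii).} This is the main step. The direction $L \oplus \sup\cL' \succeq \sup(L \oplus \cL')$ always holds by monotonicity of $\oplus$. For the reverse, take $\cL'$ nonempty (I will treat the empty case by convention) and set $B := \sup(L \oplus \cL')$. Then $L \preceq L \oplus L' \preceq B$ for any $L' \in \cL'$, so $L$ is a legal first move. \Cref{lemma:restricted_adversary} then gives
\[
\pld(\Nat_{\cL,B}^k) \succeq L \oplus \pld(\Nat_{\cL,B}^k|_L).
\]
Next, fix $L' = L_1 \oplus \dots \oplus L_t \in \cL'$ with $t < k$. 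The restricted rule $\Nat_{\cL,B}^k|_L$ has length $k-1 \geq t$. The non-adaptive mechanism playing $L_1,\dots,L_t$ is a legal adversary for it, because every prefix satisfies $L \oplus L_1 \oplus \dots \oplus L_j \preceq L \oplus L' \preceq B$, using $\Iddist \preceq L_i$. By \Cref{prop:composition_convolution} this mechanism has PLD $L'$. Hence $\pld(\Nat_{\cL,B}^k|_L) \succeq \sup\cL'$, and combining with (i),
\[
L \oplus \sup\cL' \preceq \pld(\Nat_{\cL,B}^k) \preceq B = \sup(L \oplus \cL').
\]

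I expect the main care to lie in the length bookkeeping (using $t < k$ so that the total of $t+1$ moves fits within capacity $k$, including $k=\infty$) and in the monotonicity steps through suprema.
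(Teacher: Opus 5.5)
Your proposal follows the paper's proof almost step for step: the same cycle of implications, the same (ii)$\Rightarrow$(iii) via $\cL' := \{L' \in \cL^{<k} : L \oplus L' \preceq B\}$, \Cref{thm:residue_filter} for (iii)$\Rightarrow$(iv), and the same two inclusions between $\Nat_{\cL,B}^k$ and $\Update_{U,B}^k$. The only real difference is in (i)$\Rightarrow$(ii).
\begin{itemize}
\item The paper builds the auxiliary length-2 rule $\Gamma(\lambda)=\{L\}$, $\Gamma(L)=\cL'$. It computes $\pld(\Gamma)=L\oplus\sup\cL'$ via \Cref{lemma:restricted_adversary}, then embeds every (possibly adaptive) $\Gamma$-adversary into $\Adv(\Nat_{\cL,B}^k)$ by splitting its second move into $\cL$-queries.
\item You apply \Cref{lemma:restricted_adversary} to the natural rule directly. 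You then only need non-adaptive continuations to get $\pld(\Nat_{\cL,B}^k|_L)\succeq\sup\cL'$.
\end{itemize}
This is equivalent and slightly cleaner: the lemma does the adaptive work, and you verify explicitly the prefix condition that the paper asserts as ``clearly''. For $k=1$ the lemma does not apply, but then $\cL^{<1}=\{\Iddist\}$ and (ii) is trivial. Your handling of the empty $\cL'$ and your reading of (iii) as applying only when $L \preceq B$ are both needed; the paper leaves them implicit.

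There is one indexing slip, in the invariant for (iv)$\Rightarrow$(i). Your claim at stage $s$ quantifies over $L'\in\cL^{<k-s}$. At $s=t$ it therefore covers $L'=L_{t+1}$ only when $L_{t+1}\in\cL^{<k-t}$, i.e. $t\le k-2$. For finite $k$ the last query ($t=k-1$, where $\cL^{<1}=\{\Iddist\}$) is not covered, although the inclusion of rules needs it.

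The fix is to shift the index and state the claim for $L'\in\cL^{<k-s+1}$:
\begin{itemize}
\item The base case $s=0$ holds for every $L'$, since $B_0=B$.
\item In the step, $L'\in\cL^{<k-s}$ gives $L_{s+1}\oplus L'\in\cL^{<k-s+1}$ for the hypothesis. It also gives $L'\in\cL^{<k}$, which is all the definition of $U$ requires.
\item At $s=t\le k-1$ the claim then applies to $L_{t+1}\in\cL\subseteq\cL^{<k-t+1}$.
\end{itemize}
This is exactly the bookkeeping behind the paper's chain $L_1\oplus\dots\oplus L_{t+1}\preceq B\Rightarrow L_2\oplus\dots\oplus L_{t+1}\preceq U(B,L_1)\Rightarrow\cdots$. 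There, after removing $s\ge 1$ queries, the residual sum has $t+1-s\le k-1$ terms and so lies in $\cL^{<k}$.
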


\begin{proof}
    Without loss of generality, by \Cref{lemma:infinite_rule_limit}, assume that $k$ is finite.

    First, we show that (i) implies (ii), so suppose that the natural filter is free for every budget $B$ and let $L \in \cL$ and $\cL' \subseteq \cL^{<k}$. For every $L' \in \cL'$, we have $L \oplus L' \preceq L \oplus \sup{\cL'}$ by \Cref{prop:pld_convolution_properties} and hence $\sup(L \oplus \cL') \preceq L \oplus \sup{\cL'}$ for free. As for the reverse inequality, consider the length-$2$ rule $\Gamma(\lambda) := \{L\}, \Gamma(L) := \cL'$ as well as the budget $B := \sup(L \oplus \cL')$. Now, for any $M \in \Adv(\Gamma)$, notice that we can transform $M$ into $M' \in \Adv(\Nat_{\cL, B}^k)$ such that $\pld(M) = \pld(M')$. Indeed, let $M = M_1 \otimes M_2 \in \Adv(\Gamma)$ and fix $y_1 \in \cY_1$ so that $\pld(M_2(\cdot; y_1)) = L_{2,1} \oplus \dots \oplus L_{2,t} \in \cL^{< k}$. For each $L_{2,i}$, we can find a mechanism $M_{2,i}$ with $\pld(M_{2,i}) = L_{2,i}$. Now, set $M'_2(\cdot ; y_1) := M_{2,1} \otimes \dots \otimes M_{2,t}$ so that $\pld(M'_2(\cdot; y_1)) = L_{2,1} \oplus \dots \oplus L_{2,t} = \pld(M_2(\cdot; y_1))$. Clearly, $M' := M_1 \otimes M'_2 \in \Adv(\Nat_{\cL, B}^k)$ and, by \Cref{prop:composition_convolution}, we have $\pld(M) = \pld(M')$, so by \Cref{lemma:restricted_adversary} and since the natural filter is free for $B$ by assumption, we have
    \begin{align*}
        L \oplus \sup{\cL'}
            = \sup_{M \in \Adv(\Gamma)} \pld(M)
            \preceq \sup_{M' \in \Adv(\Nat_{\cL, B}^k)} \pld(M')
            \preceq B
            = \sup(L \oplus \cL').
    \end{align*}

    That (ii) implies (iii) is immediate from the construction of the update rule $U$ by taking $\cL' := \{L' \in \cL^{< k} : L \oplus L' \preceq B\}$; likewise (iii) implies (iv) follows immediately from \Cref{thm:residue_filter}, so what remains to show is that, if the residue filter $\Update_{U, B}$ is universally free, then so is $\Nat_{\cL, B}$, in which case the two filters coincide.

    To this end, first notice that $\Update_{U, B}^k$ is always at least as permissive a filter as $\Nat_{\cL, B}^k$. Indeed, for any $L \in \cL$ and $L' \in \cL^{< \infty}$ such that $L \oplus L' \preceq B$, it follows immediately from the construction of $U$ that $L' \preceq U(B, L)$. Thus
    for any legal query sequence $L_1, \dots, L_t$,
    \begin{align*}
        L_{t + 1} \in \Nat_{\cL, B}^k(L_1, \dots, L_t)
            & \implies L_1 \oplus \dots \oplus L_{t + 1} \preceq B \\
            & \implies L_2 \oplus \dots \oplus L_{t + 1} \preceq U(B, L_1) \\
            & \implies \dots \\
            & \implies L_{t + 1} \preceq U(\dots U(B, L_1) \dots, L_t) \\
            & \implies L_{t + 1} \in \Update_{U, B}^k(L_1, \dots, L_t).
    \end{align*}
    Therefore, if $\Update_{U, B}^k$ is universally free, then so is $\Nat_{\cL, B}^k$.

    Finally, notice that, as long as the residue condition for $U$ is satisfied, then $\Nat_{\cL, B}^k$ is at least as permissive a filter as $\Update_{U, B}^k$. Indeed, for any $L \in \cL$ and $L' \in \cL^{< \infty}$ for which $L' \preceq U(B, L)$, \Cref{prop:pld_convolution_properties} and the residue condition ensure that $L \oplus L' \preceq L \oplus U(B, L) \preceq B$. Inductively, just as before, this implies that
    for any legal query sequence $L_1, \dots, L_t$,
    we have that $L_{t + 1} \in \Update(U, B)^k(L_1, \dots, L_t) \implies L_{t + 1} \in \Nat_{\cL, B}^k(L_1, \dots, L_t)$ and so the two filters are indeed equivalent.
\end{proof}

Our key result is that that these conditions can essentially only be satisfied for family of queries $\cL$ whose closure under composition $\cL^{< k}$ is totally-ordered. For a distribution $A$ over $\overline{\bR}$, we denote by $\supp(A) := \{t \in \mathbb{R} : \forall \text{open } U \ni t, A(U) > 0\}$ the support of $A$, namely the finite points with locally positive probability mass. We say that $A$ is non-degenerate if $|\supp(A)| > 1$, i.e. $A$ is supported at at least two finite points.

\begin{theorem}
    \label{thm:total_ordering}
    Let $\cL \subseteq \cLall$ and $k \in \bZ^+ \cup \{\infty\}$. If $\cL^{< k}$ is totally-ordered, then $\pld(\Nat_{\cL, B}^k) \preceq B$ for every $B \in \cLall$. Conversely, if the non-degenerate members of $\cL^{< k}$ are not totally-ordered, then $\pld(\Nat_{\cL, B}^{2k-3}) \npreceq B$ for some choice of $B \in \cLall$.
\end{theorem}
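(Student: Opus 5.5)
The two directions need quite different tools. The forward direction reduces to condition (ii) of \Cref{thm:universal_free_natural_filter}. The converse needs an explicit adversary that beats a suitably chosen budget.

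\textbf{Forward direction.} Suppose $\cL^{<k}$ is totally ordered. By \Cref{thm:universal_free_natural_filter} it suffices to verify (ii): for $L \in \cL$ and $\cL' \subseteq \cL^{<k}$ we need $L \oplus \sup \cL' = \sup(L \oplus \cL')$.

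The inequality $\succeq$ holds in general, by monotonicity (\Cref{prop:pld_convolution_properties}). For $\preceq$, I pass to hockey-stick curves. Since $\cL'$ is totally ordered, it is a chain. By \Cref{prop:sup-conv}, $h_{\sup\cL'}$ is the pointwise supremum of the chain $\{h_{L'}\}$. Because the chain is totally ordered, this pointwise supremum is an increasing pointwise limit along a sequence $L'_n$; I choose the sequence by a diagonal argument over rational $x$ and then extend using convexity and continuity.

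By \Cref{prop:hockey_stick_composition},
\[
h_{L\oplus L'}(x)=\E_{Z\sim L}\bigl[h_{L'}(xe^{-Z})\bigr].
\]
Monotone convergence then gives $h_{L\oplus L'_n}\uparrow h_{L\oplus\sup\cL'}$, which yields the reverse inequality.

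The step I expect to need care here is the exchange of limit and supremum at the points $Z=\pm\infty$. I would handle it by noting that those points contribute terms that are constant in $L'$.

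\textbf{Converse.} Here I would exhibit two incomparable non-degenerate $A, C \in \cL^{<k}$, each a convolution of at most $k-1$ elements of $\cL$. Incomparability means there are points $x_1, x_2$ with $h_A(x_1) > h_C(x_1)$ and $h_C(x_2) > h_A(x_2)$.

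The adversary first plays a non-degenerate query $L_0$. It then branches according to the observed likelihood, playing the $A$-sequence or the $C$-sequence depending on whether $\ell_1(Y_1)$ is large or small. The total number of queries is then at most $1+2(k-2) = 2k-3$, which accounts for the horizon $2k-3$ in the statement.

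The budget is $B := \sup\{L_0 \oplus A,\, L_0 \oplus C\}$, so both fixed sequences are legal under the natural filter. By \Cref{prop:hockey_stick_composition}, the adaptive adversary achieves
\[
\E_{Y_1}\bigl[\max\bigl(h_A(x/\ell_1),\, h_C(x/\ell_1)\bigr)\bigr],
\]
and I would show that this strictly exceeds $\max(h_{L_0\oplus A}(x), h_{L_0\oplus C}(x))$ at some $x$. That shows the resulting PLD is not $\preceq B$.

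\textbf{Main obstacle.} The hard part is choosing $L_0$ so that the strict gain survives. We need $L_0$ to put positive mass on likelihood values that send $x$ near $x_1$ in one branch and near $x_2$ in the other. If only a single incomparable pair is available, we may need to first compose with copies of a non-degenerate element to spread the support; this is the "privacy gap amplification" idea. Non-degeneracy is used exactly here: a non-degenerate first move has at least two finite likelihood values, so the branching is genuinely adaptive.

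I would first try taking $L_0$ to be one of $A, C$ itself, or a non-degenerate factor of one of them. Then I would check, via convexity of hockey-stick curves and a perturbation in $x$, that the two curves cross at a point the mixture actually reaches, which produces the strict inequality.
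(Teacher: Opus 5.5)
\textbf{The converse has a genuine gap at the step you flag as the main obstacle.}

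Your forward direction is the paper's argument. You verify condition (ii) of \Cref{thm:universal_free_natural_filter} through an increasing chain and monotone convergence, which is exactly \Cref{lemma:total_ordering_implies_commutativity}. Your converse also has the right skeleton: play one of the incomparable pair first, take $B$ to be the supremum of the two compositions, and branch on $\ell_1(Y_1)$. By \Cref{lemma:restricted_adversary}, the worst case over such branching strategies is $L_0 \oplus \sup\{A, C\}$.

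Your adversary beats $B$ at $x$ iff, for $Z \sim L_0$, both $h_A(xe^{-Z}) > h_C(xe^{-Z})$ and the reverse occur with positive probability. Equivalently, some dilate of $e^{-\supp(L_0)}$ must meet both open sets $X(h_A, h_C)$ and $X(h_C, h_A)$. That this holds for $L_0 = A$ or $L_0 = C$ is precisely \Cref{lemma:commutativity_implies_total_ordering}. The paper obtains it from \Cref{prop:pld_gap_amplification}, \Cref{prop:support_conversion} and \Cref{thm:hockey_link}.

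``Convexity plus a perturbation in $x$'' cannot deliver this, because the obstruction is global rather than local. Suppose the crossing set were $a^{\bZ}\{b_1, b_2\}$, with the intervals $(a^k b_1, a^k b_2)$ and $(a^k b_2, a^{k+1} b_1)$ alternately dominated. Suppose also the two supports were $a^{\bZ} s_1$ and $a^{\bZ} s_2$. Then every dilate of either support lies in a single region or on the crossings, and your adversary gains nothing at any $x$ for either choice of $L_0$.

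The paper first shows that failure of linking forces exactly this lattice picture (\Cref{prop:boundary_support,prop:dominated_support,prop:dense_subgroup}). It then rules the picture out: the common values $h_A(a^k b_1)$ satisfy a linear recurrence with characteristic polynomial $(t-1)(t-a)(t-c)$, where $c > 0$. A bounded solution of that recurrence must be constant, and monotonicity then makes both curves constant and hence ordered. Your proposal offers no substitute for this argument, so the converse remains unproved.

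Three smaller points:
\begin{itemize}
\item Composing extra copies of a non-degenerate element to ``spread the support'' is not what gap amplification does. It would also change both the budget and the query count. \Cref{prop:pld_gap_amplification} needs only two support points of $L_i$ whose shifts land in the two regions; the hard part is proving such points exist.
\item Non-degeneracy is not used merely to make the first move branch. It enters through \Cref{prop:dense_subgroup}, applied to the supports of both curves. The degenerate pair after \Cref{lemma:commutativity_implies_total_ordering} shows the strict gap can genuinely fail without it.
\item Your count $1 + 2(k-2)$ does not match your adversary. A single first query from $\cL$ followed by $A$ or $C$ uses at most $k$ queries. Playing all of $A$ or $C$ first, which is the case the key lemma covers, uses up to $2(k-1)$. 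You need to recount the horizon for the adversary you actually use.
\end{itemize}
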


Proving the forward direction is easy. We show in general that the supremum commutes with convolution for totally-ordered families of queries.

\begin{lemma}
    \label{lemma:total_ordering_implies_commutativity}
    Let $L$ be a PLD and let $\cL$ be a totally-ordered set of PLDs. Then $L \oplus \sup{\cL} = \sup(L \oplus \cL)$.
\end{lemma}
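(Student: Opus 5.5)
The inequality $\sup(L \oplus \cL) \preceq L \oplus \sup\cL$ needs no assumption on $\cL$. For every $L' \in \cL$ we have $L' \preceq \sup\cL$, so item 5 of \Cref{prop:pld_convolution_properties} gives $L \oplus L' \preceq L \oplus \sup\cL$. Taking the supremum over $L'$ yields the claim. The work is in the reverse inequality $L \oplus \sup\cL \preceq \sup(L \oplus \cL)$. As in the proof of \Cref{lemma:restricted_adversary}, I will argue with hockey-stick curves, where by \Cref{prop:sup-conv}(i) suprema are pointwise.

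\textbf{Composition formula.} Realize $L$ as the PLD of some $M_1$ with likelihood ratio $\ell_1$, and realize any PLD $A$ as the PLD of a non-adaptive $M_2$. \Cref{prop:hockey_stick_composition} then gives
\begin{align*}
    h_{L \oplus A}(x) = \E_{Y_1 \sim M_1(D_1)}\big[h_A(x/\ell_1(Y_1))\big].
\end{align*}
On outcomes with $\ell_1 = \infty$ we read $h_A(0) := 1$. Outcomes with $\ell_1 = 0$ have $M_1(D_1)$-probability zero, so they do not contribute. I apply this formula both to $A = L' \in \cL$ and to $A = \sup\cL$.

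\textbf{Reducing to one monotone sequence.} The main obstacle is to exchange the expectation with the supremum over $\cL$. This requires a single sequence in $\cL$ whose hockey-stick curves approach $h^\star := h_{\sup\cL} = \sup_{L' \in \cL} h_{L'}$ at all points simultaneously, and this is where total ordering is used. By \Cref{prop:f-hs-equivalence}, the curves $\{h_{L'} : L' \in \cL\}$ form a chain under pointwise order, so any finitely many of them have a pointwise maximum that is again a member of the family. Enumerate the positive rationals $q_1, q_2, \dots$. At step $n$, choose for each $i \le n$ an element $A_{n,i} \in \cL$ with $h_{A_{n,i}}(q_i) \ge h^\star(q_i) - 1/n$. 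Let $L_n \in \cL$ be the largest among the $A_{n,i}$ and $L_{n-1}$. Then $h_{L_n}$ is pointwise nondecreasing in $n$ and converges to $h^\star$ at every rational. Every $h_{L_n}$ and $h^\star$ is convex, decreasing and bounded in $[0,1]$, so $h^\star$ is continuous on $\bR^\times$. Squeezing $h_{L_n}(x)$ between its values at nearby rationals, using monotonicity, then upgrades convergence on the rationals to pointwise convergence on all of $\bR^\times$.

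\textbf{Conclusion.} By the monotone convergence theorem applied to $h_{L_n}(x/\ell_1(Y_1)) \uparrow h^\star(x/\ell_1(Y_1))$, the composition formula gives $h_{L \oplus L_n}(x) \uparrow h_{L \oplus \sup\cL}(x)$ for each $x$. Since $L \oplus L_n \in L \oplus \cL$, each term satisfies $h_{L \oplus L_n} \le h_{\sup(L \oplus \cL)}$. Hence $h_{L \oplus \sup\cL} \le h_{\sup(L \oplus \cL)}$ pointwise, and \Cref{prop:f-hs-equivalence} converts this into $L \oplus \sup\cL \preceq \sup(L \oplus \cL)$, which completes the argument.
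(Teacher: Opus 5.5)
Your proof is correct and follows the paper's argument. The free direction comes from monotonicity of convolution. For the reverse direction you take a nondecreasing sequence in $\cL$ whose hockey-stick curves converge pointwise to $h_{\sup\cL}$ and pass the limit through the composition formula by monotone convergence. You also spell out two points the paper leaves implicit: why that sequence exists (the rational enumeration plus continuity and monotonicity of $h_{\sup\cL}$), and the convention $h(0)=1$ on outcomes with $\ell_1=\infty$.
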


\begin{proof}
    As in the proof of \Cref{thm:universal_free_natural_filter}, we already have $\sup(L \oplus \cL) \preceq L \oplus \sup{\cL}$ for free, so due to \Cref{prop:sup-conv}, we just need to show that $h_{L \oplus \sup{\cL}} \preceq h_{\sup(L \oplus \cL)}$. By total-ordering of $\cL$ and by \Cref{prop:sup-conv}, we can find a sequence $L_1 \preceq L_2 \preceq \dots \in \cL$ so that $h_{L_t} \to h_{\sup{\cL}}$ pointwise. For any $x \in \bR^\times$, monotone convergence yields
    \begin{align*}
        h_{L \oplus \sup{\cL}}(x)
            & = \E_{Z \sim L}[\E_{Z' \sim \sup{\cL}}[(1 - xe^{-(Z + Z')})_+]] \\
            & = \E_{Z \sim L}[h_{\sup{\cL}}(xe^{-Z})] \\
            & = \E_{Z \sim L}[\lim_{t \to \infty} h_{L_t}(xe^{-Z})] \\
            & = \lim_{t \to \infty}\E_{Z \sim L}[h_{L_t}(xe^{-Z})] \\
            & = \lim_{t \to \infty}\E_{Z \sim L}[\E_{Z' \sim L_t}[(1 - xe^{-(Z + Z')})_+]] \\
            & = \lim_{t \to \infty} h_{L \oplus L_t}(x) \\
            & \leq h_{\sup(L \oplus \cL)}(x).
    \end{align*}
\end{proof}

Proving the other direction is substantially more challenging. The key lemma is that, given a pair of non-degenerate queries that are not totally-ordered, composing one of these queries with their supremum can amplify the gaps between them, leading to a violation of the commutativity condition. The proof of this technical result involves analyzing the topological structure of the hockey-stick curves of unordered queries. The details are given in \Cref{sec:total_ordering_proof}. Note that non-degeneracy is key as the following result fails for degenerate queries such as $L_1 := 1/2 \cdot 1_{-\ln{2}} + 1/2 \cdot 1_\infty$ and $L_2 := 2/3 \cdot 1_0 + 1/3 \cdot 1_\infty$.

\begin{lemma}
    \label{lemma:commutativity_implies_total_ordering}
    Let $L_1$ and $L_2$ be non-degenerate PLDs such that $L_1 \npreceq L_2$ and $L_2 \npreceq L_1$, i.e. $\{L_1, L_2\}$ is not totally-ordered. Then, for either $i = 1$ or $i = 2$, we have
    \begin{align*}
        L_i \oplus \sup\{L_1, L_2\} \succ \sup\{L_i \oplus L_1, L_i \oplus L_2\}.
    \end{align*}
\end{lemma}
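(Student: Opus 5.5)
The plan is to work entirely with hockey-stick curves, using \Cref{prop:sup-conv} and \Cref{prop:f-hs-equivalence}. Write $h_j := h_{L_j}$ and $S := \max\{h_1, h_2\} = h_{\sup\{L_1, L_2\}}$.

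As in the proof of \Cref{lemma:total_ordering_implies_commutativity}, \Cref{prop:pld_convolution_properties} already gives $\sup\{L_i \oplus L_1, L_i \oplus L_2\} \preceq L_i \oplus \sup\{L_1, L_2\}$. So it suffices to find one $i$ and one $x \in \bR^\times$ with strict inequality of hockey-stick curves. By the computation in that proof,
\begin{align*}
    h_{L_i \oplus \sup\{L_1,L_2\}}(x) = \E_{Z \sim L_i}[S(xe^{-Z})], \qquad h_{\sup\{L_i \oplus L_1, L_i \oplus L_2\}}(x) = \max_{j}\E_{Z \sim L_i}[h_j(xe^{-Z})].
\end{align*}
The left side strictly exceeds the right exactly when the random argument $xe^{-Z}$ falls, with positive probability, into the open set $A := \{h_1 > h_2\}$ and also, with positive probability, into $B := \{h_2 > h_1\}$. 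Both sets are open by continuity (convexity) of hockey-stick curves, and both are nonempty by \Cref{prop:f-hs-equivalence}, since $L_1, L_2$ are incomparable. Because $A, B$ are open, it suffices to find $z, z' \in \supp(L_i)$ and $x$ with $xe^{-z} \in A$ and $xe^{-z'} \in B$; positive mass near $z, z'$ then transfers to positive probability of landing in $A$ and $B$. Equivalently, we need $z - z' \in \log(A) - \log(B)$, i.e.\ some difference of support points of $L_i$ must equal $\log(b/a)$ for some $a \in A$ and $b \in B$.

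Next I will analyze the structure of $A$ and $B$ near a crossing. Pick adjacent components $A_0 \subseteq A$ and $B_0 \subseteq B$ separated by a crossing point $c$ where $h_1(c) = h_2(c)$; if they are not adjacent, intermediate regions where $h_1 = h_2$ only make the target set of log-ratios larger. Then $\log(A_0) - \log(B_0)$ contains an open interval of the form $(-\log(c''/c'), 0)$ or $(0, \log(c''/c'))$, where $c' < c < c''$ are the outer endpoints of $A_0 \cup B_0$. The main step is to show that for at least one $i$, some difference of two points of $\supp(L_i)$ lies in this interval. Non-degeneracy guarantees that at least one nonzero difference exists.

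This is the step I expect to be the main obstacle. My first attempt will use the piecewise structure of hockey-stick curves. By \Cref{prop:pld_to_hs}, $h_L(x) = \E[(1 - xe^{-Z})_+]$ is linear in $x$ on any interval containing no point $e^{z}$ with $z \in \supp(L)$; its "kinks" sit exactly at $e^{\supp(L)}$. Two functions that are linear on a common interval cross at most once there. Hence between $c'$ and $c$, and between $c$ and $c''$, each side must contain a kink of $h_1$ or of $h_2$. Otherwise the sign change could not be undone, since the curves would coincide or fail to re-cross. This forces points $e^{z} \in (c', c]$ and $e^{z''} \in [c, c'')$ with $z, z''$ in $\supp(L_1) \cup \supp(L_2)$. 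A short case analysis on which measure these kinks come from should then produce, for the appropriate $i$, two support points of $L_i$ whose ratio $e^{z''-z}$ lies in $(1, c''/c')$. In the mixed case, where one kink comes from $L_1$ and the other from $L_2$, I would try iterating along successive crossings, or comparing with the asymptotic endpoints $x \to 0$ (where both curves tend to $1$) and $x \to \infty$ (governed by the mass at $+\infty$). Since $A$ and $B$ may be unions of many intervals, the analysis may need to look at several alternations. This mixed case is where I expect the real difficulty to lie; the example of degenerate $L_1, L_2$ given before the statement shows that the argument must genuinely use two finite support points of the chosen $L_i$.

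Finally, with such $z, z', x$ in hand, I will plug them into the displayed identity to obtain a strict inequality of hockey-stick curves at $x$. By \Cref{prop:f-hs-equivalence}, this gives $L_i \oplus \sup\{L_1, L_2\} \succ \sup\{L_i \oplus L_1, L_i \oplus L_2\}$.
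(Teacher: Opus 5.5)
Your reduction is correct and coincides with the paper's gap-amplification step (\Cref{prop:pld_gap_amplification}, composing with $L_i$ and taking $L=\sup\{L_1,L_2\}$). Strictness is equivalent to the kink set of $h_{L_i}$, which lies in $e^{\supp(L_i)}$ by \Cref{prop:support_conversion}, multiplicatively linking $\{h_1>h_2\}$ and $\{h_2>h_1\}$. But proving that one of the two kink sets does link them is the whole content of the lemma (the paper's \Cref{thm:hockey_link}), and your proposal leaves it open. The local strategy you sketch also cannot establish it.

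First, by convexity the curve that must bend inside a component is the \emph{dominated} one (\Cref{prop:dominated_support}). So at a simple crossing, local reasoning only ever guarantees a kink of $h_2$ on the side where $h_1>h_2$ and a kink of $h_1$ on the other side. Your ``mixed case'' is therefore the generic situation, not an exception. Second, your treatment of non-adjacent components is wrong. If $h_1=h_2$ on $[c_1,c_2]$ between $A_0=(c',c_1)$ and $B_0=(c_2,c'')$, the admissible ratios form $(c_2/c_1,c''/c')$, which excludes ratios near $1$, so the target set shrinks rather than grows. The paper handles every non-simple crossing separately: such a point is a kink of one curve (\Cref{prop:boundary_support}), that same curve bends inside each region where it is dominated, and \Cref{lemma:closure_linkage} turns this into linking.

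What remains is the case where every point of $\{h_1=h_2\}$ is an isolated simple crossing. Your endpoint idea does settle this when the crossings are bounded away from $0$ or from $\infty$. In that case an unbounded component borders a simple crossing, and any two distinct support points give an admissible ratio. It does not settle the case where crossings accumulate at both ends, and there no local or case-by-case argument can work.

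To see this, take crossings exactly at $a^{\bZ}\{b_1,b_2\}$ with $1\le b_1<b_2<a$. Let $h_1<h_2$ on each $(a^kb_1,a^kb_2)$ and $h_2<h_1$ on each $(a^kb_2,a^{k+1}b_1)$. Let the kinks of $h_1$ and $h_2$ be exactly $a^{\bZ}s_1$ and $a^{\bZ}s_2$, with $s_1\in(b_1,b_2)$ and $s_2\in(b_2,ab_1)$. Every component then contains exactly one kink, belonging to the dominated curve, so the pattern is consistent at every crossing. Yet all support ratios are powers of $a$, while the ratios $v/u$ with $u\in\{h_1>h_2\}$, $v\in\{h_2>h_1\}$ fill exactly $\bR^\times\setminus a^{\bZ}$, so no linking occurs.

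The missing idea is the paper's global rigidity argument. If neither support links, each support (which has at least two points by non-degeneracy) divides the closed crossing set. By \Cref{prop:dense_subgroup} this forces precisely the $a$-periodic structure above. Then $y_k=h_1(a^kb_1)$ satisfies a linear recurrence with characteristic polynomial $(t-1)(t-a)(t-r)$, $r>0$. Its only bounded two-sided solutions are constant, which contradicts that hockey-stick curves are bounded and monotone and that $h_1,h_2$ are not totally ordered.
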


\begin{proof}[Proof of \Cref{thm:total_ordering}]
    The forward direction now follows directly from \Cref{lemma:total_ordering_implies_commutativity} and \Cref{thm:universal_free_natural_filter}. As for the reverse direction, suppose that the subfamily of non-degenerate queries in $\cL^{< k}$ is not totally-ordered, i.e. there exist $L_1, L_2 \in \cL^{< k}$ that are not ordered. By \Cref{lemma:commutativity_implies_total_ordering}, we have that $L_i \oplus \sup\{L_1, L_2\} \succ \sup\{L_i \oplus L_1, L_i \oplus L_2\}$. On the other hand, if the commutativity condition of \Cref{thm:universal_free_natural_filter} were satisfied for $\cL^{< 2k-3}$, then, letting $L_i = L_{i,1} \oplus \dots \oplus L_{i,t}$ ($t < k$), we would get
    \begin{align*}
        L_i \oplus \sup\{L_1, L_2\}
            & = L_{i,1} \oplus \dots \oplus L_{i,t} \oplus \sup\{L_1, L_2\} \\
            & = L_{i,1} \oplus \dots \oplus L_{i,t-1} \oplus \sup\{L_{i,t} \oplus L_1, L_{i,t} \oplus L_2\} \\
            & = \dots \\
            & = \sup\{L_{i,1} \oplus \dots \oplus L_{i,t} \oplus L_1, L_{i,1} \oplus \dots \oplus L_{i,t} \oplus L_2\} \\
            & = \sup\{L_i \oplus L_1, L_i \oplus L_2\},
    \end{align*}
    which is a contradiction.
\end{proof}

%% file: sections/implications.tex
\section{Notable Implications}
\label{sec:implications-special-case}

\subsection{Natural Filters Are NOT Free in General}

In this section, we demonstrate examples of queries and budgets that do and do not pass the characterizing conditions of the previous section. For visualization purposes, it will be convenient to pass to tradeoff curves. Recall that analogous operation for the PLD supremum is the lower convex envelope (c.f. \Cref{prop:sup-conv}). Recall also that the order of domination for PLDs is reversed for their tradeoff curves (see \Cref{prop:f-hs-equivalence}). In this section, we will generally discuss piecewise linear tradeoff curves, which can be composed exactly as follows. A proof is given in \Cref{app:implications-special-case}.

\begin{proposition}[Exact composition of piecewise-linear tradeoff curves]
    \label{prop:piecewise-linear-composition}
    Consider two piecewise linear tradeoff curves $f_1$ and $f_2$ corresponding to PLDs $L_1$ and $L_2$.
    Let $(w_{i, j})_{j = 1}^{n_i}$ be the lengths of the intervals over which $f_i$ is linear, let
    $(f'_{i,j})_{j = 1}^{n_i}$ be the slope of $f_i$ over those intervals, and let $\delta_i := 1 - f_i(0)$.
    Now, let $(j_k, \ell_k)_{k = 1}^{n_1 n_2}$ be an enumeration of $[n_1] \times [n_2]$ such that the product of the slopes $f'_{1, j_k} \cdot f'_{2, \ell_k}$ is descending in $k$, i.e. $f_{1, j_1} \cdot f_{2, \ell_1}$ is the product of the steepest slopes of $f_1$ and $f_2$. Then, the composed tradeoff curve $f$ corresponding to $L_1 \oplus L_2$ is also a piecewise linear tradeoff curve over intervals of length $w_k := w_{1,j_k}w_{2,\ell_k}$ with slopes $f'_k := -f'_{1,j_k} \cdot f'_{2,\ell_k}$ such that $1 - f(0) = \delta_1 + \delta_2 - \delta_1\delta_2$.
\end{proposition}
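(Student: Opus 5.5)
The natural route is to work entirely with tradeoff curves and their likelihood-ratio description. Here are the steps I would carry out.

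\textbf{Step 1: piecewise linear curves as discrete likelihood ratios.} A piecewise linear tradeoff curve $f_i$ comes from a pair $(P_i, Q_i)$ with finitely many outcomes, plus possibly a singular atom:
\begin{itemize}
    \item Each linear piece $j$ of width $w_{i,j}$ and slope $f'_{i,j}$ corresponds to an outcome $\omega_{i,j}$ with $P_i(\omega_{i,j}) = w_{i,j}$ and $Q_i(\omega_{i,j}) = -f'_{i,j} w_{i,j}$.
    \item The likelihood ratio at that outcome is therefore $\ell_{i,j} = 1/|f'_{i,j}|$.
    \item The jump $\delta_i = 1 - f_i(0)$ is an outcome with $Q_i$-mass $\delta_i$ and $P_i$-mass $0$.
\end{itemize}
I would verify this through \Cref{prop:pld_to_tradeoff}. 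Thresholding the PLD in increasing order of $z$ means rejecting outcomes with the smallest likelihood ratios first, i.e.\ the steepest slopes. Hence $f_i$ is the curve traced by accumulating $P_i$-mass $w_{i,j}$ and subtracting $Q_i$-mass at rate $|f'_{i,j}|$, in order of steepness.

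\textbf{Step 2: compose the discrete pairs.} By \Cref{prop:composition_convolution}, $L_1 \oplus L_2$ is the PLD of the product pair $(P_1 \times P_2, Q_1 \times Q_2)$.
\begin{itemize}
    \item The product outcomes $(\omega_{1,j}, \omega_{2,\ell})$ have $P$-mass $w_{1,j} w_{2,\ell}$ and likelihood ratio $\ell_{1,j}\ell_{2,\ell}$.
    \item Their $Q$-to-$P$ ratio is therefore $|f'_{1,j}|\,|f'_{2,\ell}|$.
    \item Every product outcome involving a singular atom has $P$-mass $0$. Its total $Q$-mass is $1 - (1-\delta_1)(1-\delta_2) = \delta_1 + \delta_2 - \delta_1\delta_2$, which gives $1 - f(0)$.
\end{itemize}

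\textbf{Step 3: apply Neyman--Pearson to the product.} Applying \Cref{prop:pld_to_tradeoff} again, $f$ is piecewise linear. Its pieces are the product outcomes sorted by decreasing $|f'_{1,j} f'_{2,\ell}|$, with widths $w_{1,j}w_{2,\ell}$ and slopes $-|f'_{1,j}f'_{2,\ell}|$, which is exactly the claimed enumeration. Ties simply merge into collinear pieces, so they do no harm.

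\textbf{Main obstacle.} The delicate step is the bookkeeping in Step 1. One must check that a zero-slope final piece, corresponding to $\ell = \infty$, and the $\delta_i$ atom are handled consistently with the extended likelihood-ratio convention. One must also confirm that $f(0)$, rather than the curve near $\alpha = 1$, absorbs the singular mass. I would settle this by computing $\tau(0) = Q(\{\ell < \infty\})$ directly from \Cref{prop:pld_to_tradeoff}.
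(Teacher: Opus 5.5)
Your proposal is correct and is essentially the paper's argument. The paper realizes each $f_i$ by $P$ uniform on $[0,1]$ and $Q_i$ with density $-f'_{i,j}$ on the $j$-th interval plus an atom of mass $\delta_i$ at $1$, which is a continuous version of your one-outcome-per-piece pair. It then forms $(P \times P, Q_1 \times Q_2)$ and writes out the Neyman--Pearson tests explicitly: these reject first on the $P$-null set and then on the product cells in order of decreasing slope product.

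There is one small slip in your last paragraph. With $\ell = dP/dQ$ as in Step 1, the correct formula is $\tau(0) = Q(\{\ell > 0\}) = \E_{Z \sim L}[e^{-Z}] = 1 - \delta$, whereas $Q(\{\ell < \infty\})$ always equals $1$. Your Step 1 bookkeeping of the $\delta_i$ atom ($Q_i$-mass $\delta_i$, $P_i$-mass $0$) is already the right one.
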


An immediate implication of \Cref{thm:total_ordering} is that the natural filter is not universally free for general queries, even for a single adaptive round. This can be seen by noticing that $\cLapprox$ is not totally-ordered (recall \Cref{def:pld_class}). \Cref{subfig:counter-example-crossing} demonstrates this visually by passing to the tradeoff curves of $\cLapprox$.

\begin{corollary}
    \label{corollary:no-free-filters}
    The natural filter for $\geq 2$ adaptive $(\eps, \delta)$-DP queries is not universally free, i.e. there exists a budget $B \in \cLall$ for which $\pld(\Nat_{\cLapprox, B}^2) \npreceq B$.
\end{corollary}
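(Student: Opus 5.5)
We will deduce \Cref{corollary:no-free-filters} from the converse direction of \Cref{thm:total_ordering} specialised to $\cL = \cLapprox$ and $k = 2$. In that case $\cL^{<2} = \{\Iddist\} \cup \cLapprox$, and the theorem gives a budget $B$ with $\pld(\Nat_{\cLapprox, B}^{2k-3}) \npreceq B$. Since $2k - 3 = 1$, this is a single-query capacity, which cannot by itself be the intended content. The plan is therefore to argue more directly, using \Cref{lemma:commutativity_implies_total_ordering} together with the explicit length-two construction from the proof of (i)$\Rightarrow$(ii) in \Cref{thm:universal_free_natural_filter}.

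\textbf{Step 1: find two non-degenerate, unordered members of $\cLapprox$.} We take $L_1 := R_{\eps_1, 0}$ and $L_2 := R_{\eps_2, \delta_2}$ with $\eps_2 < \eps_1$ and $\delta_2 > 0$ small. For instance $\eps_1 = 1$, $\eps_2 = 1/2$, $\delta_2 = 0.01$ should work.
\begin{itemize}
\item \emph{Non-degeneracy.} Each PLD places mass on $\pm\eps_i$. Since $\eps_i > 0$ and the Bernoulli parameters lie strictly between $0$ and $1$, each has two finite support points.
\item \emph{Unordered.} We compare hockey-stick curves via \Cref{prop:f-hs-equivalence}. At $x = 1$ we have $h_{L_1}(1) = \tanh(1/2) \approx 0.46$, while $h_{L_2}(1) = 0.01 + 0.99\tanh(1/4) \approx 0.25$, so $L_1 \npreceq L_2$. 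At $x = e^{\eps_1}$ we have $h_{L_1}(e^{\eps_1}) = 0$ but $h_{L_2}(e^{\eps_1}) \geq \delta_2 > 0$, so $L_2 \npreceq L_1$.
\end{itemize}

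\textbf{Step 2: apply \Cref{lemma:commutativity_implies_total_ordering}.} The lemma yields an index $i$ with
\[
L_i \oplus \sup\{L_1, L_2\} \succ \sup\{L_i \oplus L_1, L_i \oplus L_2\}.
\]
Set $B := \sup\{L_i \oplus L_1, L_i \oplus L_2\}$.

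\textbf{Step 3: build an adversary for $\Nat_{\cLapprox, B}^2$ that exceeds $B$.} Consider the length-two rule $\Gamma(\lambda) = \{\Iddist, L_i\}$ and $\Gamma(L_i) = \{\Iddist, L_1, L_2\}$. Every $\Gamma$-adversary that opens with $L_i$ is a legal $\Nat_{\cLapprox,B}^2$-adversary, because $L_i \preceq L_i \oplus L_i \preceq B$ and $L_i \oplus L_j \preceq B$ for $j = 1, 2$. By \Cref{lemma:restricted_adversary},
\[
\pld(\Nat_{\cLapprox,B}^2) \succeq L_i \oplus \sup\{\Iddist, L_1, L_2\} = L_i \oplus \sup\{L_1, L_2\} \succ B,
\]
and hence $\pld(\Nat_{\cLapprox,B}^2) \npreceq B$. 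Extending to $k > 2$ rounds is immediate: the adversary simply passes with $\Iddist$ in every later round.

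\textbf{Main obstacle.} The substantive work sits in \Cref{lemma:commutativity_implies_total_ordering}, which we are allowed to assume. What remains is to confirm non-degeneracy and incomparability of the chosen pair. If a concrete certificate is preferred, we would instead compute both sides of the gap explicitly as piecewise-linear tradeoff curves using \Cref{prop:piecewise-linear-composition}.
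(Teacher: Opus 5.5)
Your argument is correct and follows the paper's route. The paper derives the corollary from the converse of \Cref{thm:total_ordering}, using only that $\cLapprox$ is not totally ordered. That converse is proved by combining \Cref{lemma:commutativity_implies_total_ordering} with the length-two rule from the (i)$\Rightarrow$(ii) step of \Cref{thm:universal_free_natural_filter}, which at $k=2$ (so $t=1$) is exactly your Step~3. You are also right to distrust the stated capacity: the telescoping in that proof needs capacity $2k-2$, which is $2$ here. The one loose end is cosmetic: set $\Gamma(\Iddist) := \{\Iddist\}$, or apply \Cref{lemma:restricted_adversary} directly to $\Nat_{\cLapprox,B}^2$, so that every $\Gamma$-adversary is a legal natural-filter adversary.
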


\begin{figure}[htbp]
    \centering
    \begin{subfigure}[t]{0.48\textwidth}
        \centering
        \includegraphics[width=\linewidth]{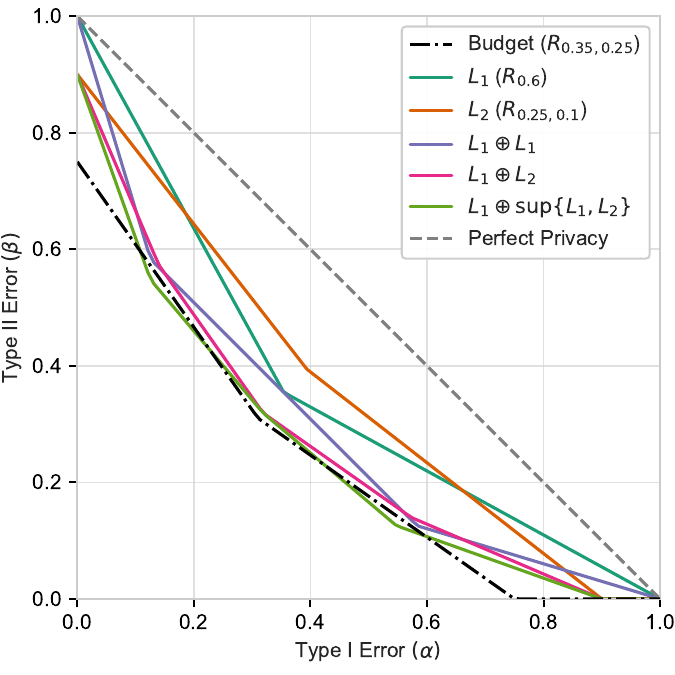}
        \caption{Queries $L_1, L_2 \in \cLapprox$ such that $\{L_1, L_2\}$ is not totally-ordered and a budget $B \in \cLapprox$ such that $L_1 \oplus L_1, L_1 \oplus L_2 \preceq B$, yet $L_1 \oplus \sup\{L_1, L_2\} \npreceq B$.}
        \label{subfig:counter-example-crossing}
    \end{subfigure}\hfill
    \begin{subfigure}[t]{0.48\textwidth}
        \centering
        \includegraphics[width=\linewidth]{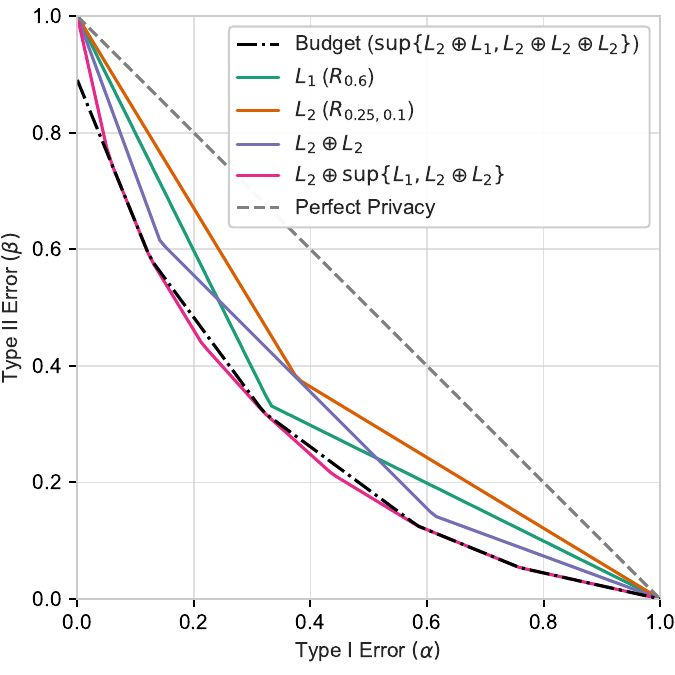}
        \caption{Queries $L_1, L_2 \in \cLpure$ such that $\{L_1, L_2 \oplus L_2\}$ is not totally-ordered.}
        \label{subfig:pure-dp-counter-example}
    \end{subfigure}
    \caption{Counterexamples showing that natural filters fail for $(\eps, \delta)$-DP queries and $(\eps, \delta)$-DP budgets as well as $\eps$-DP queries and general budgets.}
    \label{fig:counter-example}
\end{figure}

Another interesting case is that of the $\epsilon$-DP queries $\cLpure$. Indeed, this family is totally-ordered and known to admit privacy filters under basic composition \citep{RogersRUV16}. However, this family of queries is not closed under composition and $\cLpure^{<3}$ is not totally-ordered anymore. Consequently, \Cref{thm:total_ordering} yields the following result, illustrated visually in \Cref{subfig:pure-dp-counter-example}.

\begin{corollary}
    The natural filter for queries $\cLpure$ is universally free for $k = 2$ queries but not $k \geq 3$ queries, i.e. $\pld(\Nat_{\cLpure, B}^2) \preceq B$ for all $B \in \cLall$ whereas $\pld(\Nat_{\cLpure, B}^3) \npreceq B$ for some $B \in \cLall$.
\end{corollary}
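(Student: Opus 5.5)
For $k = 2$ we use the forward direction of \Cref{thm:total_ordering}. For $k \geq 3$ we use its converse with $k = 3$, and then lift the failure to longer horizons.

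\emph{Case $k = 2$.} Here $\cLpure^{<2} = \{\Iddist\} \cup \cLpure$. Since $\Iddist = R_0$, this is just $\{R_\eps : \eps \geq 0\}$. This family is totally-ordered: for $\eps \leq \eps'$, randomized response with parameter $\eps$ is a post-processing of randomized response with parameter $\eps'$. Equivalently, via \Cref{prop:pld_to_hs}, the curves $h_{R_\eps}(x) = \frac{(e^\eps - x)_+}{1 + e^{\eps}}$ increase pointwise in $\eps$ for $x \geq 1$. For $x < 1$ the symmetric form $h(x) = 1 - x + x\,h(1/x)$ gives the same monotonicity. Hence $R_\eps \preceq R_{\eps'}$ by \Cref{prop:f-hs-equivalence}. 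The forward direction of \Cref{thm:total_ordering} then gives $\pld(\Nat_{\cLpure, B}^2) \preceq B$ for every $B \in \cLall$.

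\emph{Case $k = 3$.} The set $\cLpure^{<3}$ contains $A := R_\eps \oplus R_\eps$ and $R_{\eps'}$. We show these are non-degenerate and unordered for any $\eps > 0$ and $\eps < \eps' < 2\eps$.
\begin{itemize}
\item Non-degeneracy: $A$ is supported on $\{-2\eps, 0, 2\eps\}$ and $R_{\eps'}$ on $\{\pm \eps'\}$, so each has at least two finite support points.
\item $A \npreceq R_{\eps'}$: take any $x \in (e^{\eps'}, e^{2\eps})$. By \Cref{prop:pld_to_hs}, $h_{R_{\eps'}}(x) = 0$. But $h_A(x) \geq p^2 (1 - x e^{-2\eps}) > 0$, where $p = e^\eps/(1 + e^\eps)$.
\item $R_{\eps'} \npreceq A$: compute at $x = 1$. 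We have $h_A(1) = p^2(1 - e^{-2\eps}) = p^2 - q^2 = p - q = \tanh(\eps/2)$, with $q = 1 - p$. On the other hand $h_{R_{\eps'}}(1) = \tanh(\eps'/2)$, which is strictly larger.
\end{itemize}
So the non-degenerate members of $\cLpure^{<3}$ are not totally-ordered. The converse of \Cref{thm:total_ordering} with $k = 3$ (so $2k - 3 = 3$) yields some $B$ with $\pld(\Nat_{\cLpure, B}^3) \npreceq B$.

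\emph{Case $k > 3$, including $k = \infty$.} Fix the same $B$. Any adversary in $\Adv(\Nat_{\cLpure, B}^3)$ extends to one in $\Adv(\Nat_{\cLpure, B}^k)$ with the same PLD, by issuing $\Iddist$ forever after round $3$. This is legal because convolving with $\Iddist$ leaves the accumulated cost unchanged, so $\Iddist$ always passes the filter check. This is the same argument as the first half of \Cref{lemma:infinite_rule_limit}. Hence $\pld(\Nat^3_{\cLpure, B}) \preceq \pld(\Nat^k_{\cLpure, B})$, and so $\pld(\Nat^k_{\cLpure, B}) \npreceq B$ as well.

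The main obstacle is the explicit non-comparability in the $k = 3$ case, in particular checking the hockey-stick values at $x = 1$ carefully. Everything else is a direct application of the stated theorems.
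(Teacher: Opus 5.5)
Your route is the paper's own. It too obtains the corollary by citing \Cref{thm:total_ordering}, with $\cLpure^{<2}=\{R_\eps:\eps\ge 0\}$ totally ordered and a pair of the form $\{L_1,L_2\oplus L_2\}\subseteq\cLpure^{<3}$ unordered. Your explicit checks are correct: non-degeneracy, the comparison at $x=1$ and on $(e^{\eps'},e^{2\eps})$, and the $\Iddist$-padding for $k>3$.

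\textbf{The gap.} It is in the step ``converse with $k=3$, so $2k-3=3$'', and you inherit it from the paper. The horizon $2k-3$ in \Cref{thm:total_ordering} is off by one. At $k=2$ it would say that some one-query natural filter fails, yet $\pld(\Nat^1_{\cL,B})=\sup\{L\in\cL:L\preceq B\}\preceq B$ by \Cref{lemma:restricted_adversary}. The proof only supports horizon $2k-2$. The index $i$ produced by \Cref{lemma:commutativity_implies_total_ordering} is not yours to choose. When $L_i$ has $k-1$ summands, the last step of the telescoping chain invokes commutativity for a set whose members have up to $2k-3$ summands, i.e.\ a subset of $\cL^{<2k-2}$ but not of $\cL^{<2k-3}$.

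For your pair this really happens. Write $L_1=R_{\eps'}$, $L_2=R_\eps\oplus R_\eps$, $L=\sup\{L_1,L_2\}$, and take $\eps=1$, $\eps'=3/2$. Then $X(h_{L_2},h_{L_1})\approx(0.39,2.57)$ and $X(h_{L_1},h_{L_2})\approx(e^{-2},0.39)\cup(2.57,e^{2})$. The two kinks $e^{\pm 3/2}$ of $h_{L_1}$ are a factor $e^{3}\approx 20$ apart, so for no $x$ does $\{xe^{-3/2},xe^{3/2}\}$ meet both regions. Since $h_{L_1\oplus L}(x)-h_{L_1\oplus L_j}(x)=\E_{Z\sim L_1}[(h_L-h_{L_j})(xe^{-Z})]$, and $h_L-h_{L_j}$ vanishes off the region where $L_j$ is strictly dominated, this gives $L_1\oplus L=\sup\{L_1\oplus L_1,L_1\oplus L_2\}$. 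So $i=2$ is forced, and the theorem's argument then certifies only a four-query failure.

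\textbf{The repair.} It is short and uses exactly your two witness points.
\begin{itemize}
\item Pick $x_0\in(e^{\eps'},e^{2\eps})$ and let $c=u=\tfrac12\ln x_0>0$.
\item Apply \Cref{prop:pld_gap_amplification} with $A=R_c$ (so $\supp(A)=\{\pm c\}$), $\eps_1=c$, $\eps_2=-c$. Then $e^{\eps_1+u}=x_0$, where $h_{L_1}(x_0)=0<h_{L_2}(x_0)=h_L(x_0)$. Also $e^{\eps_2+u}=1$, where $h_{L_2}(1)<h_{L_1}(1)=h_L(1)$.
\item Hence $R_c\oplus\sup\{L_1,L_2\}\succ\sup\{R_c\oplus L_1,R_c\oplus L_2\}$. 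This violates condition (ii) of \Cref{thm:universal_free_natural_filter} for $k=3$, with $L=R_c\in\cLpure$ and $\cL'=\{L_1,L_2\}\subseteq\cLpure^{<3}$, so (i) fails at $k=3$.
\end{itemize}
Concretely, the budget $B=\sup\{R_c\oplus R_{\eps'},R_c\oplus R_\eps\oplus R_\eps\}$ is exceeded by adversaries that play $R_c$ and then, depending on its outcome, play either $R_{\eps'}$ or $R_\eps$ twice. Three rounds suffice precisely because $\cLpure$ contains two-point queries at every scale. The first move can therefore be tuned to the gaps rather than being one of the unordered pair itself.
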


We have shown that natural filters are not universally free, even when restricted to $(\eps, \delta)$-queries.
These results are surprising, as all previous filter results were that filters are free (RDP/zCDP \cite{FeldmanZ21,lecuyer2021practical}, GDP \cite{KTH22,ST22}) or asymptotically free ($(\epsilon,\delta)$-DP \cite{RogersRUV16, WRRW23}).
One might hope that restricting the budget would be sufficient to ensure free natural filters. Two natural candidates come to mind. First, restricting $B \in \cL$---a practical condition of choosing privacy bounds from the legal query set itself. Second, we could restrict $B$ to a GDP bound $G_\mu$ or an approximate GDP bound $R_{0, \delta} \oplus G_\mu$ to support approximate $(\epsilon, \delta)$-DP queries. These are good candidates, as the only known free natural filters \cite{KTH22,ST22} are for $\cLgdp$ queries and budgets. Unfortunately, neither restriction is sufficient.

The first point, that restricting the budget to $\cL$ does not ensure the freeness of the natural filter, can be seen by applying \Cref{thm:free_natural_filter}. \Cref{subfig:counter-example-crossing} shows an example of queries $L_1, L_2 \in \cLapprox$ and a budget $B \in \cLapprox$ such that $L_1 \oplus L_1, L_1 \oplus L_2 \preceq B$ yet $L_1 \oplus \sup\{L_1, L_2\} \npreceq B$. In this case, \Cref{thm:free_natural_filter} implies $\pld(\Nat_{\cLapprox, B}^2) \npreceq B$.

\begin{corollary}
    \label{cor:no-free-filter-for-f-in-F}
    The natural filter is not free for $(\eps, \delta)$-DP queries, even for a $(\eps, \delta)$-DP budget, i.e. for some choice of $B \in \cLapprox$, we have that $\pld(\Nat_{\cLapprox, B}^2) \npreceq B$.
\end{corollary}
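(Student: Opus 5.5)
We need to exhibit $L_1, L_2 \in \cLapprox$ and $B \in \cLapprox$ with $L_1 \oplus L_1 \preceq B$ and $L_1 \oplus L_2 \preceq B$, but $L_1 \oplus \sup\{L_1, L_2\} \npreceq B$. Given these, \Cref{thm:free_natural_filter} with $k=2$ finishes the argument. Indeed $V_2(\Iddist) \succeq L_1 \oplus V_1(L_1)$, and $V_1(L_1) \succeq \sup\{L_1, L_2\}$ because both $L_1$ and $L_2$ are legal second moves after $L_1$. So $\pld(\Nat_{\cLapprox, B}^2) \succeq L_1 \oplus \sup\{L_1, L_2\} \npreceq B$, which gives $\pld(\Nat_{\cLapprox, B}^2) \npreceq B$ because $\preceq$ is transitive. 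We also need $L_1 \preceq B$, so that $L_1$ is a legal first move. This follows from $L_1 \preceq L_1 \oplus L_1 \preceq B$, using $\Iddist \preceq L_1$ and item 5 of \Cref{prop:pld_convolution_properties}.

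For the construction I would work with tradeoff curves, since by \Cref{prop:piecewise-linear-composition} composition of the piecewise-linear curves of $R_{\eps,\delta}$ is explicit. I would take two incomparable approximate-DP queries, $L_1 = R_{\eps,0}$ (pure) and $L_2 = R_{0,\delta}$, or more generally $R_{\eps_1,\delta_1}$ and $R_{\eps_2,\delta_2}$ with $\eps_1>\eps_2$ and $\delta_1<\delta_2$. Their supremum has tradeoff curve equal to the lower convex envelope of the two curves (\Cref{prop:sup-conv}). For $R_{\eps,0}$ against $R_{0,\delta}$ this is the curve $R_{\eps,0}$-type with an added $\delta$ intercept, i.e.\ the envelope equals $R_{\eps,\delta}$ exactly. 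The two compositions $L_1\oplus L_1 = R_{\eps}\oplus R_\eps$ and $L_1 \oplus L_2 = R_{\eps,\delta}$ are explicit three-piece and two-piece curves. In hockey-stick form, $h_{L_1\oplus L_2}(x) = \delta + (1-\delta)h_{R_\eps}(x)$, while $L_1 \oplus \sup = R_\eps\oplus R_{\eps,\delta}$ has $h = \delta + (1-\delta)h_{R_\eps\oplus R_\eps}(x)$.

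For the budget $B$ I would take a single $R_{\eps',\delta'}$ that dominates both $R_\eps\oplus R_\eps$ and $R_{\eps,\delta}$. Since $h_{R_{\eps',\delta'}}(x) = \delta' + (1-\delta')(1 - x e^{-\eps'})_+\cdot\frac{e^{\eps'}}{1+e^{\eps'}}$-type (linear until a kink at $x=e^{\eps'}$), I would choose $\eps' = 2\eps$ and make $\delta'$ minimal to cover $R_{\eps,\delta}$ at $x = e^{\eps}$. The remaining check is then that the point $x=e^\eps$ of $R_\eps\oplus R_{\eps,\delta}$ strictly exceeds $h_B(e^\eps)$. The value there is $\delta + (1-\delta)h_{R_\eps\oplus R_\eps}(e^\eps)$, which exceeds both $\delta$-shifted $R_\eps$ and unshifted $R_\eps\oplus R_\eps$ pointwise. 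This strict excess is exactly the gap amplification of \Cref{lemma:commutativity_implies_total_ordering}. I would fix concrete numbers, for example $\eps = 1$ and $\delta = 0.1$, and verify the inequalities numerically at the finitely many kinks, since all curves are piecewise linear in $x$.

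The main obstacle is the budget choice. A single $R_{\eps',\delta'}$ has only one kink, so it must lie above two curves with different kinks ($x=e^\eps$ and $x=e^{2\eps}$) yet below the composed curve somewhere. If $\eps'=2\eps$ fails, I would search over $\eps'\in[\eps,2\eps]$. For each $\eps'$ I would set $\delta'(\eps') = \max\{h_{L_1\oplus L_1}(e^{\eps'}), h_{L_1\oplus L_2}(e^{\eps'})\}$, which is the tightest $\delta'$ because a convex $h$ is dominated by $R_{\eps',\delta'}$ iff it is dominated at the kink and at the endpoints. I would then look for an $\eps'$ where $h_{L_1\oplus\sup\{L_1,L_2\}}(e^{\eps'}) > \delta'(\eps')$. \Cref{lemma:commutativity_implies_total_ordering} guarantees a strict gap at some $x$, and choosing $e^{\eps'}$ at that $x$ is the natural first attempt.
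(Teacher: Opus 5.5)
Your reduction is the paper's own. The paper applies \Cref{thm:free_natural_filter} to a triple $L_1, L_2, B \in \cLapprox$ exhibited numerically in \Cref{subfig:counter-example-crossing}, and your derivation of $\pld(\Nat_{\cLapprox, B}^2) \succeq L_1 \oplus \sup\{L_1, L_2\}$, including $L_1 \preceq B$, is correct. The gap is in the concrete instance.

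\textbf{The supremum is misidentified.} The envelope of $R_\eps$ and $R_{0,\delta}$ is \emph{not} $R_{\eps,\delta}$. The supremum's hockey-stick curve is the pointwise maximum, so at $x=1$ it equals $\max\{\tanh(\eps/2),\delta\}$. This is strictly below $h_{R_{\eps,\delta}}(1)=\delta+(1-\delta)\tanh(\eps/2)$. In fact $\sup\{R_\eps,R_{0,\delta}\}=R_{\eps'',\delta}$ with $e^{\eps''}=(1-\delta)(1+e^\eps)-1<e^\eps$.

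\textbf{Consequences for your checks.}
\begin{enumerate}
\item Your formula $\delta+(1-\delta)h_{R_\eps\oplus R_\eps}$ overstates $h_{L_1\oplus\sup\{L_1,L_2\}}$, so a numerical check built on it is unsound. It reports a violation at every $\eps'\in[\eps,2\eps)$. But for $\eps'\in(\eps+\eps'',2\eps)$ the true value is $h_{L_1\oplus\sup\{L_1,L_2\}}(e^{\eps'})=\delta\le\delta'(\eps')$, so there is no violation there.
\item Your first choice $\eps'=2\eps$ fails outright. The best $\delta'$ there is $\delta$, and already $R_\eps\oplus R_{\eps,\delta}\preceq R_{2\eps,\delta}$.
\item \Cref{lemma:commutativity_implies_total_ordering} cannot be invoked to guarantee a gap. $R_{0,\delta}$ is degenerate: its only finite support point is $0$.
\end{enumerate}

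\textbf{The repair: put the budget's kink at $\eps'=\eps$.}
\begin{itemize}
\item Assume $0<\delta<\tanh(\eps/2)$, which is exactly the condition that $R_\eps$ and $R_{0,\delta}$ are incomparable.
\item Let $c:=\frac{e^\eps(e^\eps-1)}{(e^\eps+1)^2}=h_{R_\eps\oplus R_\eps}(e^\eps)$, and take $B:=R_{\eps,\delta'}$ with $\delta':=\max\{c,\delta\}$.
\item Both compositions are symmetric, so domination by $B$ reduces to checking $h(e^\eps)\le\delta'$. Since $h_{R_{\eps,\delta}}(e^\eps)=\delta$, $B$ dominates both $L_1\oplus L_1$ and $L_1\oplus L_2$.
\item Using $h_{A\oplus L}(x)=\E_{Z\sim A}[h_L(xe^{-Z})]$ with $A=R_\eps$,
\[ h_{L_1\oplus\sup\{L_1,L_2\}}(e^\eps)=\tfrac{e^\eps}{1+e^\eps}\max\{\tanh(\eps/2),\delta\}+\tfrac{1}{1+e^\eps}\,\delta=c+\tfrac{\delta}{1+e^\eps}. \]
\item This exceeds $c$ because $\delta>0$, and exceeds $\delta$ because $\delta<\tanh(\eps/2)$. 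Hence $L_1\oplus\sup\{L_1,L_2\}\npreceq B$.
\end{itemize}
For $\eps=1$, $\delta=0.1$, the composed value is about $0.365$ versus $\delta'\approx 0.338$.

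This is \Cref{prop:pld_gap_amplification} with $A=R_\eps$, $\eps_1=\eps$, $\eps_2=-\eps$, $u=\eps$, which has no non-degeneracy hypothesis. With this correction, your argument becomes a closed-form version of the paper's figure-based example.
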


\begin{figure}[htbp]
    \centering
    \begin{subfigure}[t]{0.48\textwidth}
        \centering
        \includegraphics[width=\linewidth]{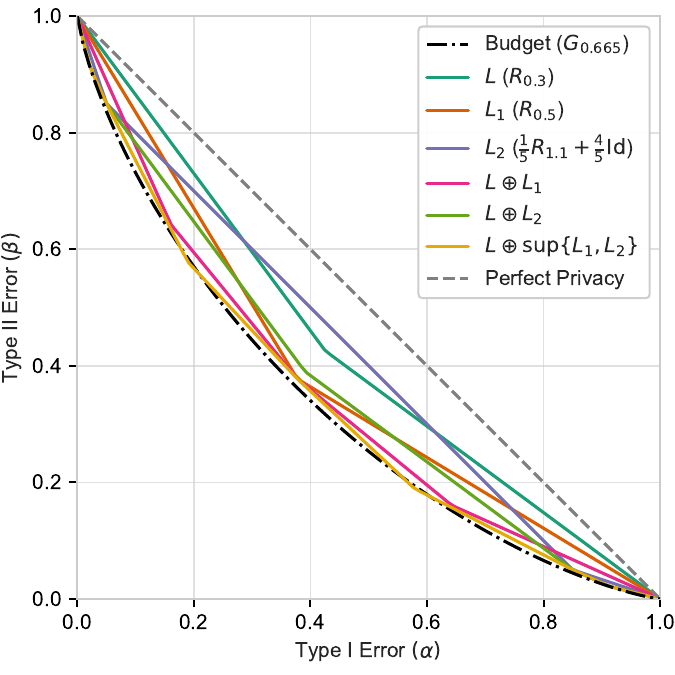}
        \caption{An example showing queries $L, L_1, L_2 \in \cLall$ and a budget $G_\mu \in \cLgdp$ such that $L \oplus L_1, L \oplus L_2 \preceq G_\mu$, yet $L \oplus \sup\{L_1, L_2\} \npreceq G_\mu$.}
        \label{subfig:counter-example-pure-gdp}
    \end{subfigure}\hfill
    \begin{subfigure}[t]{0.48\textwidth}
        \centering
        \includegraphics[width=\linewidth]{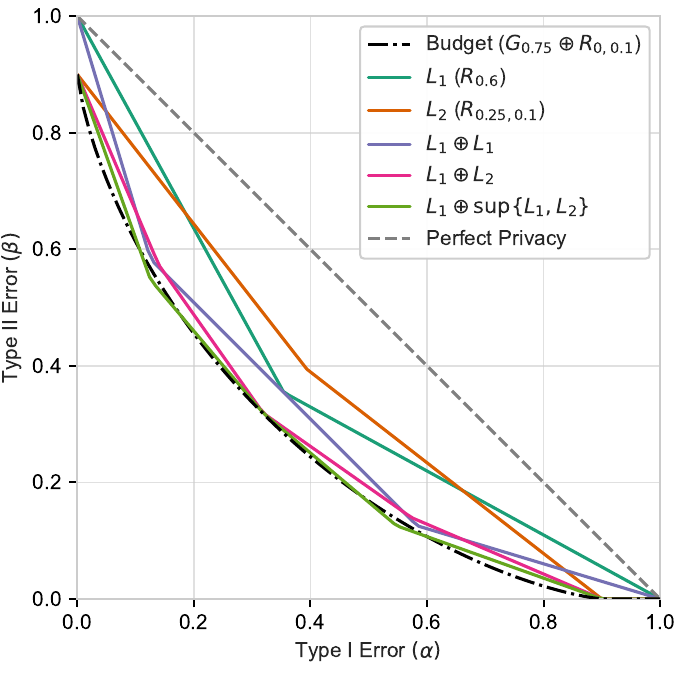}
        \caption{An example showing queries $L_1, L_2 \in \cLapprox$ and a budget $(\mu, \delta)$ such that $L_1 \oplus L_1, L_1 \oplus L_2 \preceq G_\mu \oplus R_{0, \delta}$, yet $L_1 \oplus \sup\{L_1, L_2\} \npreceq G_\mu \oplus R_{0, \delta}$.}
        \label{subfig:counter-example-approx-gdp}
    \end{subfigure}
    \caption{Failure of natural filters for GDP and approximate GDP budgets.}
    \label{fig:counter-example-GDP}
\end{figure}

Similarly, restricting the budget to a well-behaved class does not generally lead to free filters either. That GDP budgets do not suffice for general queries can again be shown by appealing to \Cref{thm:free_natural_filter} in the same manner, shown visually in \Cref{subfig:counter-example-pure-gdp}. Likewise, approximate GDP budgets do not give free natural filters for $(\eps, \delta)$-DP queries, as illustrated in \Cref{subfig:counter-example-approx-gdp}.

\begin{corollary}
    \label{cor:no-free-filter-for-f-gaussian}
    The natural filter is not free for general queries, even for a GDP budget, i.e. for some $\mu \geq 0$, $\pld(\Nat_{\cLall, G_\mu}^2) \npreceq G_\mu$.

    Furthermore, the natural filter is not free for $(\eps, \delta)$-DP queries, even if the budget is chosen to be approximate GDP, i.e. for some $\mu \geq 0$ and $\delta \in [0, 1]$, $\pld(\Nat_{\cLapprox, G_\mu \oplus R_{0, \delta}}^2) \npreceq G_\mu \oplus R_{0, \delta}$.
\end{corollary}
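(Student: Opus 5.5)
The approach is to use \Cref{thm:free_natural_filter} with $k = 2$, which reduces both claims to exhibiting finitely many explicit PLDs. For a budget $B$ and query family $\cL$ we have
\begin{align*}
    \pld(\Nat_{\cL, B}^2) = V_2(\Iddist) = \sup\{L \oplus V_1(L) : L \in \cL, L \preceq B\}, \qquad V_1(L) = \sup\{L' \in \cL : L \oplus L' \preceq B\}.
\end{align*}
It therefore suffices to find $L, L_1, L_2 \in \cL$ such that $L \oplus L_1 \preceq B$ and $L \oplus L_2 \preceq B$, but $L \oplus \sup\{L_1, L_2\} \npreceq B$. Indeed, $V_1(L) \succeq \sup\{L_1, L_2\}$, and then \Cref{prop:pld_convolution_properties}(5) gives $\pld(\Nat_{\cL,B}^2) \succeq L \oplus \sup\{L_1, L_2\} \npreceq B$. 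This is the same template as \Cref{cor:no-free-filter-for-f-in-F}; only the budget changes.

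To choose the witnesses, I would take $L_1, L_2 \in \cLapprox$ unordered and $L := L_1$, so both claims share one construction (for the first claim any $L \in \cLall$ is allowed, but approximate-DP queries already suffice). By \Cref{lemma:commutativity_implies_total_ordering}, after possibly swapping the labels, there is a point $x_0$ with
\begin{align*}
    h_{L_1 \oplus \sup\{L_1,L_2\}}(x_0) > h_{B^\ast}(x_0), \qquad B^\ast := \sup\{L_1 \oplus L_1, L_1 \oplus L_2\}.
\end{align*}
The budget must then be chosen in the restricted class: $G_\mu$ for the first claim, and $G_\mu \oplus R_{0,\delta}$ for the second. It has to dominate $B^\ast$ while still lying strictly below $h_{L_1 \oplus \sup\{L_1,L_2\}}$ at $x_0$. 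For this I would use the explicit hockey-stick curves
\begin{align*}
    h_{G_\mu}(e^\eps) = \Phi(-\eps/\mu + \mu/2) - e^\eps \Phi(-\eps/\mu - \mu/2), \qquad h_{G_\mu \oplus R_{0,\delta}} = \delta + (1-\delta) h_{G_\mu}.
\end{align*}
The second formula follows from \Cref{prop:hockey_stick_composition}, since $R_{0,\delta}$ reveals the datasets with probability $\delta$ and is otherwise uninformative.

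The verification splits into two parts. The domination checks $L_1 \oplus L_j \preceq B$ are finite computations. By \Cref{prop:piecewise-linear-composition}, each $L_1 \oplus L_j$ has a piecewise linear tradeoff curve with finitely many vertices. The budget's tradeoff curve is convex, so it lies below every chord, and it is enough to compare the two curves at those vertices; by \Cref{prop:f-hs-equivalence} this establishes domination. The violation $L_1 \oplus \sup\{L_1,L_2\} \npreceq B$ needs only a single point. Here $\sup\{L_1,L_2\}$ is the lower convex envelope of two piecewise linear curves (\Cref{prop:sup-conv}), hence again piecewise linear, so its composition with $L_1$ is computable exactly by \Cref{prop:piecewise-linear-composition}.

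The main obstacle is choosing parameters for which the restricted budget fits into the gap. The gap supplied by \Cref{lemma:commutativity_implies_total_ordering} is relative to the arbitrary PLD $B^\ast$, whereas $G_\mu$ has a single free parameter and need not hug $B^\ast$ near $x_0$. I would first try to tune $(\eps_j, \delta_j)$ so that $G_\mu$ is tangent to $B^\ast$, in the tradeoff picture, near the vertex where $\sup\{L_1,L_2\}$ cuts its corner. Concretely, I would pick $L_2$ so that $L_1 \oplus L_2$ has its kink near the point where the smallest dominating $G_\mu$ touches $B^\ast$, then take $\mu$ minimal with $B^\ast \preceq G_\mu$ and check numerically that the violation survives at that point. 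For the approximate-GDP budget, the extra parameter $\delta$ adds freedom: setting $\delta$ to match the $1 - f(0)$ intercepts from \Cref{prop:piecewise-linear-composition} removes one constraint, so that case should be easier. In both cases the result is an explicit numerical certificate, as depicted in \Cref{fig:counter-example-GDP}.
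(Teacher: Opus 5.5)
\textbf{There is a genuine gap in the first claim.} Your reduction via \Cref{thm:free_natural_filter} with $k=2$ is the template the paper uses. Your plan for the second claim also matches the paper: $L_1, L_2 \in \cLapprox$, budget $G_\mu \oplus R_{0,\delta}$, exact piecewise-linear composition with vertex checks, and a numerical certificate, as depicted in \Cref{subfig:counter-example-approx-gdp}.

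The problem is your assertion that approximate-DP queries already suffice for a pure GDP budget. This is false, and the obstacle you flag there cannot be removed by tuning parameters.
\begin{itemize}
\item As $x \to \infty$ we have $h_{L}(x) \to L(\{+\infty\})$, while $h_{G_\mu}(x) \to 0$. Hence every $L' \preceq G_\mu$ puts no mass at $+\infty$.
\item For $L_j = R_{\eps_j,\delta_j}$, the mass of $L \oplus L_j$ at $+\infty$ is $1-(1-L(\{+\infty\}))(1-\delta_j)$. So $L \oplus L_1 \preceq G_\mu$ and $L \oplus L_2 \preceq G_\mu$ together force $\delta_1=\delta_2=0$.
\item $\cLpure$ is totally ordered, so then $\sup\{L_1,L_2\}\in\{L_1,L_2\}$. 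It follows that $L \oplus \sup\{L_1,L_2\} \preceq G_\mu$ automatically, and no violation is possible.
\item Your unordered pair in $\cLapprox$ must have some $\delta_j>0$. So $L_1\oplus L_2 \npreceq G_\mu$ for every $\mu$, and $B^\ast \npreceq G_\mu$: the budget can never sit in the gap.
\end{itemize}
In fact, every query admitted by $\Nat^2_{\cLapprox, G_\mu}$ is pure. So this rule coincides with $\Nat^2_{\cLpure, G_\mu}$, which is free by \Cref{thm:total_ordering}. This is exactly why the statement uses $\cLall$ in its first part and only pairs $\cLapprox$ with an approximate-GDP budget in its second part.

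\textbf{How to repair it.} You need witnesses $L, L_1, L_2 \in \cLall$ with no mass at $+\infty$, i.e.\ tradeoff curves with $f(0)=1$, and with $L_1, L_2$ unordered. This is what the paper's \Cref{subfig:counter-example-pure-gdp} uses: general queries in $\cLall$, not $(\eps,\delta)$-DP ones. Candidates include general piecewise-linear tradeoff curves with $f(0)=1$, or $\delta$-free compositions such as those in \Cref{subfig:pure-dp-counter-example}. You would still have to exhibit concrete parameters and certify them numerically. Once the witnesses are in this class, your verification scheme goes through unchanged: checking domination at the finitely many vertices, and exhibiting a single-point violation of $L \oplus \sup\{L_1,L_2\} \preceq G_\mu$.
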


\subsection{Families of Queries With Universally Free Natural Filters}

\Cref{thm:total_ordering} ensures that query families that are closed under composition and totally-ordered admit universally free natural filters.

In particular, since $G_{\mu_1} \oplus G_{\mu_2} = G_{\sqrt{\mu_1^2 + \mu_2^2}}$ and since $G_{\mu_1} \preceq G_{\mu_2}$ whenever $0 \leq \mu_1 \leq \mu_2$, we have:

\begin{corollary}\label{cor:gaussian-filter}
    The natural filter for GDP queries is universally free, i.e. $\pld(\Nat_{\cLgdp, B}) \preceq B$ for all $B \in \cLall$.
\end{corollary}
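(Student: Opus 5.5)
The plan is to invoke the forward direction of \Cref{thm:total_ordering} with $\cL := \cLgdp$ and $k = \infty$. For that we must check that $\cLgdp^{<\infty}$, the closure of $\cLgdp$ under finitely many convolutions, is totally ordered.

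First, we identify the closure. The empty convolution is $\Iddist = G_0 \in \cLgdp$. By the GDP composition rule $G_{\mu_1} \oplus G_{\mu_2} = G_{\sqrt{\mu_1^2+\mu_2^2}}$, recorded in the preliminaries, we have $G_{\mu_1} \oplus \dots \oplus G_{\mu_t} = G_{\sqrt{\mu_1^2 + \dots + \mu_t^2}}$ by induction on $t$. Hence $\cLgdp^{<\infty} = \cLgdp$.

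Second, we show $\cLgdp$ is totally ordered. Given $\mu_1, \mu_2 \geq 0$, assume without loss of generality that $\mu_1 \leq \mu_2$. Then $G_{\mu_1} \preceq G_{\mu_2}$. One concrete way to see this is via the Blackwell order directly: take the kernel $y \mapsto (\mu_1/\mu_2)\, y + \sqrt{1 - \mu_1^2/\mu_2^2}\, \xi$ with $\xi \sim N(0,1)$ independent. It maps $N(0,1)$ to $N(0,1)$ and $N(\mu_2, 1)$ to $N(\mu_1, 1)$. The case $\mu_2 = 0$ is trivial. Either way, every pair of elements is comparable.

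Finally, \Cref{thm:total_ordering} with $k = \infty$ gives $\pld(\Nat_{\cLgdp, B}) \preceq B$ for every $B \in \cLall$. Its forward direction already passes through \Cref{lemma:total_ordering_implies_commutativity} and \Cref{thm:universal_free_natural_filter}, the latter handling infinite $k$ via \Cref{lemma:infinite_rule_limit}.

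There is no real obstacle here. The only point needing care is the monotonicity $G_{\mu_1} \preceq G_{\mu_2}$, and the explicit Gaussian kernel settles it.
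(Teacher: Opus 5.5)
Your proposal is correct and matches the paper's argument: the paper derives this corollary from the forward direction of \Cref{thm:total_ordering}, using that $\cLgdp$ is closed under convolution (since $G_{\mu_1} \oplus G_{\mu_2} = G_{\sqrt{\mu_1^2+\mu_2^2}}$) and totally ordered (since $G_{\mu_1} \preceq G_{\mu_2}$ for $0 \leq \mu_1 \leq \mu_2$). Your explicit Gaussian kernel just supplies a proof of the monotonicity fact, which the paper states without justification.
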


Likewise, as $R_{0, \delta_1} \oplus R_{0, \delta_2} = R_{0, \delta_1 + \delta_2 - \delta_1\delta_2}$ and as $R_{0, \delta_1} \preceq R_{0, \delta_2}$ for $0 \leq \delta_1 \leq \delta_2 \leq 1$, we have:

\begin{corollary}\label{cor:pure-delta-filter}
    The natural filter for $(0, \delta)$-DP queries is universally free, i.e. for every $B \in \cLall$ we have $\pld(\Nat_{\{R_{0, \delta : \delta \in [0, 1]}\}, B}) \preceq B$.
\end{corollary}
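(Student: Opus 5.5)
The plan is to reduce \Cref{cor:pure-delta-filter} to the forward direction of \Cref{thm:total_ordering} with $k = \infty$ and $\cL := \{R_{0, \delta} : \delta \in [0, 1]\}$. It then suffices to show that $\cL^{< \infty}$ is totally-ordered. The first step is to check that $\Iddist \in \cL$ (indeed $\Iddist = R_{0,0}$, since $R_{0,0}$ is the PLD of two identical Bernoulli distributions, hence the point mass at $0$). This makes the empty convolution, which is $\Iddist$, belong to $\cL$.

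The second step is closure under composition, $R_{0,\delta_1} \oplus R_{0,\delta_2} = R_{0,\delta_1+\delta_2-\delta_1\delta_2}$. I would verify it directly from the definition. The PLD $R_{0,\delta}$ places mass $\delta$ at $+\infty$ and mass $1-\delta$ at $0$. Indeed, under $P = \delta 1_\bot + (1-\delta)\mathrm{Bern}(1/2)$ the atom $\bot$ is singular with respect to $Q$, and the Bernoulli part has likelihood ratio $1$.

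Convolving two such two-point distributions on $\{0,\infty\}$ gives mass $(1-\delta_1)(1-\delta_2)$ at $0$ and the remainder at $\infty$. This is exactly $R_{0,\delta_1+\delta_2-\delta_1\delta_2}$, since $\infty + z = \infty$. By induction, $\cL^{<\infty} = \cL$.

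The third step is the total ordering: $R_{0,\delta_1} \preceq R_{0,\delta_2}$ for $\delta_1 \le \delta_2$. The quickest route is hockey-stick curves and \Cref{prop:f-hs-equivalence}. By \Cref{prop:pld_to_hs}, $h_{R_{0,\delta}}(x) = \delta + (1-\delta)(1-x)_+$. This is a convex combination of $1$ and $(1-x)_+ \le 1$, so it is pointwise nondecreasing in $\delta$. Alternatively, one can give an explicit kernel: map $\bot \mapsto \bot$ and $\top \mapsto \top$ with probability $\delta_1/\delta_2$, and otherwise to a fresh fair coin. Either argument shows that any two members of $\cL$ are comparable.

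With $\cL^{<\infty}$ totally-ordered, \Cref{thm:total_ordering} gives $\pld(\Nat_{\cL, B}) \preceq B$ for every $B \in \cLall$. There is no real obstacle. The only point needing a little care is the treatment of the $+\infty$ atom under convolution, which \Cref{prop:pld_convolution_properties} (the rule $L \oplus \infty = \infty$) already handles.
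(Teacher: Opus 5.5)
Your proposal is correct and follows the paper's own argument. The paper likewise notes that $R_{0,\delta_1} \oplus R_{0,\delta_2} = R_{0,\delta_1+\delta_2-\delta_1\delta_2}$ and that $R_{0,\delta_1} \preceq R_{0,\delta_2}$ for $\delta_1 \le \delta_2$, then invokes the forward direction of \Cref{thm:total_ordering}; your explicit checks (the two-point PLD on $\{0,\infty\}$, the hockey-stick formula $h_{R_{0,\delta}}(x) = \delta + (1-\delta)(1-x)_+$, and the explicit kernel) supply details the paper leaves implicit.
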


%% file: sections/upper_bound.tex
\section{Upper Bound for the Natural $(\eps, \delta)$-DP PLD Filter}
\label{sec:natural-apx}

We consider the natural $(\eps, \delta)$-DP PLD filter, which is simply an instantiation of \Cref{alg:natural_filter} with the $(\eps, \delta)$-DP budget $B = R_{\eps, \delta}$ where, recalling \Cref{def:pld_class},
\begin{align*}
    R_{\eps, \delta} =
    \begin{cases}
        \infty & \text{w.p. } \delta \\
        \eps & \text{w.p. } \frac{1 - \delta}{e^{-\eps} + 1} \\
        -\eps & \text{w.p. } \frac{1 - \delta}{e^{\eps} + 1}
    \end{cases}.
\end{align*}
It is well known that a PLD $L$ is dominated by $R_{\eps, \delta}$ iff $h_L(e^\eps) \leq \delta$. In other words, this filter uses the ``natural'' condition that checks whether $L_1 \oplus \cdots \oplus L_t$ satisfies $(\eps, \delta)$-DP.

Our main result is that, while the above natural $(\eps, \delta)$-DP PLD filter can fail, it cannot fail \emph{too badly}; in the sense that we still get a DP guarantee with a blow-up in parameters that is only polylogarithmic in $k, 1/\delta$, as stated below\footnote{Recall the notation $\Nat_{\cL, B}^k$ from \Cref{sec:natural_filter}.}. We remark that our upper bound holds for the family $\cLsym$ of all \emph{symmetric} PLDs, which is a standard practical setting (see \Cref{app:symmetry}). 

\begin{theorem} \label{thm:main-ub-apx}
    Let $\eps > 0, \delta \in (0, 1/4)$. Then, the natural $(\eps, \delta)$-DP PLD filter for symmetric queries satisfies $\Paren{\chi \cdot (\eps + 1), \chi \cdot \delta}$-DP; in other words,
    \begin{align*}
        \pld\Paren{\Nat_{\cLsym, R_{\eps, \delta}}^k} \preceq R_{\chi \cdot (\eps + 1), \chi \cdot \delta},
    \end{align*}
    where $$\chi = O\Paren{\Paren{\log k + \log\Paren{1 + \frac{1}{\eps}}} \cdot \log(1/\delta)}.$$ 
\end{theorem}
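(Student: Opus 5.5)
Our plan is to reduce the adaptive natural filter to a non-adaptive composition of a bounded number of "bucketed" queries, and then apply a standard (lossy but non-adaptive) composition bound. The key observation is that every accepted query $L_t$, and indeed every accepted prefix $L_1 \oplus \dots \oplus L_t$, satisfies $h(e^\eps) \leq \delta$. Since the queries are symmetric, we can convert this into approximate zCDP-type control on each individual $L_t$, and more importantly on how the partial sums accumulate. Concretely, we will use the fact that a symmetric PLD with $h_L(e^\eps)\le\delta$ is dominated by $R_{0,\delta}\oplus L^{\mathrm{trunc}}$, where $L^{\mathrm{trunc}}$ is supported in $[-\eps,\eps]$. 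Each truncated query then has mean privacy loss (KL) at most roughly $\eps_t^2$ and is bounded. Here we define $\eps_t$ as the smallest level at which $L_t$ is $(\eps_t,\delta_t)$-DP with $\delta_t$ a small fraction of $\delta$.

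The first main step is to show that the natural filter's acceptance condition implies an \emph{adaptive} budget constraint of the form $\sum_t \eps_t^2 \lesssim \eps\cdot\mathrm{polylog}$ together with $\sum_t \delta_t \lesssim \delta\cdot\mathrm{polylog}$. We plan to establish this by dyadic bucketing. We round each query's effective parameters $(\eps_t,\delta_t)$ to one of $O(\log(1+1/\eps)+\log k)$ scales for $\eps_t$ (queries with $\eps_t \le \eps/k$ are lumped together, since even $k$ of them cost at most $\eps$) and $O(\log(1/\delta))$ scales for $\delta_t$. Within a single bucket the queries are essentially comparable, i.e.\ nearly totally ordered. For a bucket with $m$ queries, the composition of the realized queries is, up to constant factors in $\eps$ and $\delta$, at least as large as $m$ copies of the bucket's representative $R_{\eps_b,\delta_b}$ restricted to its symmetric truncation. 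Since the full prefix satisfies $h(e^\eps)\le\delta$, we then get a lower bound: by a central-limit/anti-concentration estimate for symmetric bounded losses, $m$-fold composition of a bucket with parameter $\eps_b$ forces $m\eps_b^2 \lesssim \eps^2 + \eps\log(1/\delta)$, and similarly $m\delta_b \lesssim \delta$. This yields a per-bucket budget that holds regardless of adaptivity, since the bound is checked on realized prefixes.

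The second step is to invoke an existing \emph{free} filter on the bucketed parameters. We will run, in the analysis only, a separate approximate-zCDP filter in the style of Whitehouse et al.\ \cite{WRRW23} (equivalently, an RDP residue filter as in \S\ref{sec:residue_filters}) for each of the $\chi_0 := O(\log k+\log(1+1/\eps))\cdot O(\log(1/\delta))$ buckets. Each bucket has budget $\rho_b \approx \eps^2+\eps\log(1/\delta)$ and slack $\delta$. The residue-filter theorem (\Cref{thm:residue_filter}) and \Cref{lemma:infinite_rule_limit} give that each bucket's sub-interaction is $(O(\sqrt{\rho_b\log(1/\delta)}), O(\delta))$-DP, and composing the buckets then gives the overall guarantee. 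Summing naively over the buckets produces the $\chi\cdot(\eps+1)$ and $\chi\cdot\delta$ bounds; the "$+1$" absorbs the $\sqrt{\eps\log(1/\delta)}\le \eps+\log(1/\delta)$ term.

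The main obstacle is the first step: converting the \emph{exact} condition $h_{L_1\oplus\cdots\oplus L_t}(e^\eps)\le\delta$ on heterogeneous, adaptively chosen symmetric PLDs into a lower bound on accumulated per-bucket cost. The difficulty is that composition of PLDs at different scales can partly cancel in the hockey-stick value. We would first attempt a monotonicity argument: by \Cref{prop:pld_convolution_properties}(5), dropping all queries outside a bucket only decreases the composition, since symmetric PLDs dominate $\Iddist$. This reduces the problem to a single bucket, where the queries are comparable to $R_{\eps_b/2,\delta_b/2}$ from below. A Bernstein-type lower-tail estimate for the sum of $m$ i.i.d.\ symmetric $\pm\eps_b$ losses then finishes the bound.
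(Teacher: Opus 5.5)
\paragraph{The gap is in your first step.} You flagged this step as the main obstacle, and as stated its conclusion is false, not merely unproven. (Working with realized prefixes and discarding other buckets by monotonicity is fine; the paper's containment in \Cref{thm:additive-filter-vs-pld-filter} is also pointwise.)

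A scalar summary $(\eps_t,\delta_t)$ with $\eps_t$ minimal gives only the upper bound $L_t\preceq R_{\eps_t,\delta_t}$ and pins one point of $h_{L_t}$. It gives no lower domination, so ``the queries in a bucket are comparable to $R_{\eps_b/2,\delta_b/2}$ from below'' fails. Take the symmetric query $L$ that, with probability $p=2\delta_t$, runs $\RR_{\eps'}$ for a fixed $\eps'>\ln 3$, and otherwise outputs a constant.
\begin{itemize}
\item For $x\ge 1$ we have $h_L(x)=p\,h_{R_{\eps'}}(x)$, so $h_L(e^y)=2\delta_t(1-e^{y-\eps'})/(1+e^{-\eps'})$. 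Hence $\eps_t=\ln\frac{e^{\eps'}-1}{2}$, a positive constant independent of $\delta_t$.
\item Yet $h_L(1)=2\delta_t\tanh(\eps'/2)<\tanh(\eps_t/4)\le h_{R_{\eps_t/2,\delta_t/2}}(1)$ once $\delta_t$ is small, so $L\not\succeq R_{\eps_t/2,\delta_t/2}$.
\item Suppose $k\ge 1/p$ and $\eps\ge\eps'\max\{e^2,\ln(1/\delta)\}$. Then $m=\lfloor 1/p\rfloor$ copies of $L$ all pass the natural filter. The composed loss exceeds $\eps$ only if at least $r=\lfloor\eps/\eps'\rfloor+1$ copies fire, and this has probability at most $(mp)^r/r!\le 1/r!\le\delta$.
\item This run has $\sum_t\delta_t\approx 1/2$ and $\sum_t\eps_t^2=\Theta(1/\delta_t)$. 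For $\eps$ at that threshold and small $\delta$, this is far beyond both $\delta\cdot\mathrm{polylog}$ and $\eps\cdot\mathrm{polylog}$.
\end{itemize}
So no Bernstein/CLT lower tail can rescue the step. The scalar summary charges mass sitting at a moderate loss level either as failure probability or as a full $\eps_t$-cost per query. The filter's exact constraint only controls how much mass accumulates at each scale.

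\paragraph{The missing idea.} You need to keep the whole distribution of $|Z|$ across dyadic scales. The paper's discretized profile $\bp=\disc_{\beps}(L)$ sandwiches a symmetric $L$ as $\pld(\RRd_{\bp,\beps}(0),\RRd_{\bp,\beps}(1))\preceq L\preceq \pld(\RR_{\bp,\beps}(0),\RR_{\bp,\beps}(1))$, with the same weights $\bp$ on both sides (\Cref{thm:optimistic,thm:pessimistic}).
\begin{itemize}
\item The lower estimate turns the natural filter's exact condition into additive per-scale budgets $\sum_t p^t_i\le \ell^*_i$ (\Cref{thm:additive-filter-vs-pld-filter}). The number of hits at scale $i$ is a sum of independent Bernoullis. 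If its mean exceeds $\ell^*_i$, Chebyshev gives $\tau$ copies of $\RR_{\eps_{i-1}}$ with probability above $1/2$, which alone pushes $H_{e^\eps}$ above $\delta$.
\item The upper estimate turns these budgets into privacy. The filter's view is coupled to a post-processing of $b^*_i=2\lceil\log(1/\delta)\rceil\ell^*_i$ fixed samples of $\RR_{\eps_i}$ per scale. This coupling fails with probability at most $4n\delta$ by multiplicative Azuma (\Cref{thm:azuma}), which \Cref{lem:triangle} absorbs. Basic composition over the $n=O(\log k+\log(1+1/\eps))$ scales finishes.
\end{itemize}
In the example above, the copies of $L$ are simply charged $mp\approx 1$ at the scale containing $\eps'$, well within that scale's budget $\ell^*_i\ge 8$.

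\paragraph{A secondary problem in your second step.} Even granting step one, your accounting does not give the stated bound as written. With $\rho_b\approx\eps^2+\eps\log(1/\delta)$, the zCDP-to-DP conversion has a $\rho_b$ term that you dropped, and it is quadratic in $\eps$. Summing over $\chi_0$ buckets, which already carry a $\log(1/\delta)$ factor, costs a further $\log(1/\delta)$ relative to $\chi(\eps+1)$.
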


When $\eps = \Omega(1)$, the above guarantee is simply $(\chi' \cdot \eps, \chi' \cdot \delta)$-DP where $\chi' = O(\log k \cdot \log(1/\delta))$. 

\begin{remark}
    In our current proof, the parameters are quite flexible and we can tradeoff the blow-up factors of the two parameters $\eps, \delta$. For example, by changing $\eps_i$ in \Cref{subsec:disc} to $\frac{2\eps}{2^{n - 1 - i}}$., we can prove a guarantee of $\left(\chi \cdot \eps, \chi \cdot \frac{\delta}{\eps} \right)$-DP for $\chi = O(\log k \cdot \log(1/\delta))$. However, we choose to keep only a single setting of parameters in \Cref{thm:main-ub-apx} for simplicity.

    For the same reason, we do not attempt to optimize $\chi$. In particular, it is plausible that our techniques can achieve $\chi = \tilde{O}(\log k + \log(1/\delta))$ with a more refined argument. Nevertheless, our proof cannot eliminate the dependency of $k$ or $\delta$ in $\chi$, and this remains an interesting open question.
\end{remark}

\paragraph{Proof Overview.}
We start with an informal overview of our proof, which consists of three steps:
\begin{itemize}
\item \textbf{Privacy Loss Discretization.} Before we analyze the privacy filter, we start by providing a discretization argument for PLDs. Namely, we pick discretization points $\eps_0, \dots, \eps_n$. Then, we ``round'' the PLD $L$ up or down to be these values (or their negations). We show that this process results in $L$ being ``sandwiched'' between two instances of ``generalized randomized response'', which has the following form: Randomly select index $i$ from a certain probability distribution and then output $i$ together with the $\eps_i$-DP randomized response. We show that $L$ is dominated by the ``rounded up'' (aka ``pessimistic'') version of such randomized response while it dominates the ``rounded down'' (aka ``optimistic'') version. We remark that our discretization argument is different than previous work (e.g. \cite{MM18,KoskelaJH20,DoroshenkoGKKM22,gopi2021numerical,GhaziK0M22}) which do \emph{not} produce a generalized randomized response instance (especially for the pessimistic version). Indeed, we exploit the symmetry of $L$ to achieve our discretization.
\item \textbf{Translation to Additive Filter.} We consider the following ``additive'' filter, in which there is a non-negative real value budget associated with each $\eps_i$ and each time we simply subtract from the budget; the filter rejects if we run out of budget for any $\eps_i$. With the above discretization, we can show that the $(\eps, \delta)$-PLD filter is ``stronger'' than such an additive filter (where the budget spent in each step is from the discretization). In other words, if one passes the PLD filter, then one also passes the additive filter. This means that, for the purpose of privacy guarantee, we may analyze the latter.
\item \textbf{Analysis of Additive Filter via Post-Processing.} In the final step, we provide the privacy analysis of the additive filter. Roughly speaking, we show that the adversary's view of the additive filter can be shown to be a post-processing of a certain pre-specified number of randomized response (``with high probability''). Due to the post-processing property of DP, we can thus conclude that this is as private as composition of randomized response, which we then show to be $(\eps', \delta')$-DP. 
\end{itemize}

The above overview skips multiple subtle details, and the remainder of this section is devoted to carry out the above steps in full.

\input{sections/upper_bound_proof}

%% file: sections/upper_bound_proof.tex
\subsection{Discretization of PLD}
\label{subsec:disc}

Our discretization is based on a variant of randomized response where with a different prescribed probability, we run a randomized response with a different value of $\eps$. We formalize this below.

\begin{definition}[Randomized Response]
We use $\RR_{\eps}$ to denote the $\eps$-DP binary randomized response. That is, for $x \in \{0, 1\}$, \begin{align*}
\RR_{\eps}(x) =
\begin{cases}
x &\text{ with probability } \frac{1}{e^{-\eps} + 1}, \\
1 - x &\text{ with probability } \frac{1}{e^{\eps} + 1}.
\end{cases}
\end{align*}
Note that for $\eps = \infty$, $\RR_{\eps}(x) = x$ with probability one.
\end{definition}

Let $\Delta_{n-1} := \{\bp = (p_1, \dots,p_n) \in \mathbb{R}_{\geq 0} \mid p_1 + \cdots + p_n = 1\}$ be the $(n - 1)$-simplex, and let $\cE_n := \{\beps = (\eps_0, \dots, \eps_n) \in \R_{\geq 0}^{n + 1} \mid 0 = \eps_0 \leq \eps_1 \leq \cdots \leq \eps_n = \infty\}$.

\begin{definition}[Generalized Randomized Response]
%
For any $\bp \in \Delta_{n - 1}$ and $\beps \in \cE_n$,
we write $\RR_{\bp, \beps}$ to denote the following mechanism (where $x$ denotes the input):
\begin{itemize}
\item Sample index $i \in [n]$ according to $\bp$,
\item Sample $y \sim \RR_{\eps_i}(x)$, 
\item Output $(i, y)$.
\end{itemize}
We also define the following ``rounded down'' version $\RRd_{\bp, \beps}$ of the above mechanism as follows where we use $\eps_0$ to denote $0$ throughout:
\begin{itemize}
\item Sample index $i \in [n]$ according to $\bp$,
\item Sample $y \sim \RR_{\eps_{i-1}}(x)$, 
\item Output $(i, y)$.
\end{itemize}
\end{definition}
Note that $\RR_{\bp, \beps}$ and $\RRd_{\bp, \beps}$ only differ on the second step, where the former invokes $\RR_{\eps_i}$ but the latter invokes $\RR_{\eps_{i-1}}$.


In addition to Generalized Randomized Response, another notion we will use for the discretization step is the notion of \emph{discretized privacy loss profile}, which is a vector $\bp = (p_1, \dots, p_n)$ where $p_i$ is the probability that the privacy loss falls into $[\eps_{i-1}, \eps_i)$ or $(-\eps_i, \eps_i]$. This is formalized below.

\begin{definition}[Discretized Privacy Loss Profile]
For any PLD $L$, the discretized privacy loss profile of $L$ and $\beps \in \cE_n$, denoted by $\disc_{\beps}(L)$, is the vector $\bp = (p_1, \dots, p_n) \in \Delta_{n - 1}$ where
\begin{align*}
p_i = 
\begin{cases}
\Pr_{Z \sim L}\left[\eps_{i-1} \leq |Z| < \eps_i\right] & \text{ if } i \ne n, \\
\Pr_{Z \sim L}\left[\eps_{n-1} \leq |Z| \right] &\text{ if } i = n.
\end{cases}
\end{align*}
\end{definition}

The ``optimistic'' and ``pessimistic'' discretization of $L$ is exactly based on $\RRd_{\bp, \beps}$ and $\RR_{\bp, \beps}$, and we have the following theorems which gives the aforementioned ``sandwich'' around the pair $L$.

\begin{theorem}[Pessimistic Estimate] \label{thm:pessimistic}
Suppose that $L$ is a symmetric PLD, and let $\bp$ be its discretized privacy loss profile. Then, $\pld(\RR_{\bp, \beps}(0), \RR_{\bp, \beps}(1)) \succeq L$.
\end{theorem}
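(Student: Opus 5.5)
The idea is to write the symmetric PLD $L$ as a mixture of randomized responses indexed by the magnitude $|Z|$ of the privacy loss. Then, conditioned on the bin of $|Z|$, each such randomized response is a post-processing of $\RR_{\eps_i}$. Everything can be expressed as one Markov kernel applied to the output of $\RR_{\bp, \beps}$, so the Blackwell order (\Cref{def:dominating}) yields the claim directly.

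\textbf{Step 1 (symmetric decomposition).} First I will use symmetry in the form: if $(P,Q)$ is symmetric, then $\pld(Q \parallel P)$ is the reflection of $\pld(P \parallel Q)$. Combined with the Esscher-tilt description of $\pld(Q\parallel P)$ from the PLD characterization proposition, this gives the detailed-balance identity $dL(-z) = e^{-z}\, dL(z)$ on $\bR$, together with $L(\{+\infty\}) = L(\{-\infty\})$, where the $-\infty$ mass is that of the tilted distribution. I expect to extract this from \Cref{app:symmetry}. It follows that, conditioned on $|Z| = a$ with $Z \sim L$, the sign is $+$ with probability $1/(1+e^{-a})$ and $-$ with probability $1/(1+e^{a})$. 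That conditional law is exactly the PLD of $\RR_a$, with $a=\infty$ meaning exact release and $a=0$ meaning a fair coin. Writing $\nu$ for the law of $|Z|$, the pair $P_L = \int \delta_a \otimes \RR_a(0)\, d\nu(a)$, $Q_L = \int \delta_a \otimes \RR_a(1)\, d\nu(a)$ has likelihood ratio $e^{\pm a}$ at the output $(a,y)$. By \Cref{prop:composition_convolution}, its PLD is $L$. Here the first component is a data-independent draw, whose PLD is $\Iddist$.

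\textbf{Step 2 (the kernel).} Let $\nu_i$ denote $\nu$ conditioned on bin $i$: $[\eps_{i-1},\eps_i)$ for $i<n$ and $[\eps_{n-1},\infty]$ for $i=n$. Its weight is $p_i$, by the definition of $\disc_\beps(L)$. For $a \le \eps_i$, $\RR_a(x)$ is a post-processing of $\RR_{\eps_i}(x)$: keep $y$ with probability $q$ and otherwise output a fair coin, with $q = (e^{a}-1)(e^{\eps_i}+1)/((e^{a}+1)(e^{\eps_i}-1))$, or $q=1$ when $a=\eps_i=\infty$. Call this kernel $\kappa_{a,\eps_i}$.

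The overall kernel is then defined as follows. On input $(i,y)$ from $\RR_{\bp,\beps}$, sample $a \sim \nu_i$, which is independent of the data. Next sample $y' \sim \kappa_{a,\eps_i}(y)$, and output $(a,y')$. Applying this kernel to $\RR_{\bp,\beps}(0)$ gives $\sum_i p_i \int \delta_a\otimes \RR_a(0)\,d\nu_i(a) = P_L$, and likewise for $Q_L$. Hence $(P_L,Q_L) \preceq (\RR_{\bp,\beps}(0),\RR_{\bp,\beps}(1))$, which is the claim $L \preceq \pld(\RR_{\bp,\beps})$.

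\textbf{Main obstacle.} The main difficulty is Step 1: making the detailed-balance identity rigorous for general, possibly non-atomic PLDs with atoms at $\pm\infty$. The conditional law of the sign given $|Z|$ must be defined via disintegration, and the singular parts must be matched (the $+\infty$ mass of $L$ with the $-\infty$ mass of the tilt). I will verify the identity by checking $\E_{Z\sim L}[g(|Z|)\,1(Z<0)] = \E_{Z\sim L}[g(|Z|)\,e^{-Z}1(Z>0)]$ for bounded measurable $g$, which follows from the reflection property and the tilt formula. The remaining bookkeeping concerns the boundary bins. The bin $i=1$ contains $a=0$, where $\RR_0$ is trivially a post-processing of anything. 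The top bin contains $a=\infty$, which is handled by $\eps_n=\infty$.
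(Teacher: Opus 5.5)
Your proof is correct, and it takes a different route from the paper's.

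\textbf{The paper's route.} The paper never builds a kernel. It uses \Cref{lem:sym-check-only-large-alpha} to reduce to hockey-stick divergences at $x \ge 1$. It then uses \Cref{lem:inverse-cdf-same} to write $P=\sum_i p_iP_i$ and $Q=\sum_i p_iQ_i$, where $P_i,Q_i$ are the bin-conditioned pieces, so that $H_x(P\parallel Q)=\sum_i p_i H_x(P_i\parallel Q_i)$. Finally it proves $H_x(P_i\parallel Q_i)\le H_x(\RR_{\eps_i}(0)\parallel \RR_{\eps_i}(1))$ in three cases:
\begin{enumerate}
\item for $x\ge e^{\eps_i}$, both sides vanish;
\item for $1\le x\le e^{\eps_{i-1}}$, a direct computation uses that the likelihood ratio on the bin is at most $e^{\eps_i}$;
\item in between, convexity of the left side is played against linearity of the randomized-response curve.
\end{enumerate}
Blackwell domination is then read off from the hockey-stick characterization.

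\textbf{Your route.} You prove the finer structural fact that a symmetric PLD is exactly a mixture of $\RR_a$ over $a=|Z|$. You then obtain each $\RR_a$ with $a\le\eps_i$ from $\RR_{\eps_i}$ by an explicit keep-or-resample kernel.

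\textbf{What each buys.} Your argument produces the Markov kernel itself, which is the object \Cref{alg:post-process-rr} actually invokes. The paper instead infers its existence from hockey-stick inequalities via the hard direction of Blackwell's theorem. You also need no reduction to $x\ge 1$. The same decomposition gives \Cref{thm:optimistic} immediately, by degrading $\RR_a$ to $\RR_{\eps_{i-1}}$ for $a\ge\eps_{i-1}$. The paper's route stays inside the hockey-stick calculus used throughout and conditions only on the $n$ bins, so it never touches the joint law of $(|Z|,\operatorname{sgn} Z)$. As you observe, that costs you only the single identity $\E_{Z\sim L}[g(|Z|)1(Z<0)]=\E_{Z\sim L}[g(|Z|)e^{-Z}1(Z>0)]$, because the conditional sign law is explicit.

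\textbf{Two slips to fix in the write-up.}
\begin{itemize}
\item For a symmetric pair, $\pld(Q\parallel P)$ \emph{equals} $L$; it is not the reflection of $L$. What holds for every pair is that $\pld(Q\parallel P)$ is the reflection of the Esscher tilt of $L$. Combining the two facts gives the detailed-balance identity you wrote, which is the correct one.
\item \Cref{prop:composition_convolution} only covers a second component whose PLD does not depend on the first output, whereas $\pld(\RR_a)$ varies with $a$. Justify $\pld(P_L\parallel Q_L)=L$ directly from the likelihood ratio $e^{\pm a}$ at $(a,y)$, which is infinite at $(\infty,0)$. You essentially already do this.
\end{itemize}
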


\begin{theorem}[Optimistic Estimate] \label{thm:optimistic}
Suppose that $L$ is a symmetric PLD, and let $\bp$ be its discretized privacy loss profile. Then, $\pld(\RRd_{\bp, \beps}(0), \RRd_{\bp, \beps}(1)) \preceq L$. 
\end{theorem}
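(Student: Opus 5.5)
The statement to prove is the Optimistic Estimate (\Cref{thm:optimistic}): $\pld(\RRd_{\bp,\beps}(0) \parallel \RRd_{\bp,\beps}(1)) \preceq L$. The plan is to build an explicit post-processing of a symmetric pair representing $L$ that produces $\RRd_{\bp,\beps}$. By symmetry, represent $L$ as the PLD of a pair $(P,Q)$ with an involution $\sigma$ on $\Omega$ such that $Q=\sigma_*P$ and $\ell(\sigma\omega)=1/\ell(\omega)$. One such pair can be built directly on $\overline{\bR}$: take $P=L$ and $Q$ its Esscher tilt, with $\sigma(z)=-z$. Symmetry of $L$ means precisely that the tilt equals the reflection of $L$.

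It is convenient to work on the privacy-loss line. Condition on the ``magnitude bin'' $i$ determined by $|Z|\in[\eps_{i-1},\eps_i)$. This bin is a deterministic function of the observation, and under both $P$ and $Q$ it has law $\bp$, because $|Z|$ has the same distribution under $P$ and $\sigma_*P$.

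Conditional on bin $i$, the observation restricted to $\{|z|\in[\eps_{i-1},\eps_i)\}$ is a mixture over magnitudes $a$ of the symmetric binary pairs $(\pm a)$. Under $P$ the point $+a$ has conditional probability $e^{a}/(1+e^{a})$, and under $Q$ the roles are swapped. So, given $|Z|=a$, the sign of $Z$ is exactly $\RR_a$ applied to the secret bit (bit $0 \leftrightarrow P$). To get $\RR_{\eps_{i-1}}$ from $\RR_a$ with $a\ge\eps_{i-1}$, use the standard degradation: flip the output sign independently with a suitable probability $q(a)\in[0,1/2]$. This probability solves $(1-q)\frac{1}{1+e^{-a}}+q\frac{1}{1+e^{a}}=\frac{1}{1+e^{-\eps_{i-1}}}$, and it exists because $a\mapsto 1/(1+e^{-a})$ is increasing.

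The kernel $\kappa$ therefore maps $z\mapsto (i(|z|),\ \text{sign}(z)$ flipped w.p. $q(|z|))$, with sign bit encoded as $y$. Under $P$ the output has law $\RRd_{\bp,\beps}(0)$, and under $Q$ the law $\RRd_{\bp,\beps}(1)$. By \Cref{def:dominating}, this gives the desired domination of the PLDs.

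The main obstacle is the boundary cases. Atoms at $z=0$ fall in bin $1$ with $\eps_0=0$, so $\RR_0$ is pure noise; assign a uniformly random sign there. Mass at $\pm\infty$ belongs to bin $n$, since $|Z|=\infty\ge\eps_{n-1}$. Symmetry forces $P(\{+\infty\})=Q(\{-\infty\})$ and $P(\{-\infty\})=0$, and the same flipping formula with $a=\infty$ ($q=1/(1+e^{\eps_{n-1}})$) still works. One also has to check carefully that symmetry yields exactly this reflection structure, i.e.\ $Q(\{-z\})=P(\{z\})$ including at infinity. This identity comes from the Esscher tilt formula together with the definition of symmetric pairs; I would verify it via \Cref{app:symmetry}, which I take as given.
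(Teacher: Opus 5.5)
Your proposal is correct, but it takes a different route from the paper.

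\textbf{The paper's route.} The paper never writes down a kernel. It uses \Cref{lem:sym-check-only-large-alpha} (both pairs are symmetric) to reduce to hockey-stick comparisons at $x \ge 1$. It then uses \Cref{lem:inverse-cdf-same} to split $P=\sum_i p_i P_i$ and $Q=\sum_i p_i Q_i$ into bins with common weights and disjoint supports, so that $H_x(P\parallel Q)=\sum_i p_i H_x(P_i\parallel Q_i)$. Finally it checks $H_x(P_i\parallel Q_i)\ge H_x(\RR_{\eps_{i-1}}(0)\parallel \RR_{\eps_{i-1}}(1))$ bin by bin. This is trivial for $x\ge e^{\eps_{i-1}}$. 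For $x\in[1,e^{\eps_{i-1}})$ it follows from the explicit formula $\frac{x+1}{1+Q(\Omega^{P,Q}_{\eps_{i-1},\eps_i})/P(\Omega^{P,Q}_{\eps_{i-1},\eps_i})}-x$ and the bound $\log\ell\ge\eps_{i-1}$ on that set.

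\textbf{Your route.} You certify Blackwell domination directly from its definition, on the canonical pair $(L,L')$:
\begin{enumerate}
\item disintegrate by $|Z|$, which has the same law under both hypotheses;
\item note that the sign of $Z$ given $|Z|=a$ is exactly $\RR_a$ applied to the secret bit;
\item degrade $\RR_a$ to $\RR_{\eps_{i-1}}$ by an $a$-dependent sign flip that is valid under both hypotheses at once.
\end{enumerate}

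\textbf{What each buys.} Your argument avoids both the restriction to $x\ge 1$ and the additivity of $H_x$ over disjoint-support mixtures. It handles the atoms at $0$ and $\pm\infty$ transparently. It also makes the operational content explicit: the rounded-down randomized response is literally a post-processing of the query. The paper's computational route has a different advantage: it shares its skeleton with \Cref{thm:pessimistic}. There the kernel must go the other way, simulating $(P_i,Q_i)$ from $\RR_{\eps_i}$, and is much harder to write down, so a hockey-stick comparison is the natural uniform tool for both theorems.

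\textbf{Sourcing the symmetry identity.} Both proofs rest on the same symmetry input. Your identity ``tilt $=$ reflection'' is exactly the equality $\pld(P\parallel Q)=\pld(Q\parallel P)$, combined with the fact that the law of $\log\ell$ under $Q$ is the Esscher tilt. The paper derives this equality in \Cref{lem:inverse-cdf-same} from \cite{ZhuDW22}. \Cref{app:symmetry} does not state it directly, so cite \Cref{lem:inverse-cdf-same} (or \cite{ZhuDW22}) rather than the appendix.

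\textbf{Minor precision.} An atom at $z=0$ lands in the bin whose lower endpoint is $0$. That is bin $1$ only when $\eps_1>0$. Either way the target there is $\RR_0$, so your uniform-sign rule is correct.
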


The remainder of this subsection are devoted to the proofs of these two theorems.

\subsubsection{Useful Lemmata}

Before we can prove \Cref{thm:pessimistic,thm:optimistic}, we will first prove a couple of technical lemmata that would be useful. First is the symmetry of $\RR$ and $\RRd$, which are simple to prove.

\begin{lemma} \label{lem:mult-rr-sym}
For any $\bp, \beps$, the pairs $(\RR_{\bp, \beps}(0), \RR_{\bp, \beps}(1))$ and $(\RRd_{\bp, \beps}(0), \RRd_{\bp, \beps}(1))$ are symmetric. 
\end{lemma}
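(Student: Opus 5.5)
We need to show that $(P,Q)\preceq(Q,P)$ and $(Q,P)\preceq(P,Q)$ for $P=\RR_{\bp,\beps}(0)$, $Q=\RR_{\bp,\beps}(1)$, and likewise for $\RRd_{\bp,\beps}$. The plan is to exhibit an explicit Markov kernel that swaps the two distributions. By \Cref{def:dominating}, a single deterministic kernel $\kappa$ with $\kappa P = Q$ and $\kappa Q = P$ gives both dominations at once.

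The kernel is the relabelling map $\kappa(i,y) := (i, 1-y)$ on the output space $[n]\times\{0,1\}$. It is a bijection and is its own inverse. We verify that it swaps the distributions coordinatewise. In $\RR_{\bp,\beps}(x)$, the index $i$ is drawn from $\bp$ independently of the input $x$. Conditionally on $i$, the output $y$ is $\RR_{\eps_i}(x)$. The basic binary randomized response has the symmetry
\begin{align*}
    1-\RR_{\eps}(x) \overset{d}{=} \RR_{\eps}(1-x),
\end{align*}
because $\RR_\eps(x)$ equals $x$ with probability $1/(e^{-\eps}+1)$ regardless of the value of $x$.

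Consequently, the pushforward of $\RR_{\bp,\beps}(0)$ under $\kappa$ has joint law $\Pr[(i,y)] = p_i \Pr[\RR_{\eps_i}(0)=1-y]$. By the symmetry above, this equals $p_i\Pr[\RR_{\eps_i}(1)=y]$, which is exactly the law of $\RR_{\bp,\beps}(1)$. Since $\kappa$ is an involution, the same computation shows $\kappa\RR_{\bp,\beps}(1)=\RR_{\bp,\beps}(0)$. It follows that $(P,Q)\preceq(Q,P)$ and, symmetrically, $(Q,P)\preceq(P,Q)$, so the pair is symmetric.

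The argument for $\RRd_{\bp,\beps}$ is identical with $\eps_i$ replaced by $\eps_{i-1}$. We also check the edge cases, which satisfy the same symmetry. For $\eps_0=0$, $\RR_0(x)$ is a uniform bit. For $\eps_n=\infty$, $\RR_\infty(x)=x$ deterministically, and $1-x = \RR_\infty(1-x)$ holds trivially.

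No step is a real obstacle. The only points needing care are these infinite and zero parameter cases and confirming that index sampling does not depend on the input.
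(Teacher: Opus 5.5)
Your proposal is correct and follows the paper's proof: both use the relabelling kernel $\kappa(i,y) = (i, 1-y)$, which maps $\RR_{\bp,\beps}(0)$ to $\RR_{\bp,\beps}(1)$ and back (and likewise for $\RRd_{\bp,\beps}$), so this single kernel gives both dominations. Your coordinatewise check via the symmetry of binary randomized response, together with the edge cases $\eps_0 = 0$ and $\eps_n = \infty$, just spells out the verification that the paper calls obvious.
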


\begin{proof}
Consider the Markov kernel $\kappa$ that takes in $(i, y)$ and outputs $(i, 1 - y)$. It is obvious to see that $\kappa(\RR_{\bp, \beps}(0)) = \kappa(\RR_{\bp, \beps}(1))$ and $\kappa(\RR_{\bp, \beps}(1)) = \kappa(\RR_{\bp, \beps}(0))$. Thus, $(\RR_{\bp, \beps}(0), \RR_{\bp, \beps}(1))$ dominates $(\RR_{\bp, \beps}(1), \RR_{\bp, \beps}(0))$ and vice versa. In other words, $(\RR_{\bp, \beps}(0), \RR_{\bp, \beps}(1))$ is symmetric.

The proof for $(\RRd_{\bp, \beps}(0), \RRd_{\bp, \beps}(1))$ is analogous.
\end{proof}

For the proofs below, it will be more convenient to deal with pairs rather than privacy loss. For this, we give an alternate definition of discretized privacy loss profile below.

\begin{definition}
Let $(P, Q)$ be any pair of distributions.
For any $0 \leq \eps < \eps'$, let
\begin{align*}
I_{\eps, \eps'} :=
\begin{cases}
[\eps, \eps') &\text{ if } \eps' \ne \infty, \\
[\eps, \eps') \cup \{\infty\} &\text{ if } \eps' = \infty.
\end{cases}
\end{align*}
Moreover, let $$\Omega^{P, Q}_{\eps,\eps'} = \left\{\omega~\middle|~\log\left(\frac{dP}{dQ}(\omega)\right) \in I_{\eps, \eps'}\right\},$$
and similarly
$$\Omega^{Q, P}_{\eps,\eps'} = \left\{\omega~\middle|~\log\left(\frac{dQ}{dP}(\omega)\right) \in I_{\eps, \eps'}\right\}.$$
\end{definition}

The following observation follows directly from the definition of PLD.

\begin{observation}
For any $L = \pld(P, Q)$ and $\beps \in \cE_n$, the discretized privacy loss profile of $L$ is equal to $\bp = (p_1, \dots, p_n)$ where 
$$p_i = P\Paren{\Omega^{P, Q}_{\eps_{i - 1},\eps_i} \cup \Omega^{Q, P}_{\eps_{i - 1},\eps_i}}.$$
\end{observation}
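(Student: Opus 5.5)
The observation to prove is that, for $L = \pld(P, Q)$, each $p_i$ equals $P\big(\Omega^{P,Q}_{\eps_{i-1},\eps_i} \cup \Omega^{Q,P}_{\eps_{i-1},\eps_i}\big)$. The plan is to unwind the definition of the PLD. By definition, $Z \sim L$ is distributed as $Z = \log\frac{dP}{dQ}(\omega)$ with $\omega \sim P$, using the extended likelihood ratio. So
\[
p_i = P\Big(\Big\{\omega : \Big|\log\tfrac{dP}{dQ}(\omega)\Big| \in I_{\eps_{i-1},\eps_i}\Big\}\Big).
\]
For $i = n$, the set $I_{\eps_{n-1},\infty}$ includes $\infty$, matching the condition $|Z| \geq \eps_{n-1}$ with $|Z| = \infty$ allowed. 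It then suffices to split the event $\{|Z| \in I\}$ by sign into $\{Z \in I\}$ and $\{-Z \in I\}$. The first is exactly $\Omega^{P,Q}_{\eps_{i-1},\eps_i}$.

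For the second, I use that $\frac{dQ}{dP} = 1/\frac{dP}{dQ}$ $P$-almost surely, so $-\log\frac{dP}{dQ}(\omega) = \log\frac{dQ}{dP}(\omega)$ for $P$-almost every $\omega$. This needs a short justification from the Lebesgue decomposition. On the singular part of $P$ with respect to $Q$, $\frac{dP}{dQ} = \infty$ and $\frac{dQ}{dP} = 0$, which is consistent. On the common part, the Radon--Nikodym derivatives are reciprocal where both are positive and finite. The set where $\frac{dP}{dQ} = 0$ has $P$-measure zero. Hence $\{-Z \in I\}$ agrees with $\Omega^{Q,P}_{\eps_{i-1},\eps_i}$ up to a $P$-null set.

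Combining the two cases gives the claimed identity. The only subtlety is boundary bookkeeping. When $\eps_{i-1} = 0$, the value $Z = 0$ lies in both sets, but the union handles this without double counting. The infinite endpoints for $i = n$ also need the convention noted above.
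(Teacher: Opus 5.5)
Your argument is correct and is the same route as the paper. The paper gives no proof beyond saying the observation follows directly from the definition of the PLD. You additionally verify explicitly that $\log\frac{dQ}{dP} = -\log\frac{dP}{dQ}$ holds $P$-almost surely under the extended likelihood-ratio convention, which is the detail the paper leaves implicit.
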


The next lemma is a type of symmetry with regards to $\Omega^{P, Q}_{\eps,\eps'}$ and $\Omega^{P, Q}_{\eps',\eps}$:

\begin{lemma} \label{lem:inverse-cdf-same}
For any symmetric pair $(P, Q)$ and $0 \leq \eps < \eps'$, 
\begin{align*}
P\Paren{\Omega^{P, Q}_{\eps,\eps'}} = Q\Paren{\Omega^{Q, P}_{\eps,\eps'}} &&\text{ and} &&P\Paren{\Omega^{Q, P}_{\eps,\eps'}} = Q\Paren{\Omega^{P, Q}_{\eps,\eps'}}.
\end{align*}
\end{lemma}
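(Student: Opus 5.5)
The plan is to use the symmetry of $(P, Q)$ to obtain a Markov kernel $\kappa$ with $\kappa P = Q$ and $\kappa Q = P$. This gives $(Q, P) \preceq (P, Q)$, and symmetry gives the reverse domination, so $\pld(Q \parallel P) = \pld(P \parallel Q)$. The two identities then follow by comparing these PLDs, or their likelihood-ratio distributions, on the relevant intervals.

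First, the pair $(P,Q)$ is symmetric, so $(P,Q) \preceq (Q,P) \preceq (P,Q)$. Both pairs therefore have the same hockey-stick curve, by \Cref{prop:f-hs-equivalence}. Since the PLD determines and is determined by the hockey-stick curve (via \Cref{prop:pld_to_hs}, the curve $h_L$ determines the distribution of $Z \sim L$), we get $\pld(P \parallel Q) = \pld(Q \parallel P)$.

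The last claim deserves justification. From $h_L(x) = \E_{Z\sim L}[(1 - xe^{-Z})_+]$, the right-derivative in $x$ equals $-\E[e^{-Z}1(e^{-Z} < x)]$ up to boundary conventions. This recovers the law of $e^{-Z}$ under the tilt. Combined with the mass at $\infty$, given by $\lim_{x\to\infty} h_L(x) = L(\{\infty\})$, this recovers $L$.

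Now $P(\Omega^{P,Q}_{\eps,\eps'}) = \Pr_{Z\sim \pld(P\parallel Q)}[Z \in I_{\eps,\eps'}]$ by definition. Similarly, $Q(\Omega^{Q,P}_{\eps,\eps'}) = \Pr_{Z\sim\pld(Q\parallel P)}[Z\in I_{\eps,\eps'}]$. Equality of the two PLDs gives the first identity immediately.

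For the second identity, I would express $P(\Omega^{Q,P}_{\eps,\eps'})$ in terms of $\pld(Q\parallel P)$. On $\Omega^{Q,P}_{\eps,\eps'}$ the log-ratio $\log\frac{dQ}{dP}$ is finite or $+\infty$. The set where it equals $+\infty$ is $P$-null, since it is the support of the singular part of $Q$ with respect to $P$. On the remaining set, $dP = e^{-\log(dQ/dP)}\,dQ$. Hence
\begin{align*}
P(\Omega^{Q,P}_{\eps,\eps'}) = \E_{Z \sim \pld(Q\parallel P)}\bigl[e^{-Z} 1(Z \in I_{\eps,\eps'}\cap\bR)\bigr],
\end{align*}
which is the Esscher tilt. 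By the same computation,
\begin{align*}
Q(\Omega^{P,Q}_{\eps,\eps'}) = \E_{Z\sim\pld(P\parallel Q)}\bigl[e^{-Z}1(Z\in I_{\eps,\eps'}\cap\bR)\bigr].
\end{align*}
Equality of the PLDs then gives the second identity.

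The main obstacle is the measure-theoretic care at the extended values $\pm\infty$. One point is the $\{\infty\}$ component of $I_{\eps,\infty}$. Another is checking that the Lebesgue-decomposition convention makes $dP = e^{-\ell}dQ$ valid off a null set. The step that the hockey-stick curve determines the PLD, including its atom at $\infty$, also needs care. A cleaner alternative avoids that step: apply the kernel $\kappa$ directly and use that a kernel mapping $(P,Q)$ to $(Q,P)$ preserves the likelihood-ratio law. This holds because Blackwell equivalence in both directions forces equality of the likelihood-ratio distributions.
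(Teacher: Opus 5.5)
Your argument is correct and is essentially the paper's: symmetry gives $H_x(P\parallel Q)=H_x(Q\parallel P)$, hence $\pld(P\parallel Q)=\pld(Q\parallel P)$. The paper cites a lemma of Zhu, Dong and Wang for that step, where you sketch how the PLD is recovered from its hockey-stick curve, and both identities are then read off from this equality; your Esscher-tilt computation spells out the second identity, which the paper only calls analogous. One slip in your sketch: the right derivative of $h_L$ at $x$ is $-\E_{Z\sim L}[e^{-Z}1(xe^{-Z}<1)]=-\E_{Z\sim L}[e^{-Z}1(Z>\log x)]$, so the indicator should be $1(e^{-Z}<1/x)$, not $1(e^{-Z}<x)$; this is not a boundary convention, but it does not affect the conclusion since $x\mapsto 1/x$ is a bijection.
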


\begin{proof}
We only prove the first equality as the second is analogous.

Let $F_{P, Q}$ and $F_{Q, P}$ denote the CDFs of $\PLD(P, Q)$ and $\PLD(Q, P)$, respectively.
Since the hockey-stick curves $H_{x}(P \parallel Q)$ and $H_{x}(Q \parallel P)$ are equal, \cite[Lemma 23]{ZhuDW22} implies that $F_{P, Q}$ and $F_{Q, P}$ are exactly the same. Thus, for $\eps' \ne \infty$ we have
\begin{align*}
P\Paren{\Omega^{P, Q}_{\eps,\eps'}} = \lim_{\teps \to (\eps)^-, \teps' \to (\eps')^-} F_{P, Q}(\teps') - F_{P, Q}(\teps) 
= \lim_{\teps \to (\eps)^-, \teps' \to (\eps')^-} F_{Q, P}(\teps') - F_{Q, P}(\teps) = Q\Paren{\Omega^{Q, P}_{\eps,\eps'}}.
\end{align*}
For $\eps' = \infty$, we similarly have
\begin{align*}
P\Paren{\Omega^{P, Q}_{\eps,\eps'}} = \lim_{\teps \to (\eps)^-} 1 - F_{P, Q}(\teps) 
= \lim_{\teps \to (\eps)^-} 1 - F_{Q, P}(\teps) = Q\Paren{\Omega^{Q, P}_{\eps,\eps'}}. &\qedhere
\end{align*}
\end{proof}

\subsubsection{Pessimistic Estimate Proof}

We are now ready to prove \Cref{thm:pessimistic}.

\begin{proof}[Proof of \Cref{thm:pessimistic}]
Suppose that $L = \pld(P, Q)$.
Since both $(P, Q)$ and $(\RR_{\bp, \beps}(0), \RR_{\bp, \beps}(1))$ are symmetric (\Cref{lem:mult-rr-sym}), by \Cref{lem:sym-check-only-large-alpha}, it suffices to show that $H_x(P \parallel Q) \leq H_x(\RR_{\bp, \beps}(0) \parallel \RR_{\bp, \beps}(1))$ for any $x \geq 1$.

For $i \in [n]$, let $P_i$ (resp. $Q_i$) denote the distribution of $P$ (resp. $Q$) conditioned on the value being in $\Omega^{P, Q}_{\eps_{i-1},\eps_i} \cup \Omega^{Q, P}_{\eps_{i-1},\eps_i}$. From \Cref{lem:inverse-cdf-same}, we have that $P$ and $Q$ can be written as mixture distributions $\sum_{i \in [n]} p_i \cdot P_i$ and $\sum_{i \in [n]} p_i \cdot Q_i$, respectively. Since $P_i$'s support are disjoint and $Q_i$'s support are disjoint (and the supports of $P_i, Q_j$ are disjoint for all $i \ne j$), we simply have
\begin{align*}
H_x(P \parallel Q) = \sum_{i \in [n]} p_i \cdot H_x(P_i \parallel Q_i).
\end{align*}

Similarly, we have
\begin{align*}
H_x(\RR_{\bp, \beps}(0) \parallel \RR_{\bp, \eps}(1)) = \sum_{i \in [n]} p_i \cdot H_x(\RR_{\eps_i}(0) \parallel \RR_{\eps_i}(1)).
\end{align*}
Thus, it suffices to show that\footnote{This essentially follows from \cite{KairouzOV15} but we repeat the full proof here for completeness.} $H_x(P_i \parallel Q_i) \leq H_x(\RR_{\eps_i}(0) \parallel \RR_{\eps_i}(1))$ for all $i \in [n]$ and $x \geq 1$. To see this is true, consider three cases:
\begin{itemize}
\item \textbf{Case I: $x \geq e^{\eps_i}$.} Then, $H_x(P_i \parallel Q_i) = 0 = H_x(\RR_{\eps_i}(0) \parallel \RR_{\eps_i}(1))$
\item \textbf{Case II: $1 \leq x \leq e^{\eps_{i - 1}}$.} In this case, we have $H_x(\RR_{\eps_i}(0) \parallel \RR_{\eps_i}(1)) = \frac{1 - x \cdot e^{-\eps_i}}{1 + e^{-\eps_i}}.$ Meanwhile, we have
\begin{align*}
H_x(P_i \parallel Q_i) &= P_i\Paren{\Omega^{P, Q}_{\eps_{i - 1},\eps_i}} - x \cdot Q_i\Paren{\Omega^{P, Q}_{\eps_{i - 1},\eps_i}} \\
&= \frac{1}{p_i} \cdot \left(P\Paren{\Omega^{P, Q}_{\eps_{i - 1},\eps_i}} - x \cdot Q\Paren{\Omega^{P, Q}_{\eps_{i - 1},\eps_i}}\right) \\
&= \frac{P\Paren{\Omega^{P, Q}_{\eps_{i - 1},\eps_i}} -x \cdot Q\Paren{\Omega^{P, Q}_{\eps_{i - 1},\eps_i}}}{P\Paren{\Omega^{P, Q}_{\eps_{i - 1},\eps_i}} + P\Paren{\Omega^{Q, P}_{\eps_{i - 1},\eps_i}}} \\
(\text{\Cref{lem:inverse-cdf-same}}) &= \frac{P\Paren{\Omega^{P, Q}_{\eps_{i - 1},\eps_i}} - x \cdot Q\Paren{\Omega^{P, Q}_{\eps_{i - 1},\eps_i}}}{P\Paren{\Omega^{P, Q}_{\eps_{i - 1},\eps_i}} + Q\Paren{\Omega^{P, Q}_{\eps_{i - 1},\eps_i}}} \\
&= \frac{x + 1}{1 + \frac{Q\Paren{\Omega^{P, Q}_{\eps_{i - 1},\eps_i}}}{P\Paren{\Omega^{P, Q}_{\eps_{i - 1},\eps_i}}}} - x \\
&\leq \frac{x + 1}{1 + e^{-\eps_i}} - x \\
&= \frac{1 - x \cdot e^{-\eps_i}}{1 + e^{-\eps_i}},
\end{align*}
where the inequality is due to the definition of $\Omega^{P, Q}_{\eps_{i - 1},\eps_i}$.
\item \textbf{Case III: $x \in \Paren{e^{\eps_{i - 1}}, e^{\eps_{i}}}$.} Since $H_{x}$ is convex in $x$, we have
\begin{align*}
H_x(P_i \parallel Q_i) 
&\leq \frac{x - e^{\eps_{i - 1}}}{e^{\eps_{i}} - e^{\eps_{i - 1}}} \cdot H_{e^{\eps_{i}}}(P_i \parallel Q_i) + \frac{e^{\eps_i} - x}{e^{\eps_{i}} - e^{\eps_{i - 1}}} \cdot H_{e^{\eps_{i-1}}}(P_i \parallel Q_i) \\
&\leq \frac{x - e^{\eps_{i - 1}}}{e^{\eps_{i}} - e^{\eps_{i - 1}}} \cdot H_{e^{\eps_{i}}}(\RR_{\eps_i}(0) \parallel \RR_{\eps_i}(1)) + \frac{e^{\eps_i} - x}{e^{\eps_{i}} - e^{\eps_{i - 1}}} \cdot H_{e^{\eps_{i-1}}}(\RR_{\eps_i}(0) \parallel \RR_{\eps_i}(1)) \\
&= H_{\alpha}(\RR_{\eps_i}(0) \parallel \RR_{\eps_i}(1)),
\end{align*}
where the second inequality follows from Cases I and II above, and the last equality is due to linearity of $H_x(\RR_{\eps_i}(0) \parallel \RR_{\eps_i}(1))$ for $x \in I_{e^{\eps_{i - 1}}, e^{\eps_{i}}}$. \qedhere
\end{itemize}
\end{proof}

\subsubsection{Optimistic Estimate Proof}

We next prove \Cref{thm:optimistic}. This proof is in fact simpler than the previous one and only two cases are required.

\begin{proof}[Proof of \Cref{thm:optimistic}]
Similar to the proof of \Cref{thm:pessimistic}, it suffices to show that $$H_x(P_i \parallel Q_i) \geq H_x(\RR_{\eps_{i-1}}(0) \parallel \RR_{\eps_{i-1}}(1))$$ for all $i \in [n]$ and $x \geq 1$. To see this is true, consider two cases:
\begin{itemize}
\item \textbf{Case I: $x \geq e^{\eps_{i - 1}}$.} In this case, we simply have $H_x(\RR_{\eps_{i-1}}(0) \parallel \RR_{\eps_{i-1}}(1)) = 0 \leq H_x(P_i \parallel Q_i)$.
\item \textbf{Case II: $x \in [1, e^{\eps_{i - 1}})$.} In this case, we have $H_x(\RR_{\eps_{i-1}}(0) \parallel \RR_{\eps_{i-1}}(1)) = \frac{1 - x \cdot e^{-\eps_{i-1}}}{1 + e^{-\eps_{i-1}}}.$ Meanwhile, similar to Case II in the proof of \Cref{thm:pessimistic}, we have
\begin{align*}
H_x(P_i \parallel Q_i) = \frac{x + 1}{1 + \frac{Q\Paren{\Omega^{P, Q}_{\eps_{i - 1},\eps_i}}}{P\Paren{\Omega^{P, Q}_{\eps_{i - 1},\eps_i}}}} - x 
\geq \frac{x + 1}{1 + e^{-\eps_{i-1}}} - x 
= \frac{1 - x \cdot e^{-\eps_{i-1}}}{1 + e^{-\eps_{i-1}}},
\end{align*}
where the inequality is due to the definition of $\Omega^{P, Q}_{\eps_{i - 1},\eps_i}$. \qedhere
\end{itemize}
\end{proof}

\subsection{From PLD Filter to Additive Filter}

Having established fundamental results on discretization, we can now state our ``additive'' privacy filter. In this filter, we have a budget $\bell = (\ell_1, \dots, \ell_n) \in \R^n_{\geq 0}$ where informally $\ell_i$ corresponds to the budget for $\RR_{\eps_i}$. When a mechanism comes, subtract its discretized privacy profile $\bp$ from the budget. If the budget remains non-negative in all coordinates (denoted by\footnote{Here we write $\bzero$ to denote the all-zero vector, and $\bu \geq \bv$ for vectors $\bu, \bv$ if the inequality holds coordinate-wise.} $\bell \geq \bzero$), we subtract it from the budget and return the response to the adversary. This is formalized in \Cref{alg:additive-profile-filter}.

\begin{algorithm}[tb]
    \caption{Additive Discretized PL Profile Filter}
    \label{alg:additive-profile-filter}
    \begin{algorithmic}
        \REQUIRE Discretization points $\beps \in \cE_n$, budgets $\bell \in \R^n_{\geq 0}$, analyst $\cA$, query capacity $k$, dataset $D$
        \FOR{$t \gets 1, \dots, k$}
            \STATE $\cA(Y_1, \dots, Y_{t - 1})$ gives mechanism $M_t$ that has PLD $L_t \in \cLsym$
            \STATE $\bp^t \gets $ Discretized privacy loss profile of $L_t$ w.r.t. $\beps$
            \STATE $\bell \gets \bell - \bp^t$
      \IF{$\bell \geq \bzero$}
        \YIELD $Y_t \sim M_t(D)$
      \ELSE
        \BREAK
      \ENDIF
        \ENDFOR
    \end{algorithmic}
\end{algorithm}

\paragraph{Setting of Parameters.} For the remainder of this section, we set the parameter $\beps$ as follows:
\begin{itemize}
\item Let $n = 1 + \left\lceil \log_2\Paren{\frac{k(\eps + 2)}{\eps}} \right\rceil$.
\item Let $\eps_n = \infty$
\item For all $i \in [n - 1]$, let $\eps_i = \frac{\eps + 2}{2^{n - 1 - i}}$.
\end{itemize}

Furthermore, let us define $\bell^* = (\ell^*_1, \dots, \ell^*_n)$ as follows:
\begin{align} \label{eq:additive-budget-values}
\ell^*_i =
\begin{cases}
k & \text{ if } i = 1 \\
8 \cdot \max \left\{\ell \in \N ~\middle\vert~ H_{e^{\eps}}(\RR_{\eps_{i-1}}^{\otimes (\ell - 1)}(0) \parallel \RR_{\eps_{i-1}}^{\otimes (\ell - 1)}(1)) \leq 2\delta \right\} &\text{ if } 1 < i < n, \\
4\delta & \text{ if } i = n
\end{cases}
\end{align}

\paragraph{Additive Profile Filter as Adversary.} Following notation in \Cref{sec:adversary_theory,sec:natural_filter}, we write the Additive Discretized PL Profile Filter (\Cref{alg:additive-profile-filter}) can be viewed as the rule
\begin{align*}
\AddFilter_{\beps, \bell}^k(L_1, \dots, L_t) = \{L_{t + 1} \in \cLsym : \disc_{\beps}(L_1) + \cdots + \disc_{\beps}(L_{t+1}) \leq \bell\}.
\end{align*}

Our main theorem in this subsection is that the Additive Discretized PL Profile Filter--when initialized with the above discretization points and privacy budgets--is less strict than the natural $(\eps, \delta)$-DP PLD filter: 

\begin{theorem} \label{thm:additive-filter-vs-pld-filter}
For $\bell = \bell^*$ as in \Cref{eq:additive-budget-values} and any $L_1, \dots, L_t \in \cLsym$, we have
\begin{align} \label{eq:filter-less-restricted}
\Nat_{\cLsym, R_{\eps, \delta}}^k(L_1, \dots, L_t) \subseteq \AddFilter_{\beps, \bell}^k(L_1, \dots, L_t).
\end{align}
\end{theorem}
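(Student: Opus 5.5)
The plan is to fix legal $L_1, \dots, L_t \in \cLsym$ and a candidate $L_{t+1} \in \Nat_{\cLsym, R_{\eps, \delta}}^k(L_1, \dots, L_t)$, so that $h_{L_1 \oplus \dots \oplus L_{t+1}}(e^\eps) \leq \delta$ with $t + 1 \leq k$. Writing $\bp^j := \disc_{\beps}(L_j)$ and $S_i := \sum_{j \leq t+1} p^j_i$, I will show $S_i \leq \ell^*_i$ for every coordinate $i$.

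\textbf{Step 1: lower-bound the composition by optimistic randomized response.} By \Cref{thm:optimistic}, each $L_j$ dominates $\pld(\RRd_{\bp^j, \beps}(0), \RRd_{\bp^j, \beps}(1))$. By \Cref{prop:composition_convolution} (monotonicity of convolution), $L_1 \oplus \dots \oplus L_{t+1}$ dominates the PLD of the non-adaptive product of these $t+1$ generalized randomized responses. In that product the index sequence $(i_1, \dots, i_{t+1})$ is drawn independently of the input and is revealed in the output. Therefore its hockey-stick divergence is a mixture:
\[
\delta \;\geq\; \E_{(i_j)}\Bigl[H_{e^\eps}\bigl(\textstyle\bigotimes_j \RR_{\eps_{i_j - 1}}(0) \,\big\|\, \bigotimes_j \RR_{\eps_{i_j-1}}(1)\bigr)\Bigr].
\]
Dropping coordinates is post-processing, so each inner term is at least the divergence of any sub-product. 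Let $N_i := |\{j : i_j = i\}|$. Then $N_i$ is a sum of independent Bernoulli$(p^j_i)$ variables with mean $S_i$.

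\textbf{Step 2: handle each coordinate.}
\begin{itemize}
\item[] \emph{Coordinate $i = 1$.} Since $p^j_1 \leq 1$ and $t + 1 \leq k$, we get $S_1 \leq k$ trivially.
\item[] \emph{Coordinate $i = n$.} Each round with $i_j = n$ contributes a factor $\RR_{\eps + 2}$, and
\[
H_{e^\eps}(\RR_{\eps+2}(0) \parallel \RR_{\eps+2}(1)) = \frac{1 - e^{-2}}{1 + e^{-\eps-2}} \geq \frac{1 - e^{-2}}{1 + e^{-2}} > 0.76 .
\]
Hence $\delta \geq 0.76 \cdot \Pr[N_n \geq 1] = 0.76\,\bigl(1 - \prod_j (1 - p^j_n)\bigr) \geq 0.76\,(1 - e^{-S_n})$.
  \begin{itemize}
  \item[] If $S_n > 1$, the right side exceeds $0.76 (1 - e^{-1}) > 0.48$, contradicting $\delta < 1/4$.
  \item[] Otherwise $1 - e^{-S_n} \geq (1 - e^{-1}) S_n$, which gives $S_n \leq \delta / 0.48 \leq 4\delta = \ell^*_n$.
  \end{itemize}
\item[] \emph{Middle coordinates $1 < i < n$.} Let $m$ be the maximum in \eqref{eq:additive-budget-values}, so that $\ell^*_i = 8m$. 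If the set defining $m$ is unbounded, there is nothing to prove. Note $m \geq 1$, since the empty product has divergence $0$. By maximality, $H_{e^\eps}(\RR_{\eps_{i-1}}^{\otimes m}(0) \parallel \RR_{\eps_{i-1}}^{\otimes m}(1)) > 2\delta$. By post-processing monotonicity in the number of copies, the same strict bound holds for every $N_i \geq m$. Thus $\delta > 2\delta \cdot \Pr[N_i \geq m]$, i.e. $\Pr[N_i \geq m] < 1/2$. (If $\delta = 0$, take the non-strict version: we get $\Pr[N_i \geq m] = 0$.) Suppose instead $S_i > 8m$. The multiplicative Chernoff bound gives
\[
\Pr[N_i < m] \leq \Pr[N_i < S_i/8] \leq \exp(-(7/8)^2 S_i / 2) \leq \exp(-(7/8)^2 \cdot 4) < 1/2 ,
\]
a contradiction. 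Hence $S_i \leq \ell^*_i$.
\end{itemize}

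\textbf{Main obstacle.} The main care point is Step 1: justifying rigorously that the hockey-stick divergence of the product is the index-averaged divergence and dominates sub-products. This follows from the disjoint-support decomposition already used in the proof of \Cref{thm:pessimistic} together with the Blackwell order under marginalization. A second care point is the edge cases $\delta = 0$ and an infinite $\ell^*_i$ in the middle coordinates.
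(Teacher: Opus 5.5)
Your proof is correct and follows essentially the same route as the paper: you reduce via \Cref{thm:optimistic} to a product of rounded-down generalized randomized responses, condition on the number of rounds whose index lands in bin $i$ (the paper phrases this as a post-processing that outputs the first $\tau$ such samples or $\bot$), and treat $i=1$, $i=n$, and $1<i<n$ separately. The differences are only cosmetic. You use a multiplicative Chernoff bound where the paper uses Chebyshev for the middle bins, and for $i=n$ you bound $S_n \le \delta/0.48$ directly, whereas the paper argues by contrapositive from $S_n > 4\delta$.
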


\begin{proof}
Consider any $L_1, \dots, L_{t + 1} \in \cLsym$, and let $\bp^1 = \disc_{\beps}(L_1), \dots, \bp^{t+1} = \disc_{\beps}(L_{t+1})$.
Note that \eqref{eq:filter-less-restricted} is equivalent to showing the following: if $h_{L_1 \oplus \cdots \oplus L_{t+1}}({e^\eps}) \leq \delta$, then we must have $p^1_i + \cdots + p^{t+1}_i \leq \ell^*_i$ for all $i \in [n]$. The case $i = 1$ is obvious since $\ell^*_1 = k \geq t$. We will henceforth only consider $i > 1$. 

We will show the contrapositive: Suppose that $p^1_i + \cdots + p^{t+1}_i > \ell^*_i$ for some $i \in \{2, \dots, n\}$. We will show that $h_{L_1 \oplus \cdots \oplus L_{t+1}}({e^\eps}) > \delta$. Recall (from \Cref{thm:optimistic}) that $L_j$ dominates $(\RRd_{\bp^j, \eps}(0), \RRd_{\bp^j, \eps}(1))$. Thus, it suffices to show that 
\begin{align}
H_{e^\eps}\left(\RRd_{\bp^1, \eps}(0) \times \cdots \times \RRd_{\bp^{t+1}, \eps}(0) ~\middle\|~ \RRd_{\bp^1, \eps}(1) \times \cdots \times \RRd_{\bp^{t+1}, \eps}(1)\right) > \delta. \label{eq:target-proof-additive}
\end{align}

For $X \in \{0, 1\}$, consider post-processing $(y^1, \dots, y^{t+1}) \sim \RRd_{\bp^1, \eps}(X) \times \cdots \times \RRd_{\bp^{t+1}, \eps}(X)$ as follows, where $\tau \in \N$ will be specified below.
\begin{itemize}
\item Recall that each $y^j = (y^j_1, y^j_2)$ is a tuple of two numbers: $y^j_1 \in [n]$ and $y^j_2 \in \{0, 1\}$. Let $j_1 < \dots < j_m$ be indices $j$ such that $y^{j} = i$.
\item If $m < \tau$, then output $\bot$. Otherwise, output $(y^{j_1}, \dots, y^{j_\tau})$.
\end{itemize}
Let $R(X)$ denote the distribution of the output of the above process, and let $q$ denote the probability that the above process does \emph{not} output $\bot$. (Note that $q$ does not depend on $X$.) Observe that, conditioned on \emph{not} outputting $\bot$, the distribution $R(X)$ is exactly the same as $\RR_{\eps_{i-1}}^{\otimes \tau}(X)$. Thus,
\begin{align}
&H_{e^\eps}\left(\RRd_{\bp^1, \eps}(0) \times \cdots \times \RRd_{\bp^{t+1}, \eps}(0) ~\middle\|~ \RRd_{\bp^1, \eps}(1) \times \cdots \times \RRd_{\bp^{t+1}, \eps}(1)\right) \nonumber \\
&\geq q \cdot H_{e^\eps}\left(\RR_{\eps_{i-1}}^{\otimes \tau}(0) ~\middle\|~ \RR_{\eps_{i-1}}^{\otimes \tau}(1)\right). \label{eq:lowerbound-hs-additive}
\end{align}

We will next consider two cases separately: $i = n$ and $1 < i < n$.
\paragraph{Case I: $i = n$.} In this case, set $\tau = 1$. Notice that $H_{e^\eps}(\RR_{\eps_{n-1}}^{\otimes \tau}(0) \parallel \RR_{\eps_{n-1}}^{\otimes \tau}(1)) = \frac{1 - e^{\eps - \eps_{n-1}}}{1 + e^{-\eps_{n-1}}}$. Due to our choice $\eps_{n-1} = \eps + 2$, we have $H_{e^\eps}(\RR_{\eps_{n-1}}^{\otimes \tau}(0) \parallel \RR_{\eps_{n-1}}^{\otimes \tau}(1)) > 0.5$. Observe also that
\begin{align*}
q &= 1 - (1 - p^1_n) \cdots (1  - p^{t+1}_n)
\geq 1 - e^{-p^1_n - \cdots -p^{t+1}_n}
\geq 1 - e^{-4\delta}
\geq 2\delta,
\end{align*}
where the first and last inequalities follow from the well-known\footnote{See e.g. \cite{log-ineq}.} identity $\frac{x}{1 + x} \leq \ln(1 + x) \leq x$ for all $x > -1$ (and from $\delta < 1/4$).
Combining the two inequalities and \eqref{eq:lowerbound-hs-additive}, we can conclude that \eqref{eq:target-proof-additive} holds (for $i = n$) as desired.

\paragraph{Case II: $1 < i < n$.} In this case, we set $\tau = \max \left\{\ell \in \N ~\middle\vert~ H_{e^{\eps}}\left(\RR_{\eps_{i-1}}^{\otimes (\ell - 1)}(0) ~\middle\|~ \RR_{\eps_{i-1}}^{\otimes (\ell - 1)}(1)\right) \leq 2\delta \right\}$. By the choice of $\tau$, we have $$H_{e^\eps}(\RR_{\eps_{i-1}}^{\otimes \tau}(0) \parallel \RR_{\eps_{i-1}}^{\otimes \tau}(1)) > 2\delta.$$

We will next bound $q$. Observe that $m$ is distributed exactly as $Z = Z_1 + \cdots + Z_{t + 1}$ where $Z_1 \sim \Ber(p^1_i), \dots, Z_{t + 1} \sim \Ber(p^{t + 1}_i)$. Thus, we have $\E[Z] = p^1_i + \cdots + p^{t + 1}_i > \ell^*_i$ and $\Var[Z] \leq \E[Z]$. Note that we set $\tau < \ell^*_i/2$. Thus, we can thus apply Chebyshev's inequality to conclude that
\begin{align*}
1 - q = \Pr[Z < \tau] \leq \Pr[Z < \E[Z]/2] \leq \frac{\Var[Z]}{(\E[Z]/2)^2} \leq \frac{4}{\E[Z]} < \frac{4}{\ell^*_i} < \frac{1}{2}.
\end{align*}
Combining the two inequalities and \eqref{eq:lowerbound-hs-additive}, we can conclude that \eqref{eq:target-proof-additive} holds as desired.
\end{proof}

\subsection{Analyzing Additive Discretized PL Profile Filter}

Due to \Cref{thm:additive-filter-vs-pld-filter}, it suffices to directly 
analyze the Additive Discretized PL Profile Filter. In particular, we will show the following:
\begin{theorem} \label{thm:main-ub-additive}
For any $\eps > 0, \delta \in (0, 1/4)$, let $\beps, \bell^*$ be as in \eqref{eq:additive-budget-values}. Then, 
\begin{align*}
\pld\Paren{\AddFilter_{\beps, \bell^*}^k} \preceq R_{\chi \cdot (\eps + 1), \chi \cdot \delta},
\end{align*}
where $$\chi = O\Paren{\Paren{\log k + \log\Paren{1 + \frac{1}{\eps}}} \cdot \log(1/\delta)}.$$ 
\end{theorem}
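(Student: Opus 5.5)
Following the proof overview, I will show that the view of any adversary interacting with the additive filter is, up to a small failure event, a post-processing of a fixed, non-adaptive product of randomized responses. The first step replaces each accepted query by its pessimistic discretization. By \Cref{thm:pessimistic}, $L_t \preceq \pld(\RR_{\bp^t,\beps}(0),\RR_{\bp^t,\beps}(1))$. The set of generalized randomized responses with profile $\bp^t$ satisfies the same rule constraint, so \Cref{prop:composition_convolution} (applied round by round, i.e.\ via \Cref{lemma:restricted_adversary}) lets me assume that every query $M_t$ is exactly $\RR_{\bp^t,\beps}$. The adversary then chooses $\bp^t$ adaptively, subject to $\sum_t \bp^t \le \bell^*$ coordinatewise.

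The second step is a simulation. I would realize $\RR_{\bp^t,\beps}(x)$ by first sampling the index $i_t\sim\bp^t$; this draw is data-independent. If $i_t=i$, the mechanism outputs the next unused sample from a pre-drawn stream of i.i.d.\ $\RR_{\eps_i}(x)$ bits for level $i$. The whole view is then a post-processing of the streams $(\RR_{\eps_i}(x))_{j\le N_i}$ for $i\in[n]$, provided the number of level-$i$ draws never exceeds $N_i$. The count of level-$i$ draws is a sum of adaptively chosen Bernoulli$(p^t_i)$ variables whose total mean is at most $\ell^*_i$. A Freedman/Bernstein martingale bound, or a multiplicative Chernoff bound for martingales, gives that the count exceeds $N_i:=O(\ell^*_i+\log(n/\delta))$ with probability at most $\delta/n$. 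For $i=n$, where $\ell^*_n=4\delta$, Markov's inequality shows that even a single draw has probability at most $4\delta$. If I route the overflow event, together with any level-$n$ draw, to a "fail" output, this adds $O(\delta)$ to the $\delta$ parameter. Level $1$ uses $\eps_1\le \eps/k$, so the $k$ draws cost $O(\eps)$ in total regardless of their number. Post-processing then gives
\[
\pld(\AddFilter^k_{\beps,\bell^*}) \preceq \textstyle\bigoplus_{1<i<n} \pld(\RR_{\eps_i}^{\otimes N_i}) \oplus R_{\eps,0}\oplus R_{0,O(\delta)} .
\]

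The third step, which I expect to be the main computational obstacle, bounds each level $1<i<n$. By definition of $\ell^*_i$ we have $H_{e^\eps}(\RR_{\eps_{i-1}}^{\otimes(\ell^*_i/8)})>2\delta$. This forces $\ell^*_i\eps_{i-1}^2 \lesssim \eps+\log(1/\delta)$: with $m=\ell^*_i/8$, the composed randomized response has privacy-loss mean about $m\eps_{i-1}^2/2$ and standard deviation about $\sqrt m\,\eps_{i-1}$. Exceeding threshold $\eps$ with probability above $2\delta$ then requires $m\eps_{i-1}^2 \lesssim \eps + \sqrt{m}\eps_{i-1}\sqrt{\log(1/\delta)}$, up to the case $\eps_{i-1}$ large, where $m=O(1)$. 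Because $\eps_i=2\eps_{i-1}$, I get $N_i\eps_i^2=O(\eps+\log(1/\delta)+\eps_i^2\log(n/\delta))$. Advanced composition (or an RDP/zCDP bound for randomized response) then shows that $\RR_{\eps_i}^{\otimes N_i}$ is $(\eps'_i,\delta/n)$-DP. Here $\eps'_i=O(N_i\eps_i^2+\eps_i\sqrt{N_i\log(n/\delta)})$, with the cap $\eps'_i\le N_i\eps_i$. In every regime this gives $\eps'_i=O((\eps+1)\log(1/\delta)\cdot\log(n/\delta)/\log(1/\delta))$. Care is needed for $\eps_i\le\eps+2$ (guaranteed by the construction) and for dominating the $\log n$ term. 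Summing over the $n-2=O(\log k+\log(1+1/\eps))$ levels yields total $\eps$-parameter $O(n(\eps+1)\log(1/\delta))=\chi(\eps+1)$. The $\delta$-parameters sum to $O(\delta)+n\cdot\delta/n\le\chi\delta$. The delicate points are the tail bound on the counts under adaptivity and tracking the per-level constants so that the logarithmic factors combine into $\chi$ as stated.
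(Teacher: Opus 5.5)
Your first two steps are essentially the paper's simulation (\Cref{alg:post-process-rr} with the kernels from \Cref{thm:pessimistic}, an adaptive tail bound on each level's count, Markov at level $n$, and $k\eps_1=O(\eps)$ at level $1$), and they are fine apart from the bookkeeping discussed below. The genuine gap is in step 3: you bound $\ell^*_i$ from the wrong half of its definition. The inequality $H_{e^\eps}(\RR_{\eps_{i-1}}^{\otimes(\ell^*_i/8)})>2\delta$ says that $\ell^*_i/8$ compositions are \emph{not} $(\eps,2\delta)$-DP. Since $H_{e^\eps}(\RR_{\eps_{i-1}}^{\otimes m})$ is nondecreasing in $m$, this holds for every sufficiently large $m$, so it gives a \emph{lower} bound on $\ell^*_i$ and never an upper bound. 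Your mean/standard-deviation heuristic is inverted in the same way: exceeding $\eps$ with probability above $2\delta$ happens when the mean is large; it does not require the mean to be small. The upper bound has to come from membership of $\ell^*_i/8$ in the set, i.e.\ $H_{e^\eps}(\RR_{\eps_{i-1}}^{\otimes(\ell^*_i/8-1)})\le 2\delta$. Even then, you still need a rigorous \emph{lower} bound on the hockey-stick divergence of composed randomized response. For example, apply Chebyshev to the privacy-loss sum: if its mean minus two standard deviations exceeds $\eps+2$, then $H_{e^\eps}>1/2>2\delta$. This yields $\ell^*_i\eps_{i-1}^2=O(\eps+1)$ when $\eps_{i-1}\le 1$. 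When $\eps_{i-1}>1$ it yields only $\ell^*_i\eps_{i-1}=O(\eps+1)$, not a bound on $\ell^*_i\eps_{i-1}^2$ and not $m=O(1)$, so that regime must go through your cap $\eps'_i\le N_i\eps_i$. The paper avoids any quantitative analysis of randomized response. It uses exactly the ``is DP'' half of the definition together with \Cref{lem:double-eps-dom}: the $12$-fold composition of $\RR_{\teps}$ dominates $\RR_{2\teps}$, which absorbs the step from $\eps_{i-1}$ to $\eps_i=2\eps_{i-1}$ at constant cost. It then applies basic composition over the $O(\log(1/\delta))$ blocks making up $b^*_i$ (\Cref{lem:each-rr-samplingbudget-dp}).

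A second, smaller gap is the $\log n$ bookkeeping. With failure probability and $\delta$-parameter $\delta/n$ per level, your per-level bound is $\eps'_i=O((\eps+1)\log(n/\delta))$. The sum is therefore $O(n(\eps+1)\log(n/\delta))$, not the $O(n(\eps+1)\log(1/\delta))$ you assert. For constant $\delta$ and $k\to\infty$ this exceeds $\chi$ by a $\log\log k$ factor. The fix is to spend $\delta$, not $\delta/n$, per level, as the paper does with its $4n\delta$ failure bound; the target $\chi\delta$ already carries a factor $n$.

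Two further points:
\begin{enumerate}
\item The truncation argument gives an additive bound on the hockey-stick divergence at each order (this is \Cref{lem:triangle}). It does not give the PLD domination by $\cdots\oplus R_{0,O(\delta)}$ that you display. The additive bound is all the $(\eps',\delta')$ conclusion needs, so state it that way.
\item In step 1, $\disc_\beps(\pld(\RR_{\bp,\beps}))$ is $\bp$ shifted up by one bin, because losses $\pm\eps_i$ land in $[\eps_i,\eps_{i+1})$. The reduced rule must therefore constrain the mixing weights $\bp^t$, not the replacement's own profile. Alternatively, as in the paper, keep the original queries and apply the kernels of \Cref{thm:pessimistic} inside the simulation.
\end{enumerate}

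Once repaired, your route differs genuinely from the paper's. You use additive-$\log$ sample budgets with advanced composition or zCDP, while the paper uses the multiplicative budget $b^*_i=2\lceil\log(1/\delta)\rceil\ell^*_i$ with basic composition. Your route plausibly yields a sharper dependence on $\delta$.
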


Before we prove \Cref{thm:main-ub-additive}, let us first note that it immediately implies the main theorem of this section (\Cref{thm:main-ub-apx}).

\begin{proof}[Proof of \Cref{thm:main-ub-apx}]
Combining \Cref{thm:additive-filter-vs-pld-filter,thm:main-ub-additive} respectively, we have
\begin{align*}
\pld\Paren{\Nat_{\cLsym, R_{\eps, \delta}}^k} \preceq \pld\Paren{\AddFilter_{\beps, \bell^*}^k} \preceq R_{\chi \cdot (\eps + 1), \chi \cdot \delta}. &\qedhere
\end{align*}
\end{proof}

The rest of this section is dedicated to the proof of \Cref{thm:main-ub-additive}.
Throughout this subsection, let $\bb^* \in \N^n$ be defined as follows (where $\bell^*$ is set as in \eqref{eq:additive-budget-values}):
\begin{align} \label{eq:sample-budget-setting}
b^*_i = 
\begin{cases}
0 &\text{ if } i = n \\
2\lceil \log(1/\delta) \rceil \cdot \ell^*_i &\text{ if } i \ne n.
\end{cases}
\end{align}

\subsubsection{Useful Lemmata}

We will also use the following version of ``triangle inequality''. Note that the assumption here is stronger than just assumption that the TV distance between $P, P'$ and $Q, Q'$ are smaller than $\gamma$. This stronger assumption allows us to derive an additive bound that is only $\gamma$, rather than a weaker bound $(x + 1) \cdot \gamma$ (if we were to only assume a bound on the TV distance). This is crucial for our proof of \Cref{thm:main-ub-additive}, which requires a large value of $x$.

\begin{lemma} \label{lem:triangle}
    Let $x, \gamma \geq 0$, and let $P, Q$ be distributions on domain $\Omega$. Furthermore, let $P', Q'$ be distributions on domain $\Omega \cup \{\omega^*\}$ such that the following holds: $P(E) \geq P'(E)$ and $Q(E) \geq Q'(E)$ for all $E \subseteq \Omega$, and $P'(\omega^*) \leq \gamma$. Then, $H_x(P \parallel Q) \leq H_x(P' \parallel Q') + \gamma$.
\end{lemma}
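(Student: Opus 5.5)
The plan is to argue directly from the definition $H_x(P \parallel Q) = \sup_{E \subseteq \Omega} P(E) - xQ(E)$ and to compare event by event. Since $P$ and $Q$ live on $\Omega$, the supremum on the left-hand side ranges only over events $E \subseteq \Omega$. For the right-hand side, the supremum may range over all subsets of $\Omega \cup \{\omega^*\}$; in particular, it dominates the value at any $E \subseteq \Omega$.

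Fix an arbitrary $E \subseteq \Omega$. I want to show
\begin{align*}
P(E) - xQ(E) \leq P'(E) - xQ'(E) + \gamma ,
\end{align*}
which after taking the supremum over $E$ gives the claim. The hypothesis $Q(E) \geq Q'(E)$ together with $x \geq 0$ gives $-xQ(E) \leq -xQ'(E)$, so this term is harmless and is where the factor $x$ disappears without any multiplicative loss. It remains to bound $P(E) - P'(E) \leq \gamma$. For this I use that $P'$ is a probability distribution on $\Omega \cup \{\omega^*\}$. Writing $E^c := \Omega \setminus E$, we have
\begin{align*}
P'(E) = 1 - P'(E^c) - P'(\omega^*) \geq 1 - P(E^c) - \gamma = P(E) - \gamma ,
\end{align*}
where the inequality applies the hypothesis $P'(E^c) \leq P(E^c)$ to the set $E^c \subseteq \Omega$, together with $P'(\omega^*) \leq \gamma$.

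Combining the two bounds gives $P(E) - xQ(E) \leq P'(E) - xQ'(E) + \gamma \leq H_x(P' \parallel Q') + \gamma$. Taking the supremum over $E \subseteq \Omega$ then finishes the proof.

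There is no real obstacle. The only point needing care is to apply the domination hypothesis to the complement $E^c$ rather than to $E$, so that the lower bound on $P'(E)$ loses only $\gamma$. Note that only the $Q$-side domination interacts with $x$, and it does so in the favorable direction.
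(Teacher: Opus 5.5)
Your proposal is correct and follows essentially the same argument as the paper. Both write $P(E) = 1 - P(\Omega \setminus E)$, apply the domination hypothesis to the complement and to $Q$ (using $x \geq 0$), and absorb $P'(\omega^*) \leq \gamma$ before taking the supremum over $E \subseteq \Omega$.
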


\begin{proof}
    We have
    \begin{align*}
        H_x(P \parallel Q) &= \sup_{E \subseteq \Omega} P(E) - x \cdot Q(E) \\
        &=  \sup_{E \subseteq \Omega} 1 - P(\Omega \setminus E) - x \cdot Q(E) \\
        &\leq \sup_{E \subseteq \Omega} 1 - P'(\Omega \setminus E) - x \cdot Q'(E) \\
        &= \sup_{E \subseteq \Omega} P'(\omega^*) + P'(E) - x \cdot Q'(E) \\
        &\leq P'(\omega^*) + H_x(P' \parallel Q') \\
        &\leq \gamma + H_x(P' \parallel Q'). \qedhere
    \end{align*}
\end{proof}

The next lemma asserts that, to check if a symmetric PLD dominates another, it suffices to check the hockey-divergence of order $x \geq 1$. This is unlike the (asymmetric) case where we need to check $x \in (0, 1)$ as well. This will be convenient for our proofs. We defer its (simple) proof to \Cref{app:symmetry}.

\begin{lemma} \label{lem:sym-check-only-large-alpha}
For any symmetric PLDs $L, L'$, $L \succeq L'$ iff $h_x(L) \geq h_x(L')$ 
for all $x \geq 1$.
\end{lemma}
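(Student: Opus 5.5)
The forward direction is immediate from \Cref{prop:f-hs-equivalence}: if $L \succeq L'$, then $h_L \succeq h_{L'}$ pointwise on all of $\bR^\times$, and in particular for $x \geq 1$. For the converse, the plan is to use symmetry to transfer the hockey-stick inequality from $[1,\infty)$ to $(0,1)$, and then invoke \Cref{prop:f-hs-equivalence} again. So suppose $h_L(x) \geq h_{L'}(x)$ for all $x \geq 1$. It remains to show the same inequality for $x \in (0,1)$.

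The key identity is the standard relation between $H_x(P \parallel Q)$ and $H_{1/x}(Q \parallel P)$. For any event $E$, $P(E) - xQ(E) = x\bigl(Q(E^c) - \tfrac{1}{x}P(E^c)\bigr) + 1 - x$. Taking suprema over $E$ (equivalently over $E^c$) gives
\begin{align*}
H_x(P \parallel Q) = 1 - x + x H_{1/x}(Q \parallel P).
\end{align*}
Next write $L = \pld(P \parallel Q)$ with $(P,Q)$ symmetric. Then $(P,Q)$ and $(Q,P)$ dominate each other in the Blackwell order, so \Cref{prop:f-hs-equivalence} yields $H_y(Q \parallel P) = H_y(P \parallel Q) = h_L(y)$ for every $y$. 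Hence, for $x \in (0,1)$,
\begin{align*}
h_L(x) = 1 - x + x\, h_L(1/x),
\end{align*}
and the analogous formula holds for $L'$.

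Since $1/x > 1$, the hypothesis gives $h_L(1/x) \geq h_{L'}(1/x)$. Multiplying by $x > 0$ and adding $1-x$ gives $h_L(x) \geq h_{L'}(x)$. Therefore $h_L \succeq h_{L'}$ on all of $\bR^\times$, and \Cref{prop:f-hs-equivalence} gives $L \succeq L'$.

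The only step needing care is the complement identity in the presence of singular parts or atoms at $\pm\infty$. Since it is derived directly from the definition of $H_x$ as a supremum over events, it holds for arbitrary pairs, so I expect no real obstacle. The boundary case $x = 1$ is already covered by the hypothesis.
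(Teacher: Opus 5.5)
Your proof is correct and follows essentially the same route as the paper. The paper also transfers the inequality from $x^{-1} > 1$ to $x \in (0,1)$ by combining the identity $H_x(Q \parallel P) = x H_{x^{-1}}(P \parallel Q) - x + 1$ (which it cites from Zhu et al.) with symmetry. The only difference is that you derive this identity yourself via the complement trick.
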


We will use the following ``Multiplicative Azuma's Inequality'' from \cite[Corollary 11]{KQ21}. This is an extension of the classic multiplicative Chernoff bound, with the difference that the distributions $\cD_i$ can depend on the previous outcomes $W_1, \dots, W_{i - 1}$. The only constraint required is that $\sum_{i \in [k]} \E_{W_i \sim \cD_i}[W_i] \leq \mu$ where $\mu$ is a prespecified parameter.

\begin{theorem}[Multiplicative Azuma's Inequality \cite{KQ21}] \label{thm:azuma}
Let $\mu > 0$.
Suppose that an adversary constructs random variables $W_1, \dots, W_k \in [0, 1]$ via the following process: When the outcomes of $W_1, \dots, W_{i - 1}$ are determined, the adversary can select the distribution $\cD_i$ with mean $\mu_i$ for $W_i$ under the constraint $\sum_{i \in [k]} \mu_i \leq \mu$.

If $W = \sum_{i} W_i$, then for any $\zeta > 0$,
\begin{align*}
\Pr[W \geq (1 + \zeta) \mu] \leq \exp\left(-\frac{\zeta^2}{2 + \zeta} \mu\right).
\end{align*}
\end{theorem}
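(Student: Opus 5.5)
The plan is the standard exponential-moment (Chernoff) argument, run as a supermartingale so that it tolerates the adaptive choice of the distributions $\cD_i$. The point to exploit is that the budget constraint $\sum_{i \in [k]} \mu_i \leq \mu$ holds on every sample path, even though each $\mu_i$ is itself random: it is determined by $W_1, \dots, W_{i-1}$. Let $\mathcal{F}_{i-1}$ denote the history generated by $W_1, \dots, W_{i-1}$.

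\textbf{Step 1 (one-step bound).} Fix $\lambda > 0$. Since $w \mapsto e^{\lambda w}$ is convex on $[0,1]$, we have $e^{\lambda w} \leq 1 + w(e^\lambda - 1)$ there. Conditioning on $\mathcal{F}_{i-1}$ therefore gives
\begin{align*}
\E\left[e^{\lambda W_i} \mid \mathcal{F}_{i-1}\right] \leq 1 + \mu_i (e^\lambda - 1) \leq \exp\left(\mu_i (e^\lambda - 1)\right).
\end{align*}

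\textbf{Step 2 (supermartingale).} Define
\begin{align*}
S_j := \exp\Bigl(\lambda \sum_{i \leq j} W_i - (e^\lambda - 1) \sum_{i \leq j} \mu_i\Bigr), \qquad S_0 = 1.
\end{align*}
Because $\mu_j$ is $\mathcal{F}_{j-1}$-measurable, Step 1 gives $\E[S_j \mid \mathcal{F}_{j-1}] \leq S_{j-1}$. Hence $(S_j)$ is a supermartingale and $\E[S_k] \leq 1$.

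\textbf{Step 3 (use the pathwise budget).} Since $\sum_i \mu_i \leq \mu$ almost surely, we get $e^{\lambda W} \leq S_k \, e^{(e^\lambda - 1)\mu}$, and so $\E[e^{\lambda W}] \leq e^{(e^\lambda - 1)\mu}$. Markov's inequality then yields
\begin{align*}
\Pr[W \geq (1+\zeta)\mu] \leq \exp\left((e^\lambda - 1)\mu - \lambda(1+\zeta)\mu\right).
\end{align*}

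\textbf{Step 4 (optimize $\lambda$).} Choose $\lambda = \ln(1+\zeta)$. The exponent becomes $\mu\bigl(\zeta - (1+\zeta)\ln(1+\zeta)\bigr)$. Apply the elementary bound $\ln(1+\zeta) \geq \frac{2\zeta}{2+\zeta}$ for $\zeta > 0$, which can be checked by comparing derivatives at $0$. This gives
\begin{align*}
\zeta - (1+\zeta)\ln(1+\zeta) \leq \zeta - \frac{2\zeta(1+\zeta)}{2+\zeta} = -\frac{\zeta^2}{2+\zeta},
\end{align*}
which is exactly the claimed bound.

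The only delicate point is Step 2, namely that the adaptive and random $\mu_i$ do not break the argument. This is handled by subtracting the running sum of the $\mu_i$ inside the exponent rather than the fixed $\mu$, and only invoking the deterministic cap $\sum_i \mu_i \leq \mu$ at the very end. If the adversary stops early, we pad with $W_i \equiv 0$ and $\mu_i = 0$, which changes nothing.
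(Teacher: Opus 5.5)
Your proof is correct. The paper gives no proof of this statement; it imports it as Corollary~11 of \cite{KQ21}, where it is derived from a multiplicative form of Azuma's inequality for supermartingales. Your argument is a self-contained Chernoff-style proof of exactly the special case the paper uses. Two choices make it work: you subtract the predictable running sum $(e^\lambda - 1)\sum_{i \le j}\mu_i$ inside the exponent, and you use the pathwise cap $\sum_i \mu_i \le \mu$ only at the end. That is what makes the adaptive choice of the $\mathcal{D}_i$ harmless. Taking $\lambda = \ln(1+\zeta)$ then recovers the stated constant. Citing \cite{KQ21} buys the paper brevity and a more general bounded-increment statement. Your route buys verifiability in a few lines.

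Two small corrections. First, the inequality $\ln(1+\zeta) \ge \frac{2\zeta}{2+\zeta}$ is not established by comparing derivatives at $0$. The two sides agree in value, first derivative and second derivative there; their expansions first differ at order $\zeta^3$. The clean argument is to set $g(\zeta) := \ln(1+\zeta) - \frac{2\zeta}{2+\zeta}$ and note that $g(0) = 0$ and $g'(\zeta) = \frac{\zeta^2}{(1+\zeta)(2+\zeta)^2} \ge 0$ for all $\zeta \ge 0$.

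Second, take $\mathcal{F}_{i-1}$ to be the full history available to the adversary before $W_i$ is drawn, not just $\sigma(W_1, \dots, W_{i-1})$. This history includes the adversary's own randomness and any auxiliary outcomes. Otherwise, for a randomized adversary, $\mu_i$ need not be $\mathcal{F}_{i-1}$-measurable. Your argument only uses that $\mu_i$ is $\mathcal{F}_{i-1}$-measurable and that $\E[W_i \mid \mathcal{F}_{i-1}] \le \mu_i$, so it goes through verbatim with the larger filtration. This generality is what the paper actually needs in \Cref{lem:post-rr-truncation-failure-prob}. There, $p^t_i$ depends on $Y_1, \dots, Y_{t-1}$ and on all previously sampled indices, not only on the indicators $W_s$.
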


It will also be convenient to state the following version of the tail bound for a similar setting as above but when $\mu$ is very small.

\begin{lemma} \label{lem:adaptive-markov}
Let the setting be exactly the same as in \Cref{thm:azuma}. Then, $\Pr[W \geq \mu / \zeta] \leq \zeta$.
\end{lemma}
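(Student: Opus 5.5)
The plan is to reduce Lemma~\ref{lem:adaptive-markov} to Markov's inequality applied to the sum $W=\sum_i W_i$. The adaptivity should be harmless here, because Markov only needs a bound on the expectation of $W$, and that expectation can be controlled by linearity over conditional means. So the one thing to establish is $\E[W] \leq \mu$.

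First, I would formalize the process with the filtration $\cF_{i-1}$ generated by $W_1,\dots,W_{i-1}$. The adversary's choice of $\cD_i$ is $\cF_{i-1}$-measurable, so its mean $\mu_i$ is an $\cF_{i-1}$-measurable random variable satisfying $\E[W_i \mid \cF_{i-1}] = \mu_i$. The constraint $\sum_i \mu_i \leq \mu$ is assumed to hold pathwise, i.e.\ almost surely. This is the same reading under which \Cref{thm:azuma} is stated.

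Second, I would compute the expectation by the tower property:
\begin{align*}
\E[W] = \sum_{i \in [k]} \E\bigl[\E[W_i \mid \cF_{i-1}]\bigr] = \E\Bigl[\sum_{i \in [k]} \mu_i\Bigr] \leq \mu .
\end{align*}
Because every $W_i \in [0,1]$, each term is nonnegative and bounded, so there are no integrability issues even if $k=\infty$; in that case monotone convergence handles the sum.

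Third, since $W \geq 0$, Markov's inequality gives
\begin{align*}
\Pr[W \geq \mu/\zeta] \leq \frac{\E[W]}{\mu/\zeta} \leq \zeta ,
\end{align*}
which is the claim.

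The only step needing care is the first: the constraint on the $\mu_i$ must be read as holding for every realization of the adaptive choices, not merely in expectation. With that reading the argument above is complete, and I expect no real obstacle.
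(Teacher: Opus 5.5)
Your proposal is correct and matches the paper's proof, which is simply Markov's inequality applied to the nonnegative $W$ using $\E[W] \leq \mu$. Your tower-property derivation of that expectation bound, under the pathwise reading of $\sum_i \mu_i \leq \mu$, just spells out the step the paper leaves implicit.
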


\begin{proof}
This is simply an application of Markov's inequality, since $\E[W] \leq \mu$.
\end{proof}

\subsubsection{Privacy Analysis of RR: Comparing Different $\eps$'s}

Before we proceed to the main argument, we need some technical results. The main lemma below states that, if we take $\teps$-DP randomized response and compose it 12 times, then its pair dominates that of $2\teps$-DP randomized response. This result seems interesting on its own and might be useful beyond the context of this work. We remark that we did not attempt to optimize the constant ``12'' here; nevertheless, it is not hard to see that any constant less than 6 does \emph{not} work (by taking $\eps \to 0$ and looking at the hockey-stick divergence of order $x = 1$, i.e. the TV distance).

\begin{lemma} \label{lem:double-eps-dom}
For any $\teps \geq 0$,  $(\RR^{\otimes 12}_{\teps}(0), \RR^{\otimes 12}_{\teps}(1)) \succeq (\RR_{2\teps}(0), \RR_{2\teps}(1))$.
\end{lemma}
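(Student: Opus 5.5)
The plan is to replace a comparison of whole curves with a single-point test: reduce the Blackwell comparison to one type I/type II point and check it with a majority-vote test on the twelve outputs. Both pairs are symmetric. $\RR_{2\teps}$ is symmetric by the bit-flip kernel of \Cref{lem:mult-rr-sym}. The 12-fold product is symmetric because the same kernel, applied coordinatewise, swaps the two product distributions.

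\textbf{Step 1: reduce to one point of the tradeoff curve.} Let $f$ denote the tradeoff curve of $(\RR^{\otimes 12}_{\teps}(0), \RR^{\otimes 12}_{\teps}(1))$ and set $\alpha^* := 1/(1+e^{2\teps})$. The tradeoff curve of $\RR_{2\teps}$ is
\begin{align*}
g(\alpha) = \max\{1 - e^{2\teps}\alpha,\; e^{-2\teps}(1-\alpha),\; 0\},
\end{align*}
and its two linear pieces meet at $(\alpha^*, \alpha^*)$. By \Cref{prop:f-hs-equivalence}, the claim is equivalent to $f \le g$ pointwise. I will show that $f(\alpha^*) \le \alpha^*$ suffices.
\begin{itemize}
\item On $[0,\alpha^*]$: $f$ is convex with $f(0)\le 1$, so $f$ lies below the chord from $(0,1)$ to $(\alpha^*,\alpha^*)$. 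That chord is exactly the line $1-e^{2\teps}\alpha$.
\item On $[\alpha^*,1]$: for symmetric pairs $f = f^{-1}$. Reflecting the previous bound across the diagonal gives $f \le e^{-2\teps}(1-\alpha)$ there.
\item Nonnegativity of $f$ is automatic, so the $0$ branch of $g$ causes no trouble.
\end{itemize}
So it suffices to exhibit a (randomized) test on the 12 outputs whose type I and type II errors are both at most $\alpha^*$.

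\textbf{Step 2: the test.} Let $S$ be the number of output bits equal to $1$, and let $p := 1/(1+e^{\teps})$ be the flip probability of each coordinate. Use majority vote on $S$, breaking the tie $S=6$ by a fair coin. By symmetry the two error probabilities coincide and equal
\begin{align*}
E(\teps) := \Pr[\mathrm{Bin}(12,p) \ge 7] + \tfrac12 \Pr[\mathrm{Bin}(12,p) = 6].
\end{align*}
The lemma therefore reduces to the scalar inequality $E(\teps) \le 1/(1+e^{2\teps})$ for all $\teps \ge 0$.

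\textbf{Step 3: verify the scalar inequality (main obstacle).} I would change variables to $u := e^{-\teps} \in (0,1]$. Then $p = u/(1+u)$, and the inequality becomes the polynomial inequality
\begin{align*}
(1+u^2)\sum_{j} c_j u^j \le (1+u)^{12},
\end{align*}
where $\sum_j c_j u^j$ is $(1+u)^{12}E$ written as a polynomial: $c_j = \binom{12}{j}$ for $j \ge 7$, $c_6 = \tfrac12\binom{12}{6}$, and $c_j = 0$ for $j < 6$. Three checks follow.
\begin{itemize}
\item Degree check. Both sides are degree-14 polynomials with nonnegative coefficients, so I would compare them via the difference polynomial, which should have a sign pattern allowing a direct argument such as a Descartes-type count.
\item Endpoint $u = 1$. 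Both sides equal $2^{13}$ because $E(0) = 1/2$, so the endpoint is tight.
\item First order at $u=1$. Here I need the derivative of $E$ at $\teps = 0$ to be at most that of $1/(1+e^{2\teps})$, i.e. $E'(0) \le -1/2$. The majority advantage gives $-E'(0) = 12\cdot\binom{11}{5}2^{-11}\cdot\tfrac14 \approx 0.68 > 1/2$, so the inequality holds to first order. This matches the remark that constants below $6$ fail.
\item Small $u$ (large $\teps$). The left side is about $462\,u^6$, which is far smaller than $u^2$.
\end{itemize}
If the polynomial sign-pattern argument does not close cleanly, I would split the range of $u$. Near $u = 1$ I would use a Taylor bound with an explicit remainder. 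Away from $u=1$ I would use the crude tail estimate $E \le 2^{12} u^6/(1+u)^6$ together with monotonicity. Any intermediate window left over can be handled by monotone interval checks.
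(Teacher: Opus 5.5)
Your route is correct and differs from the paper's in the reduction step. The paper works with hockey-stick curves. It restricts to $x \ge 1$ by symmetry, treats $x \ge e^{2\teps}$ trivially, and checks $x = 1$ (total variation) by an explicit binomial computation. It then fills in $x \in (1, e^{2\teps})$ using the fact that, because $12$ is even, every privacy loss of $\RR_{\teps}^{\otimes 12}$ is a multiple of $2\teps$, so that curve is \emph{linear} on $[1, e^{2\teps}]$. This extra lattice argument is needed because in hockey-stick space convexity works against you: a convex curve lies below its chords. In tradeoff space convexity works for you, so your one-point reduction needs no lattice structure and in fact no symmetry. On $[\alpha^*, 1]$ you can skip the reflection and use convexity with $f(1) = 0$ directly: $f(\alpha) \le \frac{1-\alpha}{1-\alpha^*}\alpha^* = e^{-2\teps}(1-\alpha)$. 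Your reduction therefore shows more generally that any pair admitting a test with both errors at most $1/(1+e^{2\teps})$ dominates $\RR_{2\teps}$. Your scalar inequality is exactly the paper's $x = 1$ case. The fair-tie majority test is Bayes-optimal for equal priors, so $2E(\teps) = 1 - \mathrm{TV}(\RR_{\teps}^{\otimes 12})$, while $1/(1+e^{2\teps}) = (1 - \mathrm{TV}(\RR_{2\teps}))/2$.

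The real gap is that Step 3 is never carried out, and the displayed target is mis-stated. Since $1/(1+e^{2\teps}) = u^2/(1+u^2)$, the inequality you need is $(1+u^2)\sum_j c_j u^j \le u^2(1+u)^{12}$. As written, without the $u^2$, your inequality is equivalent to $E \le 1/(1+e^{-2\teps})$, which holds trivially because $E \le 1/2$ and so does not suffice. Also, at $u = 1$ both sides equal $2^{12}$, not $2^{13}$. With the correct right-hand side, your sign-pattern idea closes at once, because the difference is
\[
u^2(1+u)^{12} - (1+u^2)\sum_j c_j u^j = u^2(1-u^{10}) + 12u^3(1-u^8) + 66u^4(1-u^6) + 220u^5(1-u^4) + 33u^6(1-u^2),
\]
which is nonnegative on $(0,1]$. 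It has exactly one sign change, so by Descartes its only positive root is $u = 1$. The Taylor-remainder and interval-splitting fallbacks are therefore unnecessary. With this computation inserted, your proof is complete.
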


\begin{proof}
Let $t = 12$.
Since both $(\RR^{\otimes t}_{\teps}(0), \RR^{\otimes t}_{\teps}(1))$ and $(\RR_{2\teps}(0), \RR_{2\teps}(1))$ are symmetric, by \Cref{lem:sym-check-only-large-alpha}, it suffices to show that $H_x(\RR_{2\teps}(0) \parallel \RR_{2\teps}(1)) \leq H_x(\RR^{\otimes t}_{\teps}(0) \parallel \RR^{\otimes t}_{\teps}(1))$ for any $x \geq 1$. To see this is true, consider three cases:
\begin{itemize}
\item \textbf{Case I: $x \geq e^{2\teps}$.} Then, $H_x(\RR_{2\teps}(0) \parallel \RR_{2\teps}(1)) = 0 \leq H_x(\RR^{\otimes t}_{\teps}(0) \parallel \RR^{\otimes t}_{\teps}(1))$.
\item \textbf{Case II: $x = 1$.} Note that $H_x$ is simply the TV distance. Thus, $H_x(\RR_{2\teps}(0) \parallel \RR_{2\teps}(1)) = \frac{e^{2\teps} - 1}{e^{2\teps} + 1}.$ 

To lower bound $H_x(\RR^{\otimes t}_{\teps}(0) \parallel \RR^{\otimes t}_{\teps}(1))$, let
$S_i$ denote the set of all tuples in $\{0, 1\}^t$ such that the number of ones is exactly $i$. Let $P = \RR^{\otimes t}_{\teps}(0), Q = \RR^{\otimes t}_{\teps}(1)$ We thus have
\begin{align*}
H_x(\RR^{\otimes t}_{\teps}(0) \parallel \RR^{\otimes t}_{\teps}(1)) &\geq 
\sum_{i=0}^{t/2-1} \Paren{P(S_i) - Q(S_i)} \\
&= \sum_{i=0}^{t/2-1} \binom{t}{i} \left(\frac{e^{(t - i) \teps} - e^{i\teps}}{(e^{\teps}+1)^t}\right) \\
&= \sum_{i=0}^{t/2-1} \binom{t}{i}  \left(\frac{e^{i\teps}\Paren{e^{(t - 2i) \teps} - 1}}{(e^{\teps}+1)^t}\right) \\
&\geq \frac{e^{2\teps} - 1}{(e^{\teps}+1)^t} \Paren{\binom{t}{t/2-1}e^{(t/2-1)\teps} + \sum_{i=0}^{t/2-2} \binom{t}{i}  \Paren{e^{i\teps}\Paren{e^{(t - 2i - 2) \teps} + 1}}}.
\end{align*}
Thus, we have
\begin{align*}
&\frac{H_x(\RR^{\otimes t}_{\teps}(0) \parallel \RR^{\otimes t}_{\teps}(1))}{H_x(\RR_{2\teps}(0) \parallel \RR_{2\teps}(1))} \\
&= \frac{e^{2\teps} + 1}{(e^{\teps}+1)^t} \Paren{\binom{t}{t/2-1}e^{(t/2-1)\teps} + \sum_{i=0}^{t/2-2} \binom{t}{i}  \Paren{e^{i\teps}\Paren{e^{(t - 2i - 2) \teps} + 1}}} \\
&= \frac{1}{(e^{\teps}+1)^t} \Paren{\binom{t}{t/2-1}\Paren{e^{(t/2+1)\teps} + e^{(t/2-1)\teps}} + \sum_{i=0}^{t/2-2} \binom{t}{i}  \Paren{e^{(t-i)\teps} + e^{i\teps} + e^{(t-i-2)\teps} + e^{(i + 2)\teps}}} \\
&\geq \frac{1}{(e^{\teps}+1)^t} \Paren{\binom{t}{t/2-1}\Paren{e^{(t/2+1)\teps} + e^{(t/2-1)\teps}} + \sum_{i=0}^{t/2-2} \binom{t}{i}  \Paren{e^{(t-i)\teps} + e^{i\teps}} + 2\binom{t}{t/2-2} e^{(t/2)\teps}} \\
&= \frac{1}{(e^{\teps}+1)^t} \Paren{(e^{\teps}+1)^t + \Paren{2\binom{t}{t/2-2} - \binom{t}{t/2}} e^{(t/2)\teps}} \\
&\geq 1,
\end{align*}
where the last inequality is due to our choice $t = 12$.
\item \textbf{Case III: $x \in \Paren{1, e^{2\teps}}$.} Since the privacy loss values for both $\pld(\RR^{\otimes t}_{\teps}(0), \RR^{\otimes t}_{\teps}(1))$ and $\pld(\RR_{2\teps}(0), \RR_{2\teps}(1))$ are multiples\footnote{Note that for $\pld(\RR^{\otimes t}_{\teps}(0), \RR^{\otimes t}_{\teps}(1))$, this follows from the fact that $t$ is even.} of $2\teps$, the hockey stick curves $H_x$ of both pairs are linear for $x \in \left[1, e^{2\teps}\right]$. Thus, we have
\begin{align*}
&H_x(\RR^{\otimes t}_{\teps}(0) \parallel \RR^{\otimes t}_{\teps}(1)) \\ 
&= \frac{x - 1}{e^{2\teps} - 1} \cdot H_{e^{2\teps}}(\RR^{\otimes t}_{\teps}(0) \parallel \RR^{\otimes t}_{\teps}(1)) + \frac{e^{2\teps} - x}{e^{2\teps} - 1} \cdot H_1(\RR^{\otimes t}_{\teps}(0) \parallel \RR^{\otimes t}_{\teps}(1)) \\
&\geq \frac{x - 1}{e^{2\teps} - 1} \cdot H_{e^{2\teps}}(\RR_{2\teps}(0) \parallel \RR_{2\teps}(1)) + \frac{e^{2\teps} - x}{e^{2\teps} - 1} \cdot H_1(\RR_{2\teps}(0) \parallel \RR_{2\teps}(1)) \\
&= H_{x}(\RR_{2\teps}(0) \parallel \RR_{2\teps}(1)),
\end{align*}
where the inequality follows from Cases I and II above. \qedhere
\end{itemize}
\end{proof}

Using the above lemma together with our setting of parameters, it is simple to derive the following, which will be more convenient in a subsequent usage.

\begin{lemma} \label{lem:each-rr-samplingbudget-dp}
For every $i \in [n]$, $\RR_{\eps_i}^{\otimes b^*_i}$ is $\left(2\lceil \log(1/\delta) \rceil \cdot (9\eps + 2), 32\lceil \log(1/\delta) \rceil \cdot \delta\right)$-DP.
\end{lemma}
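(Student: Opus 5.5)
We argue coordinate by coordinate over $i \in [n]$, splitting into the three regimes used to define $\bell^*$ in \eqref{eq:additive-budget-values}. Write $c := \lceil \log(1/\delta) \rceil$, so that $b^*_i = 2c\,\ell^*_i$ for $i \neq n$.

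\textbf{The two easy cases.} For $i = n$ we have $b^*_n = 0$. Then $\RR_{\eps_n}^{\otimes 0}$ is the trivial mechanism, its PLD is $\Iddist$, and the claim holds trivially.

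For $i = 1$ we use pure-DP composition. From $n - 1 \ge 1 + \log_2(k(\eps+2)/\eps)$, that is $n - 2 \ge \log_2(k(\eps+2)/\eps)$, we get
\begin{align*}
\eps_1 = \frac{\eps+2}{2^{n-2}} \le \frac{\eps}{k}.
\end{align*}
Since $b^*_1 = 2ck$, the basic composition $R_{\eps_1} \oplus \dots \oplus R_{\eps_1} \preceq R_{b^*_1 \eps_1}$ gives $2c\eps$-DP. This is well within $(2c(9\eps+2), 32c\delta)$-DP.

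\textbf{The main case $1 < i < n$.} Let $\ell$ be the maximizer in the definition of $\ell^*_i$, so $\ell^*_i = 8\ell$ and $b^*_i = 16c\,\ell$. We have $\ell \ge 1$, because the $\ell - 1 = 0$ composition has $H_{e^\eps} = 0$. Set $m := \ell - 1$. By the definition of $\ell$, $\RR_{\eps_{i-1}}^{\otimes m}$ is $(\eps, 2\delta)$-DP, since $H_{e^\eps} \le 2\delta$ is exactly that condition.

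Whenever $i - 1 \ge 2$ we have $\eps_i = 2\eps_{i-1}$. So \Cref{lem:double-eps-dom} together with \Cref{prop:composition_convolution} gives
\begin{align*}
\RR_{\eps_i}^{\otimes r} \preceq \RR_{\eps_{i-1}}^{\otimes 12 r}.
\end{align*}

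The target splits naturally as $2c$ groups. Each group consists of $8\ell = 8m + 8$ copies of $\RR_{\eps_i}$ and must be $(9\eps + 2, 16\delta)$-DP; basic $(\eps,\delta)$-composition over the $2c$ groups then gives the statement. The intended accounting for one group is:
\begin{itemize}
    \item eight chunks, each controlled through $\RR_{\eps_{i-1}}^{\otimes m}$, contributing $8 \cdot (\eps, 2\delta) = (8\eps, 16\delta)$;
    \item a leftover contributing the remaining $(\eps + 2, 0)$.
\end{itemize}
For the leftover I would use that $\eps_i \le \eps_{n-1} = \eps + 2$, i.e. that a single copy of $\RR_{\eps_i}$ is $(\eps+2)$-DP.

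\textbf{Main obstacle.} The difficulty is matching $\RR_{\eps_i}$-blocks of length about $m$ to $\RR_{\eps_{i-1}}$-blocks of length $m$ without losing the factor $12$ from \Cref{lem:double-eps-dom}. A naive comparison $\RR_{\eps_i}^{\otimes 8\ell} \preceq \RR_{\eps_{i-1}}^{\otimes 96\ell}$ followed by grouping into blocks of $m$ would give constants around $192\eps$, which are too large.

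I would first try to exploit the maximality of $\ell$ more finely. Because the privacy loss of $\RR^{\otimes r}$ is a scaled binomial, the hockey-stick divergence at $e^\eps$ of $\RR_{2\tilde\eps}^{\otimes r}$ can be compared directly with that of $\RR_{\tilde\eps}^{\otimes 4r}$ at a shifted order. This would show that $\RR_{\eps_i}^{\otimes m}$ is $(\eps + O(1), O(\delta))$-DP, with the additive $O(1)$ absorbed into the ``$+2$'' and the $9$.

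I would also handle the boundary case $i = 2$ separately: there $\eps_1 = (\eps+2)/2^{n-2}$, so $\eps_2 = 2\eps_1$ still holds, but $\eps_1$ is tiny. I expect the precise constants $9$ and $32$ to require this refined comparison, and that is the step I would budget the most effort on.
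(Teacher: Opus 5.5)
Your cases $i=n$ and $i=1$ match the paper's. There is one slip, which the paper shares: $n-1=\lceil\log_2(k(\eps+2)/\eps)\rceil$ only yields $\eps_1\le 2\eps/k$ (e.g.\ $\eps_1=3/2$ when $k=\eps=1$). So you get $4c\eps$-DP, writing $c=\lceil\log(1/\delta)\rceil$ as you do, and this is still inside the target.

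The genuine gap is the case $1<i<n$. The step that should transfer the $(\eps,2\delta)$ guarantee of $\RR_{\eps_{i-1}}^{\otimes m}$ to blocks of $\RR_{\eps_i}=\RR_{2\eps_{i-1}}$ is never supplied; the ``comparison at a shifted order'' is only a hope. The bookkeeping around it also does not add up:
\begin{itemize}
\item After eight chunks of $m$ copies, a group of $8\ell=8m+8$ copies still has \emph{eight} copies of $\RR_{\eps_i}$ left over. These cost up to $8(\eps+2)$, not $\eps+2$.
\item An $(\eps+O(1),O(\delta))$ bound per chunk would contribute $8\cdot O(1)$, which the ``$+2$'' cannot absorb.
\end{itemize}

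More fundamentally, no argument can reach these constants, because the statement as written fails when the maximizer in \eqref{eq:additive-budget-values} is $\ell=1$. Take $\eps=1$, $\delta=10^{-4}$, any $k$, and $i=n-1$, so that $n\ge 3$, $\eps_{n-1}=3$ and $\eps_{n-2}=3/2$.
\begin{itemize}
\item We have $H_{e^{\eps}}(\RR_{3/2}(0)\parallel\RR_{3/2}(1))=(1-e^{-1/2})/(1+e^{-3/2})\approx 0.32>2\delta$. Hence $\ell^*_{n-1}=8$ and $b^*_{n-1}=16c$.
\item The claim therefore says $\RR_3^{\otimes 16c}$ is $(22c,32c\delta)$-DP, where $32c\delta<0.05$.
\item Its privacy loss is $3(16c-2B)$ with $B\sim\mathrm{Bin}(16c,1/(1+e^3))$. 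This has mean $48c\tanh(3/2)\approx 43.4c$ and variance $\approx 26c$.
\item Chebyshev then gives $H_{e^{22c}}(\RR_3^{\otimes 16c}(0)\parallel\RR_3^{\otimes 16c}(1))>1/2$, contradicting the claim.
\end{itemize}

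The paper's Case III has exactly the hole you flagged. It asserts, ``from the setting of $\ell^*_i$'', that $\RR_{\eps_i}^{\otimes(\ell^*_i-1)/8}$ is $(\eps,2\delta)$-DP, and then composes eight such blocks with one $(\eps+2)$-DP copy. But the definition only concerns $\RR_{\eps_{i-1}}^{\otimes(\ell^*_i/8-1)}$, and \Cref{lem:double-eps-dom} is never invoked.

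What your ingredients do prove is a version with larger constants, and this still suffices for \Cref{thm:main-ub-additive} up to the constant hidden in $\chi$:
\begin{itemize}
\item If $\ell\le 12$, pure composition gives $b^*_i\eps_i\le 192c(\eps+2)$.
\item If $\ell\ge 13$, set $r=\lfloor(\ell-1)/12\rfloor\ge 1$. By \Cref{lem:double-eps-dom}, $\RR_{\eps_i}^{\otimes r}\preceq\RR_{\eps_{i-1}}^{\otimes 12r}\preceq\RR_{\eps_{i-1}}^{\otimes(\ell-1)}$, which is $(\eps,2\delta)$-DP. Since $\ell\le 24r$, the $b^*_i=16c\ell$ copies split into at most $384c$ such blocks, giving $(384c\,\eps,768c\,\delta)$-DP.
\end{itemize}
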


\begin{proof}
Consider three cases:
\begin{itemize}
\item \textbf{Case I: $i = n$.} This case is obvious since $b^*_i = 0$. 
\item \textbf{Case II: $i = 1$.} In this case, $\ell^*_i = k$ and $\eps_i \leq \frac{\eps}{k}$. Thus, by our setting of $b^*_i = 2\lceil \log(1/\delta) \rceil \cdot \ell^*_i$ and basic composition, $\RR^{\otimes b_i}_{\eps_i}$ is $\Paren{2\lceil \log(1/\delta) \rceil \cdot \eps}$-DP as desired.
\item \textbf{Case III: $1 < i < n$.} From the setting of $\ell^*_i$ in \eqref{eq:additive-budget-values}, $\RR_{\eps_{i}}^{\otimes \frac{\ell^*_i -1}{8}}$ is $(\eps, 2\delta)$-DP. Since $\eps_i \leq \eps + 2$, by basic composition, $\RR_{\eps_{i}}^{\otimes \ell^*_i}$ is $(9\eps + 2, 16\delta)$-DP. Thus, by our setting of $b^*_i = 2\lceil \log(1/\delta) \rceil \cdot \ell^*_i$ and basic composition, $\RR^{\otimes b_i}_{\eps_i}$ is $\left(2\lceil \log(1/\delta) \rceil \cdot (9\eps + 2), 32\lceil \log(1/\delta) \rceil \cdot \delta\right)$-DP as desired. \qedhere 
\end{itemize}
\end{proof}

\subsubsection{Formulation as Post-Processing of RR}

As alluded to earlier, we will prove \Cref{thm:main-ub-additive} by showing that, with high probability, the filter's output is a post-processing of randomized response answers. To do this, recall from \Cref{sec:preliminaries} that it suffices to consider two neighbouring datasets $D_0, D_1$. Our post-processing procedure is described in full in \Cref{alg:post-process-rr}. It takes as input the randomized response outputs on a bit $h \in \{0, 1\}$, and its output (approximately) matches the output of the additive filter. To formalize this, let $\ViewAddFilter(D)$ denote the output distribution of the Additive Discretized Privacy Loss Filter (\Cref{alg:additive-profile-filter}) on input dataset $D$ with analyst $\cA$ (where other parameters are omitted from the notation for readability).


\begin{algorithm}[tb]
    \caption{Post-processing Algorithm}
    \label{alg:post-process-rr}
    \begin{algorithmic}
        \REQUIRE Discretization points $\beps \in \cE_n$, budgets $\bell \in \R^n_{\geq 0}$, analyst $\cA$, query capacity $k$, neighbouring datasets $D_0, D_1$, independent samples $y_i^j \sim \RR_{\eps_i}(h)$ (where $h \in \{0, 1\}$)
        \STATE $\bj \gets \bzero$ \algcomment{$n$-dimensional all-zero vector; indices for current RR samples}
        \FOR{$t \gets 1, \dots, k$}
            \STATE $\cA(Y_1, \dots, Y_{t - 1})$ gives mechanism $M_t$ that has PLD $L_t \in \cLsym$
            \STATE $\bp^t \gets $ Discretized privacy loss profile of $L_t$ w.r.t. $\beps$
            \STATE $\bell \gets \bell - \bp^t$
            \STATE $\kappa_t \gets$ Markov kernel with $\kappa_t\Paren{\RR_{\bp^t, \beps}(h')} = M_t(D_{h'})$ for $h' \in \{0,1\}$ \algcomment{exists due to \Cref{thm:pessimistic}}
            \IF{$\bell \geq \bzero$}
            \STATE $i_t \gets$ sample according to $\bp^t$
            \STATE $j_{i_t} \gets j_{i_t} + 1$
            \IF{$j_{i_t} > b_{i_t}$}
            \RETURN FAIL \algcomment{Exceed budget; fail and terminate the algorithm}
            \ENDIF
            \STATE Sample $Y_t \sim \kappa_t\Paren{i_t, y^{j_{i_t}}}$
            \ELSE
            \STATE $Y_t \gets \bot$
            \ENDIF
        \ENDFOR
        \RETURN $(Y_1, \ldots, Y_k)$
    \end{algorithmic}
\end{algorithm}

Our post-processing procedure (\Cref{alg:post-process-rr}) has the parameter $\bb = (b_1, \dots, b_n)$ representing the ``sample budgets'' (i.e. how many randomized response samples are allowed to be used). To analyze the algorithm, first note that when there is no sample budgets (i.e. $b_1 = \cdots = b_n = \infty$), the output is exactly the same as the Additive Discretized Privacy Loss Filter:

\begin{observation}
    \label{obs:filter-as-postproc-rr}
    When we set $b_i = \infty$ for all $i \in [n]$, the output of \Cref{alg:post-process-rr} has exactly the same distribution as $\ViewAddFilter(D_h)$.
\end{observation}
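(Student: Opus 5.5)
We will prove \Cref{obs:filter-as-postproc-rr} by coupling, induction on the round index $t$. Conditioned on the history $Y_1, \dots, Y_{t-1}$, the two procedures agree on all deterministic bookkeeping: the analyst $\cA$ issues the same mechanism $M_t$, and both compute the same profile $\bp^t$ and the same updated budget $\bell$. So it suffices to show that the conditional distribution of $Y_t$ given the history is the same in both procedures.

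There are two cases.

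\emph{Budget exhausted.} If $\bell \geq \bzero$ fails, \Cref{alg:additive-profile-filter} breaks. \Cref{alg:post-process-rr} instead outputs $\bot$ from then on. Because the budget only decreases and every later coordinate is also $\bot$, this trailing-$\bot$ encoding is in bijection with the halted transcript. We identify the two outputs under this encoding.

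\emph{Budget remains.} With $b_i = \infty$, the FAIL branch is never triggered. The algorithm draws $i_t \sim \bp^t$ and then uses the sample $y_{i_t}^{j_{i_t}}$. This sample is a fresh draw from $\RR_{\eps_{i_t}}(h)$: each pair $(i, j)$ is used at most once because the counter $j_i$ strictly increases, and all the $y_i^j$ are drawn independently of everything else.

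The key step is independence from the past. The sample consumed at round $t$ has not influenced $Y_1, \dots, Y_{t-1}$. Hence, conditionally on the history, the pair $(i_t, y_{i_t}^{j_{i_t}})$ is distributed exactly as $\RR_{\bp^t, \beps}(h)$. Applying the kernel $\kappa_t$ then gives $Y_t \sim \kappa_t(\RR_{\bp^t, \beps}(h)) = M_t(D_h)$, which is the conditional law of $Y_t$ in the filter.

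Note that we only need $\kappa_t$ to map $\RR_{\bp^t,\beps}(h')$ to the pair $(M_t(D_0), M_t(D_1))$ for the fixed datasets. This map exists by \Cref{thm:pessimistic} together with the definition of the Blackwell order. It is also measurable in the history, since it depends only on $M_t$.

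Chaining these conditional laws over $t = 1, \dots, k$ gives equality of the joint distributions. The only delicate point is the fresh-sample/independence argument, and the induction above handles it.
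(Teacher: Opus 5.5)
Your proposal is correct and follows the same route as the paper: with $b_i = \infty$ the FAIL branch never fires, the consumed pair $(i_t, y_{i_t}^{j_{i_t}})$ is distributed as $\RR_{\bp^t, \beps}(h)$, and pushing it through $\kappa_t$ yields $M_t(D_h)$ round by round. You also make explicit two points the paper leaves implicit: that the consumed sample is fresh and independent of the history, and that the trailing-$\bot$ output of \Cref{alg:post-process-rr} corresponds to the halted transcript of \Cref{alg:additive-profile-filter}.
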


\begin{proof}
Since $b_i = \infty$ for all $i \in [n]$, we never output FAIL. Thus, for each $t \in [k]$, $(i_t, y^{j_{i_t}})$ has exactly the same distribution as $\RR_{\bp^t, \beps}(h)$. From the definition of $\kappa_t$, $Y_t \sim \kappa_t(i_t, y^{j_{i_t}})$ has exactly the same distribution as $M_t(D_h)$. Thus, the entire process is exactly the same as $\ViewAddFilter(D_h)$.
\end{proof}

We next show that, even when we set the budget as in \eqref{eq:additive-budget-values}, the algorithm barely fails:

\begin{lemma} \label{lem:post-rr-truncation-failure-prob}
When we set $\ell_i = \ell^*_i$ and $b_i = b^*_i$ for all $i \in [n]$ (where $\bell^*, \bb^*$ are as specified in \eqref{eq:additive-budget-values} and \eqref{eq:sample-budget-setting}, respectively), \Cref{alg:post-process-rr} outputs FAIL with probability at most $4n \cdot \delta$.
\end{lemma}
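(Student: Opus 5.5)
A union bound over the $n$ coordinates reduces the claim to a per-coordinate statement: for each $i \in [n]$, the probability that the counter $j_i$ exceeds $b^*_i$ is at most $4\delta$. The algorithm can only output FAIL at some round $t$ when $i_t = i$ and $j_i$ increments past $b_i$. So FAIL implies that, for some $i$, the total number of rounds $t$ at which the filter accepted (that is, $\bell \geq \bzero$ after the subtraction) and $i_t = i$ is larger than $b^*_i$. For each $i$, define $W^{(i)}_t := 1(\text{round } t \text{ accepted and } i_t = i) \in \{0,1\}$. Conditional on the history, the analyst's choice of $M_t$ is determined, and so are $\bp^t$ and whether round $t$ is accepted. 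Hence the conditional mean of $W^{(i)}_t$ is $p^t_i \cdot 1(\text{accepted})$. Because $\bell$ must stay nonnegative on every accepted round, the sum of these conditional means over all rounds is at most $\ell^*_i$ deterministically. This places us exactly in the adaptive setting of \Cref{thm:azuma} and \Cref{lem:adaptive-markov} with $\mu = \ell^*_i$.

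For $i = n$ we have $\ell^*_n = 4\delta$ and $b^*_n = 0$, so FAIL at coordinate $n$ means $W^{(n)} \geq 1$. \Cref{lem:adaptive-markov} with $\zeta = 4\delta$ (so $\mu/\zeta = 1$) bounds this by $4\delta$. One subtlety is that the lemma is stated with a strictly positive $\mu$; if some coordinates have $\ell^*_i = 0$, we handle them by noting that no acceptance can then place positive mass on them, or by perturbing $\mu$ slightly.

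For $i < n$ we have $b^*_i = 2\lceil\log(1/\delta)\rceil \ell^*_i$, and we apply \Cref{thm:azuma} with $\mu = \ell^*_i$ and $1+\zeta = 2\lceil\log(1/\delta)\rceil$. The bound is $\exp(-\frac{\zeta^2}{2+\zeta}\ell^*_i)$. Since $\zeta \geq 1$ when $\delta < 1/4$ (using natural log, $\lceil\log 4\rceil = 2$, so $\zeta \ge 3$), we have $\frac{\zeta^2}{2+\zeta} \geq \zeta/3$. The case $\zeta \geq 1$ gives $\zeta^2/(2+\zeta) \geq \zeta/3$, and here $\zeta \ge 2\log(1/\delta) - 1$. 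We also have $\ell^*_i \geq 1$: for $i = 1$ this is because $\ell^*_1 = k$, and for $1 < i < n$ because the maximum is attained at $\ell \geq 1$ (the zero-fold composition has divergence $0 \leq 2\delta$), so $\ell^*_i \geq 8$. With $\ell^*_i \geq 8$, the exponent is at least $\frac{8}{3}(2\log(1/\delta) - 1)$, which gives a bound of at most $\delta$ for $\delta < 1/4$. For $i = 1$ with $k$ possibly small, $\ell^*_1 = k \ge 1$ yields $\exp(-(2\log(1/\delta)-1)/3)$. That is weaker than $\delta$, so instead I would use the exact form with $\zeta \approx 2\log(1/\delta)$: here $\zeta^2/(2+\zeta) \geq \log(1/\delta)$ holds once $\zeta \ge 2\log(1/\delta) - 1$ and $\log(1/\delta) \ge \log 4$, which should give at most $\delta \leq 4\delta$ after checking constants.

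Summing the $n$ per-coordinate bounds, each at most $4\delta$, gives the claimed $4n\delta$. The main obstacle is justifying the adaptive martingale setup rigorously: the distribution of $W^{(i)}_t$ must depend only on past outcomes, and the budget constraint must hold almost surely rather than only in expectation. Both follow because acceptance is decided by the deterministic budget check before $i_t$ is sampled, and because sampled indices of rejected rounds are never counted. A secondary obstacle is the constant bookkeeping in the $i = 1$ Azuma case; if it falls short, I would fall back on the slack $4\delta$ per coordinate.
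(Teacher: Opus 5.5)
Your proposal is correct and follows the paper's proof: a union bound over coordinates, \Cref{thm:azuma} for $i<n$ (the acceptance rule forces the $p^t_i$ summed over accepted rounds to be at most $\ell^*_i$ on every path), and \Cref{lem:adaptive-markov} with $\mu/\zeta = 1$ for $i=n$. Your choice $1+\zeta = 2\lceil\log(1/\delta)\rceil$ matches $b^*_i$ exactly, whereas the paper's $\zeta = 6\lceil\log(1/\delta)\rceil - 1$ corresponds to the threshold $3b^*_i$, so your use of $\ell^*_i \geq 8$ for $1<i<n$ is what actually makes the constants close. For $i=1$ you can drop the constant check altogether: your claimed inequality $\zeta^2/(2+\zeta) \geq \log(1/\delta)$ fails for $\log(1/\delta) \in (1.8, 2]$ (the $4\delta$ slack still covers that range), but at most $k$ indices are ever sampled and $b^*_1 = 2\lceil\log(1/\delta)\rceil k \geq k$, so coordinate $1$ can never trigger FAIL.
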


\begin{proof}
The algorithm outputs FAIL only when $j_i > b_i$ for some $i \in [n]$. Thus, by the union bound, it suffices to show that $\Pr[j_i \geq b_i] \leq 4\delta$ for every $i \in [n]$. To do this, we consider two cases:
\begin{itemize}
\item \textbf{Case I: $i < n$.} Let $\cD_t$ denote the distribution $\Ber(p^t_i)$ and $W_t$ denote $\bone[i_t = i]$. For notational convenience, we also assume that, if the algorithm outputs FAIL in the $t_0$-th iteration, then we simply set $p^t_i = 0, \cD_t = \Ber(0)$ for $t \in \{t_0 + 1, \dots, k\}$.

Under these notations, we have $W_t \sim \cD_t$ and $W = W_1 + \cdots + W_k$ is exactly $j_i$. Finally, due to the constraint of the filter, we have $\E[W_1] + \cdots + \E[W_k] = p^1_i + \cdots + p^k_i \leq \ell^*_i$. Thus, applying \Cref{thm:azuma} with $\zeta = 6\lceil \log(1/\delta) \rceil - 1$ and observing that $\frac{\zeta^2}{2 + \zeta} \geq \frac{\zeta + 1}{6} \geq \lceil \log(1/\delta) \rceil$, we have
\begin{align*}
\Pr\left[j_i > b^*_i\right] = \Pr[W > (1 + \zeta) \cdot \ell^*_i] \leq \exp(-\log(1/\delta)) = \delta.
\end{align*}
\item \textbf{Case II: $i = n$.} Using the same notation as before and recalling that $b_i^* = 0$ and $\ell^*_i = 4\delta$ in this case, we have
\begin{align*}
\Pr\left[j_i > b^*_i\right] = \Pr[j_i \geq 1] = \Pr[W \geq 1] & \leq \ell^*_i = 4\delta,
\end{align*}
where the inequality is due to \Cref{lem:adaptive-markov}. \qedhere
\end{itemize}
\end{proof}

\subsubsection{Putting Things Together: Proof of \Cref{thm:main-ub-additive}}

Combing all the pieces together, we can now prove \Cref{thm:main-ub-additive}.

\begin{proof}[Proof of \Cref{thm:main-ub-additive}]
Let $\eps' = n \cdot 2\lceil \log(1/\delta) \rceil \cdot (9\eps + 2)$ and $\delta' = n \cdot 32\lceil \log(1/\delta) \rceil \cdot \delta + 4n \delta$. We will show that the $\pld\Paren{\AddFilter_{\beps, \bell^*}^k} \preceq R_{\eps',\delta'}$.

Let $D_0, D_1$ be any neighbouring input datasets. From \Cref{obs:filter-as-postproc-rr}, $\ViewAddFilter(D_h)$ is exactly the distribution of the output of \Cref{alg:post-process-rr} with $b_i = \infty$. Consider also the output distribution of \Cref{alg:post-process-rr} with $b_i = b^*_i$ as defined in \eqref{eq:sample-budget-setting}; let us call the resulting distribution $P'(D_h)$. 

By \Cref{lem:post-rr-truncation-failure-prob}, $(P, Q) = (\ViewAddFilter(D_0), \ViewAddFilter(D_1))$ and $(P', Q') = (P'(D_0), P'(D_1))$ satisfies the conditions in \Cref{lem:triangle} with $\omega^* = \text{FAIL}$ and $\gamma = 4n\delta$. Thus, \Cref{lem:triangle} implies that
\begin{align*}
H_{e^{\eps'}}(\ViewAddFilter(D_0) \parallel \ViewAddFilter(D_1)) \leq H_{e^{\eps'}}(P'(D_0) \parallel P'(D_1)) + 4n \delta.
\end{align*}
To bound $H_{e^{\eps'}}(P'(D_0) \parallel P'(D_1))$, notice that, since we set $b_i = b^*_i$ for all $i \in [n]$, $P'(D_h)$ is simply a post-processing of $\RR_{\eps_1}^{\otimes b^*_1}(h) \otimes \cdots \otimes \RR_{\eps_n}^{\otimes b^*_n}(h)$. Thus, we have
\begin{align*}
H_{e^{\eps'}}(P'(D_0) \parallel P'(D_1)) &\leq H_{e^{\eps'}}\left(\RR_{\eps_1}^{\otimes b^*_1}(0) \otimes \cdots \otimes \RR_{\eps_n}^{\otimes b^*_n}(0) ~\middle\|~ \RR_{\eps_1}^{\otimes b^*_1}(1) \otimes \cdots \otimes \RR_{\eps_n}^{\otimes b^*_n}(1)\right).
\end{align*}
Finally, from \Cref{lem:each-rr-samplingbudget-dp} and basic composition, we can conclude that $\RR_{\eps_1}^{\otimes b^*_1} \otimes \cdots \otimes \RR_{\eps_n}^{\otimes b^*_n}$ is $\Paren{\eps', n \cdot 32\lceil \log(1/\delta) \rceil \cdot \delta}$-DP. In other words,
\begin{align*}
H_{e^{\eps'}}\left(\RR_{\eps_1}^{\otimes b^*_1}(0) \otimes \cdots \otimes \RR_{\eps_n}^{\otimes b^*_n}(0) ~\middle\|~ \RR_{\eps_1}^{\otimes b^*_1}(1) \otimes \cdots \otimes \RR_{\eps_n}^{\otimes b^*_n}(1)\right) \leq n \cdot 32\lceil \log(1/\delta) \rceil \cdot \delta.
\end{align*}
Combining the three inequalities, we get
\begin{align*}
H_{e^{\eps'}}(\ViewAddFilter(D_0) \parallel \ViewAddFilter(D_1)) \leq n \cdot 32\lceil \log(1/\delta) \rceil \cdot \delta + 4n \delta  = \delta'.
\end{align*}
Thus, $\pld\Paren{\AddFilter_{\beps, \bell^*}^k} \preceq R_{\eps',\delta'}$ as claimed.
\end{proof}

%% file: sections/total_ordering_proof.tex
\section{A Proof of \Cref{lemma:commutativity_implies_total_ordering}}
\label{sec:total_ordering_proof}

In this section, we give a proof of the key technical \Cref{lemma:commutativity_implies_total_ordering} that goes through the topology of hockey-stick curves. That is, we are given non-degenerate PLDs $L_1$ and $L_2$ that are not ordered. Our goal is to argue that, for one of these two PLDs $L_i$, $L_i \oplus \sup\{L_1, L_2\} \succ \sup\{L_i \oplus L_1, L_i \oplus L_2\}$.

Our first step is to show that gaps between PLDs can be amplified in some sense by composition. Recall that the support of a distribution $A$ over $\overline{\bR}$, is $\supp(A) := \{t \in \mathbb{R} : \forall \text{open } U \ni t, A(U) > 0\}$, i.e. the finite points with locally positive probability. It turns out that any PLD whose support can be shifted to hit gaps between other PLDs causes the gaps to align when composed.

\begin{proposition}[PLD Gap Amplification]
    \label{prop:pld_gap_amplification}
    Let $A, L, L_1, L_2$ be PLDs so that $L_i \prec L$
    with $h_{L_i}(e^{\varepsilon_i + u}) < h_L(e^{\varepsilon_i + u})$ for some $\varepsilon_1, \varepsilon_2 \in \supp(A)$ and some $u \in \mathbb{R}$. Then
    \begin{align*}
        A \oplus L \succ \sup\{A \oplus L_1, A \oplus L_2\}.
    \end{align*}
\end{proposition}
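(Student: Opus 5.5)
The plan is to work entirely with hockey-stick curves via \Cref{prop:sup-conv} and \Cref{prop:f-hs-equivalence}. Since $L_i \preceq L$, \Cref{prop:pld_convolution_properties} gives $A \oplus L_i \preceq A \oplus L$, so $\sup\{A \oplus L_1, A \oplus L_2\} \preceq A \oplus L$ comes for free. For strictness it suffices to exhibit one $x \in \bR^\times$ with
\begin{align*}
\max\{h_{A \oplus L_1}(x), h_{A \oplus L_2}(x)\} < h_{A \oplus L}(x),
\end{align*}
because by \Cref{prop:sup-conv}(i) the left side is $h_{\sup\{A\oplus L_1, A\oplus L_2\}}(x)$. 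The natural choice is $x := e^u$. As in the proof of \Cref{lemma:total_ordering_implies_commutativity}, we have $h_{A \oplus L'}(x) = \E_{Z \sim A}[h_{L'}(x e^{Z})]$ for any PLD $L'$. This formula comes from conditioning on the $A$-coordinate and using \Cref{prop:pld_to_hs}. Note the sign convention: the paper's formula is written $h_{L'}(xe^{-Z})$ with $Z$ drawn from the first factor. I will re-check which sign is correct, and if necessary replace $u$ by $-u$, or use the symmetric formula obtained by putting $L'$ first. The rest of the argument is insensitive to this choice.

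Fix $i \in \{1,2\}$. Since $h_{L_i} \le h_L$ pointwise, it is enough to show
\begin{align*}
\E_{Z\sim A}\bigl[h_L(e^{u+Z}) - h_{L_i}(e^{u+Z})\bigr] > 0.
\end{align*}
The integrand is nonnegative and continuous in $Z$ on $\bR$, since hockey-stick curves are convex on $\bR^\times$ and hence continuous there. At $Z = \varepsilon_i$ it is strictly positive by hypothesis. By continuity it stays strictly positive on an open neighbourhood $U \ni \varepsilon_i$. Because $\varepsilon_i \in \supp(A)$, we have $A(U) > 0$, so the expectation is strictly positive.

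The points $Z = \pm\infty$ carry no danger. The integrand there equals the difference of limits, which are $h(\infty\text{-side}) = 0$ or $1$ and are common to both curves, and in any case it is nonnegative. Applying the argument to both $i=1$ and $i=2$ at the same $x = e^u$ gives $h_{A\oplus L_i}(e^u) < h_{A\oplus L}(e^u)$ for each $i$, hence the strict inequality for the maximum.

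The main obstacle is purely bookkeeping. I need to get the composition formula for hockey-stick curves of a convolution with the correct sign and orientation, consistent with how $\varepsilon_i + u$ appears in the hypothesis, and to confirm continuity of the integrand in $Z$ together with the handling of the atoms of $A$ at $\pm\infty$. Both are addressed above, and there is no genuine analytic difficulty beyond them.
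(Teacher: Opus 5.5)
\textbf{The composition formula has the wrong sign, and the argument depends on it.} From $h_{A\oplus L'}(x)=\E_{Z_1\sim A,\,Z_2\sim L'}[(1-xe^{-(Z_1+Z_2)})_+]$ you get $h_{A\oplus L'}(x)=\E_{Z\sim A}[h_{L'}(xe^{-Z})]$, exactly as in the paper, not $h_{L'}(xe^{Z})$.

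With the correct formula and $x=e^u$, the integrand near $Z=\varepsilon_i$ evaluates the curves at $e^{u-\varepsilon_i}$, where the hypothesis says nothing. To reach the known gap at $e^{\varepsilon_i+u}$ you would need $Z$ near $-\varepsilon_i$. That point need not lie in $\supp(A)$: for example, $A$ could be a PLD supported on $\{1,2\}$, which is legitimate since $\E[e^{-Z}]<1$.

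Your proposed repairs do not work:
\begin{itemize}
\item Replacing $u$ by $-u$ does not move the evaluation points onto the gaps.
\item Putting $L'$ first gives $\E_{Z\sim L'}[h_A(xe^{-Z})]$. This never isolates the difference $h_L-h_{L_i}$.
\item Pairing the support point $\varepsilon_i$ with the gap for $L_i$ forces $x=e^{2\varepsilon_i+u}$. That depends on $i$, so you no longer have a single $x$ beating both $h_{A\oplus L_1}$ and $h_{A\oplus L_2}$, unless $\varepsilon_1=\varepsilon_2$.
\end{itemize}
So the claim that the rest of the argument is insensitive to this choice is false.

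\textbf{The missing idea is a cross pairing of support points with gaps.} Take $x=e^{\varepsilon^\star}$ with $\varepsilon^\star=\varepsilon_1+\varepsilon_2+u$.
\begin{itemize}
\item For $L_1$, use $Z$ near $\varepsilon_2$: then $xe^{-Z}$ is near $e^{\varepsilon_1+u}$.
\item For $L_2$, use $Z$ near $\varepsilon_1$: then $xe^{-Z}$ is near $e^{\varepsilon_2+u}$.
\end{itemize}
Both points lie in $\supp(A)$ by hypothesis, so one $x$ serves both $i$. The reflection $Z\mapsto -Z$ in the composition formula is exactly undone by swapping the two support points. This is why the hypothesis involves two support points and a common shift $u$, and it is the content of the paper's proof.

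With this choice, the rest of your argument goes through unchanged and matches the paper: nonnegativity from $L_i\preceq L$, continuity of hockey-stick curves on $\bR^\times$, positive $A$-mass of a neighbourhood of a support point, and the harmless atom at $+\infty$ (where both curves tend to $1$).
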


\begin{proof}
    First, it is clear that $A \oplus L \succeq \sup\{A \oplus L_1, A \oplus L_2\}$, so we just need to find a single gap.

    To that end, by continuity, we can find a constant $\rho > 0$ s.t $\inf_{\varepsilon \in (\varepsilon_i + u \pm \rho)} h_L(e^\varepsilon) - h_{L_i}(e^\varepsilon) > 0$. Now, set $\varepsilon^\star := \varepsilon_1 + \varepsilon_2 + u$ and notice that
    \begin{align*}
        h_{A \oplus L}(e^{\varepsilon^\star}) - h_{A \oplus L_i}(e^{\varepsilon^\star})
            & = \mathbb{E}_{Z \sim A \oplus L}[(1 - e^{\varepsilon^\star - Z})_+] - \mathbb{E}_{Z \sim A \oplus L_i}[(1 - e^{\varepsilon^\star - Z})_+] \\
            & = \mathbb{E}_{Z_1 \sim A}[\mathbb{E}_{Z_2 \sim L}[(1 - e^{(\varepsilon^\star - Z_1) - Z_2})_+] - \mathbb{E}_{Z_2 \sim L_i}[(1 - e^{(\varepsilon^\star - Z_1) - Z_2})_+]] \\
            & = \mathbb{E}_{Z \sim A}[h_L(e^{\varepsilon^\star - Z}) - h_{L_i}(e^{\varepsilon^\star - Z})] \\
            & \geq \mathbb{P}_{Z \sim A}(Z \in (\varepsilon_{2 - i} \pm \rho)) \cdot \inf_{\varepsilon \in (\varepsilon_{2 - i} \pm \rho)} h_L(e^{\varepsilon_1 + \varepsilon_2 + u - \varepsilon}) - h_{L_i}(e^{\varepsilon_1 + \varepsilon_2 + u - \varepsilon}) \\
            & = \mathbb{P}_{Z \sim A}(Z \in (\varepsilon_{2 - i} \pm \rho)) \cdot \inf_{\varepsilon \in (\varepsilon_i + u \pm \rho)} h_L(e^\varepsilon) - h_{L_i}(e^\varepsilon) \\
            & > 0.
    \end{align*}
    In particular,
    $h_{\sup\{A \oplus L_1, A \oplus L_2\}}(e^{\varepsilon^\star}) = \max\{h_{A \oplus L_1}(e^{\varepsilon^\star}), h_{A \oplus L_2}(e^{\varepsilon^\star})\}
        < h_{A \oplus L}(e^{\varepsilon^\star})$.
\end{proof}

In order to apply gap amplification to PLDs that are not totally-ordered i.e. $L_1, L_2 \prec \sup\{L_1, L_2\}$, we would like to reduce the problem to geometric and topological properties of convex curves over $\bR^\times := (0, \infty)$. To that end, we also notice that the support of a PLD is determined by the topology of its hockey-stick curve. This result follows immediately from \Cref{prop:pld_to_hs}

\begin{proposition}
    \label{prop:support_conversion}
    For a PLD $L$ with hockey-stick curve $h_L$, denote by $\supp(h_L) := \{x \in \bR^\times : \forall \text{open }U \ni x, h|_U \text{ is not affine}\}$ the points at which $h_L$ ``bends''. Then for any $e^\varepsilon \in \supp(h_L)$, we have $\varepsilon \in \supp(L)$.
\end{proposition}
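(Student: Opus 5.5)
We will prove the contrapositive: if $\varepsilon \notin \supp(L)$, then $e^\varepsilon \notin \supp(h_L)$. That is, we will show that $h_L$ is affine on a neighbourhood of $e^\varepsilon$. The tool is the representation from \Cref{prop:pld_to_hs},
\begin{align*}
    h_L(x) = \E_{Z \sim L}[(1 - xe^{-Z})_+].
\end{align*}
Suppose $\varepsilon \notin \supp(L)$. Then $\varepsilon$ is a finite real number, and by definition of the support there is an open interval $(\varepsilon - \rho, \varepsilon + \rho)$ with $L$-measure zero. Our aim is to show that $h_L$ is affine on the open set $U := (e^{\varepsilon - \rho}, e^{\varepsilon + \rho}) \ni e^\varepsilon$.

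First, fix $x \in U$ and split the expectation according to whether $Z \leq \varepsilon - \rho$ or $Z \geq \varepsilon + \rho$ (this includes $Z = \pm\infty$). The band in between carries no mass.
\begin{itemize}
    \item On the event $Z \geq \varepsilon + \rho$, we have $xe^{-Z} < e^{\varepsilon + \rho}e^{-Z} \leq 1$. So the positive part is inactive and the integrand equals $1 - xe^{-Z}$, which is affine in $x$. For $Z = \infty$ it is the constant $1$.
    \item On the event $Z \leq \varepsilon - \rho$, we have $xe^{-Z} > e^{\varepsilon - \rho}e^{-Z} \geq 1$. So the integrand is $0$; for $Z = -\infty$ it is also $0$.
\end{itemize}
Hence for all $x \in U$,
\begin{align*}
    h_L(x) = L(Z \geq \varepsilon + \rho) - x\,\E_{Z \sim L}[e^{-Z}\mathbf{1}(Z \geq \varepsilon + \rho)].
\end{align*}
The expectation on the right is finite, being at most $e^{-(\varepsilon+\rho)}$. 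This expression is affine in $x$ on the open set $U$, so $e^\varepsilon \notin \supp(h_L)$, which proves the contrapositive.

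The only point that needs care is the treatment of the atoms at $\pm\infty$ and the finiteness of the linear coefficient, and both are handled by the bounds above. We do not expect a genuine obstacle. The statement is essentially the observation that the kinks of $x \mapsto (1 - xe^{-z})_+$ sit exactly at $x = e^z$, so $h_L$ can only bend where $L$ places mass.
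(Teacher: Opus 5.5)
Your proof is correct and takes the same route the paper intends. The paper only remarks that the result follows immediately from the representation $h_L(x) = \E_{Z \sim L}[(1 - xe^{-Z})_+]$ in \Cref{prop:pld_to_hs}. Your contrapositive argument supplies exactly the omitted detail: a mass-free band $(\varepsilon-\rho, \varepsilon+\rho)$ makes every integrand affine in $x$ on $(e^{\varepsilon-\rho}, e^{\varepsilon+\rho})$, and you also correctly handle the endpoints, the atoms at $\pm\infty$, and the finiteness of the slope.
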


Our problem is now purely geometric. Given decreasing, convex, unordered curves $h_1, h_2 : \bR^\times \to [0, 1]$ that meet in the limit $x \to 0$ (recall \Cref{prop:hs_characterization}), we show that one of them can be scaled along the x-axis so that its elbows hit a region where $h_1 < h_2$ and another region where $h_1 > h_2$.

We proceed by analysis of these regions' topology within the ambient space $\bR^\times$. Given $S \subseteq \bR^\times$, let $\partial S$ denote its boundary, i.e. those points whose neighbourhoods meet both $S$ and $S^C := \bR^\times \setminus S$, let $\mathrm{int}(S) := S \setminus \partial S$ denote its interior, and let $\overline{S} := S \cup \partial S$ denote its closure. Now, let
\begin{align*}
    X(h_1, h_2) := \{x \in \bR^\times : h_1(x) < h_2(x)\},
\end{align*}
let $X(h_2, h_1)$ be defined similarly, and let $E(h_1, h_2) := (X(h_1, h_2) \cup X(h_2, h_1))^C$ be the points where $h_1$ and $h_2$ agree. Note that, convex curves over $\bR^\times$ are also continuous, so $X(h_1, h_2)$ and $X(h_2, h_1)$ are both open and $E(h_1, h_2)$ is closed. Finally, let
\begin{align*}
    C(h_1, h_2) := \{x \in \bR^\times : (x r^{-1}, x) \subseteq X(h_1, h_2), (x, x r) \subseteq X(h_2, h_1) \text{ or vice-versa for some } r > 1\}
\end{align*}
denote the \textit{simple} crossing points of $h_1$ and $h_2$. We would like to show that either $\supp(h_1)$ or $\supp(h_2)$ ``links'' $X(h_1, h_2)$ with $X(h_2, h_1)$ in the following sense.

\begin{definition}
    For subsets $S, T_1, T_2$ of $\bR^\times$, we say that $S$ (multiplicatively) links $T_1$ and $T_2$ if there is $c > 0$ so that $c \cdot S \cap T_1 \neq \varnothing$ and $c \cdot S \cap T_2 \neq \varnothing$. Furthermore, for $S, T \subseteq \bR^\times$, we say that $S$ divides $T$, written $S \mid T$, if $S$ does not link $T$ and $T^C$.
\end{definition}

In some cases, it will be easier to directly link a set with the boundary of another set. The following lemma shows that this will suffice for our purposes.

\begin{lemma}
    \label{lemma:closure_linkage}
    Let $S, U_1, U_2 \subseteq \bR^\times$ with $U_1$ and $U_2$ open. The following are equivalent.
    \begin{enumerate}[(i)]
        \item $S$ links $U_1$ and $U_2$
        \item $S$ links $U_1$ and $\overline{U_2}$
        \item $S$ links $\overline{U_1}$ and $U_2$
    \end{enumerate}
\end{lemma}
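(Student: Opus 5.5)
The plan is to prove the cycle (i) $\Rightarrow$ (ii) $\Rightarrow$ (iii) $\Rightarrow$ (i). Implications (i) $\Rightarrow$ (ii) and (i) $\Rightarrow$ (iii) are immediate, since $U_2 \subseteq \overline{U_2}$ and $U_1 \subseteq \overline{U_1}$: the same scaling $c$ witnesses the linkage. By the symmetry of the definition of linking in $T_1, T_2$, it suffices to show (ii) $\Rightarrow$ (i); the implication (iii) $\Rightarrow$ (i) is the same argument with the roles of $U_1$ and $U_2$ exchanged.

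For (ii) $\Rightarrow$ (i), fix $c > 0$ with $c \cdot S \cap U_1 \neq \varnothing$ and $c \cdot S \cap \overline{U_2} \neq \varnothing$. Pick $s_1 \in S$ with $c s_1 \in U_1$ and $s_2 \in S$ with $c s_2 \in \overline{U_2}$. Since $U_1$ is open in $\bR^\times$, there is $r > 1$ with $(c s_1 r^{-1}, c s_1 r) \subseteq U_1$. Thus, for every $c'$ with $c'/c \in (r^{-1}, r)$, we still have $c' s_1 \in U_1$. The set of such $c'$ is an open interval around $c$.

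The key step is to choose $c'$ in this interval so that $c' s_2 \in U_2$. The point $c s_2$ lies in the closure of $U_2$, so every neighbourhood of $c s_2$ meets $U_2$. In particular, the multiplicative neighbourhood $(c s_2 r^{-1}, c s_2 r)$ contains some $u \in U_2$. Set $c' := u / s_2$. Then $c'/c = u/(c s_2) \in (r^{-1}, r)$, so $c' s_1 \in U_1$ and $c' s_2 = u \in U_2$. Hence $c' \cdot S$ meets both $U_1$ and $U_2$, so $S$ links $U_1$ and $U_2$.

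The only point requiring care, and the nearest thing to an obstacle, is that the perturbation of $c$ must be multiplicative rather than additive. Since $s_1, s_2 > 0$, the maps $c \mapsto c s_i$ are homeomorphisms of $\bR^\times$, so openness of $U_1$ and the closure condition on $U_2$ transfer correctly through scaling. Closure is taken in $\bR^\times$, so $c s_2 > 0$ and all neighbourhoods above are legitimate. No openness of $U_2$ is actually needed for (ii) $\Rightarrow$ (i); only $U_1$ must be open, and symmetrically for (iii).
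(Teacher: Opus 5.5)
Your proof is correct and follows essentially the same route as the paper. Both reduce by symmetry to (ii) $\Rightarrow$ (i) and then perturb $c$ multiplicatively using openness of $U_1$ and the closure condition on $U_2$; your explicit choice $c' := u/s_2$ simply makes concrete the paper's step of choosing $c'$ ``sufficiently close to $c$'' (minor: you announce a cycle (i) $\Rightarrow$ (ii) $\Rightarrow$ (iii) $\Rightarrow$ (i) but actually prove (i) $\Rightarrow$ (ii), (iii) and (ii), (iii) $\Rightarrow$ (i), which is equally sufficient).
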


\begin{proof}
    Clearly, (i) implies (ii) and (iii). Moreover, (ii) and (iii) are symmetric, so we just need to show (ii) $\implies$ (i). To that end, assume that there is $x_1, x_2 \in S$ and $c > 0$ such that $cx_1 \in U_1$ and $cx_2 \in \overline{U_2}$. Now, if $cx_2 \in U_2$, we are done, so assume $cx_2 \in \partial U_2$. That is, any arbitrarily small neighbourhood of $cx_2$ meets $U_2$, so, since $U_1 \ni cx_1$ is open, we can just choose $c' > 0$ sufficiently close to $c$ so that $c'x_1$ remains in $U_1$ and $c'x_2 \in U_2$.
\end{proof}

Critically, non-trivial closed and divisible subsets of $\bR^\times$ have a highly regular multiplicative structure enclosed by their divisors.

\begin{proposition}
    \label{prop:dense_subgroup}
    Let $S \subseteq \bR^\times$ contain more than one point and let $\varnothing \subsetneq T \subsetneq \bR^\times$ be closed. Then, if $S \mid T$, there is $a > 1$ and $B \subseteq [1, a)$ such that $T = a^\bZ B = \{a^k b : k \in \bZ, b \in B\}$ and such that $a \leq s'/s$ for any $s < s' \in S$.
\end{proposition}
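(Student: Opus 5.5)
We work multiplicatively. The definition gives that $S \mid T$ means: for every $c>0$, the set $cS$ lies entirely in $T$ or entirely in $T^C$. Consider the set $G := \{g \in \bR^\times : gT = T\}$ of multiplicative periods of $T$. The plan is to show that $G$ contains every ratio $s'/s$ of points of $S$, then use the closedness and properness of $T$ to show $G$ is a discrete subgroup $a^\bZ$, and finally read off both conclusions.

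\textbf{Step 1: ratios of $S$ are periods of $T$.} Fix $s<s'$ in $S$ and set $g=s'/s$. For $y \in T$, take $c = y/s$. Then $cs = y \in T$, so $cS \subseteq T$ by $S\mid T$, and in particular $cs' = gy \in T$. Hence $gT\subseteq T$. Similarly, for $y\in T$, take $c = y/s'$. Then $cs' = y\in T$, so $cs = g^{-1}y \in T$, giving $g^{-1}T\subseteq T$. Thus $g\in G$. The set $G$ is clearly a subgroup of $\bR^\times$, and it is closed: if $g_n\to g$ with $g_nT=T$, then for $y\in T$ we have $g_n y \in T$ and $g_ny\to gy$, so $gy\in T$ because $T$ is closed; applying the same argument to $g_n^{-1}\to g^{-1}$ gives the reverse inclusion. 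Since $S$ has two points, $G\neq\{1\}$.

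\textbf{Step 2: $G$ is discrete.} This is the main obstacle. A closed subgroup of $(\bR^\times,\cdot)\cong(\bR,+)$ is either all of $\bR^\times$, or $\{1\}$, or $a^\bZ$ for some $a>1$. Suppose $G=\bR^\times$. Since $T$ is nonempty, pick $y\in T$; then $T\supseteq Gy=\bR^\times$, contradicting $T\subsetneq\bR^\times$. Hence $G=a^\bZ$ with $a>1$, where $a=\min(G\cap(1,\infty))$. The subtle point is that only closedness of $T$, and not any structure of $S$, is needed to exclude the dense case. This works because we pass to the closed group $G$ rather than just the group generated by the ratios, and that group could be dense. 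In the dense case, the closure argument is exactly what forces $G$ to be everything.

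\textbf{Step 3: conclude.} Set $B := T\cap[1,a)$. Every $y\in\bR^\times$ can be written uniquely as $a^k b$ with $k \in \bZ$ and $b\in[1,a)$, and by $a$-periodicity $y\in T$ iff $b\in T$. Hence $T=a^\bZ B$. Finally, for $s<s'$ in $S$ we have $s'/s\in G\cap(1,\infty)=\{a^k:k\ge1\}$, so $s'/s\ge a$, as required.
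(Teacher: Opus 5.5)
Your proof is correct, and it takes a somewhat different route from the paper's after the first step.

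Your Step~1 matches the paper's opening claim that $(s'/s)^{\bZ} T \subseteq T$. The paper then sets $a := \inf\{s'/s : s < s' \in S\}$. It rules out $a = 1$ by hand: the grids $a_n^{\bZ} t$ become dense, so the closed set $T$ would be all of $\bR^\times$. It then passes to the limit $a_n^k t \to a^k t$ to conclude $a^{\bZ} T \subseteq T$, and $a \le s'/s$ holds by definition of the infimum.

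You instead collect all multiplicative periods into $G$, show $G$ is closed because $T$ is, and invoke the classification of closed subgroups of $(\bR,+)$. That classification is proved by essentially the paper's infimum-and-density argument. So the paper's version is self-contained, while yours delegates that step to a standard structure theorem.

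What your route buys is that your $a$ is the minimal period of $T$. It depends only on $T$, not on $S$; the paper's $a$ is some positive power of yours. Hence one $a$ works for every $S$ with more than one point that divides $T$. This is exactly how the proposition is used in the proof of Theorem~\ref{thm:hockey_link}, where a single $a$ is applied to both $\supp(h_1)$ and $\supp(h_2)$. With the paper's $S$-dependent $a$, one must tacitly take the smaller of the two infima at that point.
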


\begin{proof}
    We first claim that $(s'/s)^\bZ T \subseteq T$ for any $s, s' \in S$. Indeed, let $t \in T$ and $s, s' \in S$. Then $t/s \cdot S \ni t/s \cdot s = t$ meets $T$, so, since $S \mid T$, we must have $T \supseteq t/s \cdot S \ni t/s \cdot s' = (s'/s) \cdot t$. By swapping $s$ and $s'$, we get that $(s'/s)^{-1} \cdot t \in T$ as well. Repeating inductively, we get that $(s'/s)^k \cdot t \in T$ for arbitrary $k \in \bZ$ as desired.

    Now, let $a := \inf\{s'/s : s < s' \in S\}$. Then there is a sequence $a_n := s'_n/s_n \to a$ for some $s_n < s'_n \in S$. By the preceding claim, we have $T = \overline{T} \supseteq \overline{\bigcup_{n \in \bN} a_n^\bZ \ T}$. We claim that $a > 1$. Indeed, if $a = 1$, then $a_n^\bZ T$ forms a cover of $\bR^\times$ with maximum multiplicative distance $a_n \to 1$, so $\bigcup_{n = 1}^\infty a_n^\bZ T$ must be dense in $\bR^\times$, which forces $T = \bR^\times$, whereas we assumed otherwise. Therefore $a > 1$ and
    \begin{align*}
        T \supseteq \overline{\bigcup_{n \in \bN} a_n^\bZ T}
          = \overline{\bigcup_{k \in \bZ} \bigcup_{t \in T} \{a_n^k t : n \in \bN\}}
          \supseteq \bigcup_{k \in \bZ} \bigcup_{t \in T}\overline{\{a_n^k t : n \in \bN\}}
          \supseteq \bigcup_{k \in \bZ} \bigcup_{t \in T} \{a^k t\}
          = a^\bZ T.
    \end{align*}
    Finally, let $B := T \cap [1, a)$. For any $t \in T$, we can choose $k \in \bZ$ so that $a^k t \in [1, a)$ and, furthermore, $a^k t \in a^\bZ T \subseteq T$, so in fact $b := a^k t \in B$ and thus $t = a^{-k} a^k t = a^{-k} b \in a^\bZ B$. That is, $T \subseteq a^\bZ B \subseteq a^\bZ T \subseteq T$.
\end{proof}

We also require a slight variation on the usual characterization of convexity: a concave curve that passes above and then below a convex curve must remain below it.

\begin{lemma}
    \label{lemma:convex_vs_affine}
    Let $f, g : (a, b) \to \bR$ so that $f$ is convex and $g$ is concave. Then, if $f(x_1) \geq g(x_1)$ and $f(x_2) \leq g(x_2)$ for some $x_1 < x_2 \in (a, b)$, then we have $f(x) \geq g(x)$ for all $x \in (a, x_1)$.
\end{lemma}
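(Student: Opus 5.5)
Set $d := f - g$ on $(a, b)$. Since $f$ is convex and $g$ is concave, $-g$ is convex, so $d$ is convex. The hypotheses become $d(x_1) \geq 0$ and $d(x_2) \leq 0$ with $x_1 < x_2$. The claim is that $d(x) \geq 0$ for every $x \in (a, x_1)$. So the whole lemma reduces to a statement about a single convex function of one variable: if it is nonnegative at $x_1$ and nonpositive at a later point $x_2$, then it is nonnegative everywhere to the left of $x_1$.

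The key step is the three-slope (chordal slope) inequality for convex functions. Fix $x \in (a, x_1)$, so that $x < x_1 < x_2$. Convexity of $d$ gives
\begin{align*}
    \frac{d(x_1) - d(x)}{x_1 - x} \leq \frac{d(x_2) - d(x_1)}{x_2 - x_1}.
\end{align*}
The right-hand side is at most $0$, since $d(x_2) \leq 0 \leq d(x_1)$. Hence $d(x_1) - d(x) \leq 0$, that is, $d(x) \geq d(x_1) \geq 0$, which is exactly $f(x) \geq g(x)$.

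Equivalently, one can write $x_1 = \lambda x + (1 - \lambda) x_2$ with $\lambda = (x_2 - x_1)/(x_2 - x) \in (0, 1)$. Convexity then gives $d(x_1) \leq \lambda d(x) + (1 - \lambda) d(x_2) \leq \lambda d(x)$, because $d(x_2) \leq 0$. Since $d(x_1) \geq 0$ and $\lambda > 0$, this forces $d(x) \geq 0$.

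There is no real obstacle here. The only points to check are that the difference of a convex and a concave function is convex, and that the chord inequality is applied with the points in the correct order $x < x_1 < x_2$. The strict inequality $x_1 < x_2$ guarantees $\lambda < 1$ and makes the slope expressions well defined. No continuity or regularity beyond convexity on the open interval is needed.
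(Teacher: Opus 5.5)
Your proof is correct and is essentially the paper's argument. The paper argues by contradiction, writing $x_1$ as the convex combination of $x$ and $x_2$ with weight $(x_2-x_1)/(x_2-x)$ and applying convexity of $f$ and concavity of $g$ separately; your second formulation does the same thing directly on $d = f - g$, and your three-slope version even yields the slightly stronger conclusion $d(x) \geq d(x_1)$.
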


\begin{proof}
    Suppose we could find $x \in (a, x_1)$ with $f(x) < g(x)$. Then, by convexity of $f$ and concavity of $g$ over $[x, x_2]$, we have
    \begin{align*}
        f(x_1)
            \leq \frac{x_2 - x_1}{x_2 - x}\underbrace{f(x)}_{< g(x)} + \frac{x_1 - x}{x_2 - x} \underbrace{f(x_2)}_{\leq g(x_2)}
            < \frac{x_2 - x_1}{x - x_2} g(x) + \frac{x_1 - x}{x - x_2} g(x_2)
            \leq g(x_1),
    \end{align*}
    which is a contradiction.
\end{proof}

Now, our first important observation is that $h_i$ is supported at the boundary of its region of dominance, excluding simple crossings. This is essentially because convex and affine curves can only cross at points.

\begin{proposition}
    \label{prop:boundary_support}
    Let $h_1, h_2 : \bR^\times \to \bR$ be convex. Then
    \begin{enumerate}[(i)]
        \item $\partial X(h_1, h_2) \setminus C(h_1, h_2) \subseteq \supp(h_2)$; and
        \item $\partial X(h_2, h_1) \setminus C(h_1, h_2) \subseteq \supp(h_1)$.
    \end{enumerate}
\end{proposition}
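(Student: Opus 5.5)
By symmetry it suffices to prove (i); (ii) follows by swapping the roles of $h_1$ and $h_2$, since $C(h_1,h_2)$ is symmetric in its arguments. Fix $x \in \partial X(h_1, h_2) \setminus C(h_1, h_2)$. We argue by contradiction: suppose $x \notin \supp(h_2)$, so there is an open interval $U = (x r^{-1}, x r) \ni x$ on which $h_2$ is affine.

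The first step is to record what $x \in \partial X(h_1,h_2)$ gives us. Since $X(h_1,h_2)$ is open, $x \notin X(h_1,h_2)$, so $h_1(x) \geq h_2(x)$. On the other hand, every neighbourhood of $x$ contains points where $h_1 < h_2$, and continuity then forces $h_1(x) = h_2(x)$. On $U$, the difference $g := h_1 - h_2$ is convex, being convex minus affine, and satisfies $g(x) = 0$. Moreover, $g$ takes negative values at points of $U$ arbitrarily close to $x$.

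Next we use the convexity of $g$ to pin down its sign pattern near $x$. Suppose $g(y) < 0$ for some $y < x$ in $U$. Then convexity along the chord from $y$ through $x$ forces $g \geq 0$ on $(x, xr)$, since the chord lies below zero beyond $x$ only if the function rises above it. Also $g<0$ on $(y,x)$, because $g$ lies below the chord joining $g(y)<0$ and $g(x)=0$. This is the content of \Cref{lemma:convex_vs_affine}, applied with $f = h_1$ and $g = h_2$ affine, reading the conclusion in the mirrored direction. The symmetric statement holds if the negative point lies to the right of $x$.

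Negative values cannot occur on both sides of $x$. If $g(y)<0<\dots$ held with $y<x<y'$ and both $g(y), g(y')<0$, then $g(x)<0$ by convexity, a contradiction. So we may assume $g<0$ exactly on a one-sided interval $(xr'^{-1}, x)$ and $g \geq 0$ on $(x, xr)$.

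The main obstacle is the right side. If $g > 0$ throughout some $(x, xr'')$, then $(xr'^{-1},x)\subseteq X(h_1,h_2)$ and $(x,xr'')\subseteq X(h_2,h_1)$, so $x\in C(h_1,h_2)$, contradicting our choice of $x$. So it remains to handle the case where $g$ vanishes at points arbitrarily close to $x$ on the right. Here $g$ is convex, $g(x)=0$, $g<0$ just to the left of $x$, and $g(z)=0$ for some $z>x$ near $x$. Convexity on $[w,z]$ for $w<x$ with $g(w)<0$ then gives $g(x) \leq$ a strict convex combination of $g(w) < 0$ and $g(z) = 0$, so $g(x) < 0$, a contradiction. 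Hence the zero case cannot occur, and we obtain $x\in\supp(h_2)$. I expect the bookkeeping of these one-sided cases against the exact definition of $C$ (which allows either orientation via ``or vice-versa'') to be the only delicate part.
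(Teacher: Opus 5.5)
Your proof is correct and follows essentially the same route as the paper. Both arguments assume $h_2$ is affine on a neighbourhood of $x$, deduce $h_1(x)=h_2(x)$, and then use convexity of $h_1-h_2$ there (the content of \Cref{lemma:convex_vs_affine}) to show $x$ is a simple crossing, contradicting $x\notin C(h_1,h_2)$. One streamlining: the chord inequality $g(z)\ge g(x)+(z-x)\frac{g(x)-g(y)}{x-y}>0$ already gives $g>0$ strictly on $(x,xr)$, so your separate exclusion of the zero case, and of negative values on both sides, is redundant though harmless.
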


\begin{proof}
    We just prove (i) as (ii) follows by symmetry. In this case it is equivalent to show that $\partial X(h_1, h_2) \setminus \supp(h_2) \subseteq C(h_1, h_2)$.
    
    To that end, consider any $x \in \partial X(h_1, h_2) \setminus \supp(h_2)$. Then there must be some interval $(x r^{-1}, x r)$ around $x$ on which $h_2$ is affine. Moreover, since $x \in \partial{X(h_1, h_2)}$, there must be some $x' \in (x r^{-1}, x r)$ for which $h_1(x') < h_2(x')$. On the other hand, both $X(h_1, h_2)$ and $X(h_2, h_1)$ are disjoint open subsets of a connected space, so neither meets $\partial{X(h_1, h_2)}$. In particular, $h_1(x) = h_2(x)$ and $x \neq x'$. Without loss of generality, $x r^{-1} < x' < x$.

    Now, we claim that $(x', x) \subseteq X(h_1, h_2)$ and $(x, x r) \subseteq X(h_2, h_1)$. Indeed, if there were $x'' \in (x', x)$ with $h_1(x'') \geq h_2(x'')$, then by applying Lemma~\ref{lemma:convex_vs_affine} with $x_1 = x''$ and $x_2 = x$, we get $h_1(x') \geq h_2(x')$, which is a contradiction. Likewise, if we could find $x'' \in (x, x r)$ with $h_1(x'') \leq h_2(x'')$, then applying Lemma~\ref{lemma:convex_vs_affine} to $x_1 = x$ and $x_2 = x''$ yields once again $h_1(x') \geq h_2(x')$, which is impossible. Overall, shrinking $r$ a little as needed, we have that $x \in C(h_1, h_2)$.
\end{proof}

The other key observation is that the support of $h_i$ meets every maximal interval in which it is dominated. Recall that every open set can be decomposed into connected components (disjoint open intervals in this case). For convenience, we will write $h(0) := \lim_{x \to 0} h(x)$ when this limit exists.

\begin{proposition}
    \label{prop:dominated_support}
    Let $h_1, h_2 : \bR^\times \to \bR$ be convex, bounded, and decreasing such that $h_1(0) = h_2(0)$. Then for any connected component $I \neq \bR^\times$ of $X(h_1, h_2)$, we have $I \cap \supp(h_1) \neq \varnothing$. Similarly for each connected component $I \neq \bR^\times$ of $X(h_2, h_1)$, we have $I \cap \supp(h_2) \neq \varnothing$.
\end{proposition}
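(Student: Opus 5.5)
Let $I$ be a connected component of $X(h_1,h_2)$ with $I \neq \bR^\times$; we argue by contradiction, assuming $I \cap \supp(h_1) = \varnothing$. Since $I$ is an open interval $(\alpha,\beta)$ with $0 \le \alpha < \beta \le \infty$, and every point of $I$ has a neighbourhood on which $h_1$ is affine, $h_1$ is locally affine on the connected set $I$, hence affine on all of $I$: $h_1(x) = m x + c$ for $x \in I$. (Local affinity on overlapping intervals glues, since two affine functions agreeing on an open set coincide.) Because $h_1$ is decreasing, $m \le 0$.

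Since $I \neq \bR^\times$, at least one endpoint of $I$ is a genuine boundary point of $I$ in $\bR^\times$. We split into cases.

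\emph{Case (a): $\alpha > 0$ and $\beta < \infty$.} Both endpoints lie in $\partial X(h_1,h_2)$, so $h_1 = h_2$ at $\alpha$ and at $\beta$, by continuity and because $\alpha,\beta \notin X(h_1,h_2)$. Both endpoints are also outside the open set $X(h_2,h_1)$. Now $h_2$ is convex and $h_1$ is affine on $[\alpha,\beta]$ with the same endpoint values, so $h_2 \le h_1$ on $[\alpha,\beta]$. This contradicts $h_1 < h_2$ on $I$.

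\emph{Case (b): $\alpha = 0$ and $\beta < \infty$.} Then $h_1(\beta) = h_2(\beta)$. The limit $h_1(0)$ exists since $h_1$ is bounded and monotone, and on $I$ we have $h_1(0) = c$, so $h_1$ is the chord from $(0,h_2(0))$ to $(\beta,h_2(\beta))$, using $h_1(0)=h_2(0)$. Extending $h_2$ continuously to $[0,\beta]$ by its limit gives a convex function on the closed interval, so $h_2 \le h_1$ on $(0,\beta)$, again a contradiction.

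\emph{Case (c): $\alpha > 0$ and $\beta = \infty$.} Then $h_1(\alpha) = h_2(\alpha)$ and $h_1$ is affine on $(\alpha,\infty)$. Boundedness forces $m = 0$, so $h_1 \equiv h_1(\alpha)$ there. But $h_2$ is decreasing, so $h_2(x) \le h_2(\alpha) = h_1(\alpha) = h_1(x)$ for $x > \alpha$, contradicting $h_1 < h_2$ on $I$.

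The second statement follows by swapping the roles of $h_1$ and $h_2$. The main subtlety is the gluing step showing that $h_1$ is globally affine on $I$, and the handling of the endpoint $0$ in case (b) through the limit hypothesis; the rest is the elementary chord property of convex functions.
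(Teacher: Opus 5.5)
Your proof is correct and is essentially the paper's argument. The paper also rules out an unbounded right endpoint by noting that boundedness forces $h_1$ to be constant there while $h_2$ is decreasing, and otherwise uses $h_2 \le h_1$ at both endpoints of $I$ (invoking $h_1(0)=h_2(0)$ when the left endpoint is $0$) so that convexity places $h_2$ below the affine $h_1$ on $I$. Your version additionally makes explicit two steps the paper takes for granted: that local affineness on $I$ glues to affineness on all of $I$, and that the chord argument at the endpoint $0$ is justified by the continuous convex extension of $h_2$.
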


\begin{proof}
    The two claims are symmetric so we just prove the first one. Let $I = (a, b) \neq \bR^\times$ be a connected component of $X(h_1, h_2)$ and assume that $I$ does not meet $\supp(h_1)$, i.e. $h_1$ is affine over the interval $\overline{I}$.
    
    First, we claim that $b < \infty$. Indeed, suppose that we had $b = \infty$. Then $a > 0$ as $I \neq \bR^\times$. Now, since $h_1$ and $h_2$ are convex and, in particular continuous over $\bR^\times$, if it were the case that $h_1(a) < h_2(a)$, then we could extend the interval $(a, \infty)$ to the left to obtain a larger interval in $X(h_1, h_2)$, whereas we assumed that $I$ was a connected component. Thus $h_2(a) \leq h_1(a)$. But $h_1$ is affine and bounded on an unbounded interval $[a, \infty)$, so the only possibility is that $h_1$ is constant on $[a, \infty)$. In particular since $h_2$ is decreasing, $h_2$ must be dominated by $h_1$ over all of $I = (a, \infty) \subseteq X(h_1, h_2)$, which is a contradiction.

    Next, we claim that $h_2$ is dominated by $h_1$ at the endpoints $\{a, b\}$. Indeed, if $a = 0$, then $h_1(a) = h_2(a)$ by assumption and, if $a > 0$, then by the same interval extension argument as before, we have $h_2(a) \leq h_1(a)$. Likewise, since $b < \infty$ and since the interval $I$ is maximal, we must have $h_2(b) \leq h_1(b)$ as well. Overall, $h_2$ is dominated at the endpoints of $I$ by an affine curve $h_1$, so by convexity $h_2$ must be dominated by $h_1$ over all of $I \subseteq X(h_1, h_2)$, which is impossible.
\end{proof}

We now can prove our main result of the section, which implies \Cref{lemma:commutativity_implies_total_ordering}. We note the small caveat that the queries must be non-degenerate, namely their PLDs must be supported at at least two finite points. This is not serious practical limitation.

\begin{theorem}
    \label{thm:hockey_link}
    Let $h_1, h_2 : \bR^\times \to \bR$ be non-degenerate hockey-stick curves that are not totally-ordered. Then one of $\supp(h_1), \supp(h_2)$ links $X(h_1, h_2)$ and $X(h_2, h_1)$.
\end{theorem}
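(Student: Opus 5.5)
We argue by contradiction: assume neither $S_1 := \supp(h_1)$ nor $S_2 := \supp(h_2)$ links $X_{12} := X(h_1,h_2)$ and $X_{21} := X(h_2,h_1)$. Since $h_1,h_2$ are not totally ordered, both open sets are nonempty. Since both curves are hockey-stick curves, \Cref{prop:hs_characterization} gives $h_1(0)=h_2(0)=1$, so \Cref{prop:dominated_support} applies. Because $h_1,h_2$ are not ordered, neither $X_{12}$ nor $X_{21}$ equals $\bR^\times$. Hence every connected component of $X_{12}$ meets $S_1$, and every connected component of $X_{21}$ meets $S_2$. Non-degeneracy of the underlying PLDs gives, via \Cref{prop:pld_to_hs}, that each $h_i$ bends at at least two points, so $|S_i|\ge 2$. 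This is the hypothesis needed for \Cref{prop:dense_subgroup}.

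The first step converts non-linkage into divisibility. Set $T := \overline{X_{12}}$. I claim $S_1$ cannot link $X_{12}$ with $\overline{X_{21}}$, and the same for $S_2$; this follows from \Cref{lemma:closure_linkage}, since linking the closure is equivalent to linking the open set. Next I want to show $S_i \mid T$ for at least one $i$, i.e.\ that some scaled copy $cS_i$ never meets both $T$ and $T^C$. Since $T^C = \mathrm{int}(E)\cup X_{21}$ with $E = E(h_1,h_2)$, the obstruction is that a scaled copy $cS_i$ could meet $T$ while also meeting the interior of the agreement set $E$. On such an interval $h_1=h_2$, so both curves bend at the same points there. I plan to handle this by shrinking or perturbing $c$ as in the proof of \Cref{lemma:closure_linkage}, or by replacing $T$ with $\overline{X_{21}}^{\,C}$ as appropriate. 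If this bookkeeping becomes messy, I will instead apply \Cref{prop:dense_subgroup} separately to the closed sets $\overline{X_{12}}$ and $\overline{X_{21}}$ with divisors $S_2$ and $S_1$ respectively.

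Once divisibility holds, \Cref{prop:dense_subgroup} gives $T = a^{\bZ}B$ for some $a>1$, with $a \le s'/s$ for all $s<s'$ in the divisor. So $T$ is invariant under multiplication by $a$; in particular it is unbounded both toward $0$ and toward $\infty$, and the pattern of crossings is periodic on a logarithmic scale. The contradiction should come from behaviour near $0$ and $\infty$. Near $\infty$, both curves are convex and decreasing toward their limits, the PLD masses at $+\infty$, which I denote $\delta_1,\delta_2$. If $\delta_1\neq\delta_2$, then one curve strictly dominates the other for all large $x$, so one of $X_{12},X_{21}$ is bounded away from $\infty$; this contradicts $a$-periodicity unless that set is empty, which is excluded. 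If $\delta_1=\delta_2$, I compare the tails $h_i(x)-\delta_i$, which are eventually governed by the tail behaviour of the PLDs. Near $0$, the analogous comparison uses the slopes $h_i'(0^+)$, which relate to the mass at $-\infty$ through the Esscher tilt.

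The hardest step is exactly this: excluding a genuinely log-periodic crossing pattern, in which $h_1$ and $h_2$ cross infinitely often in both directions with spacing matched to the support gaps. \Cref{prop:boundary_support} is what I would use here. Boundary points of $X_{12}$ that are not simple crossings lie in $S_2$, and those of $X_{21}$ lie in $S_1$. The simple crossings $C(h_1,h_2)$ alternate between the two regions, and between two consecutive crossings each dominated curve must bend, by \Cref{prop:dominated_support}. I would show that these bend points, being separated by ratios of at least $a$, force a scaled copy of $S_1$ or $S_2$ to straddle a crossing, which is a link and hence a contradiction. The bound $a\le s'/s$ is what allows the scaling constant to be chosen.
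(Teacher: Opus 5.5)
There is a genuine gap, and it sits exactly where you locate the difficulty. Your plan for the last step is to show that bend points separated by ratios at least $a$ force some scaled copy $c\,\supp(h_i)$ to straddle a crossing. That cannot work, because the log-periodic configuration is consistent with non-linkage. Non-linkage together with \Cref{prop:dense_subgroup,prop:dominated_support} actually forces $E(h_1,h_2)=a^{\bZ}\{b_1,b_2\}$, $X(h_1,h_2)=a^{\bZ}(b_1,b_2)$, $X(h_2,h_1)=a^{\bZ}(b_2,ab_1)$ and $\supp(h_i)=a^{\bZ}s_i$. In that situation every scaled copy $c\,\supp(h_i)=a^{\bZ}(cs_i)$ is a single $a^{\bZ}$-orbit, and all three regions are $a$-invariant. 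So each scaled copy lies entirely inside one region, and no scaling can straddle a crossing. The contradiction therefore has to be analytic, not combinatorial.

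The paper gets it as follows. It uses that $h_2$ is affine on $[a^kb_1,a^ks_2]$ and on $[a^ks_2,a^{k+1}b_2]$, and similarly for $h_1$. This yields a linear recurrence linking the common values $y_k=h_1(a^kb_1)$ and $y'_k=h_1(a^kb_2)$. Eliminating $y'_k$ gives a third-order recurrence for $y_k$ whose characteristic polynomial factors as $(t-1)(t-a)(t-r)$ with $r>0$. A solution bounded on all of $\bZ$ must then be constant, which by monotonicity makes both curves constant and hence ordered. Your tail heuristics cannot replace this step. In the periodic configuration the curves cross infinitely often near $0$ and near $\infty$, so $\delta_1=\delta_2$ and the slopes at $0^+$ necessarily agree; the case $\delta_1\neq\delta_2$ that you handle is not the hard one. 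Also, the bound $a\le s'/s$ is not used to choose a linking scale. It is used to show $|B|\le 2$ and that each curve has exactly one support point per period.

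Your divisibility step is also unresolved, though the ingredients appear in your last paragraph. Your description of the complement as $\mathrm{int}(E)\cup X(h_2,h_1)$ is not quite right: it also contains tangential contact points outside $\overline{X(h_1,h_2)}$. Those points, and especially $\mathrm{int}(E(h_1,h_2))$, genuinely obstruct $\supp(h_i)\mid\overline{X(h_1,h_2)}$. Perturbing $c$ does not remove a robust intersection with an interval of agreement.

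The paper handles this with a case split. Suppose some point of $\partial X(h_1,h_2)$ or $\partial X(h_2,h_1)$ lies outside $C(h_1,h_2)$. Then \Cref{prop:boundary_support} puts a support point there, \Cref{prop:dominated_support} puts one of the same curve in the opposite region, and \Cref{lemma:closure_linkage} with $c=1$ gives the link outright. Otherwise every boundary point is a simple crossing. Then $\mathrm{int}(E)=\varnothing$, since a point of $\partial\,\mathrm{int}(E)$ would be a simple crossing with a punctured neighbourhood outside $E$. Hence $E=\partial E\subseteq\overline{X(h_1,h_2)}\cap\overline{X(h_2,h_1)}$, and non-linkage gives $\supp(h_i)\mid E$ via \Cref{lemma:closure_linkage}. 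You should apply \Cref{prop:dense_subgroup} to the closed set $E$ in this case.
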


\begin{proof}
    We begin with the easy case: Assume that one of $\partial X(h_1, h_2), \partial X(h_2, h_1)$ is not fully contained in $C(h_1, h_2)$. Without loss of generality, suppose $\partial X(h_1, h_2) \nsubseteq C(h_1, h_2)$. By Proposition~\ref{prop:boundary_support}, $\supp(h_2)$ meets $\partial X(h_1, h_2) \subseteq \overline{X(h_1, h_2)}$. Furthermore, since $h_1$ and $h_2$ are not totally-ordered, $X(h_2, h_1) \neq \varnothing, \bR^\times$ and thus $\supp(h_2)$ meets $X(h_2, h_1)$ as well by Proposition~\ref{prop:dominated_support}. By Lemma~\ref{lemma:closure_linkage}, $\supp(h_2)$ must link $X(h_1, h_2)$ and $X(h_2, h_1)$.

    On the other hand, assume that $\partial X(h_1, h_2)$ and $\partial X(h_2, h_1)$ are both contained in $C(h_1, h_2)$ and assume toward a contradiction that neither $\supp(h_1)$ nor $\supp(h_2)$ links $X(h_1, h_2)$ and $X(h_2, h_1)$. We will derive a contradiction in two steps: First, we establish the structure of $E(h_1, h_2)$, $\supp(h_1)$, $\supp(h_2)$ and second, we derive a recurrence for the values of $h_1$ and $h_2$ at the points where they meet. By analyzing the characteristic polynomial of this recurrence, we can show that $h_1$ and $h_2$ must be either unbounded or constant, contradicting our assumptions.

    Now, our first major goal is to argue that $E(h_1, h_2)$ can be expressed of the form $a^\bZ \{b_1, b_2\}$ for some $1 \leq b_1 < b_2 < a$ and, exchanging the roles of $h_1$ and $h_2$ as needed, that $\supp(h_1) = a^\bZ s_1$ and $\supp(h_2) = a^\bZ s_2$ for some $s_1 \in (b_1, b_2)$ and $s_2 \in (b_2, ab_1)$.

    To this end, we first argue that $E(h_1, h_2)$ lives in both $\overline{X(h_1, h_2)}$ and $\overline{X(h_2, h_1)}$. Indeed, suppose we could find some $x \in \partial \textrm{int}(E(h_1, h_2))$. In this case,
    \begin{align*}
        x & \in \partial E(h_1, h_2) \\
            & = \partial (X(h_1, h_2) \cup X(h_2, h_1))^C \\
            & = \partial (X(h_1, h_2) \cup X(h_2, h_1)) \\
            & \subseteq \partial X(h_1, h_2) \cup \partial X(h_2, h_1) \\
            & \subseteq C(h_1, h_2),
    \end{align*}
    so, by construction of $C(h_1, h_2)$, there is some punctured neighbourhood $(x r^{-1}, x r) \setminus \{x\}$ of $x$ lying outside of $E(h_1, h_2) \supseteq \textrm{int}(E(h_1, h_2))$, which would contradict $x \in \partial \textrm{int}(E(h_1, h_2))$. Thus $\textrm{int}(E(h_1, h_2))$ has empty boundary, so it must be both closed and open in $\bR^\times$. But $h_1$ and $h_2$ cross, so we cannot have $\textrm{int}(E(h_1, h_2)) = \bR^\times$. By connectivity of $\bR^\times$, the only possibility is that $\textrm{int}(E(h_1, h_2)) = \varnothing$. In particular, by construction of the simple crossing points, we have
    \begin{align*}
        E(h_1, h_2)
            = \partial E(h_1, h_2)
            \subseteq C(h_1, h_2)
            \subseteq \partial X(h_1, h_2) \cap \partial X(h_2, h_1)
            \subseteq \overline{X(h_1, h_2)} \cap \overline{X(h_2, h_1)}.
    \end{align*}

    We can now deduce the structure of $E(h_1, h_2)$. First, notice that $\supp(h_i)$ divides $E(h_1, h_2)$. If not, then $\supp(h_i)$ must link $E(h_1, h_2)$ with $\bR^\times \setminus E(h_1, h_2) = X(h_1, h_2) \cup X(h_2, h_1)$, so it must either link $E(h_1, h_2) \subseteq \overline{X(h_2, h_1)}$ with $X(h_1, h_2)$ or it must link $E(h_1, h_2) \subseteq \overline{X(h_1, h_2)}$ with $X(h_2, h_1)$. In either case, by Lemma~\ref{lemma:closure_linkage}, $\supp(h_i)$ links $X(h_1, h_2)$ with $X(h_2, h_1)$, whereas we assumed otherwise. Therefore, $\supp(h_i)$ divides $E(h_1, h_2)$, so, by Proposition~\ref{prop:dense_subgroup}, there must be $a > 1$ and $B \subseteq [1, a)$ such that $E(h_1, h_2) = a^\bZ B$ and such that the elements of $\supp(h_1)$ and of $\supp(h_2)$ are multiplicatively separated by at least a factor of $a$. We claim that $B$ contains no more than two points. Indeed, suppose we could find $b_1 < b_2 < b_3 \in B$. By construction of $C(h_1, h_2) \ni b_1, b_2, b_3$, $(b_1, b_2) \cup (b_2, b_3) \cup (b_3, ab_1)$ contains at least three connected components of $X(h_1, h_2) \cup X(h_2, h_1)$, so in particular, it must contain at least two connected components of $X(h_1, h_2)$ or of $X(h_2, h_1)$. By Proposition~\ref{prop:dominated_support}, $(b_1, ab_1)$ contains at least two points from $\supp(h_1)$ or from $\supp(h_2)$, which is impossible because these points were multiplicatively separated by $a$. Thus $|B| \leq 2$. On the other hand, if $B = \{b\}$, then we have $E(h_1, h_2) = a^\bZ\{b\} = a^{2\bZ}\{b\} \cup a^{2\bZ + 1}\{b\} = (a^2)^\bZ\{b, ab\}$, so in all cases we can write $E(h_1, h_2) = a^\bZ\{b_1, b_2\}$ for some $1 \leq b_1 < b_2 < a$.

    Finally, we determine the structure of $\supp(h_1)$ and $\supp(h_2)$. For $k \in \bZ$, consider the alternating intervals $I_k := (a^k b_1, a^k b_2)$ and $J_k := (a^k b_2, a^{k+1} b_1)$. Then, for every $k \in \bZ$, $I_k$ does not meet $E(h_1, h_2)$, so, by the intermediate value theorem, either $I_k \subseteq X(h_1, h_2)$ or $I_k \subseteq X(h_2, h_1)$ and, similarly, either $J_k \subseteq X(h_1, h_2)$ or $J_k \subseteq X(h_2, h_1)$. Now, assume without loss of generality that $I_0 \subseteq X(h_1, h_2)$. By induction and by construction of the simple crossings $C(h_1, h_2) \supseteq E(h_1, h_2)$, this forces $I_k \subseteq X(h_1, h_2)$ and $J_k \subseteq X(h_2, h_1)$ for every $k \in \bZ$. By Proposition~\ref{prop:dominated_support}, we can find $s_1 \in (b_1, b_2) \cap \supp(h_1)$ and $s_2 \in (b_2, ab_1) \cap \supp(h_2)$. In particular, $b_1/s_1 \cdot \supp(h_1) \ni b_1$ meets $E(h_1, h_2) = a^{\bZ}\{b_1, b_2\} \ni b_1$, so, since $\supp(h_1) \mid E(h_1, h_2)$, we must have that $b_1/s_1 \cdot \supp(h_1) \subseteq a^{\bZ} \{b_1, b_2\}$, i.e. $\supp(h_1) \subseteq a^{\bZ} \{s_1, b_2/b_1 \cdot s_1\}$. We now eliminate $b_2/b_1 \cdot s_1$ from the residue set. To that end, suppose we could find $s'_1 := a^k b_2/b_1 \cdot s_1 \in \supp(h_1)$. Then $\frac{b_1^2}{s_1 b_2} \cdot s'_1 = a^k b_1 \in E(h_1, h_1)$, so, since $\supp(h_1) \mid E(h_1, h_2)$, we must have $\frac{b_1^2}{b_2} = \frac{b_1^2}{s_1 b_2} \cdot s_1 \in E(h_1, h_2) = a^\bZ \{b_1, b_2\}$ as well. But, $b_1/b_2 \in (a^{-1}, 1)$, so $\frac{b_1^2}{b_2} = a^k b_1$ is impossible, whereas $\frac{b_1^2}{b_2} = a^k b_2$ forces $b_1/b_2 = a^{-1/2}$. Moreover, $s_1 \in (b_1, b_2)$, so we have
    \begin{align*}
        s'_1
            = a^k b_2/b_1 \cdot s_1
            \in (a^k b_2/b_1 \cdot b_1, a^k b_2/b_1 \cdot b_2)
            = (a^k b_2, a^k b_1 \cdot (b_2/b_1)^2)
            = (a^k b_2, a^{k + 1} b_1)
            = J_k,
    \end{align*}
    which contradicts our assumption that $\supp(h_1)$ fails to link $X(h_1, h_2) \supseteq I_0$ with $X(h_2, h_1) \supseteq J_k$. Therefore $\supp(h_1) \subseteq a^\bZ s_1$ which, noting that $a^\bZ s_1 \cap I_k = a^k s_1$, forces $\supp(h_1) = a^\bZ s_1$ due to Proposition~\ref{prop:dominated_support}. By a symmetric argument, $\supp(h_2) = a^\bZ s_2$ as well.

    Now that we understand the structure of the crossings and of the support, we can calculate the trajectory of $h_1$ and $h_2$ over the crossing points. To this end, consider $y_k := h_1(a^k b_1) = h_2(a^k b_1)$ and $y'_k := h_1(a^k b_2) = h_2(a^k b_2)$. Now, for any $k \in \bZ$, the interval $[a^k b_1, a^{k+1} b_2] = \overline{I_k} \cup \overline{J_k} \cup \overline{I_{k + 1}}$ meets $\supp(h_2)$ at $a^k s_2$ and nowhere else. Thus $h_2$ is affine over $[a^k b_1, a^k s_2]$ and $[a^k s_2, a^{k+1}b_2]$ separately. By extrapolation in the former interval, we get
    \begin{align*}
        h_2(a^k s_2)
            = \frac{a^k b_2 - a^k s_2}{a^k b_2 - a^k b_1}h_1(a^k b_1) +
              \frac{a^k s_2 - a^k b_1}{a^k b_2 - a^k b_1}h_1(a^k b_2)
            = \frac{b_2 - s_2}{b_2 - b_1}y_k + \frac{s_2 - b_1}{b_2 - b_1}y'_k
    \end{align*}
    and, furthermore, extrapolation over the latter interval yields
    \begin{align*}
        y'_{k + 1}
            & = h_2(a^{k+1}b_2) \\
            & = \frac{a^{k+1}b_1 - a^{k+1}b_2}{a^{k+1}b_1 - a^k s_2}h_1(a^k s_2) +
                \frac{a^{k+1}b_2 - a^k s_2}{a^{k+1}b_1 - a^k s_2}h_2(a^{k+1}b_1) \\
            & = \frac{a(b_1 - b_2)}{ab_1 - s_2}\left(
                    \frac{b_2 - s_2}{b_2 - b_1}y_k + \frac{s_2 - b_1}{b_2 - b_1}y'_k
                \right) +
                \frac{ab_2 - s_2}{ab_1 - s_2}y_{k + 1} \\
            & = \underbrace{\frac{a(s_2 - b_2)}{ab_1 - s_2}}_{=: U'}y_k + \underbrace{-\frac{a(s_2 - b_1)}{ab_1 - s_2}}_{=: V'}y'_k + \underbrace{\frac{ab_2 - s_2}{ab_1 - s_2}}_{=: W'}y_{k + 1}.
    \end{align*}
    By an analogous argument, we have that
    \begin{align*}
    y_{k + 2}
        = \underbrace{\frac{a(s_1 - b_1)}{b_2 - s_1}}_{=: U}y'_k + \underbrace{-\frac{as_1 - b_2}{b_2 - s_1}}_{=: V}y_{k + 1} + \underbrace{\frac{ab_1 - s_1}{b_2 - s_1}}_{=: W}y'_{k + 1}
    \end{align*}
    as well, which together yields the system
    \begin{align*}
        \left\{
        \begin{aligned}
            y'_{k + 1} & = U' y_k + V' y'_k + W' y_{k + 1} \\
            y_{k + 2} & = U y'_k + V y_{k+1} + W y'_{k+1} \\
            y'_{k + 2} & = U' y_{k+1} + V' y'_{k+1} + W' y_{k+2} \\
            y_{k + 3} & = U y'_{k+1} + V y_{k+2} + W y'_{k+2}
        \end{aligned}
        \right.
    \end{align*}
    for $k \in \bZ$. By isolating $y'_{k+1}$ in terms of $y_k, y_{k+1}, y_{k+2}$ from the first two equations, substituting into the equation for $y'_{k+2}$, and finally substituting this into the equation for $y_{k + 3}$, we obtain the symmetrized recurrence
    \begin{align*}
        y_{k + 3} = U U' y_k + (U W' + U' W - V V') y_{k + 1} + (V + V' + W W') y_{k + 2}
    \end{align*}
    for $k \in \bZ$. Thus $(y_k)_{k \in \bZ}$ satisfies a linear homogeneous recurrence with characteristic polynomial
    \begin{align*}
        f(t) = t^3 - (V + V' + W W')t^2 - (U W' + U' W - V V') t - U U',
    \end{align*}
    which after applying some computational effort factors as
    \begin{align*}
        f(t) = (t - 1)(t - a)\left( t - a\left( \frac{s_1 - b_1}{b_2 - s_1} \right)\left( \frac{s_2 - b_2}{ab_1 - s_2} \right) \right).
    \end{align*}
    In particular, every root of $f$ lies in $\bR^\times$. On the other hand, a homogeneous linear recurrence over the integers admits a bounded non-constant solution if and only its characteristic polynomial has a non-unit root lying on the complex unit circle (see e.g. Section 2.3 of \cite{elaydi2005introduction}). Therefore either $y_k = h_1(a^k b_1) = h_2(a^k b_1)$ is unbounded, which is impossible because hockey-stick curves are bounded, or it must be constant, which is also impossible because hockey-stick curves are monotone, so this would force $h_1$ and $h_2$ to be constant and thus totally-ordered.
\end{proof}

\Cref{lemma:commutativity_implies_total_ordering} now follows immediately by combining Propositions~\ref{prop:pld_gap_amplification} and~\ref{prop:support_conversion} with Theorem~\ref{thm:hockey_link} by taking $L := \sup\{L_1, L_2\}$.

%% file: sections/app_prelims.tex
\section{Proofs from Preliminaries}
\label{app:prelim_proofs}

\begin{proof}[Proof of \Cref{prop:pld_to_hs}]
    For any pair $(P, Q)$ defined on $\Omega$ with likelihood ratio $\ell$, we have that
    \begin{align*}
        H_x(P \parallel Q)
            & = \sup_{E \subseteq \Omega} P(E) - xQ(E) \\
            & = \sup_{E \subseteq \Omega} \int_E \frac{d(P - xQ)}{dP}(\omega) \, dP(\omega) \\
            & = \sup_{E \subseteq \Omega} \int_E 1 - x/\frac{dP}{dQ}(\omega) \, dP(\omega) \\
            & = \int_\Omega \left(1 - x/\frac{dP}{dQ}(\omega)\right)_+ \, dP(\omega) \\
            & = \int_\Omega (1 - xe^{-\log(\ell(\omega))})_+ \, dP(\omega) \\
            & = \E_{Z \sim \pld(P \parallel Q)}[(1 - xe^{-Z})_+].
    \end{align*}
\end{proof}

\begin{proof}[Proof of \Cref{prop:hockey_stick_composition}]
    Let $\ell$ denote the likelihood ratio for $M_1 \otimes M_2$ and let $\ell^2_{y_1}$ denote the likelihood ratio for $M_2(\cdot; y_1)$. By Bayes' rule, we have $\ell(y_1, y_2) = \ell^1(y_1) \cdot \ell^2_{y_1}(y_2)$, so we have
    \begin{align*}
        \pld(M_1 \otimes M_2) \equiv \log(\ell^1(Y_1)) + \log(\ell^2_{Y_1}(Y_2)), Y_1 \sim M_1(D_1), Y_2 \sim M_2(D_1; Y_1).
    \end{align*}
    By \Cref{prop:pld_to_hs} and the law of total expectation, it follows that
    \begin{align*}
        h_{M_1 \otimes M_2}(x)
            & = \E_{Z \sim \pld(M_1 \otimes M_2)}[
                        (1 - xe^{-Z})_+
                    ] \\
            & = \E_{Y_1 \sim M_1(D_1)}[\E_{Y_2 \sim M_2(D_1; Y_1)}[
                        (1 - xe^{-(\log(\ell^1(Y_1)) + \log(\ell^2_{Y_1}(Y_2)))})_+
                    ]] \\
            & = \E_{Y_1 \sim M_1(D_1)}[\E_{Y_2 \sim M_2(D_1; Y_1)}[
                        (1 - x/\ell^1(Y_1) \cdot e^{-\log(\ell^2_{Y_1}(Y_2))})_+
                    ]] \\
            & = \E_{Y_1 \sim M_1(D_1)}[\E_{Z_2 \sim \pld(M_2(\cdot; Y_1)}[
                        (1 - x/\ell^1(Y_1) \cdot e^{-Z_2})_+
                    ]] \\
            & = \E_{Y_1 \sim M_1(D_1)}[h_{M_2(\cdot; Y_1)}(x/\ell^1(Y_1))].
    \end{align*}
\end{proof}

\begin{proof}[Proof of \Cref{prop:tradeoff_to_hs}]
    First, let $L$ denote the PLD of $(P, Q)$ and let $F$ be its CDF. For $\alpha \in [0, 1]$, let $\psi_\alpha$ denote the threshold as in \Cref{prop:pld_to_tradeoff}. For any $x \in \bR^\times$, \Cref{prop:pld_to_hs} yields
    \begin{align*}
        h(x)
            & = \E_{Z \sim L}[1(Z > \log{x}) \cdot (1 - x e^{-Z})] \\
            & = \sup_{\alpha \in [0, 1]} \E_{Z \sim L}[\psi_\alpha(Z) \cdot (1 - x e^{-Z})] \tag{$\alpha = F(\log{x})$} \\
            & = \sup_{\alpha \in [0, 1]} 1 - \alpha - x\tau(\alpha) \\
            & = \sup_{\beta \in [0, 1]} 1 - \tau^{-1}(\beta) - x \beta \tag{$\alpha = \tau^{-1}(\beta), \beta = \tau(\alpha)$} \\
            & = 1 + (\tau^{-1})^*(-x).
    \end{align*}
\end{proof}

%% file: sections/app_implications.tex
\section{Proofs for Section~\ref{sec:implications-special-case}}
\label{app:implications-special-case}

\begin{proof}[Proof of \Cref{prop:piecewise-linear-composition}]
    For each $i \in \{1,2\}$, let $f_i$ be a tradeoff curve with $n_i$ linear segments and $f_i(0) =1 -\delta_i$.  Each linear segment $j \in [n_i]$ has a slope $f'_{i,j} \le 0$ over an interval $[\alpha_{i,j}, \alpha_{i,j+1})$.
    The width of interval $j$ is  denoted by $w_{i,j} $. By convexity, we have $-f'_{i,1} \ge -f'_{i,2 }\ge \cdots \ge -f'_{i,n_i } $.
    Note that we have $\alpha_{i,1} = 0$, $\alpha_{i,n_i +1} = 1$, $f_i(1) = 0$, and $\sum_{j =1}^{n_i}w_{i,j} = 1$.

    For each $i \in \{1,2 \}$, we construct a probability distribution $Q_i$ supported by $[0,1]$ with density function as follows:
    \begin{equation}
         q_i(x) = \begin{cases}
            -f'_{i,1}, & \text{if } x \in [\alpha_{i,1}, \alpha_{i,2}),\\
             -f'_{i,2}, & \text{if }  x \in [\alpha_{i,2}, \alpha_{i,3}), \\
            \vdots & \\
             -f'_{i,j}, & \text{if }  x \in [\alpha_{i,j}, \alpha_{i,j+1}), \\
            \vdots & \\
             -f'_{i,n_i}, & \text{if }  x \in [\alpha_{i,n_i}, \alpha_{i,n_i+1}),\\
             (1-f_i(0)) \cdot \delta(x-1),  & \text{if }  x = \alpha_{i,n_i+1}.
        
        \end{cases} 
    \end{equation}
    Let $P$ be the uniform distribution over $[0,1]$. Now, we show that $T(P,Q_i) = f_i$.

    When $\alpha \in [\alpha_{i,1}, \alpha_{i,2})$, the optimal rejection rule is
    \begin{equation*}
         \phi(x) = \begin{cases}
            \alpha/w_{i,1}, & \text{if }  x \in [\alpha_{i,1}, \alpha_{i,2}),\\
            1, & \text{if } x = 1, \\
            0, & \text{otherwise}.
        \end{cases} 
    \end{equation*}
    Note that the type I error is $\mathbb{E}_P[\phi] = \alpha = c \cdot w_{i,1}$. The type II error is 
    \begin{equation*}
        \begin{array}{l}
    1- \mathbb{E}_{Q_i}[\phi] =  1 -c \cdot w_{i,1} \cdot (-f'_{i,1}) -  (1-f_i(0)) = f_i(0) +\alpha f'_{i,1} = f_i(\alpha)= T(P,Q_i)(\alpha).
        \end{array}
    \end{equation*}

    When $\alpha \in [\alpha_{i,2}, \alpha_{i,3})$, the optimal rejection rule is 
    \begin{equation*}
         \phi(x) = \begin{cases}
            1, & \text{if }  x \in [\alpha_{i,1}, \alpha_{i,2}) \bigcup \{1\},\\
             c, & \text{if }  x \in [\alpha_{i,2}, \alpha_{i,3}),\\
            0, & \text{otherwise}.
        \end{cases} 
    \end{equation*}
    Note that the type I error is $\mathbb{E}_P[\phi] = \alpha = w_{i,1} + c \cdot w_{i,2}$.  The type II error is 
    \begin{equation*}
        \begin{array}{l}
    1- \mathbb{E}_{Q_i}[\phi] =  f_i(0) -     w_{i,1} \cdot (-f'_{i,1}) - c \cdot w_{i,2} \cdot (-f'_{i,2}) = f_i(0)+w_{i,1} \cdot  f'_{i,1} + (\alpha - w_{i,1})  f'_{i,2}  = f_i(\alpha)=T(P,Q_i)(\alpha).
        \end{array}
    \end{equation*}

    More generally, when $\alpha \in [\alpha_{i,j}, \alpha_{i,j+1})$, the optimal rejection rule is 
    \begin{equation*}
         \phi(x) = \begin{cases}
            1, & \text{if }  x \in [\alpha_{i,1}, \alpha_{i,j-1}) \bigcup \{1 \},\\
             c, & \text{if }  x \in [\alpha_{i,j}, \alpha_{i,j+1}),\\
            0, & \text{otherwise}.
        \end{cases} 
    \end{equation*}
    Note that the type I error is $\mathbb{E}_P[\phi] = \alpha = \sum_{s=1}^{j-1} w_{i,s} + c \cdot w_{i,j}$. The type II error is 
    \begin{equation*}
        \begin{array}{lll}
    1- \mathbb{E}_{Q_i}[\phi] &= & f_i(0) -     \sum_{s=1}^{j-1}w_{i,s} \cdot (-f'_{i,s}) - c \cdot w_{i,j} \cdot (-f'_{i,j})\\
    & = & f_i(0)+\sum_{s=1}^{j-1}w_{i,s} \cdot  f'_{i,s} + (\alpha -\sum_{s=1}^{j-1} w_{i,s})  f'_{i,j}  = f_i(\alpha)=T(P,Q_i)(\alpha).
        \end{array}
    \end{equation*}

    Now, we construct a testing problem $P \times P \text{ vs } Q_1 \times Q_2 $. 
    We first compute $(-f'_{1,k_1}) \cdot (-f'_{2,k_2})$ for all $k_1 \in [n_1]$ and $k_2 \in [n_2]$. Then, we sort all these $n_1 \cdot n_2$ values  in a non-increasing order, using $k \in [n_1 \cdot n_2]$ as the index. For a fixed $k$, we know  its respective $k_1, k_2$ and then $w_{1, k_1}, w_{2, k_2}$ and $f'_{1, k_1}, f'_{2, k_2}$. 
    Let $\tilde{w}_k := w_{1, k_1} \cdot w_{2, k_2}$ and  $\tilde{f}'_k := f'_{1, k_1} \cdot f'_{2, k_2}$.




    \begin{enumerate}
        \item When $\alpha \in \left[0, \ \tilde{w}_1 \right)$, we have the optimal rejection rule is 
        \begin{equation*}
         \phi(x) = \begin{cases}
            1, & \text{if }  (x_1, x_2) \in [0, 1) \times \{1\} \bigcup \{1\} \times [0, 1) \bigcup \{1\} \times \{1\} \\
        
             c, & \text{if }  (x_1, x_2) \in [\alpha_{1,1}, \alpha_{1,2}) \times [\alpha_{2,1}, \alpha_{2,2})  \\
            0, & \text{otherwise}.
        \end{cases} 
    \end{equation*}
    The type I error is $\mathbb{E}_{P \times P}[\phi] = \alpha = c \cdot \tilde{w}_{1}$. The type II error is
    \begin{equation*}
         \mathbb{E}_{Q_1 \times Q_2}[\phi] = \delta_1 + \delta_2 + \delta_1 \cdot  \delta_2 +
            c \cdot \tilde{w}_{1} \cdot \tilde{f}'_{1}   =  \delta_1 - \delta_2 - \delta_1   \delta_2 -\alpha \cdot \tilde{f}'_{1}.
    \end{equation*}
    Therefore, we have
    \begin{equation*}
        \begin{array}{ll}
        
            &  T(P \times P, Q_1 \times Q_2) (\alpha) = 1- \mathbb{E}_{Q_1 \times Q_2}[\phi] =  1 - \delta_1 - \delta_2 - \delta_1   \delta_2 -\alpha \cdot \tilde{f}'_{1}.
                \end{array}
        \end{equation*}

        \item When $\alpha \in \left[\tilde{w}_{1} , \ \tilde{w}_1 + \tilde{w}_{2} \right)$, we have
        \begin{equation*}
            \begin{array}{l}
                 \alpha = \tilde{w}_{1}  + c \cdot  \tilde{w}_{2} \quad \Rightarrow c \cdot  \tilde{w}_{2}  = \alpha - \tilde{w}_1,  \\
                 T(P \times P, Q_1 \times Q_2) (\alpha)  =  1 - \delta_1 - \delta_2 - \delta_1   \delta_2- \tilde{w}_1 \cdot \tilde{f}'_1- c \cdot \tilde{w}_2 \cdot \tilde{f}'_2 \\
                 =  1 - \delta_1 - \delta_2 - \delta_1   \delta_2- \tilde{w}_1 \cdot \tilde{f}'_1- (\alpha - \tilde{w}_1) \cdot \tilde{f}'_2.
            \end{array}
        \end{equation*}
        \item When $\alpha \in \left[ \tilde{w}_1 + \tilde{w}_2, \  \tilde{w}_1 + \tilde{w}_2 + \tilde{w}_3  \right)$, we have
        \begin{equation*}
            \begin{array}{l}
                 \alpha = \sum_{s = 1}^{2}\tilde{w}_s + c \cdot \tilde{w}_3 \quad \Rightarrow c \cdot \tilde{w}_3 =  \alpha - \sum_{s = 1}^{2}\tilde{w}_s, \\
                 T(P \times P, Q_1 \times Q_2) (\alpha) = 1 - \delta_1 - \delta_2 - \delta_1   \delta_2- \sum_{s =1}^{2} \tilde{w}_s \cdot \tilde{f}'_{s} - c \cdot \tilde{w}_3 \cdot \tilde{f}'_3 \\ =  1 - \delta_1 - \delta_2 - \delta_1   \delta_2- \sum_{s =1}^{2} \tilde{w}_s \cdot \tilde{f}'_s - (\alpha - \sum_{s=1}^2 \tilde{w}_s) \cdot   \tilde{f}'_3. 
            \end{array}
        \end{equation*}

        \item When $\alpha \in \left[\sum_{s=1}^{j}  \tilde{w}_s , \ \sum_{s=1}^{j+1} \tilde{w}_s \right)$,  we have
        \begin{equation*}
        \begin{array}{l}
            \alpha = \sum_{s = 1}^{j}\tilde{w}_s + c \cdot \tilde{w}_{j+1} \quad \Rightarrow c \cdot \tilde{w}_{j+1} =  \alpha - \sum_{s = 1}^{j}\tilde{w}_s, \\
         T(P \times P, Q_1 \times Q_2) (\alpha) = 1- \delta_1 - \delta_2 - \delta_1   \delta_2- \sum_{s =1}^{j} \tilde{w}_s \cdot \tilde{f}'_{s} - c \cdot \tilde{w}_{j+1} \cdot \tilde{f}'_{j+1} \\ =  1- \delta_1 - \delta_2 - \delta_1   \delta_2- \sum_{s =1}^{j} \tilde{w}_s \cdot \tilde{f}'_s - (\alpha - \sum_{s=1}^j \tilde{w}_s) \cdot   \tilde{f}'_{j+1}.   \end{array}
        \end{equation*}
    \end{enumerate}
\end{proof}

%% file: sections/app_upper_bound.tex
\section{On Symmetric PLDs and Dominating Pairs}
\label{app:symmetry}


For the purpose of accurate privacy accounting, we often need to identify the ``worst'' pairs of datasets and its corresponding privacy loss. This concept can be formalized via the notion of \emph{(worst-case) dominating pairs} as follows:

\begin{definition}[Dominating Pair~\cite{ZhuDW22}]
%
$(P, Q)$ is a \emph{dominating pair} for a mechanism $\mech$ if 
$(P, Q) \succeq (\mech(D), \mech(D'))$ for all neighbouring datasets $D, D'$. Furthermore, it is said to be a \emph{worst-case dominating pair} if additionally $P = \mech(D), Q = \mech(D')$ for some neighbouring datasets $D, D'$. 
\end{definition}

In practice, a typical privacy accounting is done by taking each $L_i$ to be the PLD of a worst-case dominating pair of the mechanism $\mech_i$. We observe below that such PLD is symmetric. Thus, the assumption in \Cref{sec:natural-apx} is quite a natural one that is already satisfied in practice.

\begin{lemma} \label{lem:worst-case-symmetric}
If $(P, Q)$ is a worst-case dominating pair of $\mech$, then $(P, Q)$ is symmetric.
\end{lemma}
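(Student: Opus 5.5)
The plan is to show that $(Q, P)$ is dominated by some pair $(\mech(D''), \mech(D'''))$ of neighbouring outputs. Since $(P, Q)$ dominates every such pair, transitivity then gives $(Q, P) \preceq (P, Q)$. Applying the same reasoning with the roles exchanged gives the reverse domination, so the pair is symmetric in the sense of \Cref{def:dominating}.

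\textbf{Step 1: use worst-case dominance.} Because $(P, Q)$ is a worst-case dominating pair, there are neighbouring datasets $D, D'$ with $P = \mech(D)$ and $Q = \mech(D')$. The neighbouring relation used in DP is symmetric: if $D$ and $D'$ are neighbours, then so are $D'$ and $D$. I will state this explicitly as the standing convention on which the lemma relies.

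\textbf{Step 2: apply the dominating-pair property to the swapped pair.} Taking the neighbouring pair $(D', D)$, the definition of a dominating pair gives
\begin{align*}
    (Q, P) = (\mech(D'), \mech(D)) \preceq (P, Q).
\end{align*}

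\textbf{Step 3: obtain the reverse domination.} I need $(P, Q) \preceq (Q, P)$. The Blackwell order is preserved under swapping coordinates: if a kernel $\kappa$ satisfies $\kappa P' = P$ and $\kappa Q' = Q$, then the same $\kappa$ maps $(Q', P')$ to $(Q, P)$. Applying this to Step 2 shows that $(Q, P) \preceq (P, Q)$ implies $(P, Q) \preceq (Q, P)$. Together with Step 2, this gives both inequalities, so $(P, Q)$ is symmetric by \Cref{def:dominating}.

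\textbf{Main obstacle.} The only subtle point is the hypothesis that the neighbouring relation is symmetric, which the paper treats as implicit. If it fails, the claim can be false. I would therefore state symmetry of the relation as an assumption. Beyond that, the only work is the one-line check in Step 3 that swapping coordinates preserves Blackwell domination, which follows directly from the kernel definition.
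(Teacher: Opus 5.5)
Your proof is correct and essentially the paper's: both apply the dominating-pair property to the swapped neighbours $(D', D)$ to get a kernel $\kappa$ with $\kappa P = Q$ and $\kappa Q = P$, and both observe that this same kernel also witnesses $(P, Q) \preceq (Q, P)$. Your remark that the argument needs the neighbouring relation to be symmetric is fair, since the paper uses this assumption without stating it.
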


\begin{proof}
Since $(P, Q)$ is a worst-case dominating pair, we have $P = \mech(D)$ and $Q = \mech(D')$ for some neighbouring datasets $D, D'$ and $(P, Q) \succeq (\mech(D'), \mech(D)) = (Q, P)$. By definition, there exists a Markov kernel $\kappa$ such that $\kappa P = Q$ and $\kappa Q = P$, but this also implies that $(Q, P) \succeq (P, Q)$.
\end{proof}

We will next prove \Cref{lem:sym-check-only-large-alpha}. To do so, let us start by stating a relationship between the hockey-stick divergence at $x$ and $x^{-1}$, which will be convenient for our proof.

\begin{lemma}[\cite{ZhuDW22}] \label{lem:hockey-inverse-alpha}
For any pair $(P, Q)$ and $x \in \bR^\times$, we have $H_x(Q \parallel P) = x \cdot H_{x^{-1}}(P \parallel Q) - x + 1$.
\end{lemma}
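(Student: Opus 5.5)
The identity is a direct computation from the definition of the hockey-stick divergence. I would first establish it for a pair of distributions with a common dominating measure, and then pass through the likelihood-ratio representation of \Cref{prop:pld_to_hs}. Fix $x \in \bR^\times$. Rather than optimizing over events, I would use the pointwise positive-part formula. Let $\mu := P + Q$ and write $p := dP/d\mu$, $q := dQ/d\mu$. Then the supremum over $E$ is attained at $E = \{p > xq\}$, so
\begin{align*}
    H_x(P \parallel Q) = \int (p - xq)_+ \, d\mu .
\end{align*}
This already absorbs the singular parts, so it agrees with the extended-likelihood convention used for PLDs.

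The key step is the algebraic identity $(q - xp)_+ = x\,(p/x \cdot x - \dots)$; more precisely,
\begin{align*}
    (q - xp)_+ = x\,(x^{-1}q - p)_+ .
\end{align*}
Combining this with $(a)_+ = a + (-a)_+$ gives
\begin{align*}
    (x^{-1} q - p)_+ = (x^{-1}q - p) + (p - x^{-1} q)_+ .
\end{align*}
Integrating against $\mu$ and using $\int p \, d\mu = \int q \, d\mu = 1$ yields
\begin{align*}
    H_x(Q \parallel P) = x \int (x^{-1} q - p)_+ \, d\mu = x\bigl( x^{-1} - 1 + H_{x^{-1}}(P \parallel Q) \bigr) = x H_{x^{-1}}(P \parallel Q) - x + 1 .
\end{align*}

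The only point requiring care is justifying the positive-part integral formula for $H_x$ when $P$ and $Q$ are not mutually absolutely continuous. I expect this to be routine, since for any event $E$,
\begin{align*}
    P(E) - xQ(E) = \int_E (p - xq)\, d\mu \le \int (p-xq)_+ \, d\mu ,
\end{align*}
with equality at $E = \{p > xq\}$. This is the same computation as in the proof of \Cref{prop:pld_to_hs}, but done against $\mu$ to avoid dividing by likelihoods. Once the identity is established, it will be used with symmetry, $H_x(P\parallel Q)=H_x(Q\parallel P)$, to reduce checks at $x<1$ to checks at $x^{-1}>1$, as needed for \Cref{lem:sym-check-only-large-alpha}.
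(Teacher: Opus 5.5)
Your argument is correct. The paper gives no proof of this lemma: it imports it from \cite{ZhuDW22} and uses it only to reduce the check at $x<1$ to the check at $x^{-1}>1$ in \Cref{lem:sym-check-only-large-alpha}. Your write-up is therefore a self-contained justification rather than a variant of an argument in the text.

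The density computation is sound, including for pairs that are not mutually absolutely continuous. It takes the positive-part representation against $\mu = P + Q$, applies $(a)_+ = a + (-a)_+$, and uses $\int p\,d\mu = \int q\,d\mu = 1$. You never actually need \Cref{prop:pld_to_hs} or the extended likelihood convention, so that detour can be dropped.

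An even shorter route works directly on the defining supremum, with no densities at all. For any event $E$, we have $Q(E) - xP(E) = (1 - Q(E^c)) - x(1 - P(E^c)) = 1 - x + x\bigl(P(E^c) - x^{-1}Q(E^c)\bigr)$. Since $E \mapsto E^c$ is a bijection on events, taking suprema gives $H_x(Q \parallel P) = 1 - x + x H_{x^{-1}}(P \parallel Q)$ at once.

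Your identity $(a)_+ = a + (-a)_+$ is exactly the pointwise form of this complementation. The optimal event $\{q > xp\}$ for the left side is, up to the tie set where the integrand vanishes, the complement of the optimal event $\{p > x^{-1}q\}$ for the right side. The event-level version spares you the attainment and positive-part step; yours makes the optimizing sets explicit.

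Finally, delete the garbled fragment $(q - xp)_+ = x\,(p/x \cdot x - \dots)$. The corrected identity after it is the one you actually use.
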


\begin{proof}[Proof of \Cref{lem:sym-check-only-large-alpha}]
Let $L = \pld(P, Q), L' = \pld(P',Q')$ where $(P, Q), (P',Q')$ are symmetric pairs.
The forward direction is obvious (from definition of dominating). For the backward direction, if $0 < x < 1$, then \Cref{lem:hockey-inverse-alpha} implies that
\begin{align*}
H_x(P \parallel Q) \geq H_x(P' \parallel Q') \Leftrightarrow  H_{x^{-1}}(Q \parallel P) \geq H_{x^{-1}}(Q' \parallel P') \Leftrightarrow H_{x^{-1}}(P \parallel Q) \geq H_{x^{-1}}(P' \parallel Q'),
\end{align*}
where the second equivalence is due to the symmetry assumption.
\end{proof}